\theoremstyle{plain}
\newtheorem{theorem}{Theorem}
\newtheorem{proposition}{Proposition}
\newtheorem{corollary}[theorem]{Corollary}
\newtheorem{lemma}{Lemma}
\newtheorem{remark}{Remark}
    \newtheorem{assumption}{Assumption}
\theoremstyle{plain}
\newtheorem{definition}{Definition}[section]
\newcommand{\tr}{\mathrm{tr}}
 \newcommand{\be}{\begin{eqnarray}}
\newcommand{\ee}{\end{eqnarray}}
\newcommand{\ba}{\begin{eqnarray*}}
	\newcommand{\ea}{\end{eqnarray*}}
\newcommand{\bei}{\begin{itemize}}
	\newcommand{\beiftnt}{\begin{itemize}\footnotesize}
		\newcommand{\eei}{\end{itemize}}
\def\text#1{\mbox{\rm #1}}
\title{Perturbation-Robust Predictive Modeling of Social Effects by Network Subspace Generalized Linear Models}
\author[a]{Jianxiang Wang}
\author[b]{Can M. Le}
\author[c]{Tianxi Li}
\affil[a]{Rutgers University -- New Brunswick}
\affil[b]{University of California, Davis}
\affil[c]{University of Minnesota, Twin Cities}
\begin{document}

\maketitle

\begin{abstract}
Network-linked data, in which multivariate observations are interconnected by a network, are becoming increasingly prevalent in fields such as sociology and biology. These data often exhibit inherent noise and complex relational structures, complicating conventional modeling and statistical inference. Motivated by empirical challenges in analyzing such datasets, this paper introduces a family of network subspace generalized linear models designed for analyzing noisy, network-linked data. We propose a model inference method based on subspace-constrained maximum likelihood that emphasizes flexibility in capturing network effects and provides an inference framework that is robust under network perturbations. We establish the asymptotic distributions of the estimators under network perturbations, demonstrating the method’s accuracy through extensive simulations involving random network models and deep-learning-based embedding algorithms. The proposed methodology is applied to a comprehensive analysis of a large-scale study on school conflicts, where it identifies significant social effects, offering meaningful and interpretable insights into student behavior. 
\end{abstract}

\section{Introduction}

\noindent Network data analysis has become increasingly popular due to its wide-ranging applications in the social sciences \citep{holme2015modern,van2018social}, biological sciences \citep{ozgur2008identifying,zeng2018prediction}, and engineering \citep{le2014extending,cuadra2015critical}. A notable category of social network data concerns network-linked objects, in which the interactions or relationships among individuals are depicted through network structures, and each individual typically has associated response variables and covariates. Such structures are frequently encountered in studies examining social influences on human behavior \citep{michell1996peer,michell2000smoke,harris2009national,paluck2016changing}. In this paper, we focus on analyzing student behavior in the context of school conflicts, using data from \cite{paluck2016changing}. Despite the development of numerous statistical models to analyze network-linked data in recent years \citep{zhang2016community,li2019prediction,su2019testing,zhang2020logistic,sit2021event,mao2021consistent,mukherjee2021high,le2022linear,hayes2022estimating,fang2023group,he2023semiparametric,lunde2023conformal,zhu2017network,wu2023random,armillotta2023nonlinear,chang2024embedding}, the noisy nature of the network structures in this study necessitates non-trivial generalizations of the methods in the existing literature to effectively analyze the school conflict data. This challenge motivates the development of our new model. In the following sections, we introduce the school conflict study and review the current literature on predictive modeling for network-linked data.

\subsection{Social effect analysis in the school conflict study}\label{secsec:motivation}

\noindent A prospective study by \cite{paluck2016changing} investigated the effects of randomized anti-conflict interventions on social norms across 56 high schools in New Jersey. Data collection included official records from school administrations and student questionnaires, where students provided personal information, opinions on conflict-related events, and a list of their closest friends at both the beginning and end of the school year, allowing for the mapping of social networks within each school. In 25 of these schools, which were randomly selected from the 56, the experimenters introduced educational workshops aimed at a small group of students to reduce school conflicts. The field experiment sought to demonstrate that introducing educational interventions to students could help mitigate conflicts within schools. The anti-conflict impact was measured through the distribution of orange wristbands, which rewarded students for friendly or conflict-mitigating actions.

In this study, a key quantity of interest is the social influence, which could play a significant role in disseminating the effects of the intervention throughout the entire school. To facilitate the analysis of social influence, the experimenters recorded friendship relations in terms of ``how much time two students spent together." In addition to social influence, the study aims to understand the impact of various background covariates, such as gender, race, and family conditions. 
While the original study by \cite{paluck2016changing} utilized social relations to infer social effects, recent work by \cite{le2022linear} highlighted the importance of accounting for noisy observations in friendship relations to ensure valid inference. Specifically, two waves of surveys were administered within the same school year to capture social relations. However, the overlap between the two waves was limited. Figure~\ref{fig:overlappingedges} displays the edge overlap proportions across the 25 schools, showing that, in most schools, only about 50\% of the edges overlapped between the two periods. Such high levels of noise in the observations can jeopardize the validity of statistical inference if the network structure errors are not adequately addressed in the analysis.

\begin{figure}[ht]
    \centering
    \begin{subfigure}[b]{0.50\linewidth}
        \centering
        \vspace{-0.4cm}
        {\includegraphics[width=\linewidth]{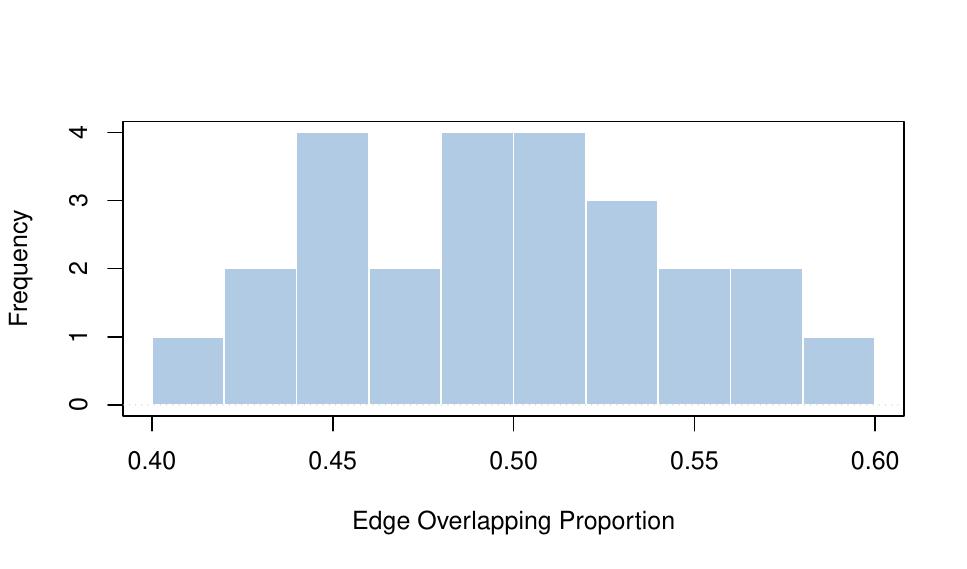}} 
        \subcaption{Overlapping Edges in Two Waves}
    \end{subfigure}
    \hfill
    \begin{subfigure}[b]{0.24\linewidth}
        \centering
        \vspace{-0.4cm}
        {\includegraphics[width=\linewidth]{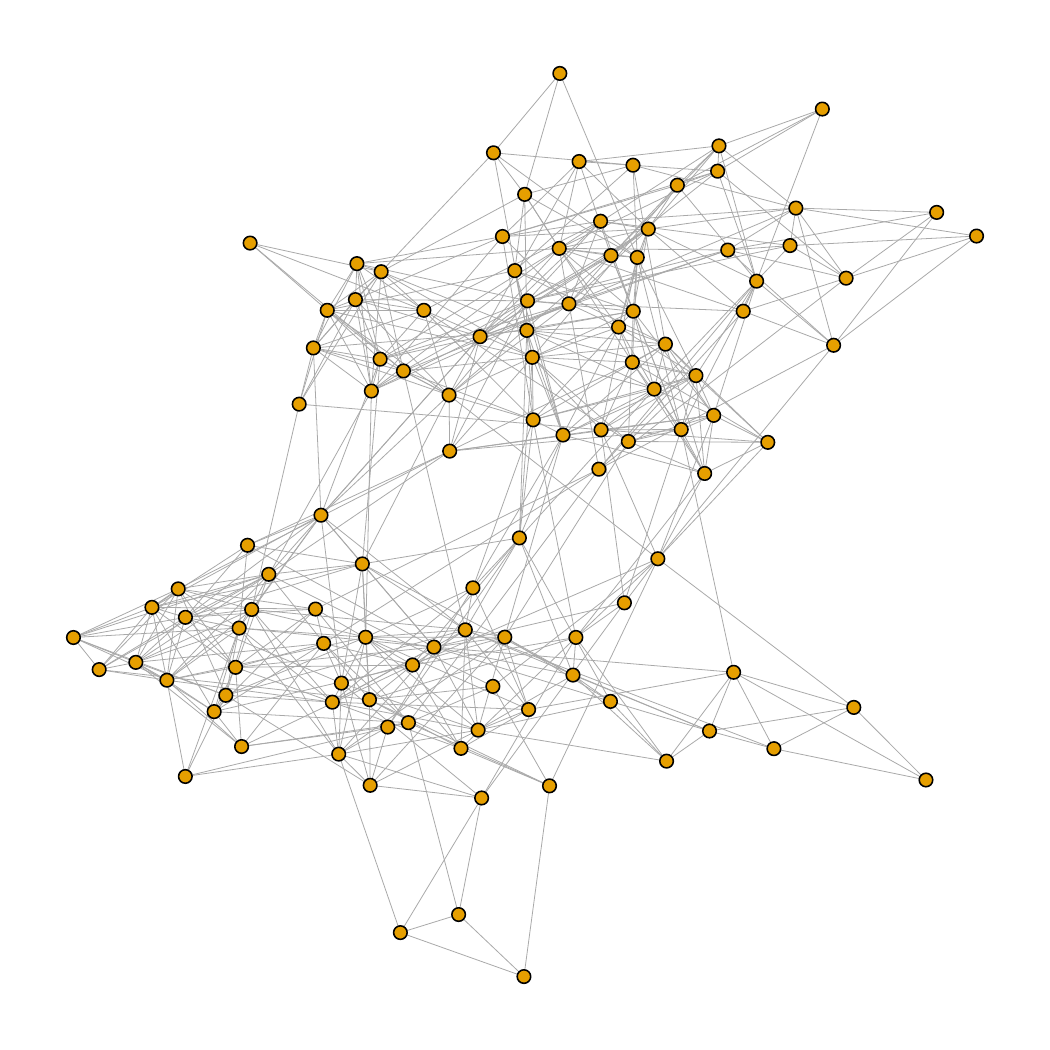}} 
        \label{Wave I}
        \subcaption{Wave I}
    \end{subfigure}
    \hfill
    \begin{subfigure}[b]{0.24\linewidth}
        \centering
        \vspace{-0.4cm}
        {\includegraphics[width=\linewidth]{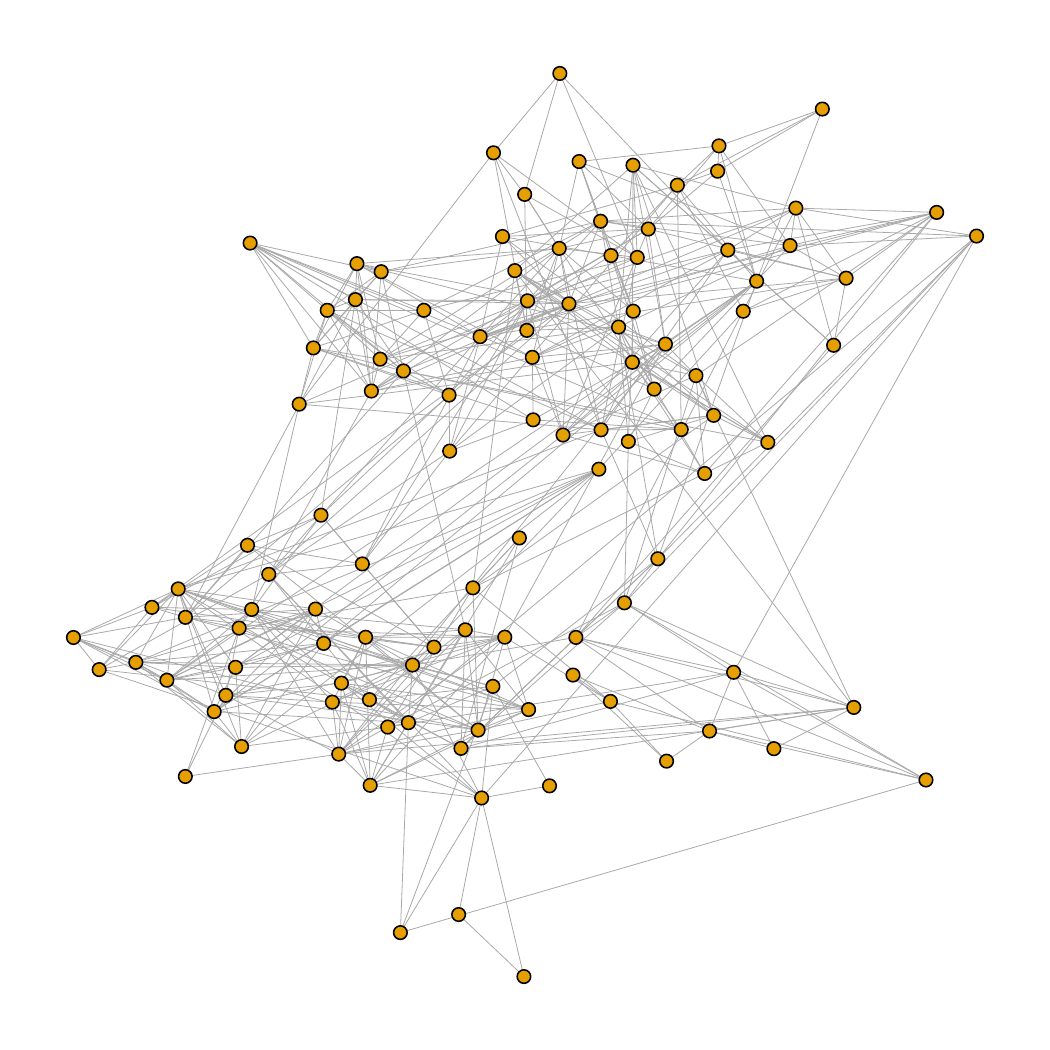}} 
        \label{Wave II}
        \subcaption{Wave II}
    \end{subfigure}
    \caption{Left panel: the proportion of overlapping edges in two waves in the study, across 25 schools. Right panel: the networks in two waves for one example school.}
    \label{fig:overlappingedges}
\end{figure}

\noindent More generally, noisy observations of network structures are frequently encountered in other empirical studies, particularly in the social sciences \citep{onnela2007structure,yu2008high,harris2009national}. Furthermore, popular graph-embedding methods in machine learning \citep{perozzi2014deepwalk,node2vec-kdd2016,rozemberczki2018fast}, which are commonly used to manage network data in modeling tasks, may introduce additional perturbations due to their inherent randomness. These challenges underscore the need for a general predictive modeling strategy that can account for network perturbations while ensuring valid inference. Addressing this issue is the central focus of our model development in this paper.

\subsection{Predictive models for network-linked data}\label{secsec:background}

\noindent We focus on predictive modeling for a node-level response variable. Among existing work, relevant predictive models can be categorized into three main classes based on their design. The first class treats the network as a generalized spatial structure and employs a graph-based autoregressive model to capture dependencies \citep{zhu2017network,armillotta2023nonlinear,wu2023random,chang2024embedding}. The second class of methods incorporates network information through a distance-based dependence structure, assuming that responses are independent if the distance between them exceeds a certain threshold \citep{su2019testing,sit2021event,mukherjee2021high}. Both of these classes rely on parametric forms of network effects and assume that the observed networks are accurate. While these methods provide informative model inference if the assumed network effect is appropriate, they may lead to misleading conclusions when the assumed parametric form is violated. Additionally, they tend to be vulnerable to errors in the observed network structure, which is a key issue in our motivating application.

The third class of methods employs nonparametric components to model network effects. For instance, \cite{li2019prediction} introduces the \emph{regression with network cohesion} (RNC) approach, which includes an individual node effects component along with a network smoothing penalty. This method has proven to be flexible for modeling network-linked responses and is applicable to various settings, such as generalized linear models. However, this approach lacks a formal statistical inference framework. A more recent model in this category is the \emph{subspace linear regression} proposed by \cite{le2022linear}. Instead of assuming a smooth network effect, this model posits that the effect lies within a latent subspace. It offers a valid inference framework and demonstrates robustness against network perturbations. However, the model fitting relies on a sequence of geometric projections, which are valid only for linear regression. This restriction can be limiting in practice, as categorical and discrete responses are common in social science applications, such as our motivating example of the school conflict study. In a separate line of work, \cite{hayes2022estimating} introduced a model in a similar vein to analyze network-mediated effects in causal problems, in which the network effect is parameterized by a linear combination of latent vectors. This method can handle other types of response variables, but the latent vectors are assumed to follow the random dot product graph model \citep{athreya2018statistical}.  

In this paper, we build upon the concept of subspace linear regression by introducing a new class of models called \emph{network-subspace generalized linear models} for network-linked data. Our model assumes that the predictive structure lies in the Minkowski sum of the column space of covariates and a latent subspace representing network relationships. We fit the model and conduct inference using subspace-constrained maximum likelihood, demonstrating that valid asymptotic statistical inference is guaranteed under essentially the same level of network perturbation as in the linear regression framework of \cite{le2022linear}. This advancement greatly expands the scope of robust predictive modeling and inference for network-linked data, accommodating both categorical and discrete response variables. Notably, the validity of our inference does not depend on a specific network perturbation model, allowing for application in a variety of settings with noisy network data.

We not only validate the inference of our model under traditional random network perturbations \citep{bickel2009nonparametric}, but also explore the integration of network effects through modern deep-learning-based embedding techniques commonly used in graph mining. Specifically, for the former, we show the effectiveness of our model for both low-rank and full-rank random network models. For the latter, we investigate three popular methods --- DeepWalk \citep{perozzi2014deepwalk}, Node2Vec \citep{node2vec-kdd2016}, and Diff2Vec \citep{rozemberczki2018fast}—demonstrating that the inherent noise and perturbations introduced by these algorithms are effectively managed by our model, ensuring accurate inference. Our work thus bridges the gap between rigorous statistical inference and general unsupervised strategies for incorporating network information.

\section{Methodology}
\label{sec:methodology}

\subsection{Notations}
Throughout the paper,  we use $c,C>0$  to denote absolute constants, the values of which may change from line to line. For two sequences of positive scalars $\{a_n\}_{n=1}^\infty$ and $\{b_n\}_{n=1}^\infty$, we write
$a_n=o(b_n)$ and $a_n=O(b_n)$ if $a_n/b_n$ converges to zero and $a_n/b_n$ is bounded, respectively. Similarly, for a sequence of random variables $\{X_n\}_{n=1}^\infty$,
we write $X_n=o_p(b_n)$ and $X_n=O_p(b_n)
$ if $X_n/b_n$ converges to zero and is bounded in probability, respectively. We use $I_n\in\mathbb{R}^{n\times n}$ to denote the identity matrix of size $n$.
For a matrix $A=(A_{ij})\in\mathbb{R}^{n\times n}$,
$\tr(A)=\sum_{i=1}^n A_{ii}$ is the trace, while $\lambda_{\min}(A)$ and 
$\lambda_{\max}(A)$ are the minimum and maximum eigenvalues of $A$, respectively, when $A$ is symmetric. For a vector $u$, $\|u\|$ is the Euclidean norm. 
For a matrix $W=(W_{ij})\in\mathbb{R}^{m\times n}$ with $1\le n\le m$ and the singular value decomposition $W=\sum_{i=1}^n \sigma_i u_i v_i^{\top}$, $\|W\|=\max_{1\le i \le n} \sigma_i$, $\|W\|_F=(\sum_{i=1}^n\sigma^2_i)^{1/2}$ and
$\|W\|_{\infty}=\max_{1\le i\le n} \sum_{j=1}^n|W_{ij}|$
represent the spectral norm, the Frobenius norm, and the infinity norm of $W$, respectively.
In addition, $W_{i}$ is the $i$-th column of $W$, and $W_{u:v}$ is the sub-matrix of $W$ with column vectors $W_i$ for $u\le i\le v$. We further use $W_{i,u:v}$ to denote the $i$-th row of $W_{u:v}$.
\subsection{Model}
We assume there exists a true unobserved relational matrix $P \in \mathbb{R}^{n \times n}$, where $P_{i j}$ describes the strength of the relationship between the nodes $i$ and $j$. Let $\hat{P}=(\hat{P}_{ij})\in \mathbb{R}^{n \times n}$ be an approximate relational matrix, which can be viewed as a noisy version of $P$ that is computable from observed relations between observations. An example from the random‑network modeling literature assumes that the entries of $\hat{P}$ are the observed adjacency connections between nodes, generated as independent Bernoulli random variables with $P = \mathbb{E}[\hat{P}]$, or some improved estimators based on certain statistical estimation methods \citep{li2023network}; another example discussed in detail in Section~\ref{secsec:embedding} involves stochastic embedding algorithms for which $\hat{P}$ is the similarity between the random embedding output. Intuitively, we expect that $\hat{P}$ does not significantly deviate from $P$.   

In addition to the relational matrix $\hat{P}$, for each node $i$, we observe $\left(x_i, y_i\right)$, where $x_i \in \mathbb{R}^p$ is a covariate vector and $y_i \in \mathbb{R}$ is a scalar response. Denote by $Y=\left(y_1, \ldots, y_n\right)^{\top} \in \mathbb{R}^n$ the response vector and by $X=\left(x_1, \ldots, x_n\right)^{\top} \in \mathbb{R}^{n \times p}$ the design matrix. 

Conditioning on $X$ and $P$, we assume that $y_1,...,y_n$ are independent random variables drawn from a \emph{generalized linear model} (GLM). 
Following \cite{mccullagh2019generalized}, the probability density or probability mass function of $y_i$ can be expressed in the following form:
\begin{equation}
\label{likelihood}
f(y ; \psi_i, \phi)=\exp\left(\frac{y \psi_i-b(\psi_i)}{a(\phi)}+d(y, \phi)\right), \quad i=1, \ldots, n.
\end{equation}
Here, $a$, $b$, and $d$ are specific functions depending on the distribution of $y_i$. For example, when $a(\phi)=1$, $b(\psi_i)=\log(1+e^{\psi_i})$, and $d(y, \phi)=0$, \eqref{likelihood} leads to a logistic regression; when $a(\phi)=1$, $b(\psi_i)=e^{\psi_i}$, and $d(y, \phi)=-\log(y!)$, a Poisson regression is obtained.
In addition, $\phi$ is a known dispersion parameter, and $\psi_i$ is the natural parameter. We write
$$ \psi_i=\psi(\mu_i), \quad \mu_i= \mathbb{E}[y_i| X, P].$$ 

We assume that the expected network-linked response vector
$\mu = \mathbb{E}[Y|X,P]$ depends on the column space spanned by $X$, denoted by $\operatorname{col}(X)$, and a network individual effect vector 
$$\omega \in S_K(P) \subset \mathbb{R}^{n}$$
through a link function, where
$S_K(P)$ is the linear subspace spanned by the $K$ leading eigenvectors of $P$. The assumption that $\omega$ belongs to $S_K(P)$ is natural and supported by existing evidence that leading eigenvectors of the relational matrix typically capture crucial network information \citep{ozgur2008identifying,zeng2018prediction,van2018social,lee2019document}. In particular, building on the modeling approach outlined in \cite{le2022linear}, we assume that $\mu$ is contained in the Minkowski sum of $\operatorname{col}(X)$ and $S_K(P)$ through the link function $h^{-1}$, which is assumed to be \emph{smooth} and \emph{increasing}:
\begin{align}
\label{tmp}
    h^{-1}\circ \mu  = X\upsilon + \omega
    \in \operatorname{col}(X)+ S_K(P):= \{u+v\mid u\in \operatorname{col}(X),v\in S_K(P)\}.
\end{align}
Here, by slight abuse of notation, we use $h^{-1}\circ \mu$ to denote the vector of values of $h^{-1}$ evaluated at entries of $\mu$.

A special and important case for $h^{-1}$ is the {\em natural link function} where $$h^{-1}=\psi, \quad \text{which implies} \quad \psi_i=x^{\top}_i\upsilon+\omega_i, \quad 1\le i\le n.$$ 
For example, the logistic regression assumes the natural link function $h^{-1}(\mu)=\log(\frac{\mu}{1-\mu})$ and Poisson regression assumes the natural link function $h^{-1}(\mu)=\log(\mu)$.

Note that $\operatorname{col}(X)$ and $S_K(P)$ may share a non-trivial subspace intersection, which occurs when both $X$ and $P$ depend on certain latent variables such as node cluster information. To ensure identifiability, we decompose $\operatorname{col}(X)+ S_K(P)$ based on the subspace intersection 
$$\mathcal{R}=\operatorname{col}(X) \cap S_K(P),$$
and parameterize the model as follows.

\begin{definition}
[Network subspace generalized linear model]
\label{def:GLM}
Consider a reparametrization of model \eqref{likelihood} as
\begin{align}
\label{model}
    h^{-1} \circ \mu =X\beta^*+\xi^*+\alpha^*,
\end{align}
where $\beta^* \in \mathbb{R}^p$ and $\xi^*, \alpha^* \in \mathbb{R}^n$ satisfy
\begin{equation}
\label{para}
    \xi^*=X\theta^* \in \mathcal{R}, \quad X \beta^* \perp \mathcal{R}, \quad \alpha^* \in S_K(P), \quad \alpha^* \perp \mathcal{R}.
\end{equation}
\end{definition}
It is straightforward to show that the parameterization in Definition~\ref{def:GLM} is identifiable. That is, if there exist $(\beta, \alpha, \theta)$ and $\left(\beta^{\prime}, \alpha^{\prime}, \theta^{\prime}\right)$ satisfying \eqref{model} and \eqref{para} simultaneously, then $\beta=\beta^{\prime}, \alpha=\alpha^{\prime}$, and $\theta=\theta^{\prime}$.

\subsection{Model fitting by the subspace-constrained maximum likelihood}
\label{sec:algorithm}

We now describe the model fitting procedure for the network subspace generalized linear model.
For ease of presentation, let us first outline this procedure, assuming we observe $S_K(P)$. At a high level, we want to use the restricted maximum likelihood estimator (MLE) under the subspace constraint under Definition~\ref{def:GLM}. Therefore, the estimation is done by solving the following optimization problem:
\begin{align}
\text{maximize}_{\beta, \xi, \alpha}~~&~~ \mathcal{L}(\beta, \xi,\alpha; Y, X)\\
\text{subject to}~~&~~ \alpha, \beta, \xi \text{~satisfy \eqref{para}}\notag
\end{align}
where $\mathcal{L}(\beta, \xi,\alpha; Y, X)$ is the log-likelihood of the data. To handle the subspace constraint in the optimization, we will introduce a reparameterization of our model for the estimation.


\medskip

{\em Reparameterization.} Using \eqref{para}, we first rewrite \eqref{model} in a more convenient form for estimation purposes. 
Denote by $\bar{Z} \in \mathbb{R}^{n \times p}$ a matrix whose columns form an orthonormal basis of the covariate subspace $\operatorname{col}(X)$. 
Similarly, let $\bar{W} \in \mathbb{R}^{n \times K}$ be the matrix whose columns are eigenvectors of $P$ that span the subspace $S_K(P)$. 
The singular value decomposition of matrix $\bar{Z}^{\top} \bar{W}$ takes the form 
$$\bar{Z}^{\top}\bar{W} =U \Sigma V^{\top}.$$ 
Here, $U \in \mathbb{R}^{p \times p}$ and $V \in \mathbb{R}^{K \times K}$ are orthonormal matrices of singular vectors, while $\Sigma \in \mathbb{R}^{p \times K}$ is the matrix with the following singular values on the main diagonal:
\begin{equation}
    \label{defin:model}    \sigma_1=\sigma_2=\cdots=\sigma_r=1>\sigma_{r+1} \geq \cdots \geq \sigma_{r+s}>0=\sigma_{r+s+1}=\cdots=0,
\end{equation}
where $r$ and $s$ denote the number of singular values equal to $1$ and those taking values strictly between 0 and 1, respectively. Note that it is possible for $r$ and $s$ to be zero.
To calculate a basis for the intersection subspace  $\mathcal{R}$, let us denote
\begin{equation}
\label{eq:Z matrix}
   Z=\sqrt{n}\bar{Z} U, \quad  {W}=\sqrt{n}\bar{W} V.
\end{equation}
It follows that
$$\mathcal{R}=\operatorname{col}(Z_{1:r})=\operatorname{col}(W_{1:r}),$$ where $Z_{1:r}\in\mathbb{R}^{n\times r}$ is the submatrix of the first $r$ columns of $Z$ and $W_{1:r}$ is similarly defined. Note that the factor $\sqrt{n}$ ensures that entries of $Z$, $W$, and $X$ are generally of comparable magnitudes.  
We use  
$$\mathcal{C}=\operatorname{col}(Z_{(r+1):p}), \quad \mathcal{N}= \operatorname{col}(W_{(r+1):K})$$
to denote the complement subspaces of $\mathcal{R}$ within $\operatorname{col}(X)$ and $S_K(P)$, respectively. With these notations, 
$$\operatorname{col}(X)+ S_K(P) = \mathcal{R}+\mathcal{C}+\mathcal{N}.$$
Therefore, there exists a vector $\gamma^*\in\mathbb{R}^{p+K-r}$ such that equation \eqref{model} is equivalent to  
\begin{equation}
\label{eq:mean equation 2}
    h^{-1} \circ \mu={Z}_{1:r}\gamma^*_{1:r}+{Z}_{(r+1):p} \gamma^*_{(r+1):p}+{W}_{(r+1):K}\gamma^*_{(p+1):(p+K-r)} =  
    \begin{bmatrix}
        Z ~~&~~{W}_{(r+1):K}
    \end{bmatrix}
    \gamma^*,
\end{equation}
where, for any positive integers $s\le t$, we use $\gamma^*_{s:t}\in\mathbb{R}^{t-s+1}$ to denote the sub-vector of $\gamma^*$ with entries indexed by integers between $s$ and $t$. Note that our parameters of interest are ultimately $\theta^*,\beta^*$, and $\alpha^*$, which can be calculated from $\gamma^*$ as follows:
\begin{eqnarray}
    \label{eq:par trans theta} 
    \theta^* &=&(X^\top X)^{-1} X^\top {Z}_{1:r}{\gamma}^*_{1:r},\\
    \label{eq:par trans beta}
    \quad \beta^* &=& (X^\top X)^{-1} X^\top {Z}_{(r+1):p}{\gamma}^*_{(r+1):p},\\
    \label{eq:par trans alpha}
    \alpha^* &=&
    {W}_{(r+1):K}{\gamma}^*_{(p+1):(p+K-r)}.
\end{eqnarray}
Although $Z$, $W$, and $\gamma^*$ depend on the choice of bases for $\operatorname{col}(X)$ and $S_K(P)$, parameters $\theta^*,\beta^*$, and $\alpha^*$ are invariant with respect to such choice. With these formulas, the problem of estimating parameters in \eqref{para} is equivalent to estimating $\gamma^*$, based on an arbitrary basis $Z$ and $W$ corresponding to the true $P$.

\medskip

{\em Estimating equation -- the ideal case.} We now proceed to estimate $\gamma^*$. 
In light of equation~\eqref{eq:mean equation 2},
let us first denote the $i$-th row of matrix $(Z \ {W}_{(r+1):K})$ 
by $g_i^\top$, or equivalently, 
$$
g_i = (Z_{i,1:p} ~~~~ W_{i,(r+1):K})^\top\in\mathbb{R}^{p+K-r}.
$$
Viewing $g_i$ as a new covariate vector for the $i$-th observation turns the model of Definition~\ref{def:GLM} into a typical generalized linear model with parameter $\gamma^*$ (if we do know $g_i$'s). Using the first-order stationary condition and setting the gradient of the likelihood function to zero leads to the following estimating equation:
\begin{equation}
\label{estimating equation real}S\left(\gamma\right)= \frac{1}{n}\sum_{i=1}^n {g}_i\frac{h^{\prime}\left({g}^{\top}_i\gamma\right)}{v\left({g}^{\top}_i\gamma\right)}  \left(y_i-h\left({g}^{\top}_i\gamma\right)\right)=0,
\end{equation}
where $h^{\prime}(\cdot)$ is the derivative of the inverse link function and  $v(g_i^{\top}\gamma) $ is the variance of $y_i$.
Taking the partial derivative of $-S(\gamma)$, we obtain the oracle information matrix
\begin{equation}
\label{F}    F\left(\gamma\right)= \frac{1}{n} \sum_{i=1}^n  \frac{\left(h^{\prime}\left({g}^{\top}_i\gamma\right)\right)^{2}}{v\left({g}^{\top}_i\gamma\right)} {g}_i {g}^{\top}_i.
\end{equation}
Later on, this matrix will be used to approximate the asymptotic variance in  \eqref{estimating equation}. It is unique up to a rotation due to the choice of basis for $\operatorname{col}(X)$ and $S_K(P)$.

\medskip

{\em Sample version estimators.}
In practice, instead of observing the relational matrix $P$ directly, we only have access to a noisy version $\hat{P}$ of $P$.
We replace $P$ with $\hat{P}$ everywhere in the above procedure. 
In particular,
let $\breve{W} \in \mathbb{R}^{n \times K}$ be the matrix whose columns are eigenvectors of $\hat{P}$ that span the subspace $S_K(\hat{P})$. 
The singular value decomposition of $\bar{Z}^{\top} \breve{W}$ takes the form 
$$\bar{Z}^{\top}\breve{W} =\Tilde{U}\Tilde{\Sigma} \Tilde{V}^{\top}.$$ 
%
Similarly, denote
\begin{equation}
\label{eq:tildeZ matrix}
   \Tilde{Z}=\sqrt{n}\bar{Z} \Tilde{U}, \quad  \Tilde{W}=\sqrt{n}\breve{W} \tilde{V}.
\end{equation}
%
We always assume that $r$, the dimension of $\mathcal{R}$, is known. If it is unknown, \cite{le2022linear} proposed a criterion to select $r$ and we can use it here.
Specifically, let $\hat{d}=\frac{1}{n} \sum_{i, j=1}^n \hat{P}_{i j}$. The following rule can be used to select $r$:
$$
\hat{r}=\max \left\{i: \hat{\sigma}_i \geq 1-\frac{4 \sqrt{p K \log n}}{\hat{d}}\right\}
$$
in which $\hat{\sigma}_i$'s are the singular values of $\bar{Z}^\top\breve{W}$. Under additional assumptions, \cite{le2022linear} showed that $\hat{r}$ can recover $r$ with high probability.

With the known $r$, we estimate $\mathcal{R}$, $\mathcal{C}$, and $\mathcal{N}$ by
\begin{equation}
    \label{subspaceEstimator}
\hat{\mathcal{R}}= \operatorname{col}(\tilde{Z}_{1:r}),\quad
\hat{\mathcal{C}}= \operatorname{col}(\tilde{Z}_{(r+1):p}),\quad \hat{\mathcal{N}}= \operatorname{col}(\Tilde{W}_{(r+1):K}).
\end{equation}
The sample version of the estimating equation takes the form 
\begin{equation}
\label{estimating equation}
\Tilde{S}\left(\gamma\right)= \frac{1}{n}\sum_{i=1}^n \Tilde{g}_i\frac{h^{\prime}\left(\tilde{g}^{\top}_i\gamma\right)}{v\left(\tilde{g}^{\top}_i\gamma\right)}  \left(y_i-h\left(\tilde{g}^{\top}_i\gamma\right)\right)=0,
\end{equation}
where $\Tilde{g}_i$ denote the $i$-th row vector of matrix $\begin{bmatrix}\Tilde{Z} ~~~~ \Tilde{W}_{(r+1):K}\end{bmatrix}$. We solve this equation using the iteratively reweighted least squares method \citep{green1984iteratively}.   Finally, the sample information matrix is given by 
\begin{equation}
\label{tildeF}
\tilde{F}\left(\gamma\right)= \frac{1}{n} \sum_{i=1}^n \frac{\left(h^{\prime}\left(\tilde{g}^{\top}_i\gamma\right) \right)^2}{v\left(\tilde{g}^{\top}_i\gamma\right)}  \Tilde{g}_i \tilde{g}^{\top}_i.
\end{equation}

A summary of this procedure is given in Algorithm \ref{algorithm1}. It is worth mentioning that our algorithm requires access to $K$.
Since the problem of estimating $K$ has been extensively studied  \citep{li2020network,le2022estimating,han2023universal},
we will assume throughout this paper that $K$ is known.

\begin{algorithm}
\caption{\label{algorithm1}Subspace-Constrained Maximum Likelihood Estimation Algorithm}
\KwIn{Design matrix $X\in \mathbb{R}^{n\times p}$, response vector $Y \in \mathbb{R}^{n}$, estimated relational matrix $\hat{P}\in \mathbb{R}^{n\times n}$ and dimension of the intersection subspace $r$.}
\KwOut{Estimators $\hat{\theta}$, $\hat{\beta}$, and $\hat{\alpha}$.}
Calculate the orthonormal basis of $\operatorname{col}(X)$ and form matrix $\bar{Z}\in \mathbb{R}^{n \times p}$ in \eqref{eq:Z matrix};  
calculate $K$ eigenvectors of $\hat{P}$ and form $\breve{W} \in \mathbb{R}^{n \times K}$.

Calculate the singular value decomposition $\bar{Z}^{\top} \breve{W}=\tilde{U} \tilde{\Sigma} \tilde{V}^{\top}$,  and form $\Tilde{Z}= \sqrt{n}\bar{Z}\tilde{U}, \Tilde{W}=\sqrt{n} \breve{W}\tilde{V}.$  

Find the root $\hat{\gamma}$ of the generalized estimating equation $\Tilde{S}(\gamma)=0$ using the iteratively reweighted least squares method, and obtain $\hat{\theta},\hat{\beta},\hat{\alpha}$ by replacing $\gamma$ with $\hat{\gamma}$ in \eqref{eq:par trans theta}, \eqref{eq:par trans beta}, and \eqref{eq:par trans alpha}, respectively.
\end{algorithm}
%
%
%
%

\section{Statistical Inference Properties}\label{sec:theoretical results}
This section provides theoretical results for estimation consistency and statistical inference of the proposed method. 
To this end, we need the following regularity conditions. 

\begin{assumption}[Scaling]
\label{cond:A1}
    $\|X_j\|=\sqrt{n}$ for all columns of $X$. In addition, there exists a constant $C$ such that $\|X\beta^*\|, \|X\theta^*\|$ and $\|\alpha^*\|$ are bounded by $C\sqrt{n}$.
\end{assumption}

\begin{assumption}[Well-conditioned covariates]
    \label{cond:A5}
    There exists a constant $C>0$ such that 
$G=(X^{\top}X/n)^{-1}$ satisfies
\begin{equation*}
    1/C \leq \lambda_{\min }(G) \leq \lambda_{\max }(G) \leq C.
\end{equation*}
\end{assumption}

\begin{assumption}[Boundedness of design vectors]
    \label{cond:A2} 
    There exists a constant $C>0$ such that $\left\|g_i\right\|\leq C$ for all $1\le i \le n$.
\end{assumption}
\begin{assumption}[Well-conditioned information matrix]
    \label{cond:A3} 
There exist constants $\delta,C>0$ such that when  $\|\gamma-\gamma^*\|<\delta$ the oracle information matrix defined in \eqref{F} satisfies that 
\begin{equation*}
    1/C \leq \lambda_{\min }(F(\gamma)) \leq \lambda_{\max }(F(\gamma)) \leq C. 
\end{equation*}
\end{assumption}
\begin{assumption}[Small Projection Perturbation]
    \label{cond:A4} 
 The approximate relational matrix $\hat{P}$ satisfies
$$
\tau_n := n^{-3/2}\frac{\|(\tilde{W} \tilde{W}^{\top}-W W^{\top}) Z\|}{\min \left\{\left(1-\sigma_{r+1}\right)^3, \sigma_{r+s}^3\right\}} ,
$$
for any $n$, where $\sigma_{r+1}$ and $\sigma_{r+s}$ are the singular values in \eqref{defin:model},
and 
$$
\tau_n= o(n^{-1/2}).
$$
\end{assumption}

Assumption \ref{cond:A4}, which is also used in \cite{le2022linear}, is our essential requirement for the level of tolerable network perturbation. This assumption is not directly verifiable unless one specifies both the network's perturbation mechanism (typically unknown in practice) and the choice of $\hat{P}$. For example, under the ``inhomogeneous Erd\H{o}s-R\'enyi'' model, choosing the adjacency matrix $A$ as $\hat{P}$ may require a dense network (average degree above $\sqrt{n}$) for Assumption \ref{cond:A4} to hold, as suggested by \cite{le2022linear}. However, \cite{le2022linear} also shows that using parametric estimation to denoise $A$ can yield a $\hat{P}$ that requires a much weaker assumption under specific models. Generally speaking, one should leverage more efficient estimators of the probability matrix $P$ where appropriate to make Assumption \ref{cond:A4} easier to hold. Notable examples include the nonparametric estimators proposed by \citet{zhang2017estimating} and \citet{li2023network} for general network models, as well as model-specific estimators developed in \citet{ma2020universal} and \citet{rubin2022statistical}. A rigorous theoretical analysis of these estimators falls outside the scope of the present work. However, readers should note that we do not restrict ourselves to the inhomogeneous Erd\H{o}s-R\'enyi framework. Assumption \ref{cond:A4} should be interpreted more broadly—as a robustness criterion applicable beyond a specific network generative model. In our simulation study (Section~\ref{sec:Simulation Studies}), for instance, we consider a scenario in which the perturbation is from a deep-learning-based embedding (which is clearly not an inhomogeneous Erd\H{o}s-R\'enyi model) and the corresponding $\hat{P}$ is the similarity matrix of the embeddings. Empirically, we show that our method yields valid inference in this setting as well.

\begin{assumption}[Moment constraints for responses]
    \label{cond:A6} 
    There exist constants $c>0$, $M_0>0$ and $\xi>2$ such that 
$$\min_{1\le i\le n}\operatorname{Var}\left(y_i\right)>c,
\qquad 
 \mathbb{E}\big|y_i-\mathbb{E}[y_i]\big|^\xi<M_0.
$$
\end{assumption}

Assumption \ref{cond:A6} provides a sufficient condition for the Lindeberg-Feller Central Limit Theorem to hold. A similar constraint has been adopted in \cite{yin2006asymptotic,gao2012asymptotic}. 


\begin{theorem}[Existence and Consistency] 
\label{existence+consistency}
Consider the estimating equation \eqref{estimating equation} and assume that 
   Assumptions \ref{cond:A1}--\ref{cond:A6} hold. There exists $\hat{\gamma}$ such that as $n\to \infty$, 
\begin{equation}
\label{existence_core}
\mathbb{P}\left(\Tilde{S}\left(\hat{\gamma}\right)=0 \right) \rightarrow 1.
\end{equation}
Moreover, the corresponding estimates $\hat{\theta},\hat{\beta}$, and $\hat{\alpha}$, obtained by replacing $\gamma^*$ with $\hat{\gamma}$ in \eqref{eq:par trans theta}, \eqref{eq:par trans beta}, and \eqref{eq:par trans alpha}, respectively, satisfy
\begin{equation}
\label{Consistencyforpara}
    \big\|\hat{\theta}-\theta^*\big\|=o_p(1), \quad
    \big\|\hat{\beta}-\beta^*\big\|=o_p(1), \quad 
    \|\hat{\alpha}-\alpha^*\|=o_p(n^{1/2}).
\end{equation}
\end{theorem}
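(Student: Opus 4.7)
The plan is to first handle the oracle estimating equation $S(\gamma)=0$ built from the true subspace basis $Z,W$, and then transfer the argument to the sample estimating equation $\tilde S(\gamma)=0$ by combining the subspace perturbation bound in Assumption~\ref{cond:A4} with a local analysis of the score and its Jacobian. The oracle score evaluated at the truth, $S(\gamma^*) = n^{-1}\sum_i g_i w(g_i^\top\gamma^*)(y_i-\mu_i)$ with $w = h'/v$, has mean zero because $h(g_i^\top\gamma^*)=\mu_i$, and Assumptions~\ref{cond:A2}, \ref{cond:A3}, \ref{cond:A6} give uniform control of $g_i$, $w$, and the second moment of $y_i-\mu_i$. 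A straightforward Chebyshev-type bound therefore yields $\|S(\gamma^*)\|=O_p(n^{-1/2})$.

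The key intermediate step is to show that the sample score is close to the oracle score. Let $\tilde\gamma^*$ denote the natural image of $\gamma^*$ in the sample basis, obtained by projecting $h^{-1}\circ\mu$ onto $\operatorname{col}(X)+S_K(\hat P)$. Assumption~\ref{cond:A4}, through the $\sin\Theta$-type bound implicit in the definition of $\tau_n$, translates into $\max_i\|\tilde g_i - R g_i\|=o(1)$ for an appropriate rotation $R$ of the $W$-block, together with $\|R^\top\gamma^*-\tilde\gamma^*\|=o(1)$. Taylor expanding $h$ and $w$ around $g_i^\top\gamma^*$ and combining these uniform bounds with the moment control of Assumption~\ref{cond:A6} yields $\|\tilde S(\tilde\gamma^*)-S(\gamma^*)\|=o_p(1)$, hence $\|\tilde S(\tilde\gamma^*)\|=o_p(1)$. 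The same perturbation machinery gives $\sup_{\|\gamma-\tilde\gamma^*\|\le\delta}\|\tilde F(\gamma)-F(\gamma^*)\|=o_p(1)$, so by Assumption~\ref{cond:A3} the Jacobian $-\nabla\tilde S=\tilde F$ is uniformly positive definite on a small ball around $\tilde\gamma^*$ with high probability.

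Existence and local consistency of $\hat\gamma$ then follow by a standard Brouwer fixed-point argument: applying the implicit map $\gamma\mapsto\gamma-\tilde F(\tilde\gamma^*)^{-1}\tilde S(\gamma)$ as a contraction on $B_n=\{\|\gamma-\tilde\gamma^*\|\le\delta_n\}$ for a suitable vanishing $\delta_n$, or, equivalently, verifying that $(\gamma-\tilde\gamma^*)^\top\tilde S(\gamma)$ has a definite sign on $\partial B_n$ and concluding by continuity. This delivers $\hat\gamma\in B_n$ with $\tilde S(\hat\gamma)=0$ and $\|\hat\gamma-\tilde\gamma^*\|=O_p(\|\tilde S(\tilde\gamma^*)\|)=o_p(1)$, which is exactly \eqref{existence_core}.

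To finish, I would pass the $o_p(1)$ control of $\hat\gamma-\tilde\gamma^*$ (and hence, up to rotation, of $\hat\gamma-\gamma^*$) through the formulas~\eqref{eq:par trans theta}--\eqref{eq:par trans alpha} applied with $\tilde Z,\tilde W$ in place of $Z,W$, splitting each error into a parametric part (from $\hat\gamma$) and a basis part (from $\tilde Z-Z$, $\tilde W-W$). Assumption~\ref{cond:A5} controls $(X^\top X)^{-1}$, the columns of $X^\top\tilde Z$ have norm $O(n)$, and the basis part is $o_p(1)$ by Assumption~\ref{cond:A4}, giving $\|\hat\theta-\theta^*\|,\|\hat\beta-\beta^*\|=o_p(1)$. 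For $\hat\alpha-\alpha^*$, the spectral norm of $\tilde W_{(r+1):K}$ equals $\sqrt n$, which upgrades the $o_p(1)$ error in $\hat\gamma$ to $o_p(n^{1/2})$, matching the theorem. The hardest step in this program is Step~2: converting the subspace-level perturbation of Assumption~\ref{cond:A4} into uniform, componentwise control of the new design vectors $\tilde g_i$ and the basis-dependent parameter $\tilde\gamma^*$, while simultaneously keeping the two nonlinearities $h$ and $w=h'/v$ under Taylor control across all~$i$; once that bookkeeping is done, the remaining arguments are routine $M$-estimator consistency tools.
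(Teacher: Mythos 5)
Your proposal is correct and follows essentially the same route as the paper's proof: align the sample and oracle bases (your rotation $R$ plays the role of the paper's alignment matrix $T_n$), bound the score at the aligned truth, establish uniform positive-definiteness of the Jacobian on a small ball, invoke a Brouwer-type root-existence argument, and then split the errors in $\hat\theta,\hat\beta,\hat\alpha$ into a parametric part and a subspace-perturbation part controlled by Assumption~\ref{cond:A4} and Assumption~\ref{cond:A5}. The only substantive difference is quantitative: the paper works on a ball of radius $\delta_0 n^{-1/2}$ and shows $\|T_n^{-1}\hat\gamma-\gamma^*\|=O_p(n^{-1/2})$ (a rate it reuses for the asymptotic distributions), whereas your vanishing-radius $\delta_n$ argument only yields $\|\hat\gamma-\tilde\gamma^*\|=o_p(1)$, which is weaker but still sufficient for the consistency rates claimed in this theorem.
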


Theorem~\ref{existence+consistency} shows that for each $n$, there exists a solution to the estimating equation \eqref{estimating equation} with high probability. In addition, the sequences of corresponding estimates for the true parameters in \eqref{model} are consistent. 
It is worth noting that similar to \cite{yin2006asymptotic}, Theorem~\ref{existence+consistency} itself does not guarantee the uniqueness of the solution $\hat{\gamma}$. This is because the log-likelihood function is generally not concave for certain link functions. However, Corollary~\ref{cor:uniqueness} below shows that restricting the model space to the class with natural link functions, or more generally, link functions ensuring concavity, leads to the uniqueness.

\begin{corollary}[Uniqueness]
\label{cor:uniqueness}
Suppose Assumptions \ref{cond:A1} to \ref{cond:A6} hold and the link function is natural. That is, $h^{-1}=\psi$. Then the estimates in Theorem~\ref{existence+consistency} are unique for sufficiently large $n$. 
\end{corollary}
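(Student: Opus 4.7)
The plan is to exploit the familiar concavity of exponential-family log-likelihoods under the natural link. With $h^{-1} = \psi$, the reparametrized log-likelihood derived from \eqref{likelihood} equals, up to an additive constant not depending on $\gamma$,
$$\mathcal{L}(\gamma; Y, X) = \frac{1}{a(\phi)} \sum_{i=1}^{n} \bigl[y_i\, \tilde{g}_i^\top \gamma - b(\tilde{g}_i^\top \gamma)\bigr].$$
Its gradient is $n\tilde{S}(\gamma)$, so every root of the sample estimating equation is a critical point of $\mathcal{L}$. A direct computation shows the Hessian equals $-\frac{1}{a(\phi)} \sum_i b''(\tilde{g}_i^\top \gamma)\,\tilde{g}_i \tilde{g}_i^\top$, and the natural-link identities $v(\mu_i) = a(\phi) b''(\psi_i)$ and $h' = b''$ give that this is precisely $-n\tilde{F}(\gamma)$. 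Since $b$ is convex on the natural parameter space, $\mathcal{L}$ is concave in $\gamma$ and every critical point is automatically a global maximizer.

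The substantive step is to upgrade concavity to strict concavity by establishing that the stacked design matrix $[\tilde{Z},\, \tilde{W}_{(r+1):K}] \in \mathbb{R}^{n \times (p+K-r)}$ has full column rank on a high-probability event. Its oracle analogue $[Z,\, W_{(r+1):K}]$ has full column rank because, by \eqref{defin:model}, the singular values $\sigma_{r+1}, \ldots, \sigma_{r+s}$ of $\bar{Z}^\top \bar{W}$ are strictly less than one, which forces $\mathcal{C} \cap \mathcal{N} = \{0\}$; combined with the orthonormality of $\bar{Z}$, this yields linear independence of all columns. Assumption~\ref{cond:A4} then bounds the relevant subspace perturbation by $\tau_n = o(n^{-1/2})$, and a Davis-Kahan-type estimate (of the kind already invoked in the proof of Theorem~\ref{existence+consistency}) transfers the smallest-singular-value bound to the sample version with probability tending to one. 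Together with the standing fact that $b''(\psi) > 0$ on the interior of the natural parameter space for non-degenerate exponential families (consistent with the positive variance required by Assumption~\ref{cond:A6}), this gives $\sum_i b''(\tilde{g}_i^\top \gamma) (\tilde{g}_i^\top v)^2 > 0$ for every nonzero $v$. Hence $\tilde{F}(\gamma) \succ 0$ and $\mathcal{L}$ is strictly concave on this event.

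Strict concavity immediately yields at most one critical point of $\mathcal{L}$, and combined with the existence statement of Theorem~\ref{existence+consistency} this forces $\hat{\gamma}$ to be unique for sufficiently large $n$. Uniqueness of $\hat{\theta}$, $\hat{\beta}$, and $\hat{\alpha}$ then follows from the deterministic affine relations \eqref{eq:par trans theta}--\eqref{eq:par trans alpha}. The main obstacle is the rank claim for $[\tilde{Z}, \tilde{W}_{(r+1):K}]$: converting the subspace-perturbation bound of Assumption~\ref{cond:A4} into a quantitative lower bound on the smallest singular value of the sample block matrix is delicate, because the angle between $\hat{\mathcal{C}}$ and $\hat{\mathcal{N}}$ can be small and is controlled by $1-\tilde{\sigma}_{r+1}$. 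Fortunately, the denominator $\min\{(1-\sigma_{r+1})^3, \sigma_{r+s}^3\}$ appearing in the definition of $\tau_n$ is tailored precisely to absorb this angle, so the argument parallels estimates already performed in the proof of Theorem~\ref{existence+consistency} rather than requiring fresh analysis.
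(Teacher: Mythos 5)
Your proposal is correct and follows essentially the same route as the paper: under the natural link the Hessian of the log-likelihood is $-n\tilde F(\gamma)$ with positive diagonal weights, so uniqueness reduces to full column rank of $[\tilde Z\ \ \tilde W_{(r+1):K}]$, which the paper establishes exactly as you sketch, by combining $\sigma_{r+1}<1$ from \eqref{defin:model} with the subspace perturbation bound of Proposition~\ref{proposition 1} to get $\|\hat{\mathcal{P}}_N u\|\le \sigma_{r+1}+C_1\tau_n<1$ for every unit $u\in\operatorname{col}(X)$. The only cosmetic difference is that you frame the rank transfer as a high-probability Davis--Kahan step, whereas under Assumption~\ref{cond:A4} the bound $\tau_n=o(n^{-1/2})$ makes the paper's argument deterministic for sufficiently large $n$, with no additional perturbation analysis needed.
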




Our next result concerns the 
asymptotic distributions of the proposed estimates for $\theta^*$, $\beta^*$, and $\alpha^*$. Since these parameters depend on $\gamma^*$ through equations \eqref{eq:par trans theta}, \eqref{eq:par trans beta}, and \eqref{eq:par trans alpha}, we need the covariance matrices of $\hat{\gamma}_{1:r}$, $\hat{\gamma}_{(r+1):p}$, and $\hat{\gamma}_{(p+1):(p+K-r)}$. These matrices can be estimated by the diagonal blocks of the inverse of the sample information matrix in \eqref{tildeF}, which we denote by $\Tilde{F}^{-1}_1(\hat{\gamma})$, $\Tilde{F}^{-1}_2(\hat{\gamma})$, and $\Tilde{F}^{-1}_3(\hat{\gamma})$, respectively. Thus,    
\begin{align*}
\tilde{F}^{-1}(\hat{\gamma})&=
\left(\begin{array}{ccc}
\tilde{F}^{-1}_1(\hat{\gamma}) & *&* \\
*&\tilde{F}^{-1}_2(\hat{\gamma}) & *\\
* &*& \tilde{F}^{-1}_3(\hat{\gamma})
\end{array}\right),
\end{align*}
where 
$\Tilde{F}^{-1}_1(\hat{\gamma})\in\mathbb{R}^{r\times r}$, $\Tilde{F}^{-1}_2(\hat{\gamma})\in\mathbb{R}^{(p-r)\times(p-r)}$, and $\Tilde{F}^{-1}_3(\hat{\gamma})\in\mathbb{R}^{(K-r)\times(K-r)}$.  In addition, we use $\kappa(\hat{\gamma})\in\mathbb{R}^{n\times n}$ to denote the diagonal matrix with entries $(h^{\prime}(\tilde{g}^{\top}_i\hat{\gamma}))^2/v(\tilde{g}^{\top}_i\hat{\gamma})$, $1\le i\le n$, on the diagonal:
$$\kappa(\hat{\gamma})=\operatorname{diag}\left(\frac{(h^{\prime}(\tilde{g}^{\top}_i\hat{\gamma}))^2}{v(\tilde{g}^{\top}_i\hat{\gamma})}\right).
$$
We are now ready to state the asymptotic distributions of the proposed estimates.

\begin{theorem}[Asymptotic Distributions]
\label{theorem 2}
Assume that Assumptions \ref{cond:A1} to \ref{cond:A6} hold. For each $n$, let $\hat{\theta}$, $\hat{\beta}$, and $\hat{\alpha}$, 
be the estimates based on  $\hat{\gamma}$ satisfying Theorem \ref{existence+consistency}. We have the following results. 
\begin{enumerate}[label=(\alph*)]
\item As $n$ tends to infinity,
\begin{equation}
\label{gamma}
    n\left(\hat{\alpha}-\frac{1}{n}\Tilde{W}_{(r+1):K}\Tilde{W}^{\top}_{(r+1):K} \alpha^*\right)^{\top} \tilde{O}\left(\hat{\alpha}-\frac{1}{n}\Tilde{W}_{(r+1):K}\Tilde{W}^{\top}_{(r+1):K} \alpha^*\right) \ \to \  \chi_{K-r}^2,
\end{equation}
in distribution, where $\chi_{K-r}^2$ denotes the $\chi^2$ distribution with $K-r$ degrees of freedom, and 
$ \tilde{O}=n^{-1}\left(\kappa(\hat{\gamma})-\kappa(\hat{\gamma}) \tilde{Z}\left(\tilde{Z}^{\top} \kappa(\hat{\gamma}) \tilde{Z}\right)^{-1} \tilde{Z}^{\top} \kappa(\hat{\gamma})\right)$.
\item For any fixed unit vector $u \in \mathbb{R}^{p}$, 
assume that 
\begin{equation}
\label{beta_asymptotic_condition}
    n^{-1}\big\|{Z}_{(r+1):p}^{\top} X G u\big\| \geq c
\end{equation}
for some constant $c>0$ and  sufficiently large $n$. Then, 
\begin{equation}
\label{beta}
\frac{\sqrt{n}\big(u^{\top}\hat{{\beta}}-u^{\top}\beta^*\big)}{ n^{-1}\left( u^{\top}  G X^{\top} \Tilde{Z}_{(r+1):p} \Tilde{F}^{-1}_2(\hat{\gamma})\Tilde{Z}_{(r+1):p}^{\top} X G u \right)^{1/2}} \ \to 
 \ \mathcal{N}(0,1), 
\end{equation}
where $\mathcal{N}(0,1)$ denotes the standard normal distribution. 
\item Similarly, for any fixed unit vector $u \in \mathbb{R}^{p}$, assume that 
\begin{equation}
\label{theta_asymptotic_condition}
    n^{-1}\left\|{Z}_{1:r}^{\top} X G u\right\| \geq c
\end{equation}
for some constant $c>0$ and sufficiently large $n$. Then, 
\begin{equation}
\label{theta}    \frac{\sqrt{n}\left(u^{\top}\hat{{\theta}}-u^{\top}\theta^*\right)}{ n^{-1}\left( u^{\top}  G X^{\top}\Tilde{Z}_{1:r} \Tilde{F}^{-1}_1(\hat{\gamma})\Tilde{Z}_{1:r}^{\top} X G u \right)^{1/2}} \ \rightarrow \mathcal{N}(0,1).
\end{equation}
\end{enumerate}

\end{theorem}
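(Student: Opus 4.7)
The plan is to derive the joint asymptotic distribution of $\hat{\gamma}$ from a Taylor expansion of the estimating equation $\tilde{S}(\hat{\gamma})=0$ and then propagate it to $\hat{\theta}$, $\hat{\beta}$, and $\hat{\alpha}$ through the linear maps in \eqref{eq:par trans theta}--\eqref{eq:par trans alpha}. Since $\operatorname{col}(\tilde{Z})=\operatorname{col}(X)=\operatorname{col}(Z)$, the only misalignment between the sample basis $(\tilde{Z},\tilde{W}_{(r+1):K})$ and the oracle basis $(Z,W_{(r+1):K})$ comes from $\tilde{W}$. I introduce a \emph{pseudo-true} parameter $\tilde{\gamma}^*$ by projecting $h^{-1}\circ\mu$ onto $\operatorname{col}([\tilde{Z}\ \tilde{W}_{(r+1):K}])$; in particular $\tilde{\gamma}^*_{(p+1):(p+K-r)} = n^{-1}\tilde{W}_{(r+1):K}^{\top}\alpha^*$, which is the target vector appearing in \eqref{gamma}.

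A first-order expansion around $\tilde{\gamma}^*$, combined with the consistency in Theorem \ref{existence+consistency} and Assumptions \ref{cond:A2}--\ref{cond:A3}, yields
\begin{equation*}
\hat{\gamma}-\tilde{\gamma}^* \;=\; \tilde{F}(\hat{\gamma})^{-1}\,\tilde{S}(\tilde{\gamma}^*) + R_n,\qquad \|R_n\| = o_p(n^{-1/2}).
\end{equation*}
The score $\tilde{S}(\tilde{\gamma}^*)$ then decomposes as
\begin{equation*}
\tilde{S}(\tilde{\gamma}^*) \;=\; \frac{1}{n}\sum_{i=1}^{n} \tilde{g}_i\, w_i\, (y_i-\mu_i) \;+\; \frac{1}{n}\sum_{i=1}^{n} \tilde{g}_i\, w_i\, \big(\mu_i - h(\tilde{g}_i^{\top}\tilde{\gamma}^*)\big),
\end{equation*}
with $w_i = h'(\tilde{g}_i^{\top}\tilde{\gamma}^*)/v(\tilde{g}_i^{\top}\tilde{\gamma}^*)$. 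The second (bias) term stems from the residual $\mu - h(\tilde{g}_i^{\top}\tilde{\gamma}^*)$, which at first order is driven by $\alpha^* - n^{-1}\tilde{W}_{(r+1):K}\tilde{W}_{(r+1):K}^{\top}\alpha^*$; a sin-$\Theta$ argument together with Assumption \ref{cond:A4} bounds its norm by $\tau_n = o(n^{-1/2})$. The first term is a sum of independent mean-zero random variables; Assumption \ref{cond:A6} supplies a Lyapunov-type condition, so the Lindeberg-Feller CLT gives a Gaussian limit with covariance approximated by $\tilde{F}(\tilde{\gamma}^*)/n$, which in turn is consistently estimated by $\tilde{F}(\hat{\gamma})$ via continuity of $h'/v$ and Assumption \ref{cond:A3}.

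With the joint asymptotic normality of $\sqrt{n}(\hat{\gamma}-\tilde{\gamma}^*)$ in hand, parts (b) and (c) follow by writing $\hat{\beta} - \beta^* = n^{-1} G X^{\top}\tilde{Z}_{(r+1):p}(\hat{\gamma}_{(r+1):p} - \tilde{\gamma}^*_{(r+1):p}) + \text{(subspace bias)}$ and analogously for $\hat{\theta}-\theta^*$, where the deterministic subspace-bias terms are $o(n^{-1/2})$ by Assumption \ref{cond:A4}; contracting with a direction $u$ satisfying \eqref{beta_asymptotic_condition} or \eqref{theta_asymptotic_condition} and invoking a one-dimensional CLT produces \eqref{beta} and \eqref{theta}. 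For part (a), $\hat{\alpha} - n^{-1}\tilde{W}_{(r+1):K}\tilde{W}_{(r+1):K}^{\top}\alpha^* = \tilde{W}_{(r+1):K}(\hat{\gamma}_{(p+1):(p+K-r)} - \tilde{\gamma}^*_{(p+1):(p+K-r)})$, so the $(K-r)$-dimensional asymptotic Gaussian vector has covariance equal to the lower-right block of $\tilde{F}(\tilde{\gamma}^*)^{-1}$. A Schur-complement identity shows that the inverse of this block equals $\tilde{W}_{(r+1):K}^{\top}\tilde{O}\tilde{W}_{(r+1):K}$ up to $o_p(1)$, so the quadratic form in \eqref{gamma} is asymptotically $\chi^2_{K-r}$.

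The main technical obstacle is establishing the two $o_p(n^{-1/2})$ controls under only Assumption \ref{cond:A4}: that the pseudo-true $\tilde{\gamma}^*$ differs from the oracle $\gamma^*$ by $o(n^{-1/2})$ in the relevant directions, and that the bias term in the score is of lower order than the stochastic term. Both reduce to sin-$\Theta$-type bounds between $S_K(P)$ and $S_K(\hat{P})$ that, through the denominator $\min\{(1-\sigma_{r+1})^3,\sigma_{r+s}^3\}$ in the definition of $\tau_n$, are quantitatively linked to the spectral perturbation $\|(\tilde{W}\tilde{W}^{\top}-WW^{\top})Z\|$. Carrying this out uniformly, and verifying the information identity outside the natural-link regime (where the Jacobian of $\tilde{S}$ picks up an additional $(y_i-h(\cdot))$-term that must be shown to wash out by the consistency of $\hat{\gamma}$ together with Assumption \ref{cond:A6}), is the most delicate part of the argument.
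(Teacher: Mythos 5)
Your proposal is correct in outline and follows essentially the same route as the paper's proof: the paper linearizes the estimating equation at the aligned parameter $T_n\gamma^*$ defined in \eqref{eq:Tn} — whose blocks coincide (up to $O(\tau_n)=o(n^{-1/2})$) with your pseudo-true coordinates, e.g.\ $n^{-1}\tilde{W}_{(r+1):K}^{\top}\alpha^*$ — controls the sample-versus-oracle discrepancy through Assumption~\ref{cond:A4}, applies the Lindeberg--Feller CLT with Assumption~\ref{cond:A6}, plugs in the sample information matrix, and finishes part (a) with the same Schur-complement identity. Your reformulation — keeping the sample score evaluated at a pseudo-true parameter and isolating the misalignment as an explicit bias term, rather than pre-multiplying by $T_n^{\top}$ and transferring to the oracle score $S(\gamma^*)$ and information $F(\gamma^*)$ as the paper does in Lemmas~\ref{proposition 2}--\ref{lemma 7} and Lemma~\ref{lem:asymp approx} — is an equivalent bookkeeping of the same argument, and the technical obstacles you flag (the extra $(y_i-h(\cdot))$ term in the Jacobian for non-natural links and the $o_p(n^{-1/2})$ alignment controls) are exactly the ones the paper resolves in Lemmas~\ref{lemma 5} and \ref{lemma 6}.
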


To understand condition \eqref{beta_asymptotic_condition}, note that according to \eqref{eq:par trans beta}, $u^{\top}\beta^*$ lies in the linear space spanned by coordinates of $Z_{(r+1):p}^\top X G u$. Condition  \eqref{beta_asymptotic_condition} essentially requires that this projected design does not vanish asymptotically. Otherwise, the inference of $u^{\top}\beta^*$ would not be meaningful. Condition \eqref{theta_asymptotic_condition} has a similar interpretation. These conditions are also needed in \cite{le2022linear}. Note also that in Theorem~\ref{theorem 2},  $n^{-1}\|\tilde{Z}_{(r+1):p}^{\top} X G u\|$, $n^{-1}\|\tilde{Z}_{1:r}^{\top} X G u\|$, and $\Tilde{O}$ are invariant to the choices of bases for $S_K(\hat{P})$ and $\operatorname{col}(X)$. 

Corollary~\ref{cor:uniqueness} and Theorem~\ref{theorem 2} provide the asymptotic distributions for $\hat{\alpha}$, $\hat{\beta}$, and $\hat{\theta}$ that can be used for inference purposes. In particular, \eqref{gamma}, \eqref{theta}, and \eqref{beta}  allow us to test the presence of pure network effect (against $\alpha^*=0$), pure covariate effect (against $\beta^*=0$), and the shared information between the two (against $\theta^*=0$), respectively. For example, when testing against $H_0: \alpha^*=0$, Theorem~\ref{theorem 2} indicates that we can use $n\hat{\alpha}^{\top}\tilde{O}\hat{\alpha}$ as the statistic for a $\chi^2$ test with $K-r$ degrees of freedom.


\section{Simulation Studies}\label{sec:Simulation Studies}
We next present simulation experiments evaluating estimation and inference under two perturbation mechanisms—random‑network perturbations and embedding‑induced perturbations. We study two instances of our model: subspace logistic and subspace Poisson regression.

\subsection{Perturbations from random network models}\label{secsec:random-graph}
We first study the performance of the proposed methods when the observed networks are subject to the perturbations introduced by random network models. In particular, the true relational matrix $P$ in our model is assumed to be a probability matrix taking values in $[0,1]$. Our true model is defined based on $S_K(P)$. The observed network is generated from $P$ following the ``inhomogeneous Erd\"{o}s-R\'{e}nyi" framework: for each $i<j, i,j\in [n]$, generate edges $A_{ij}\sim \text{Bernoulli}(P_{ij})$. Different matrices $P$ tend to generate networks with different structures and the perturbation comes from the randomness of this generating process.

\medskip

\emph{Random network models.} Regarding the network generative mechanisms, we use two low-rank models and a full-rank model. The first is the stochastic block model (SBM) of \cite{holland1983stochastic} with three communities, and the out-in-ratio between blocks is set to be $0.3$. The second model is the degree-corrected block model (DCBM) of \cite{karrer2011stochastic}, where the community connection matrix is the same as the SBM with additional degree parameters varying from 0.2 to 1 (before rescaling). These two models are generated by the R package \emph{randnet} \citep{randnet}. The full-rank model is the one from \cite{zhang2017estimating}, in which $P$ is constructed from the graphon function $g(\mu,\nu)=c /\{1+\exp [15(0.8|\mu-\nu|)^{4 / 5}-0.1]\}$. This graphon model gives a banded matrix along the diagonal. We therefore refer to it as the ``diagonal graphon" model. In all experiments, we vary the sample size $n$ from $500$ to $4000$, and the expected average degree is set to be $\varphi_n=2 \log n, \sqrt{n}, n^{2/3}$ to demonstrate the effect of varying network density.

\medskip

\emph{Subspace and covariates.} Following \cite{le2022linear}, we construct $X \in \mathbb{R}^{n \times p}$ using the eigenvectors $w_1, \ldots, w_n$ from $P$ as follows: Set $X_1 / \sqrt{n}=w_1 ;$ Set $X_2 / \sqrt{n}= w_2/5+ 2\sqrt{6}w_4/5$ \footnote{The main purpose of this design is to find an eigenvector that is orthogonal to $w_1,..., w_3$, so we can easily control the values such as $r, s, \sigma_{r+1}$, etc. It does not have to be $w_4$. }. This configuration yields a design with $r=1, s=1$, 
and the singular values of $Z^\top W$ in \eqref{defin:model} are well separated: $\sigma_1 = 1, \quad \sigma_2 = 1/5, \quad \sigma_3 = 0, $  
ensuring a clear distinction between signal and noise components.  
This separation guarantees that all regularity conditions are satisfied, except for Assumption \ref{cond:A4}, which specifically concerns network perturbation magnitude.  
By keeping  
$\min \left\{\left(1-\sigma_{r+1}\right)^3, \sigma_{r+s}^3\right\} $  
fixed, we can then systematically control and vary the magnitude of perturbations by the average degree of the network.
We set $\beta^*=(0,0.5)^{\top}$ and $\theta^*=(0.5,0)^{\top}$ in all settings. 
Similarly, we set $\gamma_{3:4}^*=(0.5,0.5)^{\top}$.
Then we generate $Y$ from the logistic regression model or Poisson regression model separately, following
\begin{eqnarray*}
Y \mid X &\sim& \text{Bernoulli} \left\{  \frac{\exp\left( X\beta^* + X\theta^* + \alpha^* \right)}{1 + \exp\left( X\beta^* + X\theta^* + \alpha^* \right)} \right\}, \\
Y \mid X &\sim& \text{Poisson} \left\{ \exp\left( X\beta^* + X\theta^* + \alpha^* \right) \right\}.
\end{eqnarray*}
In the model fitting process, we always use the observed adjacency matrix $A$ to approximate the true eigenspace.

\medskip

\emph{Evaluation criterion.} For model estimation accuracy, we measure the performance by the mean squared error (MSE) on $\beta_2$, defined as $|\hat{\beta}_2-\beta_2^*|^2$, the mean square prediction error (MSPE) defined as $\|\hat{Y}-\mathbb{E}Y\|^2/n$. For inference, we evaluate the coverage probability of the 95\% confidence interval for $\beta_2$ \footnote{It can be shown that, for $\beta_1$, where $u = (1, 0)^\top$, we have
\(
\left\| Z_2^\top X G u \right\| = 0.
\) 
This configuration violates the requirement
\(
\frac{1}{n} \left\| Z_{r+1:p}^\top X G u \right\| \geq c
\)
in \eqref{beta_asymptotic_condition} for Theorem~\ref{theorem 2}. This implies that the parameter subspace relevant for inference is degenerate. We construct this setting intentionally to examine the theoretical assumption. 
}

\begin{table}[ht]
        \caption{Median MSE ($\times 10^{2}$) and coverage probability for subspace logistic regression under random network perturbations.}
        \centering
{\begin{tabular}{cccccccc}
\hline
\hline
\multirow{2}{*}{ n } & \multirow{2}{*}{ avg.\ degree } & \multicolumn{2}{c}{ SBM} & \multicolumn{2}{c}{DCBM} & \multicolumn{2}{c}{Diag}  \\
 & & MSE  & Coverage & MSE   & Coverage & MSE  & Coverage \\
\hline 
\multirow{3}{*}{ 500 } & $2 \log n$ & 1.16 & 94.6\% & 1.18 & 95.4\% & 1.31 & 92.4\% \\
& $\sqrt{n}$ & 1.15 & 94.8\% & 1.19 & 94.8\% & 1.18 & 93.8\% \\ 
& $n^{2 / 3}$ & 1.13 & 95.2\% & 1.22 & 95.3\% & 1.13 & 94.4\% \\
\hline 
\multirow{3}{*}{ 1000} & $2 \log n$ & 0.56 & 94.7\% & 0.56 & 95.1\% & 0.64 & 93.5\% \\
& $\sqrt{n}$  & 0.57 & 94.9\% & 0.57 & 95.0\% & 0.63 & 93.9\% \\
& $n^{2 / 3}$ & 0.58 & 95.0\% & 0.57 & 95.1\% & 0.60 & 94.7\% \\
\hline 
\multirow{3}{*}{ 2000 } & $2 \log n$ & 0.35  & 93.1\% & 0.29 & 95.1\% & 0.28 & 93.7\% \\
& $\sqrt{n}$  & 0.31 & 94.7\% & 0.28 & 95.0\% & 0.27 & 94.4\% \\
& $n^{2 / 3}$ & 0.30 & 95.1\% & 0.28 & 95.1\% & 0.26 & 94.9\% \\
\hline 
\multirow{3}{*}{ 4000 }& $2 \log n$ & 0.16 & 92.7\% & 0.15 & 94.2\% & 0.15 & 93.5\% \\ 
& $\sqrt{n}$ & 0.14 & 94.9\% & 0.14 & 95.0\% & 0.14 & 94.6\% \\
& $n^{2 / 3}$  & 0.14 & 95.0\% & 0.14 & 95.1\% & 0.14  & 94.8\% \\ 
\hline
\hline
\end{tabular}}
    \label{tableLogit3}
\end{table}

\begin{table}[h]
        \caption{Median MSPE ($\times 10^2$) for subspace logistic regression and benchmarks under traditional random network perturbations.}
        \centering
{\begin{tabular}{cccccccccc}
\hline \hline
n&Network&avg.\ degree&Our Model &Logistic Reg&RNC\\
\hline \hline
\multirow{9}{*}{500}
&& $2 \log n$ & 1.11 &1.34 & 2.56 \\
&SBM & $\sqrt{n}$ & 0.64 & 1.34  & 2.54 \\
&& $n^{2 / 3}$ & 0.31 & 1.34  & 2.50 \\
\cline{2-6}
&& $2 \log n$ & 1.05 & 2.48  & 2.47 \\
&DCBM & $\sqrt{n}$ & 0.60 & 2.48 & 2.46\\
&& $n^{2 / 3}$ & 0.28 & 2.48 & 2.47\\
\cline{2-6}
&& $2 \log n$ & 0.38 & 0.67 & 2.33 \\
&Diag & $\sqrt{n}$ & 0.26 & 0.67 & 2.33\\
&& $n^{2 / 3}$ & 0.18 & 0.67 & 2.32\\
\cline{1-6}
\multirow{9}{*}{1000}&& $2 \log n$ & 0.96 & 2.01  & 2.22 \\
&SBM & $\sqrt{n}$ & 0.43 & 2.01  & 2.19 \\
&& $n^{2 / 3}$& 0.18 & 2.01 & 2.17 \\
\cline{2-6}
&& $2 \log n$ & 0.76& 1.99  & 2.47 \\
&DCBM & $\sqrt{n}$ & 0.40 & 1.99 & 2.49\\
&& $n^{2 / 3}$ & 0.16  & 1.99 & 2.49 \\
\cline{2-6}
&& $2 \log n$ & 0.32 & 2.13  & 2.37 \\
&Diag & $\sqrt{n}$ & 0.17 & 2.13 & 2.36 \\
&& $n^{2 / 3}$ & 0.10 & 2.13 & 2.34 \\
\hline
\hline
\end{tabular}}
    \label{MSPELogit3}
\end{table}

\begin{table}[ht]
        \caption{Median MSE ($\times 10^{-2}$) for $\hat{\beta}_2$ and rejection rate of $\chi^2$ test for subspace logistic regression, under random network perturbations when $\left\|\alpha^*\right\|=0$.}
        \centering
{\begin{tabular}{cccccccc}
\hline
\hline
\multirow{2}{*}{ n } & \multirow{2}{*}{ avg.\ degree } & \multicolumn{2}{c}{ SBM} & \multicolumn{2}{c}{DCBM} & \multicolumn{2}{c}{ Diag}  \\
 & & MSE  & Rejection & MSE   & Rejection & MSE  & Rejection \\
\hline 
\multirow{3}{*}{ 500 } & $2 \log n$ & 1.06 & 4.9\% & 1.10 & 4.6\% & 1.04 & 5.0\% \\
& $\sqrt{n}$ & 1.05 & 4.8\% & 1.11 & 4.8\% & 1.03 & 4.8\% \\ 
& $n^{2 / 3}$ & 1.07 & 4.8\% & 1.10 & 5.0\% & 1.00 & 4.8\% \\
\hline 
\multirow{3}{*}{ 1000} & $2 \log n$ & 0.51 & 4.8\% & 0.53 & 4.8\% & 0.52 & 4.7\% \\
& $\sqrt{n}$  & 0.51 & 4.9\% & 0.53 & 4.8\% & 0.51 & 4.7\% \\
& $n^{2 / 3}$ & 0.52 & 4.9\% & 0.53 & 5.0\% & 0.50 & 4.7\% \\
\hline 
\multirow{3}{*}{ 2000 } & $2 \log n$ & 0.27  & 4.9\% & 0.27 & 4.9\% & 0.25 & 4.9\% \\
& $\sqrt{n}$  & 0.27 & 4.9\% & 0.27 & 4.8\% & 0.24 & 5.0\% \\
& $n^{2 / 3}$ & 0.27 & 5.0\% & 0.27 & 5.0\% & 0.24 & 5.1\% \\
\hline 
\multirow{3}{*}{ 4000 }& $2 \log n$ & 0.13 & 4.9\% & 0.13 & 4.9\% & 0.12 & 5.0\% \\ 
& $\sqrt{n}$ & 0.13 & 5.0\% & 0.13 & 5.1\% & 0.12 & 5.1\% \\
& $n^{2 / 3}$  & 0.13 & 5.0\% & 0.13 & 4.8\% & 0.12 & 4.8\% \\ 
\hline
\hline
\end{tabular}}
    \label{tableLogit3_0}
\end{table}

\emph{Benchmark methods.} A standard logistic regression model without the network component was also included for comparison. In addition, we include the RNC method from \cite{li2019prediction}. The model fitting parameter is chosen by 10-fold cross-validation.

\medskip

\emph{Calculation procedure.}  In order to assess the model’s performance with the randomness from both the response $Y$ and adjacency matrix $A$, we generate $100$ unique adjacency matrices $A$ based on one relational matrix $P$ for each simulation scenario.  For each given $A$, $1000$ replicates of $Y$'s are generated, and the performance metrics (MSE and MSPE) and coverage probability are computed based on the Monte Carlo approximation from these 1000 instantiations. In the outer loop, we repeatedly generate $A$ 100 times, and the median value of the resulting coverage probabilities and MSEs are reported.

\medskip

\begin{table}[ht]
    \centering
    \caption{Median MSE ($\times 10^{2}$) and coverage probability for subspace Poisson regression under random network perturbations.}
{\begin{tabular}{cccccccc}
\hline
\hline
\multirow{2}{*}{ n } & \multirow{2}{*}{ avg.\ degree } & \multicolumn{2}{c}{ SBM} & \multicolumn{2}{c}{DCBM} & \multicolumn{2}{c}{ Diag}  \\
 &  & MSE  & Coverage & MSE  & Coverage & MSE & Coverage \\
\hline 
\multirow{3}{*}{ 500 } & $2 \log n$ & 0.35 & 75.8\% & 0.21 & 75.2\% & 0.53 & 72.4\% \\
& $\sqrt{n}$  & 0.16 & 86.2\% & 0.11 & 90.4\% & 0.38 & 84.9\% \\ 
& $n^{2 / 3}$ & 0.10 & 93.5\% & 0.09 & 93.8\% & 0.23 & 93.4\% \\
\hline 
\multirow{3}{*}{ 1000} & $2 \log n$ & 0.08 & 87.2\% & 0.27 & 29.9\% & 0.27 & 57.7\% \\
& $\sqrt{n}$  & 0.06 & 92.9\% & 0.07 & 82.4\% & 0.11 & 88.8\% \\
& $n^{2 / 3}$  & 0.06 & 94.2\% & 0.04 & 93.5\% & 0.08 & 93.8\% \\ 
\hline 
\multirow{3}{*}{ 2000 } & $2 \log n$ & 0.13 & 43.4\% & 0.14 & 29.1\% & 0.09 & 77.0\% \\
& $\sqrt{n}$ & 0.03 & 86.3\% & 0.03 & 86.4\% & 0.05 & 91.7\% \\
& $n^{2 / 3}$ & 0.03 & 94.2\% & 0.02 & 94.0\% & 0.04 & 93.7\% \\
\hline 
\multirow{3}{*}{ 4000 } & $2 \log n$ & 0.04 & 71.4\% & 0.04 & 89.8\% & 0.61 & 0\% \\ 
& $\sqrt{n}$ & 0.01 & 93.4\% & 0.01 & 94.3\% & 0.06 & 58.3\% \\
& $n^{2 / 3}$ & 0.01 & 94.5\% & 0.01 & 94.6\% & 0.02 & 93.0\% \\ 
\hline
\hline
\end{tabular}}
    \label{tablePoi3}
\end{table}

\begin{table}[ht]
        \caption{Median MSPE for subspace Poisson regression and benchmarks under traditional random network perturbations.}
        \centering
{\begin{tabular}{cccccccccc}
\hline \hline
n&Network&avg.\ degree&Our Model &Poisson Reg&RNC\\
\hline \hline
\multirow{9}{*}{500}
&& $2 \log n$ & 2.30 & 2.81 & 1.68 \\
&SBM & $\sqrt{n}$ & 1.41 & 2.81  & 1.69 \\
&& $n^{2 / 3}$ & 0.53 & 2.81  & 1.80 \\
\cline{2-6}
&& $2 \log n$ & 1.09 & 2.58 & 3.93 \\
&DCBM & $\sqrt{n}$ & 0.72 & 2.58 & 3.89\\
&& $n^{2 / 3}$ & 0.32 & 2.58 & 3.89\\
\cline{2-6}
&& $2 \log n$ & 0.44 & 1.11 & 1.06 \\
&Diag & $\sqrt{n}$ & 0.25 & 1.11 & 1.06\\
&& $n^{2 / 3}$ & 0.07 & 1.11 & 1.05\\
\cline{1-6}
\multirow{9}{*}{1000}&& $2 \log n$ & 1.66 & 4.20  & 3.22\\
&SBM & $\sqrt{n}$ & 0.82 & 4.20  & 3.39 \\
&& $n^{2 / 3}$& 0.27 & 4.20 & 2.72 \\
\cline{2-6}
&& $2 \log n$ & 0.96 & 5.58 & 2.48\\
&DCBM & $\sqrt{n}$ & 0.58 & 5.58 & 2.20\\
&& $n^{2 / 3}$ & 0.21 & 5.58 & 2.29 \\
\cline{2-6}
&& $2 \log n$ & 0.22 & 0.74 & 1.51 \\
&Diag & $\sqrt{n}$ & 0.09 & 0.74 & 1.45 \\
&& $n^{2 / 3}$ & 0.03 & 0.74 & 1.28 \\
\hline
\hline
\end{tabular}}
    \label{MSPEPoi3}
\end{table}

\begin{table}[ht]
    \centering
    \caption{Median MSE ($\times 10^{-2}$) for $\hat{\beta}_2$ and rejection rate of $\chi^2$ test for subspace Poisson regression, under random network perturbations when $\left\|\alpha^*\right\|=0$.}
{\begin{tabular}{cccccccc}
\hline
\hline
\multirow{2}{*}{ n } & \multirow{2}{*}{ avg.\ degree } & \multicolumn{2}{c}{ SBM} & \multicolumn{2}{c}{DCBM} & \multicolumn{2}{c}{ Diag}  \\
 &  & MSE  & Rejection & MSE  & Rejection & MSE & Rejection \\
\hline 
\multirow{3}{*}{ 500 } & $2 \log n$ & 0.11 & 4.8\% & 0.11 & 5.0\% & 0.14 & 4.8\% \\
& $\sqrt{n}$  & 0.11 & 5.0\% & 0.11 & 4.8\% & 0.14 & 5.2\% \\ 
& $n^{2 / 3}$ & 0.11 & 5.0\% & 0.11 & 5.2\% & 0.12 & 5.1\% \\
\hline 
\multirow{3}{*}{ 1000} & $2 \log n$ & 0.06 & 4.9\% & 0.06 & 5.2\% & 0.08 & 4.9\% \\
& $\sqrt{n}$  & 0.06 & 4.9\% & 0.05 & 5.0\% & 0.07 & 4.8\% \\
& $n^{2 / 3}$  & 0.06 & 4.8\% & 0.05 & 5.1\% & 0.06 & 4.9\% \\ 
\hline 
\multirow{3}{*}{ 2000 } & $2 \log n$ & 0.03 & 5.0\% & 0.03 & 5.1\% & 0.04 & 5.0\% \\
& $\sqrt{n}$ & 0.03 & 4.8\% & 0.03 & 4.9\% & 0.03 & 4.8\% \\
& $n^{2 / 3}$ & 0.03 & 4.9\% & 0.03 & 4.8\% & 0.03 & 5.1\% \\
\hline 
\multirow{3}{*}{ 4000 }& $2 \log n$ & 0.01 & 4.9\% & 0.01 & 5.1\% & 0.02 & 5.0\% \\ 
& $\sqrt{n}$ & 0.01 & 4.8\% & 0.01 & 4.9\% & 0.02 & 5.0\% \\
& $n^{2 / 3}$ & 0.01 & 4.8\% & 0.01 & 5.0\% & 0.02 & 5.0\% \\ 
\hline
\hline
\end{tabular}}
    \label{tablePoi3_0}
\end{table}

Table \ref{tableLogit3} shows how our method performs under the network subspace logistic regression model. Overall, the performance improves with the sample size $n$ and the expected average degree of the network model. The denser networks make the problem easier because the concentration of the adjacency matrix to the true $P$ is better. Table \ref{tableLogit3} shows that if $A$ is used under the current network generative procedure, an average degree higher than $\sqrt{n}$ is sufficient for good inference accuracy. This is consistent with the observation in  \cite{le2022linear}. Table \ref{MSPELogit3} presents the MSE comparison between our model and the two benchmarks. Our method clearly outperforms the RNC and standard logistic regression.

Table~\ref{tableLogit3_0} summarizes the model's performance while $\left\|\alpha^{*}\right\|=0$. The specific focus is the rejection rate of the $\chi^2$ test at the level 0.05.  The results suggest that the $\chi^2$ test performs well under the null hypothesis with the desired level of type I error control.

Under the Poisson model, we use the same configuration except for replacing the logistic distribution with the Poisson distribution. The same results are presented in Table \ref{tablePoi3}, Table \ref{MSPEPoi3} and Table \ref{tablePoi3_0}. When the average degree surpasses the order of $\sqrt{n}$, the asymptotic validity holds. Under the diagonal graphon model, the perturbation has a stronger impact, but the inference remains approximately correct with the current sample size for sufficiently dense networks.  
The overall message remains the same as in the logistic regression setting.


\subsection{Network perturbations from deep-learning-based embedding methods}
\label{secsec:embedding}
We now consider another application scenario in which the proposed model can be used.  
Suppose we want to use embedding methods from deep-learning community to extract the network information. Multiple recent works \citep{pozek2019performance,pranathi2021node,liu2024controlling} take this strategy to incorporate network information, with the belief that these methods can capture high-order network relations more effectively by their highly nonlinear operations. 

\begin{table}[ht]
    \centering
        \caption{Median MSE ($\times 10^{2}$), coverage probability and MSPE ($\times 10^{2}$) for subspace logistic regression with different types of network of size $1000$ under network embedding perturbations.}
\begin{tabular}{ccccccccccc}
\hline \hline
\multirow{2}{*}{Method} &\multirow{2}{*}{Network}&\multirow{2}{*}{avg.\ degree}& 
\multirow{2}{*}{MSE}&\multirow{2}{*}{Coverage} &\multicolumn{2}{c}{MSPE}\\
&&&&&Our Model &Logistic Reg\\
\hline \hline
\multirow{9}{*}{DeepWalk}
&& $2 \log n$ & 0.60 &94.6\% & 0.12 & 0.61 \\
&SBM & $\sqrt{n}$ & 0.59 & 94.9\%  & 0.11 & 1.75 \\
&& $n^{2 / 3}$ & 0.58 & 94.9\% & 0.11 & 2.04 \\
\cline{2-7}
&& $2 \log n$ & 0.75 & 94.8\%  & 0.13 & 0.34 \\
&DCBM & $\sqrt{n}$ & 0.62 & 94.5\%  & 0.13 & 1.98\\
&& $n^{2 / 3}$ & 0.60 & 94.8\%  & 0.12 & 1.31 \\
\cline{2-7}
&& $2 \log n$ & 0.77 & 95.2\%  & 0.08 & 1.08 \\
&Diag & $\sqrt{n}$ & 0.58 & 95.1\%  & 0.09 & 1.56 \\
&& $n^{2 / 3}$ & 0.62 & 95.2\%  & 0.09 & 1.54 \\
\cline{1-7}
\multirow{9}{*}{Node2Vec}&& $2 \log n$ & 0.59 & 94.6\%  & 0.12 & 1.61 \\
&SBM & $\sqrt{n}$ & 0.59 & 94.9\%  & 0.12 & 1.35 \\
&& $n^{2 / 3}$& 0.60 & 94.9\% & 0.12 & 1.90 \\
\cline{2-7}
&& $2 \log n$ & 0.69 & 94.8\%  & 0.12 & 0.29 \\
&DCBM & $\sqrt{n}$ & 0.68 & 94.9\% & 0.12 & 0.24\\
&& $n^{2 / 3}$ & 0.65 & 94.8\% & 0.12 & 1.11 \\
\cline{2-7}
&& $2 \log n$ & 0.56 & 94.9\%  & 0.09 & 1.45\\
&Diag & $\sqrt{n}$ & 0.52 & 95.0\%  & 0.08 & 1.25 \\
&& $n^{2 / 3}$ & 0.58 & 94.9\%  & 0.09 & 1.31 \\
\cline{1-7}
\multirow{9}{*}{Diff2Vec}&& $2 \log n$ & 0.56 & 94.9\% & 0.09 & 1.12 \\
&SBM & $\sqrt{n}$ & 0.52 & 95.0\% & 0.09 & 1.08 \\
&& $n^{2 / 3}$ & 0.58 & 94.9\% & 0.09 & 1.26 \\
\cline{2-7}
&& $2 \log n$ & 0.66 & 94.3\% & 0.14 & 1.18 \\
&DCBM & $\sqrt{n}$ & 0.72 & 94.7\%  & 0.12 & 1.21\\
&& $n^{2 / 3}$ & 0.62 & 94.8\% & 0.11 & 1.19 \\
\cline{2-7}
&& $2 \log n$ & 0.56 & 94.9\%  & 0.09 & 0.86 \\
&Diag & $\sqrt{n}$ & 0.52 & 95.0\% & 0.09 & 1.01 \\
&& $n^{2 / 3}$ & 0.58 & 94.9\%  & 0.09 & 1.09 \\
\hline
\hline
\end{tabular}
\label{embed1000logistic}
\end{table}

Our subspace generalized linear model, with its flexibility in specifying a proper subspace $S_K(P)$, can seamlessly leverage this embedding information. Specifically, we can assume the inner product similarities of the embedded vectors as the perturbed relational information $\hat{P}$, with the true relational matrix being an unobserved similarity matrix that can be different from the random embedded similarities. In these cases, even if the network is usually treated as fixed, the embedding algorithms are typically random by nature. This randomness in embeddings raises concerns about the validity of modeling and inference if one uses a specific embedding in the model. In this section, we use simulation experiments to evaluate the validity of our model's inference under such perturbations of embeddings. The study of statistical properties of the embedding methods is rare in the literature. To our knowledge, \citet{zhang2023theoretical} provides related analysis for community detection; we are not aware of prior empirical studies examining how deep‑learning–based embeddings affect downstream inference.

We consider three popular network embedding methods, DeepWalk \citep{perozzi2014deepwalk}, Node2Vec \citep{node2vec-kdd2016}, and Diff2Vec \citep{rozemberczki2018fast} to demonstrate these scenarios. DeepWalk was one of the earliest graph embedding methods from the deep learning community, and Node2Vec is a generalization of DeepWalk. Diff2Vec uses the more recent diffusion framework to define the embeddings. The implementations of DeepWalk and Node2Vec are available in the Python package \emph{node2vec} \citep{node2vec-kdd2016}, and Diff2Vec is implemented in the Python package \emph{karateclub} \citep{karateclub}. In our simulation, we always use the recommended configurations of these methods. For DeepWalk and Node2Vec, each embedding is based on $10$ walks per node of length $80$. For Node2Vec, the return probability is set to $0.5$. For Diff2Vec, we use $20$ trees per node of size $80$. The network embedding dimension is always set to $3$.

\emph{Design of relational matrix.} We first generate a network $A$ from one of the three models in Section~\ref{secsec:random-graph}, and fix the network $A$. Given $A$, all three embedding methods are random and result in different embeddings each time. Therefore, for each embedding method, suppose $\mathcal{F}$ is the embedding of $A$ and, intuitively, we can use $\mathcal{F}\mathcal{F}^\top$ as the available similarity matrix from data. The perturbation of network information comes from the randomness of $\mathcal{F}$. Specifically, in this context, we set the true relational matrix as the oracle central similarity $P = \mathbb{E}[\mathcal{F}\mathcal{F}^\top]$, and $\hat{P} = \mathcal{F}\mathcal{F}^\top$. The design matrix $X$ and other quantities are generated in the same manner as in Section~\ref{secsec:random-graph} based on the current $P$.

\begin{table}[ht]
    \centering
        \caption{Median MSE ($\times 10^{2}$), coverage probability and MSPE ($\times 10^{2}$) for subspace logistic regression with different types of network of size $2000$ under network embedding perturbations.}
\begin{tabular}{ccccccccccc}
\hline \hline
\multirow{2}{*}{Method} &\multirow{2}{*}{Network}&\multirow{2}{*}{avg.\ degree}& 
\multirow{2}{*}{MSE}&\multirow{2}{*}{Coverage} &\multicolumn{2}{c}{MSPE}\\
&&&&&Our Model &Logistic Reg\\
\hline \hline
\multirow{9}{*}{DeepWalk}
&& $2 \log n$ & 0.30 &94.4\% & 0.08 & 1.30 \\
&SBM & $\sqrt{n}$ & 0.29 & 94.8\%  & 0.07 & 2.18 \\
&& $n^{2 / 3}$ & 0.29 & 94.7\% & 0.08 & 1.90 \\
\cline{2-7}
&& $2 \log n$ & 0.48& 94.5\%  & 0.09 & 1.90 \\
&DCBM & $\sqrt{n}$ & 0.32 & 94.8\%  & 0.08 & 0.71\\
&& $n^{2 / 3}$ & 0.30 & 94.8\%  & 0.08 & 1.68 \\
\cline{2-7}
&& $2 \log n$ & 0.31 & 95.0\%  & 0.04 & 1.17 \\
&Diag & $\sqrt{n}$ & 0.29 & 95.0\%  & 0.05 & 1.17 \\
&& $n^{2 / 3}$ & 0.29 & 95.0\%  & 0.05 & 1.24 \\
\cline{1-7}
\multirow{9}{*}{Node2Vec}&& $2 \log n$ & 0.30 & 94.3\%  & 0.09 & 1.80 \\
&SBM & $\sqrt{n}$ & 0.28& 94.8\%  & 0.08 & 2.08 \\
&& $n^{2 / 3}$& 0.29 & 94.9\% & 0.08 & 2.08 \\
\cline{2-7}
&& $2 \log n$ & 0.37& 94.7\%  & 0.08 & 1.09 \\
&DCBM & $\sqrt{n}$ & 0.32 & 94.8\% & 0.08 & 1.17\\
&& $n^{2 / 3}$ & 0.30 & 94.8\% & 0.08 & 1.79 \\
\cline{2-7}
&& $2 \log n$ & 0.31 & 94.9\%  & 0.05 & 0.88\\
&Diag & $\sqrt{n}$ & 0.29 & 95.0\%  & 0.05 & 1.21 \\
&& $n^{2 / 3}$ & 0.29 & 95.1\%  & 0.05 & 1.18 \\
\cline{1-7}
\multirow{9}{*}{Diff2Vec}&& $2 \log n$ & 0.32 & 93.5\% & 0.10 & 1.07 \\
&SBM & $\sqrt{n}$ & 0.30 & 94.8\% & 0.07 & 1.21 \\
&& $n^{2 / 3}$ & 0.30 & 94.6\% & 0.07 & 1.08 \\
\cline{2-7}
&& $2 \log n$ & 0.30 & 94.2\% & 0.10 & 1.15 \\
&DCBM & $\sqrt{n}$ & 0.32 & 94.4\%  & 0.08 & 1.10\\
&& $n^{2 / 3}$ & 0.30 & 94.6\% & 0.07 & 0.99 \\
\cline{2-7}
&& $2 \log n$ & 0.30 & 93.8\%  & 0.12 & 1.22 \\
&Diag & $\sqrt{n}$ & 0.33 & 94.7\% & 0.09 & 1.19 \\
&& $n^{2 / 3}$ & 0.32 & 94.8\%  & 0.08 & 1.14 \\
\hline
\hline
\end{tabular}
\label{embed2000logistic}
\end{table}

\begin{table}[ht]
    \centering
        \caption{Median MSE ($\times 10^{2}$), coverage probability, and MSPE for subspace Poisson regression with different types of network of size $1000$ under network embedding perturbations.}
\begin{tabular}{ccccccccccc}
\hline \hline
\multirow{2}{*}{Method} &\multirow{2}{*}{Network}&\multirow{2}{*}{avg.\ degree}&\multirow{2}{*}{MSE}& 
\multirow{2}{*}{Coverage} &\multicolumn{2}{c}{MSPE}\\
&&&&&Our Method &Poisson Reg\\
\hline \hline
\multirow{9}{*}{DeepWalk}
&& $2 \log n$ & 0.14 & 93.3\%  & 2.12 & 12.1 \\
&SBM & $\sqrt{n}$   & 0.12 & 94.5\% & 2.02 & 62.7 \\
&& $n^{2 / 3}$  & 0.10 & 94.5\% & 1.92 & 56.7 \\
\cline{2-7}
&& $2 \log n$  & 0.16 & 93.5\% & 2.97 & 23.0 \\
&DCBM & $\sqrt{n}$ & 0.09 & 92.9\%  & 3.45 & 59.9 \\
&& $n^{2 / 3}$   & 0.12 & 94.3\% & 1.83 & 21.7 \\
\cline{2-7}
 && $2 \log n$  & 0.20 & 94.9\% & 1.12 & 11.7 \\
&Diag & $\sqrt{n}$  & 0.11 & 94.8\% & 0.88 & 24.3 \\
&& $n^{2 / 3}$  & 0.13 & 94.9\% & 0.84 & 25.0\\
\cline{1-7}
\multirow{9}{*}{Node2Vec}&& $2 \log n$   & 0.12 & 93.0\% & 3.09 & 64.7 \\
&SBM & $\sqrt{n}$   & 0.13 & 94.2\% & 1.84 & 19.4 \\
&& $n^{2 / 3}$   & 0.13 & 94.3\% & 1.97 & 43.5 \\
\cline{2-7}
&& $2 \log n$  & 0.09 & 93.7\% & 3.74 & 27.8\\
&DCBM & $\sqrt{n}$ & 0.09 & 93.9\%  & 2.37 &  4.33 \\
&& $n^{2 / 3}$  & 0.08 & 93.7\% & 3.58 & 42.1 \\
\cline{2-7}
&& $2 \log n$   & 0.16 & 94.7\% & 0.62 & 12.7\\
&Diag & $\sqrt{n}$  & 0.10 & 95.0\% & 0.80 & 13.7 \\
&& $n^{2 / 3}$  & 0.11 & 94.7\% & 0.89 & 22.2 \\
\cline{1-7}
\multirow{9}{*}{Diff2Vec}
&& $2 \log n$ & 0.16 & 90.9\% & 2.30 & 25.5\\
&SBM & $\sqrt{n}$  & 0.15 & 93.9\% & 1.72 & 16.0 \\
&& $n^{2 / 3}$  & 0.15 & 94.4\% & 1.33 & 27.3 \\
\cline{2-7}
&& $2 \log n$ & 0.15 & 91.9\% & 3.12 & 27.7\\
&DCBM & $\sqrt{n}$  & 0.18 & 93.5\% & 2.19 & 24.5 \\
&& $n^{2 / 3}$  & 0.11 & 93.9\% & 2.14 & 20.2 \\
\cline{2-7}
&& $2 \log n$ & 0.16 & 90.9\% & 2.30 & 6.44 \\
&Diag & $\sqrt{n}$ & 0.27 & 94.9\% & 0.92 & 9.21 \\
&& $n^{2 / 3}$  & 0.18 & 94.9\% & 0.79 & 12.1 \\
\hline
\hline
\end{tabular}
\label{embed1000Poisson}
\end{table}

\begin{table}[ht]
    \centering
        \caption{Median MSE ($\times 10^{2}$), coverage probability and MSPE for subspace Poisson regression with different types of network of size $2000$ under network embedding perturbations.}
\begin{tabular}{ccccccccccc}
\hline \hline
\multirow{2}{*}{Method} &\multirow{2}{*}{Network}&\multirow{2}{*}{avg.\ degree}&\multirow{2}{*}{MSE}& 
\multirow{2}{*}{Coverage} &\multicolumn{2}{c}{MSPE}\\
&&&&&Our Method &Poisson Reg\\
\hline \hline
\multirow{9}{*}{DeepWalk}
&& $2 \log n$ & 0.06 & 92.9\%  & 1.99 & 28.6 \\
&SBM & $\sqrt{n}$   & 0.06 & 94.0\% & 1.58 & 45.1 \\
&& $n^{2 / 3}$  & 0.06 & 93.8\% & 1.27 & 22.4 \\
\cline{2-7}
&& $2 \log n$  & 0.07& 91.8\% & 2.70 & 41.1 \\
&DCBM & $\sqrt{n}$ & 0.04 & 93.7\%  & 2.21 & 19.4 \\
&& $n^{2 / 3}$   & 0.06& 93.8\% & 1.59 & 21.1 \\
\cline{2-7}
 && $2 \log n$   & 0.07& 94.9\% & 0.91 & 12.2 \\
&Diag & $\sqrt{n}$   & 0.05& 94.8\% & 0.65 & 23.5 \\
&& $n^{2 / 3}$  & 0.05 & 94.7\% & 0.63 & 25.4 \\
\cline{1-7}
\multirow{9}{*}{Node2Vec}&& $2 \log n$   & 0.08 & 93.2\% & 1.93 & 40.8 \\
&SBM & $\sqrt{n}$   & 0.06& 93.7\% & 1.54 & 35.1 \\
&& $n^{2 / 3}$   & 0.07& 94.0\% & 1.23 & 24.7 \\
\cline{2-7}
&& $2 \log n$  & 0.06 & 91.9\% & 2.77 & 60.7 \\
&DCBM & $\sqrt{n}$ & 0.05 & 94.0\%  &2.23 &  22.1 \\
&& $n^{2 / 3}$   & 0.05& 93.7\% & 2.40 & 18.0 \\
\cline{2-7}
&& $2 \log n$   & 0.06 & 94.7\% & 0.62 & 19.2 \\
&Diag & $\sqrt{n}$  & 0.06 & 94.7\% & 0.61 & 21.6 \\
&& $n^{2 / 3}$  & 0.05 & 94.5\% & 0.69 & 21.3 \\
\cline{1-7}
\multirow{9}{*}{Diff2Vec}&& $2 \log n$  & 0.09 & 87.7\%  & 2.58 & 22.1 \\
& SBM & $\sqrt{n}$   & 0.07 & 94.0\% & 1.19 & 34.7 \\
&& $n^{2 / 3}$   & 0.07 & 93.9\% & 1.42 & 13.4 \\
\cline{2-7}
&& $2 \log n$ & 0.07 & 93.2\% & 2.41 & 26.4\\
&DCBM & $\sqrt{n}$  & 0.07 & 93.7\% & 1.85 & 20.7 \\
&& $n^{2 / 3}$  & 0.06 & 93.0\% & 1.59 & 10.0 \\
\cline{2-7}
&& $2 \log n$ & 0.12 & 90.5\% & 1.31 & 17.2 \\
&Diag & $\sqrt{n}$ & 0.07 & 93.9\% & 0.86 & 15.0 \\
&& $n^{2 / 3}$  & 0.08 & 94.4\% & 0.67 & 15.1 \\
\hline
\hline
\end{tabular}
\label{embed2000Poisson}
\end{table}

Tables \ref{embed1000logistic} to \ref{embed2000Poisson} present the performance metrics under perturbations from different embedding algorithms, evaluated across three network types with varying average degrees for sample sizes $n = 1000,\ 2000$. 
Additional results for $n=500$ are provided in Section~\ref{sec:embed n=500}.
Note that the previous benchmark method, RNC, is not applicable in this new setting and was removed from the comparison. This also demonstrates the flexibility of our subspace-based model.

For each of the three embedding mechanisms, the estimation accuracy and inference correctness (measured by the coverage probability) exhibit mild improvements with the increase of network density. But overall, the density is no longer a very clear indicator of the perturbation level in this case, because the perturbation is contributed by the randomness of the embedding algorithms. Among the three mechanisms, Diff2Vec is more vulnerable to density change, and overall results in larger perturbations. For example, on networks with an average degree of $2\log n$ for sample size $n=2000$, the coverage probability misses the target level by a lot. DeepWalk and Node2Vec are more robust and the resulting perturbations tend to satisfy the small projection perturbation requirements. For embedding methods, our model estimation remains accurate, and the statistical inference is still approximately correct. This result demonstrates the applicability of our inference framework in broader scenarios in modern machine learning.

\section{Social and Educational Effect Study of School Conflicts}\label{sec:data}

With the proposed model, we will analyze data from the school conflict study introduced in Section~\ref{secsec:motivation}. Following the original strategy of \cite{paluck2016changing}, we use self-reported wearing of an orange wristband to assess the impact of anti-conflict interventions on behaviors that promote a positive school climate: 
Each week, orange wristbands were distributed to students who were observed engaging in positive, conflict‑reducing behavior. All students in the schools were eligible to receive a wristband as recognition for the conflict-mitigating behaviors. 
If a student is reported wearing an orange wristband,
the response variable $Y$ is set to $1$; otherwise, it is $0$. We fit the subspace logistic regression described in Section~\ref{sec:methodology} to analyze this response.

To identify features strongly associated with the allocation of orange wristbands, we use individual attributes from the supplementary materials of \cite{paluck2016changing} as potential predictors. These include Treatment (participation in weekly training: Yes/No), Gender (Male/Female), Race (White/Hispanic/Black/Asian/Others), Grade, ``Friends-like-house" (friends say I have a nice house: Yes/No), and Home-language (speaks another language at home: Yes/No). Additionally, we incorporate GPA (grade point average on a 4.0 scale) and a binary covariate, Influencer (nominated by the teacher as influential). 
An individual school effect parameter is introduced for each school to account for school-level differences. We remove students with missing values and then use the largest connected component from each school. 
The final dataset contains 8,685 students from 25 schools. Each network has an average size of 347.1 students and an average degree of 10.7. Among all students, 1,391 received an orange wristband.

Based on our evaluation, embedding methods mentioned before, such as Node2Vec, do not improve predictive performance (see Section~\ref{sec:embed}) for the current dataset. Therefore, we will directly use the observed networks for better interpretability.

\subsection{Model fitting and interpretations}

We use the average of two friendship adjacency matrices from two survey waves (at the start and end of the school year) as our denoised $\hat{P}$ to measure the relations between students. It turns out that in this example using either matrix alone yields similar analyses, thanks to the robustness of our framework to network perturbations (see Section~\ref{appendix:robustness}). This robustness is a crucial advantage of our framework.

\begin{figure}[h]
\vspace{-0.5cm}
\includegraphics[width=1\linewidth]{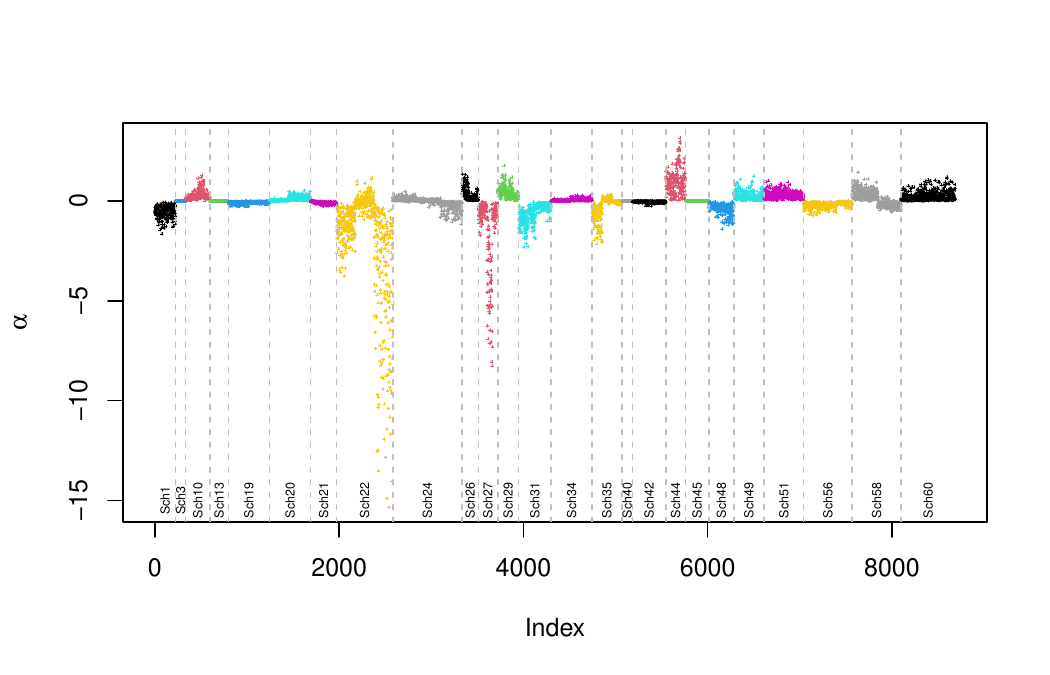}
\vspace{-1cm}
    \caption{Fitted $\hat{\alpha}$ from our model} 
    \label{Alpha}
\end{figure}

Our model incorporates the top 31 eigenvectors of the adjacency matrix to capture network effects, selected by \cite{chatterjee2015matrix}.
The $\chi^2$ test for network effects gives a very small $p$-value  ($< 10^{-8}$), indicating a significant contribution of the network information. The estimated $\hat{\alpha}_i$'s are shown in Figure \ref{Alpha}. The corresponding estimated coefficients of covariates are included in Section~\ref{sec:more-data}. 
The estimated value of 
$r$ is $0$, suggesting that there is no overlap between the covariate and the network structure.

Based on Figure~\ref{Alpha}, network effects vary significantly in magnitude between different schools. A few of the schools exhibit strong social influences, while many other schools exhibit minor social network effects. Aggregating all schools together  would dilute the significance of social effects. To gain a more comprehensive understanding of these dynamics, we introduce another layer of analysis on the schools with the most pronounced network effects, allowing us to explore the underlying factors driving these stronger social influences.

We define a school-specific network-effect-strength $t_j := \sum_{i\in O_j}|\hat{\alpha}_i|/\sum_{i\in O_j}|x_i^{\top}\hat{\beta}|$, where $O_j$ is the index set of students in the $j$th school. We select the five schools with the largest $t_j$ values (School ID 1, 22, 27, 31, and 48 in the dataset) for further analysis. We then apply our subspace logistic model, standard logistic regression, and the RNC logistic regression to the data. 
For our model and the standard logistic regression, important predictors are selected using backward elimination, whereby variables (including school fixed effect) with the largest $p$-values exceeding 0.05 after Bonferroni correction are removed sequentially until no further elimination is possible.
As the RNC model lacks an inference framework, we retain all variables in that model. The results of the three fitted models and their corresponding $p$-values, before and after backward elimination, are summarized in Table \ref{ModelFitting_alter_before} and \ref{ModelFitting_alter}, respectively.\footnote{Note that the $p$-values do not account for the selection of the five schools based on data and the backward elimination of variables. These results are used primarily for qualitative interpretations. In Section~\ref{secsec:validation}, we validate the models more rigorously by their prediction performance, accounting for the variable selection procedure.}

In the selected dataset of five schools, our model picks $K=13$ while $r$ is estimated to be 0. Again, the $\chi^2$ test gives a very small $p$-value, providing strong evidence of social effects. 
To better interpret the estimated network effect $\alpha$, we compute the correlation between $\left|\alpha\right|$ and a binary indicator of seed-eligible students, identified using the algorithm described in the supplement of \cite{paluck2016changing}, along with four network centrality metrics: degree centrality, betweenness centrality, eigenvector centrality, closeness centrality. The results are presented in Table \ref{tab:sub_corr_table}.

It can be observed that $\hat{\alpha}$ has a moderate correlation with degree centrality and eigenvector centrality. But it is only weakly correlated with the other centrality metrics. It is also marginally correlated with the seed eligibility of students. This observation indicates that the network effects capture signals that cannot be primarily explained by these commonly used node-level statistics. Another observation is that the correlation values of $\left|\hat{\alpha}\right|$ across different centrality measures are higher for the selected schools than in the full dataset, indicating that inference on these selected schools is more effective at identifying network effects.

The estimated treatment coefficient is similar across the three models. This might be expected due to the random assignment implemented by the experimenters, making this variable uncorrelated with other effects. However, unlike the RNC, our method and the standard logistic regression can use their $p$-values to show that the treatment effect is indeed significant.

\begin{table}[ht]
     \caption{Model fitting and inference results (before variable selection through backward elimination) on the five schools with the strongest network effects.}
     \centering
{
\begin{tabular}{rllllll}
\hline \hline 
& \multicolumn{2}{c}{ Our Model } & \multicolumn{2}{c}{ Logistic Reg } & \multicolumn{2}{c}{ RNC } \\
& coef. & $p$-value & coef. & $p$-value & coef. & $p$-value \\
\hline 
Treatment  &0.636& 0.005 & 0.678 & 0.004 &0.684 &  \\
Gender: Male & -0.454 & 0.002 &-0.440 & 0.003 &-0.521& \\
Grade & 0.131 & 0.141 & -0.119 & 0.126  &-0.253 & \\
Friends like house & 0.011 & 0.472 & -0.066 & 0.650 &-0.077 & \\
Home language& 0.203 & 0.137 & 0.271 & 0.134 & 0.212& \\
GPA& 0.106 & 0.245 & 0.054 & 0.721 & -0.230& \\
Influencer & 0.393 & 0.047 & 0.366 & 0.092 &0.504 &  \\
Race: White & -0.369 & 0.058 & -0.296 & 0.192  &-0.488 & \\
Race: Black &-0.085 & 0.401 & -0.102 & 0.756 &-0.229 & \\
Race: Hispanic & -0.457 & 0.025 &-0.512 & 0.024 & -0.544& \\
Race: Asian& 0.250 &0.217&0.260 &0.404  & 0.158& \\
Network Effect&-- & $6.0\times 10^{-3}$ &-- &-- & --& --\\
\hline
School 22& -1.447 & 0.002 & -1.964 &$<10^{-3}$ & --& --\\
School 27 & 0.960 & 0.018 & -0.002 & 0.993 & --& --\\
School 31& -0.033 & 0.470 & -0.010 &0.965 & --& --\\
School 48 & -0.455 & 0.164 & 0.065 & 0.784 & --& --\\
\hline\hline
\end{tabular}}
    \label{ModelFitting_alter_before}
\end{table}

\begin{table}[ht]
     \caption{Model fitting and inference results (after variable selection through backward elimination)  on the five schools with strongest network effects.}
     \centering
{
\begin{tabular}{rllllll}
\hline \hline 
& \multicolumn{2}{c}{ Our Model } & \multicolumn{2}{c}{ Logistic Reg } & \multicolumn{2}{c}{ RNC } \\
& coef. & $p$-value & coef. & $p$-value & coef. & $p$-value \\
\hline 
Treatment  &0.644& $<10^{-3}$ & 0.629 & $<10^{-3}$ &0.684 &  \\
Gender: Male & -0.450 & $<10^{-3}$ &-0.568 & $<10^{-3}$ &-0.521& \\
Grade & & & -0.180 & 0.002  &-0.253 & \\
Friends like house &  &  &  &  &-0.077 & \\
Home language& & &  &  & 0.212& \\
GPA&  &  &  &  & -0.230& \\
Influencer & &  &  &  &0.504 &  \\
Race: White &  &  & -0.701 & $<10^{-3}$  &-0.488 & \\
Race: Black &  & &  & &-0.229 & \\
Race: Hispanic &  &  &-0.724 & $<10^{-3}$ & -0.544& \\
Race: Asian& & &  & & 0.158& \\
Network Effect&-- & 8.5$\times 10^{-3}$ &-- &-- & --& --\\
\hline
School 22& -1.281 & $<10^{-3}$ & -2.101 &$<10^{-3}$ & --& --\\
School 27 & 1.003 & 0.002 &  & & --& --\\
\hline\hline
\end{tabular}}
    \label{ModelFitting_alter}
\end{table}

\begin{table}[ht]
\centering
\caption{Correlation between $\left|\alpha\right|$ and degree, betweenness, eigenvector, and closeness centrality, and seed eligibility, for the full dataset and for schools with strong network effects.}
\begin{tabular}{rrrrrrr}
  \hline
 & Degree Cen & Betweenness Cen &  Eigenvector Cen & Closeness Cen &  Seed Eligibility \\ 
  \hline
  Full dataset  & 0.258 & 0.112 & -0.043 & -0.025 & -0.003 \\ 
Selected dataset  & 0.300 & 0.199 & 0.477 & -0.189 & -0.031 \\ 
   \hline
\end{tabular}
\label{tab:sub_corr_table}
\end{table}

Since the RNC does not provide inference or variable selection, we focus on comparing our method with standard logistic regression. The two models yield very different inferences for the effects of Gender, Grade, and Race. Notably, Gender is the only predictor besides Treatment that remains in the final selection based on our model. The standard logistic regression estimates a 25\% stronger gender effect and finds statistically significant negative effects for Grade and Race. The main difference between our model and standard logistic regression is the inclusion of network effects, suggesting that the differential predictors may be cohesive according to network structures. This phenomenon is intuitively reasonable. For example, students are more likely to be friends with others in the same grade. We can empirically verify these conjectures. Figure \ref{fig:school_plots-school1} shows the gender, grade, and race information in one of the five schools: students tend to befriend others of the same gender, grade, and race. Similar patterns can be observed in other schools (see Figure 7 in Section~\ref{sec:more-data}l). Therefore, these predictors exhibit network cohesion, explaining the differential results between our method and standard logistic regression.

To further support the statement 
``students tend to befriend others of the same gender, grade, and race" quantitatively, we include the following summary table. It reports, for each attribute, the proportion of same-attribute friendships, the expected proportion under random mixing, and the assortativity coefficient. 
The assortativity coefficient proposed in \cite{newman2003mixing} is defined as the Pearson correlation between the attribute values:
\begin{equation*}
r = \frac{\sum_{i} e_{ii} - \sum_{i} a_i b_i}{1 - \sum_{i} a_i b_i},
\end{equation*}
where $e_{ij}$ denotes the fraction of edges connecting a node of type $i$ to a node of type $j$, and $a_i = \sum_j e_{ij}$, $b_j = \sum_i e_{ij}$ represent the fraction of edges attached to nodes of type $i$ and $j$, respectively.  
In the case of undirected networks, $e_{ij} = e_{ji}$ and $a_i = b_i$.

An assortativity of 
$r=1$ corresponds to perfect homophily, 
$r=0$ to random mixing, and $r<0$ to disassortative mixing.
We estimated its standard error using a jackknife procedure, by sequentially removing each edge, recalculating the assortativity $r_i$, and computing the variance as
$$s^2_r:= \sum_{i=1}^M \left(r_i-r\right)^2,$$
where $M$ is the number of edges and $r$ is the observed assortativity.
The standard error $s_r$ is the square root of this sum.
The results are summarized below:

\begin{table}[ht]
\centering
\caption{Observed and expected proportions of same-attribute friendships and assortativity coefficients based on Gender, Grade, and Race in School 1’s Friendship Network}
\begin{tabular}{r|cc|cc}
\hline \hline
& \multicolumn{2}{c|}{Same-Attribute Edges Proportion} & \multicolumn{2}{c}{Assortativity Statistics}  \\
 & Observed  & Expected  &  Assortativity  & Std. Error\\
\hline
Gender & 76.3\% & 50.1\% & 0.524 & 0.021\\
Grade  & 88.5\% & 33.5\% & 0.827 & 0.012\\
Race   & 49.0\% & 42.3\% & 0.122 & 0.019\\
\hline \hline
\end{tabular}
\label{tab:school1_assort}
\end{table}

All three covariates exhibit significant assortative mixing, consistent with the discussion in \cite{newman2003mixing}, with the effect being especially strong for gender and grade. This confirms that these covariates are highly correlated with the network structure, which helps explain the different coefficient magnitudes observed between our model and the standard logistic regression. Although race has the lowest assortativity among the three, the effect is still highly statistically significant (with a value more than six standard deviations from zero). This finding aligns with our observation that our model and the standard logistic regression differed in their variable selection for race only for the selected schools, but not in the full dataset.

\begin{figure}[ht]
    \centering
    \begin{subfigure}[b]{0.30\linewidth}
        \centering
        {\includegraphics[width=\linewidth]{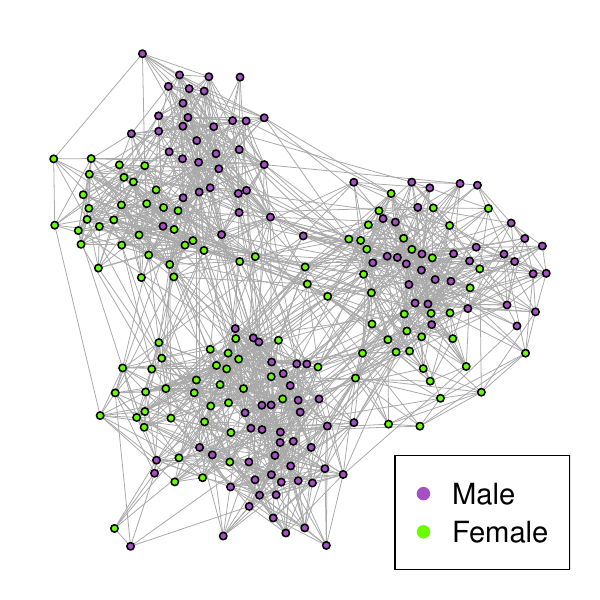}}
        \subcaption{School 1: Gender}
        \label{School 1: Gender}
    \end{subfigure}
    \hfill
    \begin{subfigure}[b]{0.30\linewidth}
        \centering
        {\includegraphics[width=\linewidth]{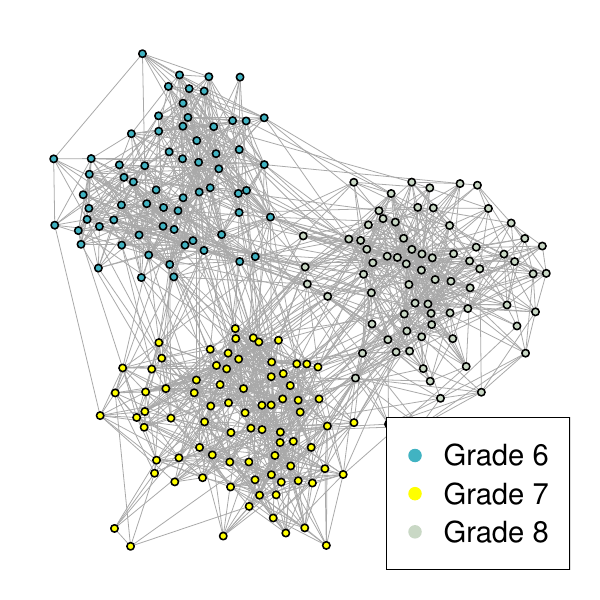}} 
        \subcaption{School 1: Grade}
        \label{School 1: Grade}
    \end{subfigure}
    \hfill
    \begin{subfigure}[b]{0.30\linewidth}
        \centering
        {\includegraphics[width=\linewidth]{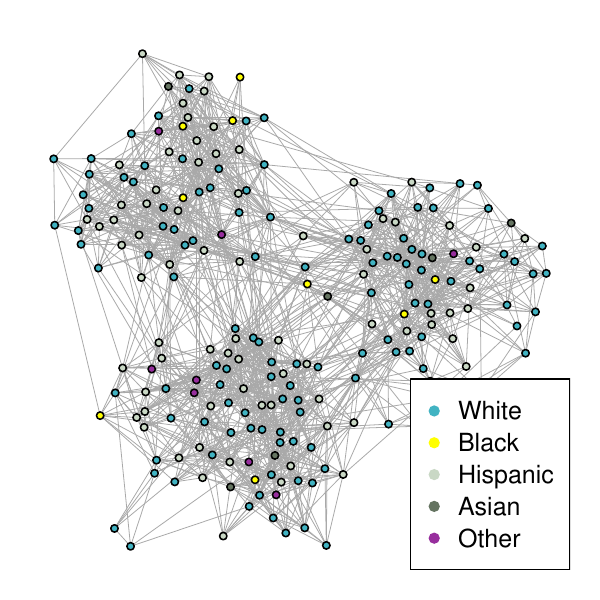}} 
        \subcaption{School 1: Race}
        \label{School 1: Race}
    \end{subfigure}
    \caption{Friendship network of School 1, along with the corresponding gender, grade, and race information.}
    \label{fig:school_plots-school1}
\end{figure}

The positive effect of the workshop observed in our analysis is consistent with the main conclusion of the original study by \citet{paluck2016changing}. However, we emphasize that this agreement is only at a high level. Our analysis differs from that of \citet{paluck2016changing} in several important respects. First, \citet{paluck2016changing} focus on a subpopulation of students who are connected to at least one potentially treated peer, whereas our analysis does not impose such a restriction and includes all available observations from treated schools. Second, their analysis is explicitly causal in nature, relying on the original randomized design and causal inference methods. In contrast, our results are descriptive and inferential but do not carry a causal interpretation.

In summary, we have shown that social network information has important impacts in the current problem. Though both the RNC and our model can incorporate network information in building the logistic regression model, the available inference framework in our model provides a substantial advantage in understanding the data with more conclusive insights: both the social effect and the conflict-mitigating training are statistically significant in this example. Compared with the standard logistic regression, all the qualitative differences in estimated effects can be explained by the network cohesion phenomenon, which can be empirically verified.

All previous discussions focus on model interpretation and we have seen that differences between our model and the standard logistic regression are reasonable. Next, we use prediction performance to validate the effectiveness of our model compared to the standard logistic regression.

\subsection{Predictive Model Validation}\label{secsec:validation}

We use out-of-sample prediction performance to validate the practical significance of the network effects. Consider the scenario where the response is only partially observed. It is then useful to assess the performance of the models when they make predictions on the unobserved response variable based on the full set of covariates and the network. In particular, we use 200-fold cross-validation to assess the performance: all the students are partitioned into 200 folds randomly. We hold out one fold of the response variable and make predictions based on the fitted model from the 199 folds (with all the needed tuning). This procedure is repeated for each of the 200 folds. Since the current task is a binary classification problem, we use the ROC curves and the area under the curve (AUC) of the predicted probabilities (aggregated over the 200 iterations) as the performance metric.

\begin{figure}[h]
    \centering
    \begin{subfigure}[b]{0.48\linewidth}
        \centering
        \vspace{-0.4cm}
        \includegraphics[width=1\linewidth]{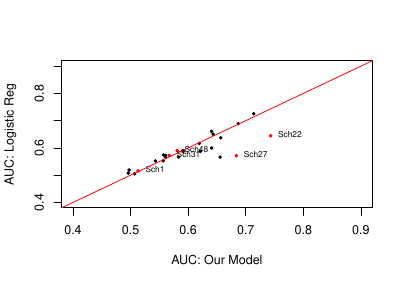}
        \subcaption{Our Model vs. Logistic Reg}
    \end{subfigure}
    \hfill
    \begin{subfigure}[b]{0.48\linewidth}
        \centering
        \vspace{-0.4cm}
        \includegraphics[width=1\linewidth]{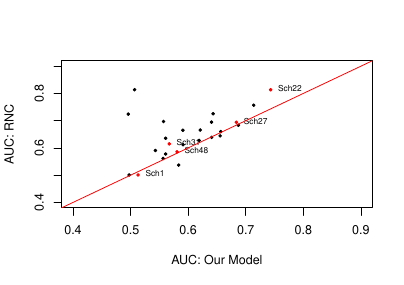} 
        \subcaption{Our Model vs. RNC}
    \end{subfigure}
    \caption{Prediction performance comparison for each school between our method with logistic regression and RNC.}
    \label{fig:overall-AUC}
\end{figure}

Figure~\ref{fig:overall-AUC} shows the AUC values calculated based on predictions in each individual school by our model and the two benchmarks. The five selected schools in the previous analysis are colored red.
The results show that our model consistently outperforms standard logistic regression and RNC, especially in the five schools with strong network effects. This indicates that our model effectively exploits the network information to provide more accurate predictions. The result also shows that overall, the social network effects are sufficiently influential to exhibit differential prediction accuracy.

Since our prediction model interpretations in Table \ref{ModelFitting_alter} are based on selected schools, we also want to evaluate the effects of this selection procedure. Therefore, we apply the aforementioned 200-fold cross-validation but include the school selection procedure: In each iteration, we first select five schools based on the 199 folds and then focus on model fitting and predicting the hold-out fold constrained within the selected five schools. Note that different schools may be selected for each iteration in this procedure. Thus, this evaluation also includes the randomness of the selection. The ROC curves aggregated over the 200 folds are shown in Figure~\ref{fig:selected-ROC}. The conclusion from Figure~\ref{fig:selected-ROC} is consistent with Figure~\ref{fig:overall-AUC}.

\begin{figure}[ht]   
\centering
\includegraphics[width=0.8\linewidth]{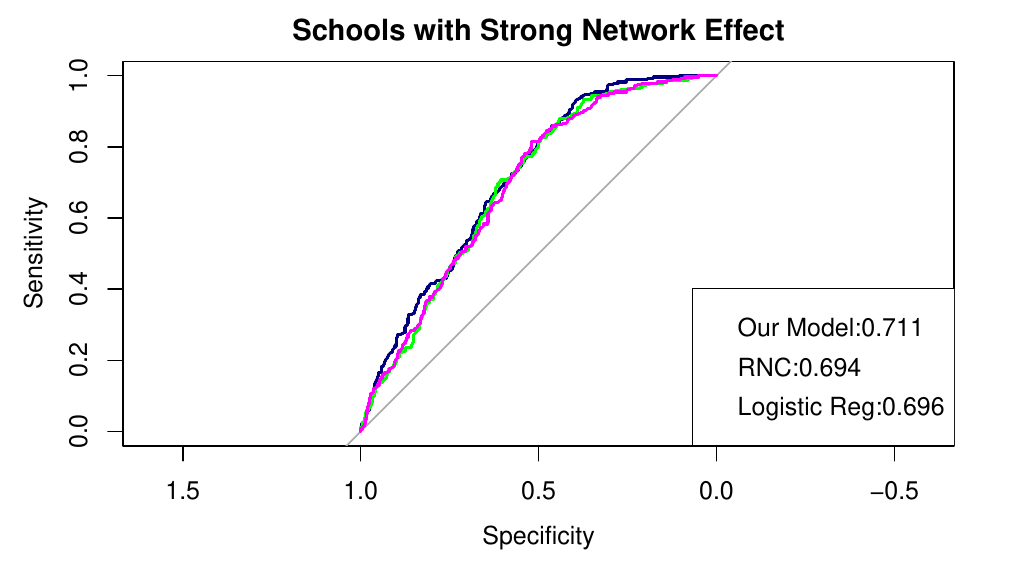}
 \caption{ROC curves of three methods restricted to selected schools from the 200-fold cross-validation procedure.}
 \label{fig:selected-ROC}
\end{figure}

In summary, our validation experiments show that, whether we consider the selected subset of schools or all of them, the social network effects are strong, and ignoring them results in inferior prediction performance. Our model provides the best predictive power among the three models. 
Compared to RNC, the proposed framework offers interpretability, valid inference, and accurate predictions, with provable robustness to network perturbations.

\section{Discussion}
We have introduced a class of generalized linear models linked by network subspace assumptions. The advantage of this framework lies in its flexibility due to the nonparametric network effects, the availability of a statistical inference framework, and its proven robustness to network structure perturbations. We have empirically verified that the inference is valid for network perturbations from random network models and algorithmic perturbations from network embedding methods.

Several interesting directions for expanding our study remain. One particularly intriguing problem is incorporating more general graph neural networks \citep{scarselli2008graph,kipf2016semi} into similar subspace models for network-linked data and extending the inference framework to such situations. In a related direction, conformal predictions have been studied for network regression problems \citep{lunde2023conformal}, but adapting a formal inference framework to handle these additional complications would be both more widely useful and more challenging. Finally, a fundamental problem is using such subspace models to handle spill-over effects of randomized experiments on social networks or even more general causal analysis with network effects \citep{sinclair2012detecting,phan2015natural,lee2021network,hayes2022estimating}, or even other causal analysis involving network-mediated effects. Formulating a spill-over or mediation causal model in the subspace format would be a crucial step for generalizing the proposed framework for such analyses.


\bibliographystyle{abbrvnat}
\bibliography{reference}      

\newpage
\appendix

\section{Additional results of the school conflict analysis}\label{sec:more-data}

\subsection{Finalized model on all 25 schools}

The estimated parameters and $p$-values before and after variable selection by backward elimination using the averaged network of two waves in all schools are summarized in Tables \ref{ModelFitting_before} and \ref{ModelFitting}. Under our model, $r$ is detected to be $0$, and the $\chi^2$ test for the existence of the network effect yields a $p$-value $< 10^{-8}$, suggesting the statistical significance of the network information. 

\newpage
\begin{table}[H]
     \caption{Estimated coefficients and $p$-values (before variable selection through backward elimination) of our model, standard logistic regression and RNC using the average network of two waves involving all schools. }
     \centering{
     \begin{tabular}{rllllll}
\hline \hline 
& \multicolumn{2}{c}{ Our Model } & \multicolumn{2}{c}{ Logistic Reg } & \multicolumn{2}{c}{ RNC } \\
& coef. & $p$-value & coef. & $p$-value & coef. & $p$-value \\
\hline 
Gender: Male & -0.417 & $<10^{-3}$ &-0.405 & $<10^{-3}$ &-0.497& --\\
Grade & -0.270 & $<10^{-3}$ & -0.271 & $<10^{-3}$ &-0.359 & --\\
Friends like house & 0.100 & 0.059 & 0.102 & 0.106 &0.051 & --\\
Home language& 0.300 & $<10^{-3}$ & 0.305 & $<10^{-3}$ & 0.241& --\\
Treatment &0.846& $<10^{-3}$ & 0.828 & $<10^{-3}$ &0.800 & -- \\
GPA& 0.085 & 0.066 & 0.067 & 0.228 & -0.228& --\\
Influencer & 0.187 & 0.054 & 0.234 & 0.036 &0.311 & --\\
Race: White & -0.079 & 0.218 & -0.082 & 0.416 &-0.342 & --\\
Race: Black & 0.087  & 0.241 & 0.045 & 0.715 &-0.211 & --\\
Race: Hispanic & -0.093  & 0.174 & -0.113 & 0.254 & -0.209 & --\\
Race: Asian& 0.086 & 0.293 & 0.074 & 0.637 & -0.048& --\\
 \hline
 School 3 & 0.668 & 0.040 & 0.916 & $<10^{-3}$ & --& --\\
 School 10 & -0.345  & 0.195 & 0.157 & 0.481 & --& --\\
 School 13 & -0.558 & 0.068 & -0.310 & 0.214 & --& --\\
 School 19 & -1.326 & 0.002 & -1.140 & $<10^{-3}$ & --& --\\
 School 20 & -1.874 & $<10^{-3}$ & -1.566 & $<10^{-3}$ & --& --\\
 School 21 & -1.343 & 0.005 & -1.209 & $<10^{-3}$ & --& --\\
 School 22 & -1.479 & 0.001 & -1.893 & $<10^{-3}$ & --& --\\
 School 24 & -1.130 & 0.002 & -0.888 & $<10^{-3}$ & --& --\\
 School 26 & -1.020 & 0.013 & -0.392 &   0.137& --& --\\
 School 29 & -1.467 & 0.002 & -0.721 & 0.006 & --& --\\
 School 31 & -0.278 & 0.259 & -0.048 & 0.819 & --& --\\
 School 34 & -0.611 & 0.052 & -0.345 & 0.105 & --& --\\
 School 35 & -0.162 & 0.340 & 0.212 & 0.315 & --& --\\
 School 40 & 0.805 & 0.017 & 1.047 & $<10^{-3}$ & --& --\\
 School 42 & -0.412 & 0.151 & -0.243 & 0.273  & --& --\\
 School 44 &-1.376 & 0.002 & -0.138 & 0.564 & --& --\\
 School 45 & -1.770 & $<10^{-3}$ & -1.497 & $<10^{-3}$ & --& --\\
 School 48 & -0.098 & 0.406 & 0.112 & 0.617 & --& --\\
 School 49 & -0.354 & 0.181 & 0.193 & 0.365 & --& --\\
 School 51 & -0.294 & 0.233 & 0.273 & 0.178 & --& --\\
 School 56 & -0.237 & 0.272 & -0.187 & 0.375 & --& --\\
 School 58 & -1.163 & 0.002 & -0.770 & $<10^{-3}$ & --& --\\
 School 60 & -1.175 & 0.001 & -0.716 & 0.001 & --& --\\
\hline\hline
\end{tabular}}
    \label{ModelFitting_before}
\end{table}

\newpage
\begin{table}[H]
     \caption{Estimated coefficients and $p$-values (after variable selection through backward elimination) of our model, standard logistic regression and RNC  using the average network of two waves involving all schools. }
     \centering{
     \begin{tabular}{rllllll}
\hline \hline 
& \multicolumn{2}{c}{ Our Model } & \multicolumn{2}{c}{ Logistic Reg } & \multicolumn{2}{c}{ RNC } \\
& coef. & $p$-value & coef. & $p$-value & coef. & $p$-value \\
\hline 
Gender: Male & -0.443 & $<10^{-3}$ &-0.419 & $<10^{-3}$ &-0.497& --\\
Grade & -0.323 & $<10^{-3}$ & -0.256 & $<10^{-3}$ &-0.359 & --\\
Friends like house &  &  &  &  &0.051 & --\\
Home language& 0.272 & $<10^{-3}$ & 0.324 & $<10^{-3}$ & 0.241& --\\
Treatment &0.837& $<10^{-3}$ & 0.820 & $<10^{-3}$ &0.800 & -- \\
GPA&  &  &  &  & -0.228& --\\
Influencer& &  &  &  &0.311 &  --\\
Race: White &  &  &  & &-0.342 & --\\
Race: Black &  & &  & &-0.211 & --\\
Race: Hispanic &  &  & &  & -0.209& --\\
Race: Asian& & &  & & -0.048& --\\
 \hline
 School 3 & & & 0.969 & $<10^{-3}$ & --& --\\
 School 10 & -0.803  & $<10^{-3}$ & &  & --& --\\
 School 13 & -0.977 & $<10^{-3}$ &  & & --& --\\
 School 19 & -1.709 & $<10^{-3}$ & -1.122& $<10^{-3}$ & --& --\\
 School 20 & -2.363 & $<10^{-3}$ & -1.487 & $<10^{-3}$ & --& --\\
 School 21 & -1.815 &$<10^{-3}$ & -1.248 & 0.001 & --& --\\
 School 22 & -1.885 & $<10^{-3}$ & -1.850 & $<10^{-3}$ & --& --\\
 School 24 & -1.557 & $<10^{-3}$ & -0.852 & $<10^{-3}$ & --& --\\
 School 26 & -1.375 & $<10^{-3}$ & &  & --& --\\
 School 29 & -1.869 & $<10^{-3}$ & -0.702 & 0.001 & --& --\\
 School 34 & -1.145 & $<10^{-3}$ & -0.306 & 0.024 & --& --\\
 School 40 &  &  & 1.009 & $<10^{-3}$ & --& --\\
 School 42 & -0.845 & $<10^{-3}$ &  &  & --& --\\
 School 44 &-1.864 & $<10^{-3}$ & &  & --& --\\
 School 45 & -2.080 & $<10^{-3}$ & -1.457 & $<10^{-3}$ & --& --\\
 School 49 & -0.777 & $<10^{-3}$ & &  & --& --\\
 School 51 & -0.724 & $<10^{-3}$ & 0.275 & 0.022 & --& --\\
 School 56 & -0.619 & 0.002 & &  & --& --\\
 School 58 & -1.580 &$<10^{-3}$ & -0.732 & $<10^{-3}$ & --& --\\
 School 60 & -1.511 & $<10^{-3}$& -0.612 & $<10^{-3}$ & --& --\\
\hline\hline
\end{tabular}}
    \label{ModelFitting}
\end{table}

The RNC columns for school fixed effects are blank because fixed effects are not identifiable under RNC’s penalty.
Additionally, we observe that many school effect terms differ significantly between the standard logistic regression and our model. This suggests that the network effect captured by our model better explains the heterogeneity within schools.

\subsection{Robustness validation with three network constructions}\label{appendix:robustness}

\begin{table}[H]
    \caption{Estimated coefficients and $p$-values of our model using three versions of the friendship network (Wave I, Wave II, and Wave I-II average). The blanks indicate that the variables are removed in the backward elimination procedure.}
    \centering
    \begin{tabular}{rllllll}
        \hline \hline
        & \multicolumn{2}{c}{ Wave Average } & \multicolumn{2}{c}{ Wave I } & \multicolumn{2}{c}{ Wave II } \\
        & coef. & $p$-value & coef. & $p$-value & coef. & $p$-value \\
        \hline
        Gender: Male & -0.443 & $<10^{-3}$ & -0.429 & $<10^{-3}$ & -0.427 & $<10^{-3}$ \\
        Grade & -0.323 & $<10^{-3}$ & -0.252 & $<10^{-3}$ & -0.257 & $<10^{-3}$ \\
        Friends like house &  &  &  &  &  &  \\
        Home language & 0.272 & $<10^{-3}$ & 0.316 & $<10^{-3}$ & 0.366 & $<10^{-3}$ \\
        Treatment & 0.837 & $<10^{-3}$ & 0.842 & $<10^{-3}$ & 0.851 & $<10^{-3}$ \\
        GPA &  &  &  &  &  &  \\
        Influencer &  &  &  &  &  &  \\
        Race: White &  &  &  &  &  &  \\
        Race: Black &  &  &  &  &  &  \\
        Race: Hispanic &  &  &  &  &  &  \\
        Race: Asian &  &  &  &  &  &  \\
         \hline
 School 3 & & & &  & 1.258  & $<10^{-3}$  \\
 School 10 & -0.803  & $<10^{-3}$ &-0.858 & $<10^{-3}$ &  &  \\
 School 13 & -0.977 & $<10^{-3}$ & -1.076 &$<10^{-3}$ &  &  \\
 School 19 & -1.709 & $<10^{-3}$ & -1.779 & $<10^{-3}$ &  &  \\
 School 20 & -2.363 & $<10^{-3}$ & -2.324 & $<10^{-3}$ & -1.283 &  $<10^{-3}$ \\
 School 21 & -1.815 &$<10^{-3}$ & -2.033 & $<10^{-3}$ & & \\
 School 22 & -1.885 & $<10^{-3}$ & -2.331 & $<10^{-3}$ & & \\
 School 24 & -1.557 & $<10^{-3}$ & -1.682 & $<10^{-3}$ & -0.616 & 0.002 \\
 School 26 & -1.375 & $<10^{-3}$ & -1.429 & $<10^{-3}$ &  &  \\
 School 27 &  &  & -0.766 & $<10^{-3}$ & 0.882  &  $<10^{-3}$ \\
 School 29 & -1.869 & $<10^{-3}$ & -1.543 & $<10^{-3}$ & & \\
  School 31 &  &  & -0.748 & $<10^{-3}$ & & \\
 School 34 & -1.145 & $<10^{-3}$ & -1.157 & $<10^{-3}$ & & \\
  School 35 &  &  & -0.909 & $<10^{-3}$ & & \\
 School 40 &  &  & & & 1.281 &  $<10^{-3}$\\
 School 42 & -0.845 & $<10^{-3}$ & -0.936 & $<10^{-3}$ & & \\
 School 44 &-1.864 & $<10^{-3}$ & -0.968 & $<10^{-3}$ & & \\
 School 45 & -2.080 & $<10^{-3}$ & -2.304& $<10^{-3}$ & -1.165 &  $<10^{-3}$\\
 School 48 & & & -0.726 & $<10^{-3}$  & & \\
 School 49 & -0.777 & $<10^{-3}$ & -0.843 & $<10^{-3}$ & & \\
 School 51 & -0.724 & $<10^{-3}$ & 0.729 & $<10^{-3}$ & & \\
 School 56 & -0.619 & 0.002 & -0.936 &  $<10^{-3}$& & \\
 School 58 & -1.580 &$<10^{-3}$ & -1.627 & $<10^{-3}$ & & \\
 School 60 & -1.511 & $<10^{-3}$& -1.622 & $<10^{-3}$ & & \\
        \hline\hline
    \end{tabular}
    \label{MergedModelFitting}
\end{table}

\noindent To evaluate the robustness of our inference results to network perturbations, we fit our model separately using the Wave I and Wave II networks. In Table~\ref{MergedModelFitting}, we present the results of the same analysis with backward elimination, showing the estimated parameters and $p$-values for the models based on the three versions of the networks. Despite substantial edge‑level variation between waves (Fig.~\ref{fig:overlappingedges}), the selected variables are identical (up to minor numeric differences). These findings demonstrate the robustness of our framework. As noted in \cite{le2022linear}, this stability arises because, although the individual edges in the friendship networks experienced substantial changes, the overall spectral structure of the adjacency matrices remained stable.

\newpage

\subsection{Predictive comparisons with embedding-based methods}\label{appendix:embedding}

\noindent We have also evaluated the possibility of using deep-learning-based embedding methods to incorporate the network information, as discussed in  Section~\ref{secsec:embedding}. Figure~\ref{fig:embedding-AUC} shows the predictive AUC of the fitted models based on embedded similarity relations from DeepWalk, Node2Vec, and Diff2Vec, compared with the fitted model based on the observed network structure. The evaluation follows the same procedure described in Section~\ref{sec:data}. It can be seen that the relational data learned from the embedding methods do not lead to better predictive performance. This may indicate that the relevant relational information in the current problem is already reflected in the observed adjacency matrix, and the additional nonlinear transformations introduced by these embedding methods do not provide further benefits.

\begin{figure}[ht]
    \centering
    \begin{subfigure}[b]{0.325\linewidth}
        \centering
        \vspace{-0.4cm}
        \includegraphics[width=1\linewidth]{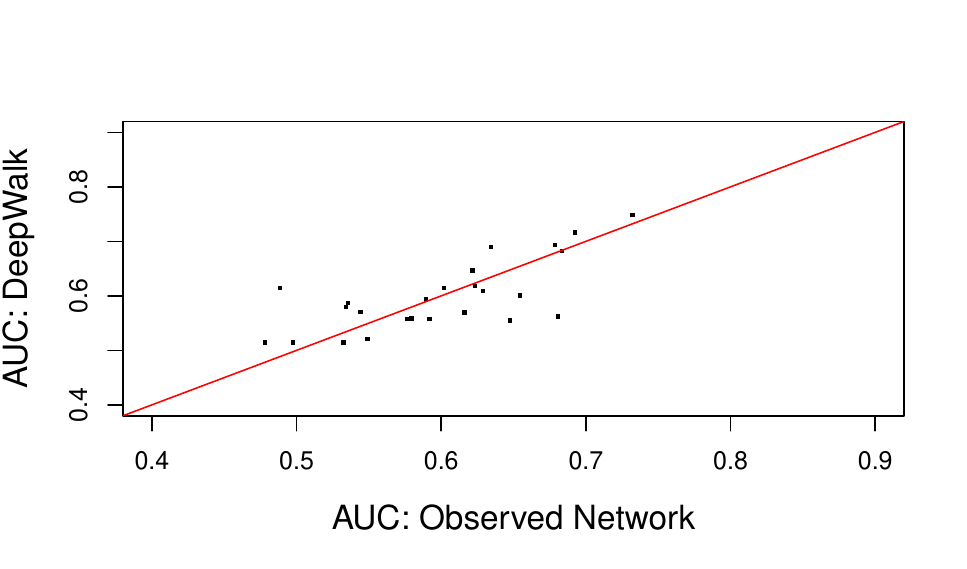}
        \subcaption{DeepWalk}
    \end{subfigure}
    \begin{subfigure}[b]{0.325\linewidth}
        \centering
        \vspace{-0.4cm}
        \includegraphics[width=1\linewidth]{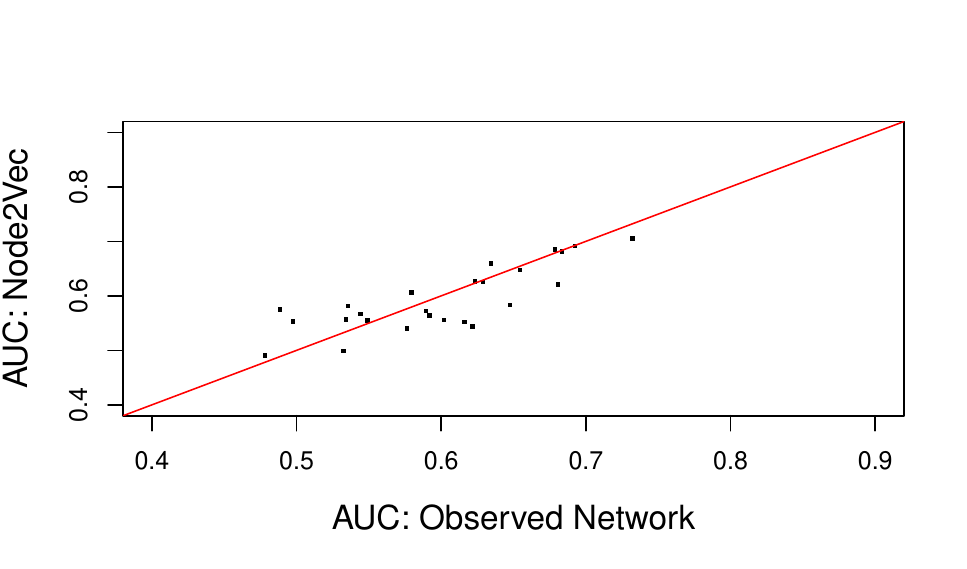} 
        \subcaption{Node2Vec}
    \end{subfigure}
    \begin{subfigure}[b]{0.325\linewidth}
        \centering
        \vspace{-0.4cm}
        \includegraphics[width=1\linewidth]{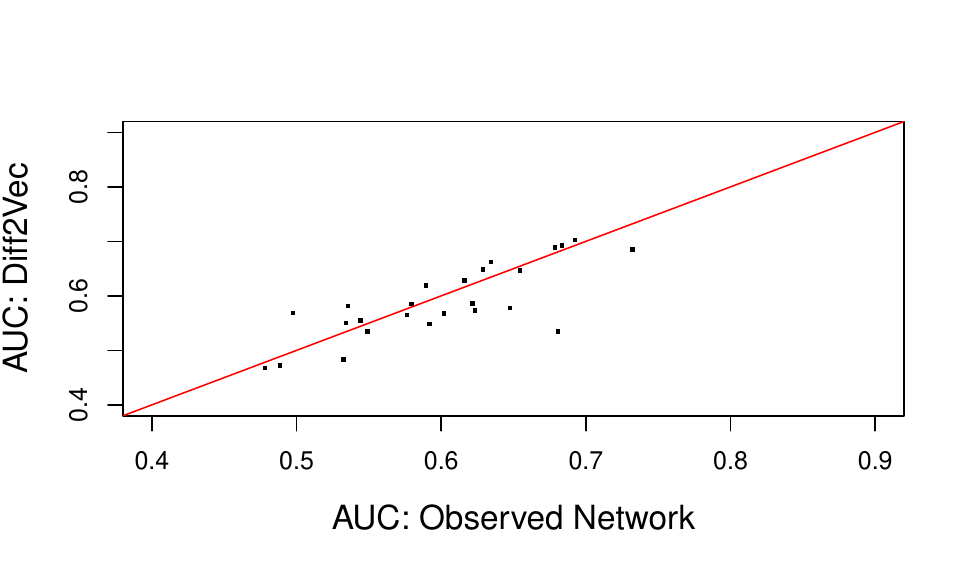} 
        \subcaption{Diff2Vec}
    \end{subfigure}
    \caption{Prediction performance comparison for each school between the observed network (Wave I + Wave II), and the embedding similarity of DeepWalk, Node2Vec and Diff2Vec.}
    \label{fig:embedding-AUC}
\end{figure}

\subsection{Covariate correlation with network structures in the refined analysis}

Figure~\ref{fig:school_plots-others} displays the gender, grade and race information for the other four selected schools in our refined analysis. These covariates display a cohesive pattern based on the network structure, which explains why inference can differ between our model and logistic regression without network information.

\newpage
\begin{figure}[H]
    \centering
    \begin{subfigure}[b]{0.25\linewidth}
        \centering
        {\includegraphics[width=\linewidth]{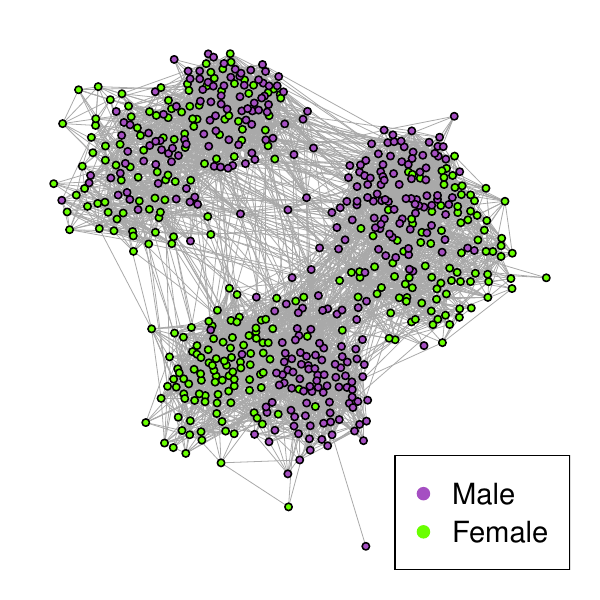}}     
        \subcaption{School 22: Gender}
        \label{School 22: Gender}
    \end{subfigure}
    \hfill
    \begin{subfigure}[b]{0.25\linewidth}
        \centering
        {\includegraphics[width=\linewidth]{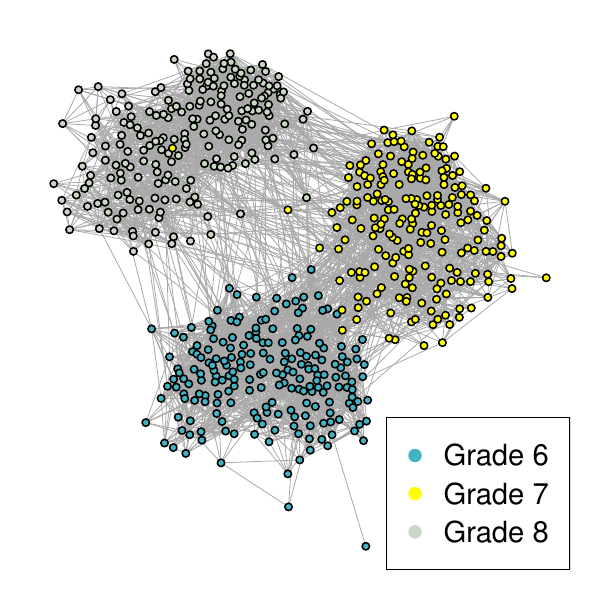}} 
        \label{School 22: Rrade}
        \subcaption{School 22: Grade}
    \end{subfigure}
    \hfill
    \begin{subfigure}[b]{0.25\linewidth}
        \centering
        {\includegraphics[width=\linewidth]{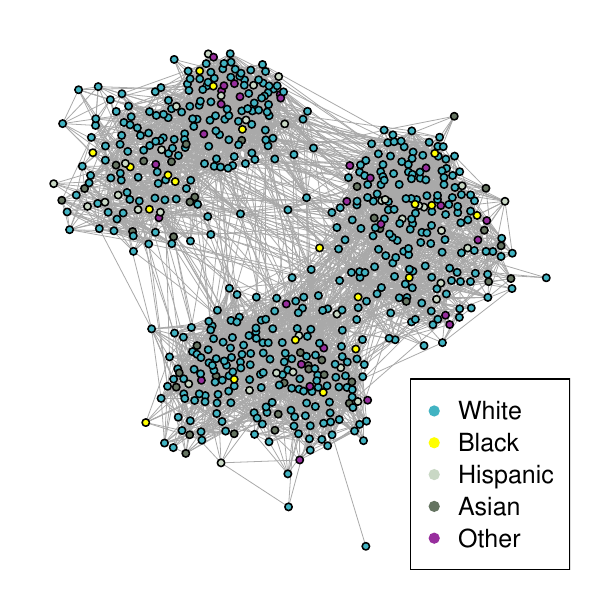}} 
        \label{School 22: Race}
        \subcaption{School 22: Race}
    \end{subfigure}
    \hfill
    \begin{subfigure}[b]{0.25\linewidth}
        \centering
        {\includegraphics[width=\linewidth]{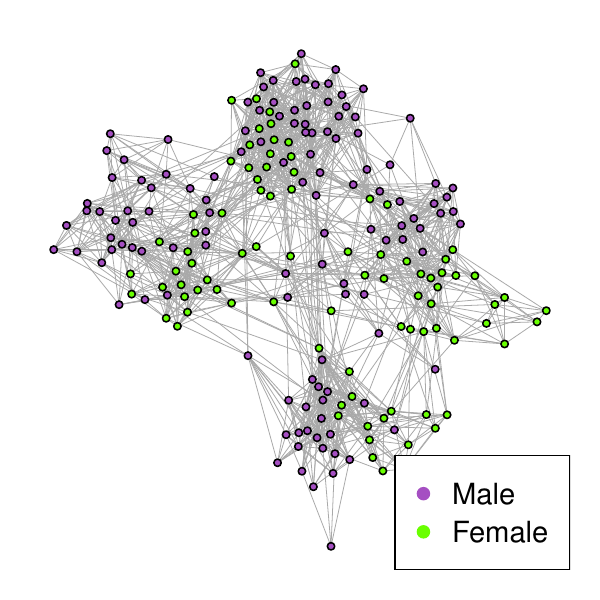}}
        \label{School 27: Gender}
        \subcaption{School 27: Gender}
    \end{subfigure}
    \hfill
    \begin{subfigure}[b]{0.25\linewidth}
        \centering
        {\includegraphics[width=\linewidth]{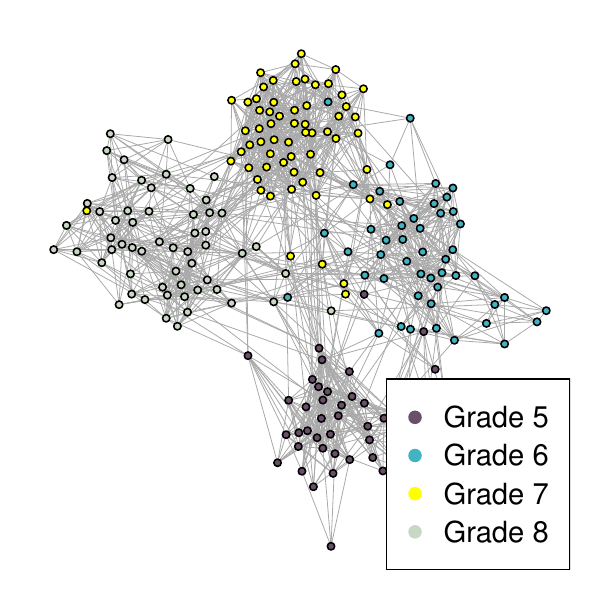}}
        \label{School 27: Grade}
        \subcaption{School 27: Grade} 
    \end{subfigure}
    \hfill
    \begin{subfigure}[b]{0.25\linewidth}
        \centering
        {\includegraphics[width=\linewidth]{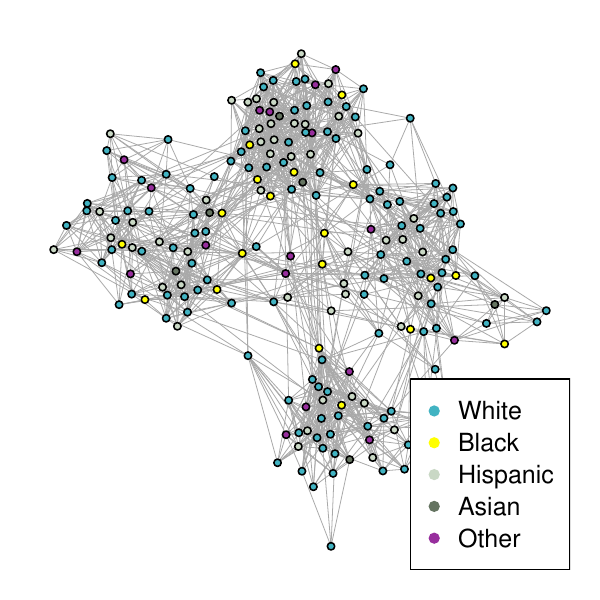}}
        \label{School 27: Race}
        \subcaption{School 27: Race}
    \end{subfigure}
    \hfill
    \begin{subfigure}[b]{0.25\linewidth}
        \centering
        {\includegraphics[width=\linewidth]{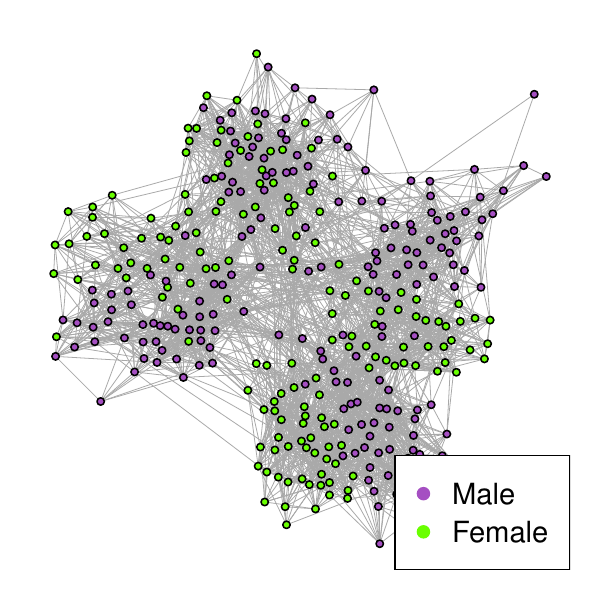}}
        \label{School 31: Gender}
        \subcaption{School 31: Gender}
    \end{subfigure}
    \hfill
    \begin{subfigure}[b]{0.25\linewidth}
        \centering
        {\includegraphics[width=\linewidth]{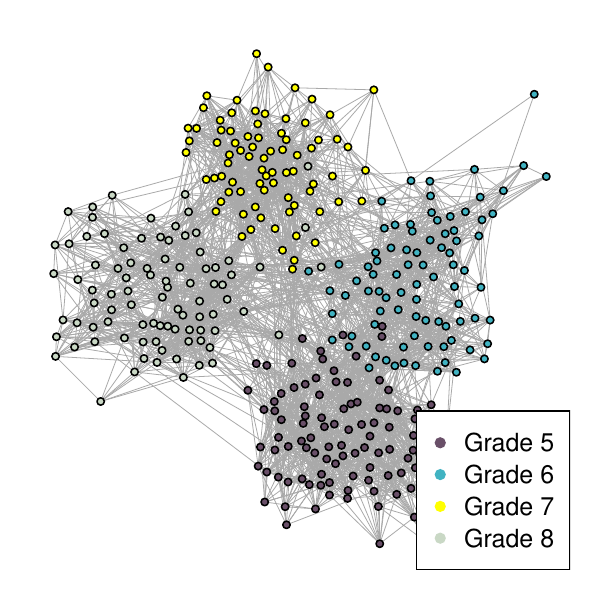}}
        \label{School 31: Grade}
        \subcaption{School 31: Grade}
    \end{subfigure}
    \hfill
    \begin{subfigure}[b]{0.25\linewidth}
        \centering
        {\includegraphics[width=\linewidth]{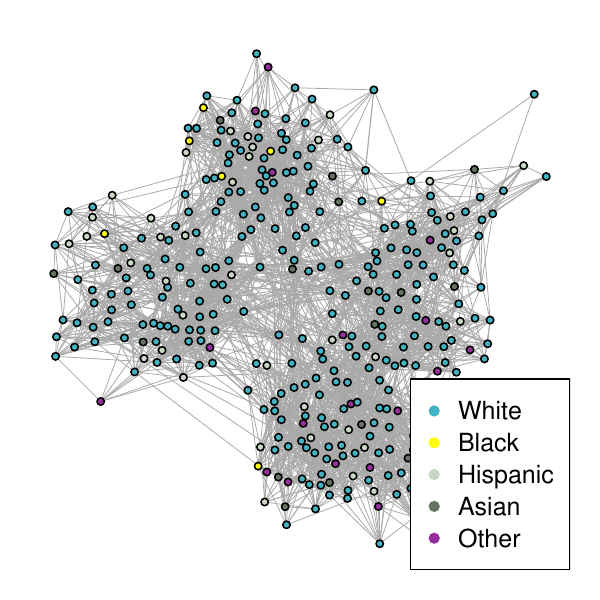}}
        \label{School 31: Race}
        \subcaption{School 31: Race}
    \end{subfigure}
    \hfill
    \begin{subfigure}[b]{0.25\linewidth}
        \centering        
        {\includegraphics[width=\linewidth]{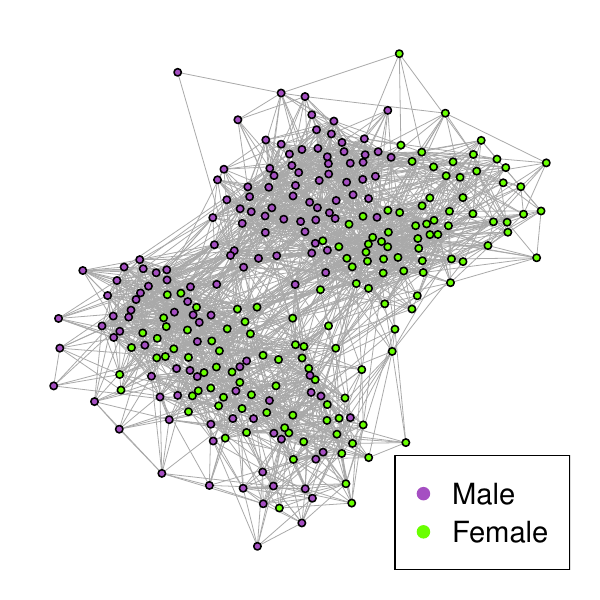}}
        \label{School 48: Gender}
        \subcaption{School 48: Gender}
    \end{subfigure}
    \hfill
    \begin{subfigure}[b]{0.25\linewidth}
        \centering
        {\includegraphics[width=\linewidth]{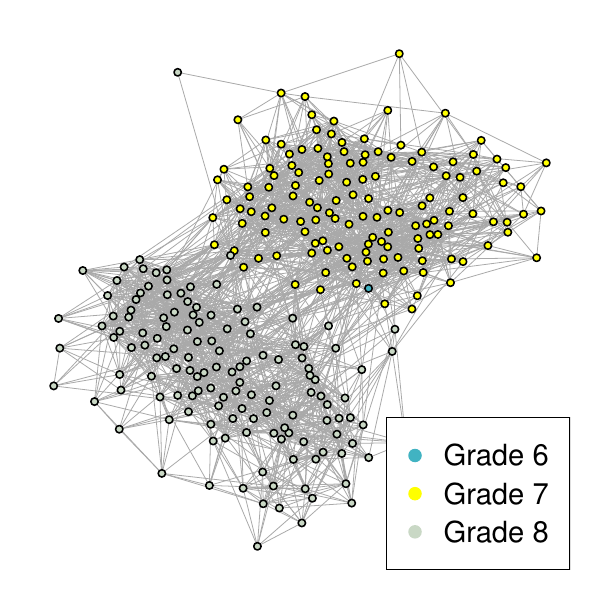}}  \label{School 48: Grade}
        \subcaption{School 48: Grade}
    \end{subfigure}
    \hfill
    \begin{subfigure}[b]{0.25\linewidth}
        \centering
        {\includegraphics[width=\linewidth]{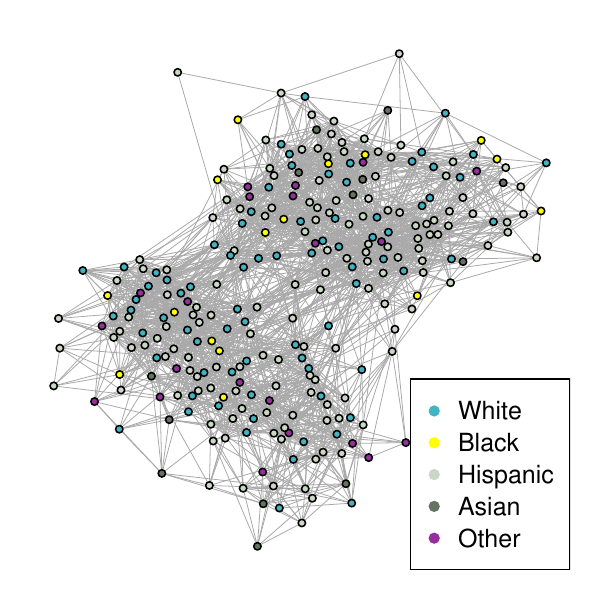}}
        \label{School 48: Race}
        \subcaption{School 48: Race}
    \end{subfigure}
    \hfill
    \caption{Friendship networks of four schools, along with the corresponding gender, grade, and race information.}
\label{fig:school_plots-others}
\end{figure}

\newpage
\section{Proofs for theoretical results}

Before proving our main results in subsequent appendices,  
let us gather here some important properties for the 
subspace estimators $\hat{\mathcal{R}}$, $\hat{\mathcal{C}}$, and $\hat{\mathcal{N}}$ in \eqref{subspaceEstimator}. Similar to \cite{le2022linear}, we will show that these estimators are sufficiently close to $\mathcal{R},\mathcal{C}$ and $\mathcal{N}$, respectively, which are defined in Section~\ref{sec:algorithm}. Instead of studying these subspaces directly, we will work with the orthogonal projections onto them. To this end, denote  
\begin{eqnarray*}
    \mathcal{P}_R = n^{-1} Z_{1:r} Z^{\top}_{1:r}, \quad
    \mathcal{P}_C = n^{-1} Z_{(r+1):p} Z^{\top}_{(r+1):p},
    \quad \mathcal{P}_N =n^{-1} W_{(r+1):K} W^{\top}_{(r+1):K}.
\end{eqnarray*}
Similarly, denote 
\begin{eqnarray*}
    \hat{\mathcal{P}}_R =n^{-1} \tilde{Z}_{1:r} \tilde{Z}^{\top}_{1:r}, \quad
    \hat{\mathcal{P}}_C = n^{-1} \tilde{Z}_{(r+1):p} \tilde{Z}^{\top}_{(r+1):p},
    \quad \hat{\mathcal{P}}_N =n^{-1} \tilde{W}_{(r+1):K} \tilde{W}^{\top}_{(r+1):K}.
\end{eqnarray*}
We first recall Corollary 5 in \cite{le2022linear}, which provides an error bound for the subspace estimation.  

\begin{proposition}[Subspace perturbation]
\label{proposition 1}
Assume that Assumption \ref{cond:A4} holds. There exists a constant $C_1>0$ such that,
\begin{equation*}
    \max \left\{\big\|\hat{\mathcal{P}}_R-\mathcal{P}_R\big\|,\big\|\hat{\mathcal{P}}_C-\mathcal{P}_C\big\|,\big\|\hat{\mathcal{P}}_N-\mathcal{P}_N\big\|\right\} \leq C_1 \tau_n,
\end{equation*}
where 
\begin{equation}
\label{tau_defin}
    \tau_n= \frac{n^{-3/2}\big\|(\tilde{W} \tilde{W}^{\top}-W W^{\top}) Z\big\|}{\min \left\{\left(1-\sigma_{r+1}\right)^3, \sigma_{r+s}^3\right\}},
\end{equation}
and the singular values $\sigma_{r+1}$ and $\sigma_{r+s}$ are defined in \eqref{defin:model}. 
\end{proposition}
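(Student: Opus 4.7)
The plan is to reduce each of the three projection-distance bounds to a Davis-Kahan (sin-theta) argument on a single $p\times p$ symmetric matrix encoding the relative orientation of $\text{col}(X)$ and $S_K(\hat P)$. Using that $U,\tilde U, V,\tilde V$ are orthogonal, one checks that $\tilde W\tilde W^\top = n\,\breve W\breve W^\top$ and $WW^\top = n\,\bar W\bar W^\top$, so the projections of interest coincide with spectral projections of
\[
M := \bar Z^\top \bar W\bar W^\top \bar Z, \qquad \hat M := \bar Z^\top \breve W\breve W^\top \bar Z,
\]
lifted back to $\mathbb{R}^n$ via $\bar Z$ or $\bar W$. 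By \eqref{defin:model}, $M$ has eigenvalue $1$ with multiplicity $r$, eigenvalues $\sigma_{r+1}^2,\ldots,\sigma_{r+s}^2$ strictly inside $(0,1)$, and eigenvalue $0$ with multiplicity $p-r-s$. Hence $\mathcal{R}$ and $\mathcal{C}$ are exactly the images under $\bar Z$ of the eigenspaces of $M$ for eigenvalue $1$ and for eigenvalues in $(0,1)$, while $\hat{\mathcal{R}},\hat{\mathcal{C}}$ are the analogous objects built from $\hat M$. Equivalently, $\mathcal{N}$ and $\hat{\mathcal{N}}$ are images under $\bar W$ and $\breve W$ of the corresponding right-singular subspaces.

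Next I would convert Assumption~\ref{cond:A4} into a size estimate for $\hat M - M$. Using $\breve W\breve W^\top - \bar W\bar W^\top = n^{-1}(\tilde W\tilde W^\top - WW^\top)$ together with $\|\bar Z\|=1$ and $\bar Z = n^{-1/2} Z U^\top$,
\[
\|\hat M - M\| \;\leq\; \|(\breve W\breve W^\top - \bar W\bar W^\top)\bar Z\| \;=\; n^{-3/2}\,\|(\tilde W\tilde W^\top - WW^\top)Z\|.
\]
Applying Davis-Kahan to the top-$r$ eigenblock of $M$ (gap $1-\sigma_{r+1}^2\asymp 1-\sigma_{r+1}$) yields $\|\hat{\mathcal{P}}_R-\mathcal{P}_R\| \lesssim \|\hat M-M\|/(1-\sigma_{r+1})$, and the analogous application to the middle block (gap $\min\{1-\sigma_{r+1}^2,\sigma_{r+s}^2\}$) handles $\|\hat{\mathcal{P}}_C-\mathcal{P}_C\|$. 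Both of these are already dominated by $\tau_n$, since the cubic gap factor in the denominator of $\tau_n$ is much smaller than the linear gap factor produced by a single Davis-Kahan step.

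The main obstacle is the bound on $\|\hat{\mathcal{P}}_N-\mathcal{P}_N\|$, because $\hat{\mathcal{N}}$ is constructed from the full orthonormal basis $\breve W$ and not only from the action $\bar Z^\top \breve W$. Since $\breve W\breve W^\top = \hat{\mathcal{P}}_R + \hat{\mathcal{P}}_N$ and $\bar W\bar W^\top = \mathcal{P}_R + \mathcal{P}_N$, the naive inequality $\|\hat{\mathcal{P}}_N-\mathcal{P}_N\| \le \|\breve W\breve W^\top - \bar W\bar W^\top\| + \|\hat{\mathcal{P}}_R-\mathcal{P}_R\|$ would require full operator-norm control of $\breve W\breve W^\top - \bar W\bar W^\top$, which Assumption~\ref{cond:A4} does \emph{not} provide. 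One must instead exploit the fact that $\mathcal{N}$ and $\hat{\mathcal{N}}$ are determined by right singular vectors with singular values strictly below $1$, and characterize these subspaces through the action of $\breve W-\bar W$ on $\bar Z$ via a Wedin argument on the $K\times K$ Gram matrix $\breve W^\top \bar Z\bar Z^\top \breve W$. Each step in the resulting chain -- Davis-Kahan on the $K\times K$ Gram, lifting back from a right-singular subspace to $\mathbb{R}^n$, and absorbing the previously-estimated perturbation of $\hat{\mathcal{R}}$ -- contributes one gap factor, which is the source of the cubic denominator $\min\{(1-\sigma_{r+1})^3,\sigma_{r+s}^3\}$ in $\tau_n$. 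Combining the three resulting bounds and absorbing numerical constants into a single $C_1$ yields the claim, as executed in Corollary~5 of \cite{le2022linear}.
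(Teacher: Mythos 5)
Your reduction for the first two bounds is sound: with $\hat M=\bar Z^{\top}\breve W\breve W^{\top}\bar Z$ and $M=\bar Z^{\top}\bar W\bar W^{\top}\bar Z$, the identity $\|\hat M-M\|\le n^{-3/2}\|(\tilde W\tilde W^{\top}-WW^{\top})Z\|$ is correct, and a single Davis--Kahan step lifted through the isometry $\bar Z$ controls $\|\hat{\mathcal P}_R-\mathcal P_R\|$ with room to spare relative to the cubic denominator of $\tau_n$. (One small correction: $\mathcal C=\operatorname{col}(Z_{(r+1):p})$ is the full orthogonal complement of $\mathcal R$ inside $\operatorname{col}(X)$, including the zero-singular-value directions, not the eigenvalue-in-$(0,1)$ block; this only helps you, since $\hat{\mathcal P}_R+\hat{\mathcal P}_C=\bar Z\bar Z^{\top}=\mathcal P_R+\mathcal P_C$, so $\|\hat{\mathcal P}_C-\mathcal P_C\|=\|\hat{\mathcal P}_R-\mathcal P_R\|$ and the second bound is literally the first.) Note also that the paper does not prove this proposition at all; it imports it as Corollary~5 of \cite{le2022linear}, so your sketch has to stand on its own.

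The genuine gap is the bound on $\|\hat{\mathcal P}_N-\mathcal P_N\|$, and the route you propose does not repair the obstruction you yourself identify. Every object your argument manipulates --- $\hat M$, the Gram matrices $\bar W^{\top}\bar Z\bar Z^{\top}\bar W$ and $\breve W^{\top}\bar Z\bar Z^{\top}\breve W$, and the already-bounded perturbation of $\hat{\mathcal P}_R$ --- depends on $S_K(\hat P)$ only through its interaction with $\operatorname{col}(X)$. When $K>p$ (the regime of the paper's simulations, $K=3$, $p=2$), both $S_K(P)$ and $S_K(\hat P)$ contain directions orthogonal to $\operatorname{col}(X)$, and one can rotate the part of $S_K(\hat P)$ lying in $\operatorname{col}(X)^{\perp}$ arbitrarily without changing $(\tilde W\tilde W^{\top}-WW^{\top})Z$, $\hat M$, or $\breve W^{\top}\bar Z\bar Z^{\top}\breve W$, while moving $\hat{\mathcal P}_N$ by an operator-norm distance of order one: take $p=1$, $K=2$, $\bar W=[u_1,u_2]$, $\breve W=[u_1,u_2']$ with $u_1$ at a nontrivial angle to $\operatorname{col}(X)$ and $u_2,u_2'\in\{u_1\}^{\perp}\cap\operatorname{col}(X)^{\perp}$ mutually orthogonal; then every quantity in your chain is unchanged while $\|\hat{\mathcal P}_N-\mathcal P_N\|=1$. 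Hence no sequence of Davis--Kahan/Wedin steps built from these compressed quantities can deliver $\|\hat{\mathcal P}_N-\mathcal P_N\|\le C_1\tau_n$; the ``lift back to $\mathbb R^n$'' step is exactly where the two different isometries $\bar W$ and $\breve W$ enter, and that is the information Assumption~\ref{cond:A4} does not constrain. The cubic-gap ``each step contributes one gap factor'' bookkeeping is a post-hoc rationalization rather than a derivation, and the closing appeal to ``as executed in Corollary~5 of \cite{le2022linear}'' is carrying the entire load for this term --- which is no more than the paper itself does by citing that corollary. To close the gap you would need to reproduce the actual argument and hypotheses of that corollary, which must exploit more than the $\operatorname{col}(X)$-compressed data your reduction retains.
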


Recall the ``new covariate'' vectors $g_i$ and $\tilde{g}_i$, defined in Section~\ref{sec:algorithm}, which combine the covariate and relational information for node $i$:
$$
g_i = (Z_{i,1:p} \ W_{i,(r+1):K})^\top, \quad \tilde{g}_i = (\tilde{Z}_{i,1:p} \ \tilde{W}_{i,(r+1):K})^\top.
$$
These vectors depend on the choices of bases for $\operatorname{col}(X)$, $S_K(P)$, and $S_K(\hat{P})$ through $Z$, $W$, $\tilde{Z}$ and $\tilde{W}$. Due to the nature of these choices, $g_i$ and $\tilde{g}_i$ can be approximately aligned through an almost rotation matrix defined by
\begin{equation}
\label{eq:Tn}
T_n:=\left(\begin{array}{ccc}\tilde{Z}_{1:r}^{\top}{Z}_{1:r}/n & 0 &0 \\
    0 & \tilde{Z}_{(r+1):p}^{\top}{Z}_{(r+1):p}/n & 0\\
    0& 0 & \tilde{W}_{(r+1):K}^{\top} {W}_{(r+1):K}/n\end{array}\right).
\end{equation}
Using Proposition~\ref{proposition 1}, we now bound the error of this alignment. 

\begin{lemma}[Covariate alignment]
\label{proposition 2}
Suppose that Assumption \ref{cond:A4} holds. Then there exists a constant $C_2>0$, 
\begin{eqnarray}
\label{boundness}
    \left\| {g}_i-T_n^{\top} \tilde{g}_i \right\| &\leq& C_2 n^{1/2}  \tau_n, \\
\label{covariate bound}
    n^{-1}\sum_{i=1}^{n }\left\| {g}_i-T_n^{\top} \tilde{g}_i \right\| &\leq&  C_2\tau_n,\\
\label{Tnbound}
 \left\|T_n^{\top}T_n-I_{K+p-r}\right\|_{\infty} &\leq& C_2 \tau_n.    
\end{eqnarray}
\end{lemma}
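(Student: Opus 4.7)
The plan is to exploit the block-diagonal structure of $T_n$ and rewrite each block of $g_i - T_n^\top \tilde{g}_i$ as the $i$-th row of a matrix of the form $(\mathcal{P}-\hat{\mathcal{P}})V$. Using the identities $\mathcal{P}_R Z_{1:r} = Z_{1:r}$, $\mathcal{P}_C Z_{(r+1):p} = Z_{(r+1):p}$, and $\mathcal{P}_N W_{(r+1):K} = W_{(r+1):K}$ (since each matrix spans the range of the corresponding oracle projection), together with $\tilde{Z}_{1:r}\tilde{Z}_{1:r}^\top = n\hat{\mathcal{P}}_R$ and its analogues, a short calculation shows that the three sub-vectors of $g_i - T_n^\top \tilde{g}_i$ are exactly the $i$-th rows of $(\mathcal{P}_R-\hat{\mathcal{P}}_R)Z_{1:r}$, $(\mathcal{P}_C-\hat{\mathcal{P}}_C)Z_{(r+1):p}$, and $(\mathcal{P}_N-\hat{\mathcal{P}}_N)W_{(r+1):K}$, respectively.

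Once this identification is made, Proposition~\ref{proposition 1} gives $\|\mathcal{P}-\hat{\mathcal{P}}\| \leq C_1 \tau_n$ for each pair, while $\|Z_{1:r}\|$, $\|Z_{(r+1):p}\|$, and $\|W_{(r+1):K}\|$ all equal $\sqrt{n}$ by the scaling in \eqref{eq:Z matrix}. Since each of $r$, $p-r$, $K-r$ is a fixed constant, the Frobenius norm of each difference matrix is $O(n^{1/2} \tau_n)$. For \eqref{boundness} I bound a single row norm by the whole Frobenius norm to obtain $\|g_i - T_n^\top \tilde{g}_i\| \leq C_2 n^{1/2} \tau_n$. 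For \eqref{covariate bound} the sum $\sum_i \|g_i - T_n^\top\tilde{g}_i\|^2$ equals the sum of squared Frobenius norms of the three blocks, hence is at most $C n \tau_n^2$; Cauchy--Schwarz then yields $n^{-1}\sum_i \|g_i - T_n^\top \tilde{g}_i\| \leq n^{-1/2}\bigl(\sum_i \|g_i - T_n^\top\tilde{g}_i\|^2\bigr)^{1/2} = O(\tau_n)$.

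For \eqref{Tnbound}, I directly multiply out $T_n^\top T_n$ and use $\tilde{Z}_{1:r}\tilde{Z}_{1:r}^\top/n = \hat{\mathcal{P}}_R$ (and analogues) to see that it is block-diagonal with blocks $n^{-1}Z_{1:r}^\top \hat{\mathcal{P}}_R Z_{1:r}$, $n^{-1}Z_{(r+1):p}^\top \hat{\mathcal{P}}_C Z_{(r+1):p}$, and $n^{-1}W_{(r+1):K}^\top \hat{\mathcal{P}}_N W_{(r+1):K}$. Since $n^{-1} Z_{1:r}^\top \mathcal{P}_R Z_{1:r} = I_r$ by the projection identity (and likewise for the other blocks), each diagonal block of $T_n^\top T_n - I_{p+K-r}$ takes the form $n^{-1} Y^\top (\hat{\mathcal{P}} - \mathcal{P}) Y$, whose spectral norm is at most $n^{-1} \|Y\|^2 \cdot C_1 \tau_n = C_1 \tau_n$. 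Because each block has bounded dimension, $\|\cdot\|_\infty$ is comparable to the spectral norm up to a constant, establishing \eqref{Tnbound}.

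The only mild subtlety is carefully matching the block structure of $T_n$ with the three projection pairs and invoking the correct identity $\mathcal{P} V = V$ in each case; once this bookkeeping is done, everything follows from Proposition~\ref{proposition 1} together with standard norm comparisons, with no probabilistic input beyond what is already encoded in Assumption~\ref{cond:A4}.
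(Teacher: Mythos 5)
Your proposal is correct and follows essentially the same route as the paper: identify each block of $g_i-T_n^\top\tilde g_i$ (and of $T_n^\top T_n-I$) with the corresponding projection difference acting on $Z_{1:r}$, $Z_{(r+1):p}$, $W_{(r+1):K}$, then invoke Proposition~\ref{proposition 1} together with $\|Z_{1:r}\|=\|Z_{(r+1):p}\|=\|W_{(r+1):K}\|=\sqrt{n}$ and Frobenius/Cauchy--Schwarz bookkeeping. The only differences are cosmetic (spectral-norm bounds on the blocks instead of the paper's column-wise and bilinear-form estimates), so no further comment is needed.
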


\begin{proof}[Proof of Lemma~\ref{proposition 2}]
We first prove \eqref{boundness}. Since the norm of a vector is always bounded by the sum of the norms of its blocks, we have 
\begin{eqnarray}
    \label{ineqFTn}
    \left\| {g}_i-T_n^{\top}\tilde{g}_i \right\| & \leq& \left\| Z^{\top}_{i,1:r}-n^{-1}{Z}^{\top}_{1:r}\tilde{Z}_{1:r} \tilde{Z}^{\top}_{i,1:r} \right\|\\
     \nonumber&+&\left\| {Z}^{\top}_{i,(r+1):p}-n^{-1}{Z}^{\top}_{(r+1):p}\tilde{Z}_{(r+1):p} \tilde{Z}^{\top}_{i,(r+1):p} \right\|\\
    \nonumber&+&\left\| {W}^{\top}_{i,(r+1):K}-n^{-1}{W}^{\top}_{(r+1):K}\tilde{W}_{(r+1):K} \tilde{W}^{\top}_{i,(r+1):K} \right\|.
\end{eqnarray}
We will prove that each term on the right-hand side of the above inequality is of order $O(n^{1/2}\tau_n)$.
For the first term, since columns of $Z_{1:r}$ are of norm $\sqrt{n}$, by Proposition~\ref{proposition 1}, 
\begin{align*}
    \left\|{Z}^{\top}_{i,1:r}-n^{-1}{Z}^{\top}_{1:r}\tilde{Z}_{1:r} \tilde{Z}^{\top}_{i,1:r}\right\| &=  \left\|{Z}_{i,1:r}-n^{-1}\tilde{Z}_{i,1:r}\tilde{Z}_{1:r}^{\top} {Z}_{1:r}\right\|  
    \leq  \left\|{Z}_{1:r}-n^{-1}\tilde{Z}_{1:r} \tilde{Z}^{\top}_{1:r} Z_{1:r}\right\|_F \\ 
    &= \left\|\left(I_{r}-n^{-1}\tilde{Z}_{1:r} \tilde{Z}_{1:r}^{\top}\right) Z_{1:r}\right\|_F\\
    & =\left\|n^{-1}\left({Z}_{1:r}{Z}_{1:r}^{\top}-\tilde{Z}_{1:r} \tilde{Z}_{1:r}^{\top}\right) {Z}_{1:r}\right\|_F
    =\left\|(\hat{\mathcal{P}}_R-\mathcal{P}_R) Z_{1:r}\right\|_F\\
    & \leq \sum_{i=1}^{r}\left\|(\hat{\mathcal{P}}_R-\mathcal{P}_R) Z_i\right\|
    \leq C_1rn^{1/2} \tau_n.
\end{align*}
Similarly, for the second term on the right-hand side of \eqref{ineqFTn}, we have
\begin{align*}
    \left\|{Z}^{\top}_{i,(r+1):p}-n^{-1}{Z}^{\top}_{(r+1):p}\tilde{Z}_{(r+1):p} \tilde{Z}^{\top}_{i,(r+1):p}\right\| 
     &\leq \sum_{i=1}^{p-r}\left\|(\hat{\mathcal{P}}_C-\mathcal{P}_C) Z_{r+i}\right\|
    \leq C_1(p-r)n^{1/2}  \tau_n.
\end{align*}
And for the last term on the right-hand side of \eqref{ineqFTn}, 
\begin{align*}
    \left\|{W}^{\top}_{i,(r+1):K}-n^{-1}{W}^{\top}_{(r+1):K}\tilde{W}_{(r+1):K} \tilde{W}^{\top}_{i,(r+1):K}\right\|  &\leq \sum_{i=1}^{K-r}\left\|(\hat{\mathcal{P}}_N-\mathcal{P}_N) {W}_{r+i}\right\|\\
    &\leq C_1(K-r)n^{1/2} \tau_n.
\end{align*}
These three inequalities imply \eqref{boundness}. 

We now prove \eqref{covariate bound}. Summing inequality \eqref{ineqFTn} over $i$ from 1 to $n$, we get
\begin{eqnarray}
    \label{eq:sum alignment}
    \sum_{i=1}^n\left\| {g}_i-T_n^{\top}\tilde{g}_i \right\| &\le& 
    \sum_{i=1}^{n }\left\| {Z}^{\top}_{i,1:r}-n^{-1}{Z}^{\top}_{1:r}\tilde{Z}_{1:r} \tilde{Z}^{\top}_{i,1:r} \right\|\\
    \nonumber
    &+&\sum_{i=1}^{n }\left\| {Z}^{\top}_{i,(r+1):p}-n^{-1}{Z}^{\top}_{(r+1):p}\tilde{Z}_{(r+1):p} \tilde{Z}^{\top}_{i,(r+1):p} \right\|\\
    \nonumber
    &+&\sum_{i=1}^{n }\left\| {W}^{\top}_{i,(r+1):K}-n^{-1}{W}^{\top}_{(r+1):K}\tilde{W}_{(r+1):K} \tilde{W}^{\top}_{i,(r+1):K} \right\|.
\end{eqnarray}
As before, we will show that each sum on the right-hand side of \eqref{eq:sum alignment} is of order $O(\tau_n)$. Regarding the first sum, by the Cauchy-Schwartz inequality and Proposition~\ref{proposition 1}, 
\begin{align*}
    \sum_{i=1}^n\left\|{Z}^{\top}_{i,1:r}-n^{-1}{Z}^{\top}_{1:r}\tilde{Z}_{1:r} \tilde{Z}^{\top}_{i,1:r}\right\| &=  \sum_{i=1}^n\left\|{Z}_{i,1:r}-n^{-1}\tilde{Z}_{i,1:r}\tilde{Z}_{1:r}^{\top} {Z}_{1:r}\right\|  \\
    &\leq n^{1/2}  \left\|{Z}_{1:r}-n^{-1}\tilde{Z}_{1:r} \tilde{Z}^{\top}_{1:r} Z_{1:r}\right\|_F \\ 
    &= n^{1/2}\left\|\left(I_{r}-n^{-1}\tilde{Z}_{1:r} \tilde{Z}_{1:r}^{\top}\right) Z_{1:r}\right\|_F\\
    & =n^{1/2}\left\|n^{-1}\left({Z}_{1:r}{Z}_{1:r}^{\top}-\tilde{Z}_{1:r} \tilde{Z}_{1:r}^{\top}\right) {Z}_{1:r}\right\|_F\\
    &=n^{1/2}\left\|(\hat{\mathcal{P}}_R-\mathcal{P}_R) Z_{1:r}\right\|_F\\
    & \leq n^{1/2}\sum_{i=1}^{r}\left\|(\hat{\mathcal{P}}_R-\mathcal{P}_R) Z_i\right\|
    \leq C_1rn \tau_n.
\end{align*}
Similarly, the second sum and the third sum on the right-hand side of \eqref{eq:sum alignment} are bounded by $C_1(p-r)n \tau_n$ and $C_1(K-r)n \tau_n$, respectively. These inequalities and \eqref{eq:sum alignment} then imply \eqref{covariate bound}.

Finally, we prove  \eqref{Tnbound}. Since $T_n$ is a block-diagonal matrix with three non-zero blocks on the diagonal, $T_n^{\top}T_n-I_{K+p-r}$ is also block-diagonal with three non-zero blocks on the diagonal given by:
\begin{equation*}  
\begin{aligned}
&n^{-2}Z_{1:r}^{\top}\Tilde{Z}_{1:r}\Tilde{Z}_{1:r}^{\top}Z_{1:r}-I_r,\\
&n^{-2}Z_{(r+1):p}^{\top}\Tilde{Z}_{(r+1):p}\Tilde{Z}_{(r+1):p}^{\top}Z_{(r+1):p}-I_{p-r},\\
& n^{-2}{W}_{(r+1):K}^{\top}\Tilde{W}_{(r+1):K}\Tilde{W}_{(r+1):K}^{\top}W_{(r+1):K}-I_{K-r}.
\end{aligned}
\end{equation*}
We will show that each diagonal block
is of order $O(\tau_n)$. Regarding the first block, 
for any unit vectors $u,v\in \mathbb{R}^{k-r}$, by Proposition \ref{proposition 1} we have
\begin{eqnarray*} u^\top\left(Z_{1:r}^{\top}\Tilde{Z}_{1:r}\Tilde{Z}_{1:r}^{\top}Z_{1:r}-n^{2}I_r\right)v 
&=& u^\top Z_{1:r}^{\top}\left(\Tilde{Z}_{1:r}\Tilde{Z}_{1:r}^{\top}-Z_{1:r}Z_{1:r}^\top\right)Z_{1:r}v\\
&\le& \| Z_{1:r}u\|\big\|\Tilde{Z}_{1:r}\Tilde{Z}_{1:r}^{\top}-Z_{1:r}Z_{1:r}^\top\big\|\|Z_{1:r}v\| \\
&=& n\| Z_{1:r}u\|\big\|\mathcal{P}_C-\hat{\mathcal{P}}_C\big\|\|Z_{1:r}v\| \\
&\le&C_1n\tau_n\| Z_{1:r}u\|\| Z_{1:r}v\|.
\end{eqnarray*}
Since columns of $Z_{1:r}$ are of norm $n^{1/2}$, it follows that $\| Z_{1:r}u\|\| Z_{1:r}v\|\le rn$. Because $u$ and $v$ are arbitrary, this implies the infinity norm of the first diagonal block is at most $C_1rn^2\tau_n$. The same argument can be applied to the second and third diagonal blocks to show that their infinity norms are bounded by $C_1(p-r)n^2\tau_n$ and $C_1(K-r)n^2\tau_n$, respectively. Together, these bounds imply \eqref{Tnbound} and the proof of 
Lemma~\ref{proposition 2} is complete.
\end{proof}

Then we show an additional Lemma aims to bound the $T_n$'s eigenvalues that will be applied repeatly in the proof of main theorems.

\begin{corollary}[Eigenvalue bound of $T_n$]
\label{proposition 3}
Suppose that Assumption \ref{cond:A4} holds. Then with the same constant $C_2$ from Lemma~\ref{proposition 2}
\begin{equation}
\label{Tneigen}
1-C_2\tau_n \leq \lambda^2_{\min }(T_n) \leq \lambda^2_{\max }(T_n) \leq 1+C_2\tau_n,  
\end{equation}
for sufficiently large $n$
\end{corollary}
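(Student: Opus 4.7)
The plan is to deduce the eigenvalue bounds directly from the infinity-norm control in inequality \eqref{Tnbound} of Lemma~\ref{proposition 2}. The key observation is that the matrix $T_n^{\top}T_n - I_{K+p-r}$ is symmetric, so its spectral norm is controlled by its infinity norm.

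First, I would note the standard fact that for any symmetric matrix $A$, one has $\|A\|\le \|A\|_\infty$. This follows from the general inequality $\|A\|\le \sqrt{\|A\|_1\,\|A\|_\infty}$ together with the identity $\|A\|_1=\|A\|_\infty$ valid for symmetric $A$. Applying this to $A=T_n^{\top}T_n-I_{K+p-r}$ (which is symmetric because $T_n^{\top}T_n$ is) and invoking \eqref{Tnbound} yields
$$\bigl\|T_n^{\top}T_n - I_{K+p-r}\bigr\| \;\le\; \bigl\|T_n^{\top}T_n - I_{K+p-r}\bigr\|_\infty \;\le\; C_2\tau_n.$$

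Second, by Weyl's inequality, every eigenvalue $\lambda$ of the symmetric matrix $T_n^{\top}T_n$ satisfies $|\lambda - 1|\le C_2\tau_n$. Since $\lambda_{\min}^2(T_n)$ and $\lambda_{\max}^2(T_n)$ are, by convention, the minimum and maximum singular values of $T_n$ squared, i.e.\ the extreme eigenvalues of $T_n^{\top}T_n$, the bound \eqref{Tneigen} follows at once. Under Assumption~\ref{cond:A4}, $\tau_n=o(n^{-1/2})\to 0$, so for sufficiently large $n$ we have $C_2\tau_n<1$ and the lower bound $1-C_2\tau_n$ is strictly positive, in particular ensuring that $T_n$ is invertible.

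There is no real obstacle here: the corollary is essentially immediate from \eqref{Tnbound} once the symmetry of $T_n^{\top}T_n-I$ is exploited to pass from the infinity norm to the spectral norm. The statement is recorded separately only because the resulting two-sided control on $\lambda_{\min}(T_n)$ and $\lambda_{\max}(T_n)$ is invoked repeatedly in the subsequent proofs of Theorem~\ref{existence+consistency} and Theorem~\ref{theorem 2}.
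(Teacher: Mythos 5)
Your proposal is correct and follows essentially the same route as the paper: both arguments reduce \eqref{Tneigen} to the bound $\|T_n^{\top}T_n-I_{K+p-r}\|\le\|T_n^{\top}T_n-I_{K+p-r}\|_{\infty}\le C_2\tau_n$ from \eqref{Tnbound} and then read off the eigenvalue control, the only cosmetic difference being that you invoke Weyl's inequality on $T_n^{\top}T_n$ while the paper bounds the Rayleigh quotient $u_\lambda^{\top}(T_n^{\top}T_n-I)u_\lambda$ at an eigenvector of $T_n$. Your closing remark that $C_2\tau_n<1$ for large $n$ (hence invertibility of $T_n$) matches the paper's opening observation, so nothing is missing.
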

\begin{proof}[Proof of Corollary~\ref{proposition 3}]
Based on Lemma~\ref{proposition 2}, $T_n$ is nonsingular when $\tau_n<1/C_2$. Then we try to bound $T_n$'s eigenvalue $\lambda$ by bound $\|T_nu_{\lambda}\|$, where $\lambda u_\lambda=T_nu_{\lambda}$ and $u_\lambda\in \mathbb{R}^{K+p-r}$ is a unit vector. By infinity norm bound \eqref{Tnbound} in Lemma~\ref{proposition 2}
\begin{eqnarray*}
    \left|\lambda^2-1\right|&=&
    \left|\left\|T_nu_\lambda\right\|^2-1\right|= 
    u_\lambda^\top T_n^\top T_n u_\lambda-1=  u_\lambda^\top (T_n^\top T_n-I_{K+p-r} )u_\lambda\\
    &\le& 
    \|T_n^\top T_n-I_{K+p-r}\|\le \|T_n^\top T_n-I_{K+p-r}\|_{\infty}\le C_2\tau_n.
\end{eqnarray*}
Therefore, 
\begin{equation*}
    1-C_2\tau_n\le \lambda^2\le 1+C_2\tau_n.
\end{equation*}
Since $\lambda$ can be any eigenvalue of $T_n$, we have proved \eqref{Tneigen}.
\end{proof}

\section{The Proof of Theorem~\ref{existence+consistency}}\label{sec:Proof of Theorem 1}

We proceed to prove Theorem~\ref{existence+consistency}  about the existence and consistency of the proposed estimates. Overall, we follow the proof strategy for generalized linear models with fixed design \cite{yin2006asymptotic}. The main difference between our proof and the proof in \cite{yin2006asymptotic} is that the combinations of covariate and relational information for all the nodes, denoted by $\Tilde{g}_i$, are not exactly observed. Therefore, we need to carefully track down the measurement errors.
To prove Theorem~\ref{existence+consistency}, we begin with the following remarks and lemmas.
\begin{remark}
\label{remark 1}
 Assumption~\ref{cond:A1} implies that $\gamma^*$ is bounded because $$\left\|\gamma^*\right\|=n^{-1/2}\|W\gamma^*\|=n^{-1/2}\|X\beta^*+X\theta^*+\alpha^*\|\le n^{-1/2}\left(\left\|X\beta^*\right\|+\left\|X\theta^*\right\|+\left\|\alpha^*\right\|\right)\le 3C.$$
\end{remark}

\begin{remark}
\label{remark 2}
Denote $\eta:= h^{\prime}/v:\mathbb{R}\rightarrow \mathbb{R}$.
Then there exists a constant $M>0$, when $\left|t\right|\le 12C^2$, the absolute value of function $h\left(t\right),v\left(t\right),\eta\left(t\right),\eta\left(t\right)h\left(t\right), \eta\left(t\right)h^{\prime}\left(t\right)$ and their first and second derivatives are all bounded by $M$ because every real-valued continuous function on a compact set is necessarily bounded.
\end{remark}

\begin{lemma}[Lemma 3 in \cite{yin2006asymptotic}]
\label{conti}
    Let $\varphi:\mathbb{R}^{m}\to\mathbb{R}^{m}$ be a smooth injective map with $\varphi\left(x^*\right)=y^*$ and for some $\rho,\delta>0$, $$\min _{\left\|x-x^*\right\|=\delta}\left\|\varphi(x)-y^*\right\| \geq \rho.$$
    Then for any $y$ with $\left\|y-y^*\right\| \le \rho$, there exists $x$ with $\left\|x-x^*\right\| \leq \delta$ such that $\varphi(x)=y$.
\end{lemma}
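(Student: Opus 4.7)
The lemma is a quantitative form of the inverse-function principle, so I would approach it through topological invariance rather than analytic estimates. The key observation is that $\varphi$ restricted to the compact ball $\overline{B(x^*,\delta)}$ is a continuous injection into a Hausdorff space, hence a homeomorphism onto its image, and by Brouwer's invariance-of-domain theorem the image $U:=\varphi(B^\circ(x^*,\delta))$ of the open ball is open in $\mathbb{R}^m$. Since $\varphi(\overline{B(x^*,\delta)})$ is compact (hence closed) and equal to $U\cup\varphi(\partial B(x^*,\delta))$ by injectivity, every point of $\overline{U}\setminus U$ must arise as the image of a boundary point, i.e., $\partial U \subseteq \varphi(\partial B(x^*,\delta))$.

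Next, I would run an exit-time argument on the straight line from $y^*$ to $y$. Fix any $y$ with $\|y-y^*\|<\rho$ and set $\psi(t) = y^* + t(y-y^*)$ for $t\in[0,1]$, which starts at $y^*\in U$. Let $t^* := \sup\{t\in[0,1]:\psi([0,t])\subseteq U\}$. Openness of $U$ together with continuity of $\psi$ forces $\psi(t^*)\in\overline{U}\setminus U = \partial U$ whenever $t^*<1$; hence $\psi(t^*)=\varphi(x)$ for some $x$ with $\|x-x^*\|=\delta$, and one computes
\[
\|\varphi(x)-y^*\| \;=\; t^*\|y-y^*\| \;<\; \rho,
\]
contradicting the hypothesis $\min_{\|x-x^*\|=\delta}\|\varphi(x)-y^*\|\geq \rho$. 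Therefore $t^*=1$, so $y \in U \subseteq \varphi(\overline{B(x^*,\delta)})$. The boundary case $\|y-y^*\|=\rho$ is then handled by picking a sequence $y_n\to y$ with $\|y_n-y^*\|<\rho$, producing preimages $x_n\in\overline{B(x^*,\delta)}$, and extracting a convergent subsequence via compactness; continuity of $\varphi$ delivers a preimage of $y$ in the closed ball.

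\textbf{Main obstacle.} The only nontrivial ingredient is the inclusion $\partial U\subseteq\varphi(\partial B(x^*,\delta))$, which rests on Brouwer's invariance-of-domain theorem. Smoothness combined with injectivity does not, by itself, make $\varphi$ a local diffeomorphism (witness $x\mapsto x^3$ at the origin), so the classical inverse function theorem is not directly applicable, and the topological invariance result is the essential black box. Once that is granted, the remainder of the argument is a standard continuity/compactness exercise, and no estimate on the Jacobian of $\varphi$ is required.
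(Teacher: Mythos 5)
Your argument is correct, but there is nothing in the paper to compare it against: the paper does not prove this statement, it simply imports it as Lemma~3 of \cite{yin2006asymptotic} and uses it as a black box in the proof of Theorem~\ref{existence+consistency}. Your route—invariance of domain to get openness of $U=\varphi(B^\circ(x^*,\delta))$, the inclusion $\partial U\subseteq\varphi(\partial B(x^*,\delta))$, an exit-time argument along the segment from $y^*$ to $y$, and a compactness limit for the boundary case $\|y-y^*\|=\rho$—is essentially the standard proof of this type of ``solvability in a ball'' lemma (alternatives run through Brouwer degree or a fixed-point argument, which avoid injectivity but need a degree/orientation hypothesis instead). Two small points worth tightening in a writeup: (i) after concluding $t^*=1$ you still need to exclude $\psi(1)\in\partial U$; this is covered by the very same computation, since $\psi(1)=\varphi(x)$ with $\|x-x^*\|=\delta$ would give $\|\varphi(x)-y^*\|=\|y-y^*\|<\rho$, again contradicting the hypothesis, so it is only a matter of stating it; (ii) as you observe, smoothness is never used beyond continuity, so your proof actually establishes the lemma for any continuous injection, which is slightly stronger than the cited statement and entirely adequate for the way the paper invokes it.
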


\begin{lemma}
\label{lemma 7}
Assume that Assumptions \ref{cond:A1} to \ref{cond:A4} hold. For a constant $\delta_0>0$, denote
$$
    N_n(\delta_0) = \left\{\gamma:\|T^{-1}_n\gamma-\gamma^*\| \le \delta_0 n^{-1/2} \right\}.
$$
Then there exists a constant  $c>0$ such that for any $\varepsilon>0$ and sufficiently large $n$, with probability at least $1-\varepsilon$,
\begin{eqnarray}
\label{boundary}
\inf _{ \gamma \in \partial N_n(\delta_0)} \left\|T^{\top}_n\Tilde{S}(\gamma)-T^{\top}_n\Tilde{S}\left(T_n\gamma^*\right)\right\| &\geq&  cn^{-1/2},\\
\left\|T^{\top}_n\Tilde{S}\left(T_n\gamma^*\right)\right\| &\leq& cn^{-1/2}.
\label{gamma0}
\end{eqnarray}
\end{lemma}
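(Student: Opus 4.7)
\textbf{Plan for the size bound \eqref{gamma0}:} The vector $T_n^\top \tilde{S}(T_n\gamma^*)$ is, up to alignment errors, the oracle score
$S(\gamma^*) = n^{-1}\sum_i g_i\,\eta(g_i^\top \gamma^*)(y_i - h(g_i^\top \gamma^*))$,
and $\mathbb{E}[S(\gamma^*)] = 0$ since $\mathbb{E}[y_i \mid X,P] = h(g_i^\top \gamma^*)$. I would decompose
\begin{equation*}
T_n^\top \tilde{S}(T_n\gamma^*) \;=\; S(\gamma^*) \;+\; \bigl(T_n^\top \tilde{S}(T_n\gamma^*) - S(\gamma^*)\bigr),
\end{equation*}
and bound each summand. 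For $S(\gamma^*)$, a direct variance computation gives
$\mathbb{E}\|S(\gamma^*)\|^2 = n^{-2}\sum_i \|g_i\|^2\,\eta(g_i^\top\gamma^*)^2\,\mathrm{Var}(y_i) = O(n^{-1})$
using Assumption~\ref{cond:A2} for $\|g_i\|$, Remark~\ref{remark 2} for $|\eta(g_i^\top\gamma^*)|$ (valid since $|g_i^\top\gamma^*|\le \|g_i\|\|\gamma^*\|$ is bounded by Remark~\ref{remark 1}), and Assumption~\ref{cond:A6} for $\mathrm{Var}(y_i)$, so that Chebyshev yields $\|S(\gamma^*)\| = O_p(n^{-1/2})$. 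For the difference, the key algebraic identity is $g_i^\top\gamma^* - \tilde g_i^\top T_n\gamma^* = (g_i - T_n^\top \tilde g_i)^\top \gamma^*$, which combined with Lipschitz bounds on $\eta$ and $h$ (Remark~\ref{remark 2}) and the averaged alignment bound \eqref{covariate bound} of Lemma~\ref{proposition 2} gives a deterministic contribution of order $\tau_n$ and a mean-zero stochastic contribution whose second moment is $O(\tau_n^2)$. Both are $o(n^{-1/2})$ under Assumption~\ref{cond:A4}, so inequality \eqref{gamma0} follows for some constant $c$ on an event of probability at least $1-\varepsilon$.

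\textbf{Plan for the boundary lower bound \eqref{boundary}:} I would parametrize the boundary as $\gamma = T_n\gamma^* + \delta_0 n^{-1/2} T_n u$ with $\|u\|=1$, so that $\|\gamma - T_n\gamma^*\| = O(n^{-1/2})$ by Corollary~\ref{proposition 3}. A Taylor expansion of $\tilde{S}$ around $T_n\gamma^*$ gives
\begin{equation*}
\tilde{S}(\gamma) - \tilde{S}(T_n\gamma^*) \;=\; -\tilde{F}(T_n\gamma^*)(\gamma - T_n\gamma^*) + B\,(\gamma - T_n\gamma^*) + R(\gamma),
\end{equation*}
where $B = n^{-1}\sum_i \tilde g_i\tilde g_i^\top \eta'(\tilde g_i^\top T_n\gamma^*)(y_i - h(\tilde g_i^\top T_n\gamma^*))$ is essentially mean-zero (up to the same alignment error as above), and $R(\gamma) = O(\|\gamma - T_n\gamma^*\|^2) = O(n^{-1})$ by the uniform bounds on the second derivatives of $\eta$ and $h$ (Remark~\ref{remark 2}) and the boundedness of $\tilde g_i$ (Assumption~\ref{cond:A2} plus Corollary~\ref{proposition 3}). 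Multiplying through by $T_n^\top$, the dominant contribution is $-\delta_0 n^{-1/2}\,T_n^\top \tilde F(T_n\gamma^*) T_n\,u$. Lemma~\ref{proposition 2} and Corollary~\ref{proposition 3}, combined with the Lipschitz continuity of the integrand in $\tilde F$, show $T_n^\top \tilde F(T_n\gamma^*) T_n = F(\gamma^*) + o(1)$, so by Assumption~\ref{cond:A3} its smallest eigenvalue is bounded below by $1/(2C)$ for large $n$. This gives $\|T_n^\top\tilde F(T_n\gamma^*)T_n u\| \ge 1/(2C)$, hence the main term has norm at least $\delta_0/(2C)\cdot n^{-1/2}$. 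A second-moment bound on $T_n^\top B$ analogous to that for $S(\gamma^*)$ shows $\|T_n^\top B(\gamma - T_n\gamma^*)\| = O_p(n^{-1})$, and $\|T_n^\top R(\gamma)\| = O(n^{-1})$ deterministically. Choosing $\delta_0$ large enough relative to the implicit constants, the main term dominates, yielding \eqref{boundary} with some $c>0$.

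\textbf{Anticipated main obstacle:} The delicate point is ensuring every approximation between ``tilde'' and oracle quantities is controlled at the sharp rate $o(n^{-1/2})$ rather than merely $o(1)$. The pointwise bound \eqref{boundness} of order $n^{1/2}\tau_n$ is too coarse when multiplied by $\|g_i\|$ sums, so I must systematically exploit the averaged bound \eqref{covariate bound} of order $\tau_n$ and the infinity-norm bound \eqref{Tnbound} on $T_n^\top T_n - I$. In particular, when replacing $\tilde F(T_n\gamma^*)$ by $F(\gamma^*)$, I need a uniform control that blends rotation via $T_n$ with Lipschitz-type bounds on $\eta h'$, and in the second-moment bound for $B$, cross-terms between the alignment error and the mean-zero residuals require Cauchy--Schwarz applied to $\sum_i \|g_i - T_n^\top \tilde g_i\|^2$ together with Assumption~\ref{cond:A4}.
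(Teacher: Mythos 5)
Your overall strategy is the same as the paper's. For \eqref{gamma0}, the paper likewise writes $T_n^\top\tilde S(T_n\gamma^*)=S(\gamma^*)+\bigl(T_n^\top\tilde S(T_n\gamma^*)-S(\gamma^*)\bigr)$, bounds $\|S(\gamma^*)\|$ via Markov/Chebyshev using $\operatorname{Cov}(S(\gamma^*))=n^{-1}F(\gamma^*)$ and Assumption~\ref{cond:A3}, and controls the difference at rate $o_p(n^{-1/2})$ — that is exactly its Lemma~\ref{lemma 5}, which you essentially re-derive; the only technical difference is that the paper handles the mean-zero part with Lemma~\ref{techniquelemma1} while you use a direct second-moment/Cauchy--Schwarz bound through $\sum_i\|g_i-T_n^\top\tilde g_i\|^2$, and both routes deliver $o_p(n^{-1/2})$ under Assumption~\ref{cond:A4}. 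For \eqref{boundary}, the paper linearizes with the exact integral mean-value identity (Lemma~\ref{lemma 9}), writes the score difference as $\mathbb{H}(\gamma)(T_n^{-1}\gamma-\gamma^*)$, shows the integrated Jacobian is uniformly close to $T_n^\top\tilde F(T_n\gamma^*)T_n$ over $N_n(\delta_0)$ (Lemma~\ref{lemma 6}), and uses the eigenvalue bound of Corollary~\ref{cor:scaling matrix} — which is precisely your dominant-term argument. Your variant (Taylor at the single point $T_n\gamma^*$ with an explicit second-order remainder, plus a second-moment bound on the noise part $B$ of the Jacobian) trades the paper's uniform-Jacobian control for a remainder estimate; that is a legitimate alternative. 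Also note that "choosing $\delta_0$ large" is not needed for \eqref{boundary} itself, since your error terms are $o_p(n^{-1/2})$ for any fixed $\delta_0$; largeness of $\delta_0$ only becomes relevant when \eqref{boundary} is played against \eqref{gamma0} in the existence argument.

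One claim does need repair: the remainder $R(\gamma)$ is not $O(n^{-1})$ \emph{deterministically}. The second derivative of each coordinate of $\tilde S$ contains the term $\eta''(\tilde g_i^\top\xi)\bigl(y_i-h(\tilde g_i^\top\xi)\bigr)$, so $R(\gamma)$ depends on the responses; for unbounded $y_i$ (e.g.\ Poisson) it cannot be bounded using only Remark~\ref{remark 2} and $\|\tilde g_i\|$. The fix is routine: $\|R(\gamma)\|\le C\|\gamma-T_n\gamma^*\|^2\bigl(1+n^{-1}\sum_i|e_i|\bigr)=O_p(n^{-1})$, since $n^{-1}\sum_i|e_i|=O_p(1)$ by the bounded second moments in Assumption~\ref{cond:A6} — this is exactly the $y_i$-dependent Jacobian contribution that the paper isolates and controls stochastically in Lemma~\ref{lemma 6} (its terms $\Phi_2,\Phi_3$). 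With that correction, and with the approximation $T_n^\top\tilde F(T_n\gamma^*)T_n=F(\gamma^*)+o(1)$ carried out as in the paper's Lemma~\ref{lemma 4}, your argument goes through.
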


This lemma is crucial for proving Theorem \ref{existence+consistency}. Its proof is given in Appendix~\ref{app: Tech}.

\begin{proof}[Proof of Theorem \ref{existence+consistency}]
We first prove \eqref{existence_core}. 
Instead of working with the sample score function $\Tilde{S}(\gamma)$ directly, it will be more convenient to scale it and work with
$$L(\gamma):= T^{\top}_n\Tilde{S}(T_n\gamma).$$
The estimating equation $\Tilde{S}(\gamma)=0$ is equivalent to 
$$L(T_n^{-1}\gamma)-L(\gamma^*)=-L(\gamma^*).$$
To prove \eqref{existence_core}, we apply Lemma~\ref{conti} with $\varphi(x)= L(x)-L(\gamma^*)$, $x^* = \gamma^*$, and $y^*=0$. 
According to this lemma, for $y = -L(\gamma^*)$ with $\|y-y^*\|=\|L(\gamma^*)\|=:\rho$, there exists $x=T_n^{-1}\gamma$ with $\gamma:=T_nx$ and $\|x-x^*\|=\|T_n^{-1}{\gamma}-\gamma^*\|\le\delta$
such that $\varphi(x)=y$, or equivalently $\tilde{S}({\gamma})=0$. We need to specify $\delta$ such that the following condition of the lemma holds:
$$
\min_{\|x-x^*\|=\delta}\|\varphi(x)-y^*\|=\min_{\|T_n^{-1}\gamma-\gamma^*\|=\delta}\|\ L(T_n^{-1}\gamma)-L(\gamma^*)\|\ge\rho.
$$
For consistency of ${\gamma}=T_nx$ such that $\tilde{S}(\gamma)=0$, we choose  $\delta=\delta_0n^{-1/2}$ for some $\delta_0$ and denote 
\begin{equation}
\label{Nndelta}
    N_n(\delta_0) = \left\{\gamma:\|T^{-1}_n\gamma-\gamma^*\| \le \delta_0 n^{-1/2} \right\}.
\end{equation}
By combining  \eqref{boundary} and \eqref{gamma0} in Lemma~\ref{lemma 7}, we obtain that with probability at least $1-\varepsilon$, 
$$
\min _{ \gamma \in \partial N_n(\delta_0)} \left\|L(T_n^{-1}\gamma)-L\left(\gamma^*\right)\right\|\ge \left\|L\left(\gamma^*\right)\right\|=\rho.
$$
The proof of \eqref{existence_core} is complete.

We now prove \eqref{Consistencyforpara}, starting with the consistency of $\hat{{\alpha}}$. By Proposition \ref{proposition 1} and Lemma \ref{proposition 2}, 
\begin{eqnarray*}
    \left\|\hat{{\alpha}}-{\alpha^*}\right\|&=&\left\|\Tilde{W}_{(r+1):K}\hat{\gamma}_{(p+1):(p+K-r)}-W_{(r+1):K}\gamma^*_{(p+1):(p+K-r)}\right\|\\
    &\leq& \left\|\Tilde{W}_{(r+1):K}\left(\hat{\gamma}_{(p+1):(p+K-r)}-\left[\tilde{W}_{(r+1):K}^{\top} {W}_{(r+1):K}/n\right]\gamma^*_{(p+1):(p+K-r)}\right)\right\|\\    &&+\left\|\left(\Tilde{W}_{(r+1):K}\Tilde{W}_{(r+1):K}^{\top} W_{(r+1):K}/n-W_{(r+1):K}\right)\gamma^*_{(p+1):(p+K-r)}\right\|\\   &=&n^{1/2}\left\|\hat{\gamma}_{(p+1):(p+K-r)}-\left[\tilde{W}_{(r+1):K}^{\top} {W}_{(r+1):K}/n\right]\gamma^*_{(p+1):(p+K-r)}\right\|\\  &&+n^{-1}\left\|\left(\Tilde{W}_{(r+1):K}\Tilde{W}_{(r+1):K}^{\top}-W_{(r+1):K}W_{(r+1):K}^{\top}\right)W_{(r+1):K}\gamma^*_{(p+1):(p+K-r)}\right\|\\
     &=& n^{1/2}\left\|\tilde{W}_{(r+1):K}^{\top} {W}_{(r+1):K}/n\left(\left(\tilde{W}_{(r+1):K}^{\top} {W}_{(r+1):K}/n\right)^{-1}\hat{\gamma}_{(p+1):(p+K-r)}-\gamma^*_{(p+1):(p+K-r)}\right)\right\|\\   &&+\left\|\left(\hat{\mathcal{P}}_N-\mathcal{P}_N\right)\alpha^*\right\|\\
     &\le & n^{-1/2}\left\|\Tilde{W}_{(r+1):K}\right\|\left\| {W}_{(r+1):K}\right\|\left\|\left(\tilde{W}_{(r+1):K}^{\top} {W}_{(r+1):K}/n\right)^{-1}\hat{\gamma}_{(p+1):(p+K-r)}-\gamma^*_{(p+1):(p+K-r)}\right\|\\
&&+\left\|\left(\hat{\mathcal{P}}_N-\mathcal{P}_N\right)\alpha^*\right\|\\
     &\leq &n^{1/2}\left\|T^{-1}_n\hat{\gamma}-\gamma^*\right\|+\left\|\left(\hat{\mathcal{P}}_N-\mathcal{P}_N\right)\alpha^*\right\|\\
    &\leq &o_p(n^{1/2})+n^{1/2}C_2C\tau_n\\
    &=&o_p(n^{1/2}).
\end{eqnarray*}
We now prove the consistency for $\hat{\beta}$. 
Following the above argument for the bound of $\hat{\alpha}$, we obtain $\|X(\hat{{\beta}}-\beta^*)\|=o_p(n^{1/2})$.
\begin{eqnarray*}
    \left\|X(\hat{{\beta}}-\beta^*)\right\|&=&\left\|\tilde{Z}_{(r+1):p}\hat{\gamma}_{(r+1):p}-Z_{(r+1):p}\gamma^*_{(r+1):p}\right\|\\
    &\leq& \left\|\tilde{Z}_{(r+1):p}\left(\hat{\gamma}_{(r+1):p}-\left[\tilde{Z}_{(r+1):p}^{\top} Z_{(r+1):p}/n\right]\gamma^*_{(r+1):p}\right)\right\|\\    
     &\le & n^{-1/2}\left\|\tilde{Z}_{(r+1):p}\right\|\left\| Z_{(r+1):p}\right\|\left\|\left(\tilde{Z}_{(r+1):p}^{\top} Z_{(r+1):p}/n\right)^{-1}\hat{\gamma}_{(r+1):p}-\gamma^*_{(r+1):p}\right\|\\
&&+\left\|\left(\hat{\mathcal{P}}_C-\mathcal{P}_C\right)\alpha^*\right\|\\
     &\leq &n^{1/2}\left\|T^{-1}_n\hat{\gamma}-\gamma^*\right\|+\left\|\left(\hat{\mathcal{P}}_C-\mathcal{P}_C\right)\alpha^*\right\|\\
    &\leq &o_p(n^{1/2})+n^{1/2}C_2C\tau_n\\
    &=&o_p(n^{1/2}).
\end{eqnarray*}
Denote $u=\|\hat{{\beta}}-\beta^*\|^{-1}(\hat{{\beta}}-\beta^*)$.
By the definition of $G=(X^\top X/n)^{-1}$ in Assumption \ref{cond:A5},
\begin{equation*}
    \left\|\hat{{\beta}}-\beta^*\right\|^2\left(u^{\top}G^{-1}u\right)=\left(\hat{{\beta}}-\beta^*\right)^{\top}G^{-1}\left(\hat{{\beta}}-\beta^*\right)=n^{-1}\|X(\hat{{\beta}}-\beta^*)\|^2.
\end{equation*}
Therefore by Assumption \ref{cond:A5},
\begin{equation*}
    \left\|\hat{{\beta}}-\beta^*\right\|=n^{-1/2}\left(u^{\top}G^{-1}u\right)^{-1/2}\|X(\hat{{\beta}}-\beta^*)\|\le n^{-1/2}\lambda^{-1/2}_{\min}(G)\|X(\hat{{\beta}}-\beta^*)\|=o_p(1).
\end{equation*}
For the consistency of $\hat{\theta}$,
\begin{eqnarray*}
    \left\|X(\hat{{\theta}}-\theta^*)\right\|&=&\left\|\tilde{Z}_{1:r}\hat{\gamma}_{1:r}-Z_{1:r}\gamma^*_{1:r}\right\|\\
    &\leq& \left\|\tilde{Z}_{1:r}\left(\hat{\gamma}_{1:r}-\left[\tilde{Z}_{1:r}^{\top} Z_{1:r}/n\right]\gamma^*_{1:r}\right)\right\|\\    
     &\le & n^{-1/2}\left\|\tilde{Z}_{1:r}\right\|\left\| Z_{1:r}\right\|\left\|\left(\tilde{Z}_{1:r}^{\top} Z_{1:r}/n\right)^{-1}\hat{\gamma}_{1:r}-\gamma^*_{1:r}\right\|\\
&&+\left\|\left(\hat{\mathcal{P}}_C-\mathcal{P}_C\right)\alpha^*\right\|\\
     &\leq &n^{1/2}\left\|T^{-1}_n\hat{\gamma}-\gamma^*\right\|+\left\|\left(\hat{\mathcal{P}}_C-\mathcal{P}_C\right)\alpha^*\right\|\\
    &\leq &o_p(n^{1/2})+n^{1/2}C_2C\tau_n\\
    &=&o_p(n^{1/2}).
\end{eqnarray*}
Denote $u=\|\hat{{\theta}}-\theta^*\|^{-1}(\hat{{\theta}}-\theta^*)$.
Similar to the consistency of $\hat{\theta}$, by the definition of $G=(X^\top X/n)^{-1}$ in Assumption \ref{cond:A5},
\begin{equation*}
    \left\|\hat{{\theta}}-\theta^*\right\|=n^{-1/2}\left(u^{\top}G^{-1}u\right)^{-1/2}\|X(\hat{{\theta}}-\theta^*)\|\le n^{-1/2}\lambda^{-1/2}_{\min}(G)\|X(\hat{{\theta}}-\theta^*)\|=o_p(1).
\end{equation*}

\end{proof}

\section{Proof of Lemma~\ref{lemma 7}}
\label{app: Tech}

The proof of Lemma~\ref{lemma 7} follows from several technical lemmas in this section. 
Recall matrix $T_n$ from \eqref{eq:Tn} and  $T_n^\top\tilde{g}_i\approx g_i$ by Lemma~\ref{proposition 2}. 
The next lemma is a consequence of the Mean Value Theorem and will be used repeatedly.

\begin{lemma}[Covariate alignment]
    \label{lemma 3}
Assume that Assumptions \ref{cond:A1}, \ref{cond:A2} and \ref{cond:A4} hold. Let $\eta:\mathbb{R}\rightarrow \mathbb{R}$ be a function with continuous derivative.
Then 
\begin{equation*}
\left\|T_n^{\top}\tilde{g}_i \eta(\Tilde{g}^{\top}_i T_n \gamma)-{g}_i \eta(g^{\top}_i\gamma)\right\| \leq \left(2C\|\gamma\|\sup_{\left|t\right|\leq 2C\|\gamma\|}|\eta^{\prime}\left(t\right)|+\sup_{\left|t\right|\leq C\|\gamma\|}|\eta\left(t\right)|\right)\left\|g_i-{T}^{\top}_n \tilde{g}_i\right\|,
\end{equation*}
for sufficiently large $n$.
\end{lemma}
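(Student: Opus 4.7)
The plan is to add and subtract a hybrid term so that the left-hand side splits into a part controlled by the smoothness of $\eta$ and a part controlled by the boundedness of $\eta$ itself. Writing $u_i := T_n^{\top}\tilde{g}_i$, I would decompose
\begin{equation*}
u_i\,\eta(u_i^{\top}\gamma) - g_i\,\eta(g_i^{\top}\gamma) = u_i\bigl[\eta(u_i^{\top}\gamma)-\eta(g_i^{\top}\gamma)\bigr] + (u_i-g_i)\,\eta(g_i^{\top}\gamma),
\end{equation*}
then apply the triangle inequality and bound each term separately.

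For the first term, I would invoke the one-dimensional Mean Value Theorem on $\eta$ between $g_i^{\top}\gamma$ and $u_i^{\top}\gamma$, picking up a factor $\eta'(t)(u_i-g_i)^{\top}\gamma$. Cauchy--Schwarz then yields a bound $\|u_i\|\cdot|\eta'(t)|\cdot\|u_i-g_i\|\cdot\|\gamma\|$. To control the range of the intermediate point $t$, I would use Assumption~\ref{cond:A2} to get $\|g_i\|\le C$, and combine it with the alignment inequality \eqref{boundness} of Lemma~\ref{proposition 2} together with Assumption~\ref{cond:A4} ($\tau_n=o(n^{-1/2})$) to conclude that $\|u_i-g_i\|\le C_2 n^{1/2}\tau_n = o(1)$, hence $\|u_i\|\le 2C$ for sufficiently large $n$. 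This gives $|g_i^{\top}\gamma|\le C\|\gamma\|$, $|u_i^{\top}\gamma|\le 2C\|\gamma\|$, and therefore $|t|\le 2C\|\gamma\|$, which produces the factor $2C\|\gamma\|\sup_{|t|\le 2C\|\gamma\|}|\eta'(t)|\cdot\|u_i-g_i\|$.

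For the second term, the estimate is even simpler: since $|g_i^{\top}\gamma|\le C\|\gamma\|$, we have $|\eta(g_i^{\top}\gamma)|\le \sup_{|t|\le C\|\gamma\|}|\eta(t)|$, which multiplies $\|u_i-g_i\|$. Summing the two contributions gives exactly the claimed inequality.

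There is really no hard step here --- it is essentially a routine MVT--triangle inequality argument --- but the one bookkeeping point worth flagging is the asymmetry in the two suprema: the derivative $\eta'$ must be bounded on the larger interval $[-2C\|\gamma\|,2C\|\gamma\|]$ because the MVT point $t$ can lie as far out as $u_i^{\top}\gamma$, whereas the value $\eta(g_i^{\top}\gamma)$ only requires control on $[-C\|\gamma\|,C\|\gamma\|]$. Both suprema are finite in the applications we care about by the continuity of $\eta$ and the compactness noted in Remark~\ref{remark 2}, so the use ``for sufficiently large $n$" simply ensures that the perturbation $u_i-g_i$ is small enough to keep $\|u_i\|\le 2C$.
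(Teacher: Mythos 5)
Your proposal is correct and follows essentially the same route as the paper: the identical add-and-subtract decomposition, a triangle inequality, the Mean Value Theorem on $\eta$ with the intermediate point confined to $[-2C\|\gamma\|,2C\|\gamma\|]$ via $\|g_i\|\le C$ and the bound $\|T_n^{\top}\tilde{g}_i-g_i\|\le C_2 n^{1/2}\tau_n=o(1)$ from Lemma~\ref{proposition 2} and Assumption~\ref{cond:A4}, and the simpler supremum bound on $[-C\|\gamma\|,C\|\gamma\|]$ for the second term. No gaps; your remark about the asymmetry of the two suprema matches the paper's treatment exactly.
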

\begin{proof} [Proof of Lemma~\ref{lemma 3}]
By the triangle inequality,
    \begin{align*}
\left\|T_n^{\top}\tilde{g}_i \eta(\Tilde{g}^{\top}_i T_n \gamma)-{g}_i \eta(g^{\top}_i\gamma)\right\| &\leq   \left\|T_n^{\top}\tilde{g}_i\left( \eta(\Tilde{g}^{\top}_i T_n \gamma)- \eta(g^{\top}_i\gamma)\right)\right\|+  \left\|\left( T_n^{\top}\tilde{g}_i-{g}_i\right) \eta(g^{\top}_i\gamma)\right\|\\
        &\leq \left\|T_n^{\top}\tilde{g}_i\right\| \left| \eta(\Tilde{g}^{\top}_i T_n \gamma)- \eta(g^{\top}_i\gamma)\right|+\left\| T_n^{\top}\tilde{g}_i-{g}_i\right\| |\eta(g_i^\top\gamma)|.
    \end{align*}
By Assumption~\ref{cond:A1}, we have $|g_i^\top\gamma|\le \|g_i\|\|\gamma\|\le C\|\gamma\|$, which implies 
$$
|\eta(g_i^\top\gamma)|\le \sup_{\left|t\right|\le C\|\gamma\|}|\eta\left(t\right)|.
$$
We will use the Mean Value Theorem to bound $\left| \eta(\Tilde{g}^{\top}_i T_n \gamma)- \eta(g^{\top}_i\gamma)\right|$. Denote 
$$h\left(t\right) = \eta\left(\Tilde{g}^{\top}_i T_n \gamma + t\left[g^{\top}_i\gamma-\Tilde{g}^{\top}_i T_n\gamma\right]\right):\mathbb{R}\to\mathbb{R}.$$ 
By the Mean Value Theorem, 
 there exists $t^*\in[0,1]$ and $z_i = \Tilde{g}^{\top}_i T_n \gamma + t^*\left[g^{\top}_i\gamma-\Tilde{g}^{\top}_i T_n\gamma\right]$ such that 
\begin{eqnarray*}
    \left| \eta(\Tilde{g}^{\top}_i T_n \gamma)- \eta(g^{\top}_i\gamma)\right| &=& |h(1)-h(0)| = |h'(t^*)| = |\eta'(z_i)\left(g^{\top}_i\gamma-\Tilde{g}^{\top}_i T_n\gamma\right)|\\
    &\leq& \|g_i- T_n^\top\Tilde{g}_i\|\|\gamma\|\left|\eta^{\prime}(z_i)\right|.
\end{eqnarray*}
By the triangle inequality,
$$
|z_i|=\big|(1-t^*)\Tilde{g}^{\top}_i T_n \gamma+t^*g_i^\top\gamma\big|\le \max\{\|\Tilde{g}^{\top}_i T_n \|,\|g_i\|\}\cdot\|\gamma\|.
$$
By Assumption~\ref{cond:A1}, we have $\|g_i\|\le C$. In addition, by Assumptions  \ref{cond:A2}, \ref{cond:A4}, and Lemma~\ref{proposition 2},
\begin{equation}
\label{covariate bound tilde}
    \left\|T_n^{\top}\tilde{g}_i\right\| \le \left\|T_n^{\top}\tilde{g}_i-g_i\right\|+\|g_i\|\le C_2 n^{1/2}  \tau_n + C \le 2C,
\end{equation}
when $n$ is big enough. It follows that $|\eta'(z_i)|\le \sup_{\left|t\right|\le 2C\|\gamma\|}\|\eta'\left(t\right)\|$.
Putting these inequalities together, we get
$$
\left\|T_n^{\top}\tilde{g}_i \eta(\Tilde{g}^{\top}_i T_n \gamma)-{g}_i \eta(g^{\top}_i\gamma)\right\| \le \left(2C\|\gamma\|\sup_{\left|t\right|\leq 2C\|\gamma\|}|\eta^{\prime}\left(t\right)|+\sup_{\left|t\right|\leq C\|\gamma\|}|\eta\left(t\right)|\right)\left\|g_i-{T}^{\top}_n \tilde{g}_i\right\|.
$$
The proof is complete.
\end{proof}

The next lemma bounds the difference between the sample information matrix $\tilde{F}$ and its population counterpart $F$, defined in \eqref{tildeF} and \eqref{F}, respectively.  
This bound will be applied multiple times in the proof of Theorem \ref{existence+consistency}.

\begin{lemma}[Information matrix bounds]
\label{lemma 4}
Denote $\varphi=(h^{\prime})^2/ v$. 
Under Assumptions \ref{cond:A1}, \ref{cond:A2}, and \ref{cond:A4}, 
for sufficiently large $n$ we have 
\begin{equation*}
  \big\|T_n^{\top}\tilde{F}(T_n\gamma)T_n-F({{\gamma}})\big\|\leq \Psi(\|\gamma\|)\tau_n, 
\end{equation*}
where $\Psi:\mathbb{R}\to\mathbb{R}$ is a non-decreasing function defined by
$$
\Psi(s) = C^2C_2s\sup_{\left|t\right|\leq 2Cs}|\varphi^{\prime}\left(t\right)|+\left(2C+C_2\right)C_2\sup_{\left|t\right|\leq 2C s}|\varphi\left(t\right)|.
$$
In addition, 
if Assumption \ref{cond:A3} also holds then for any $\gamma$ such that $\|\gamma-\gamma^*\|<\delta$, we have
\begin{equation}
\label{eigen}
1/C-\Psi(\|\gamma\|)\tau_n \leq \lambda_{\min }(T_n^{\top}\tilde{F}(T_n\gamma)T_n) \leq \lambda_{\max }(T_n^{\top}\tilde{F}(T_n\gamma)T_n) \leq C+\Psi(\|\gamma\|)\tau_n.  
\end{equation}
\end{lemma}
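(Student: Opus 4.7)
The plan is to control the two matrices summand-by-summand after reducing to a common form. Writing $a_i := T_n^\top \tilde{g}_i$, I would first observe
$$T_n^\top \tilde{F}(T_n\gamma) T_n - F(\gamma) = \frac{1}{n}\sum_{i=1}^n \left[\varphi(a_i^\top \gamma)\, a_i a_i^\top - \varphi(g_i^\top \gamma)\, g_i g_i^\top\right],$$
since $\tilde{g}_i^\top T_n \gamma = a_i^\top \gamma$. The natural telescoping identity is
$$\varphi(a_i^\top \gamma)\, a_i a_i^\top - \varphi(g_i^\top \gamma)\, g_i g_i^\top = \varphi(a_i^\top \gamma)\, a_i (a_i - g_i)^\top + \bigl[\varphi(a_i^\top \gamma)\, a_i - \varphi(g_i^\top \gamma)\, g_i\bigr] g_i^\top,$$
which splits the error into a ``covariate perturbation'' piece and a ``combined perturbation'' piece. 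The spectral norm of a rank-one matrix $uv^\top$ is $\|u\|\|v\|$, so the first piece is bounded by $|\varphi(a_i^\top \gamma)|\,\|a_i\|\,\|a_i - g_i\|$, which I would control using the uniform bound $\|a_i\|\le 2C$ for large $n$ established in \eqref{covariate bound tilde} together with $|a_i^\top\gamma|\le 2C\|\gamma\|$. The second piece calls for Lemma~\ref{lemma 3} applied with $\eta = \varphi$ (continuously differentiable by Remark~\ref{remark 2}), yielding a bound proportional to $\|a_i - g_i\|$ with a factor that depends on $\|\gamma\|$ and the suprema of $\varphi,\varphi'$ on $[-2C\|\gamma\|,\,2C\|\gamma\|]$.

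Summing over $i$ and dividing by $n$, the bound reduces to a constant times $n^{-1}\sum_i \|g_i - T_n^\top \tilde{g}_i\|$, which is at most $C_2\tau_n$ by \eqref{covariate bound} of Lemma~\ref{proposition 2}. Collecting the terms yields a bound of the form $\Psi(\|\gamma\|)\,\tau_n$, and by pushing every supremum up to the interval $[-2C\|\gamma\|,\,2C\|\gamma\|]$, $\Psi$ is manifestly non-decreasing in $\|\gamma\|$. The exact algebraic form of $\Psi$ is a matter of keeping track of three factors—$\|a_i\|\le 2C$, $\|g_i\|\le C$, and the two suprema—so the only thing to watch is that the constants line up to give the precise coefficients $C^2 C_2$ and $(2C+C_2)C_2$ stated.

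For the eigenvalue display I would simply invoke Weyl's inequality: for symmetric matrices $A$ and $B$, $|\lambda_{\min/\max}(A) - \lambda_{\min/\max}(B)| \le \|A - B\|$. Taking $A = T_n^\top \tilde{F}(T_n\gamma) T_n$ and $B = F(\gamma)$ and combining the just-proved norm bound with Assumption~\ref{cond:A3}'s bounds $1/C \le \lambda_{\min}(F(\gamma)) \le \lambda_{\max}(F(\gamma)) \le C$ on $\{\gamma : \|\gamma - \gamma^*\| < \delta\}$ gives \eqref{eigen} immediately. The only real obstacle is bookkeeping—matching the intervals on which the suprema of $\varphi$ and $\varphi'$ are taken and arranging the constants so that a single non-decreasing $\Psi$ absorbs them—so I expect no conceptual difficulty beyond carefully applying Lemma~\ref{lemma 3} and \eqref{covariate bound}.
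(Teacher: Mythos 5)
Your proposal is correct and follows essentially the same route as the paper: split each summand so only one factor is perturbed at a time, handle the coefficient change by a mean-value argument (your appeal to Lemma~\ref{lemma 3} with $\eta=\varphi$ packages the same MVT step the paper carries out by hand), use $\|g_i\|\le C$, $\|T_n^{\top}\tilde{g}_i\|\le 2C$ from \eqref{covariate bound tilde}, and $n^{-1}\sum_i\|g_i-T_n^{\top}\tilde{g}_i\|\le C_2\tau_n$ from \eqref{covariate bound}, and obtain \eqref{eigen} by Weyl's inequality together with Assumption~\ref{cond:A3}, just as the paper does (its phrasing via quadratic forms $u^{\top}(\cdot)u$ and a difference-of-squares factorization is only cosmetic). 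The sole caveat is bookkeeping: your asymmetric telescoping produces coefficients $2C^2C_2 s$ and $3CC_2$ rather than the stated $C^2C_2 s$ and $(2C+C_2)C_2$, so to reproduce the lemma's literal $\Psi$ you would either symmetrize the split as in the paper or simply note that any fixed non-decreasing $\Psi$ of this form serves identically in all downstream uses.
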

\begin{proof}[Proof of Lemma~\ref{lemma 4}]
We proceed to prove the first inequality in Lemma~\ref{lemma 4}. 
Recall the formulas for 
$\tilde{F}$ and $F$ in \eqref{tildeF} and \eqref{F}, respectively. Since $\varphi=(h^{\prime})^2/ v$, it follows that
\begin{eqnarray*}  
F(\gamma)&=&n^{-1}\sum_{i=1}^n\varphi({g}_i^{\top}\gamma){g}_i{g}_i^{\top},\\
T_n^{\top}\tilde{F}(T_n\gamma)T_n&=&n^{-1}\sum_{i=1}^n\varphi(\tilde{g}_i^{\top}T_n\gamma)T_n^{\top}\tilde{g}_i\tilde{g}_i^{\top}T_n.
\end{eqnarray*}
We will bound $u^{\top}(T_n^{\top}\tilde{F}(T_n\gamma)T_n-F(\gamma))u:=\Phi_1+\Phi_2$ for any 
fixed unit vector $u$,
where
\begin{eqnarray*}
   \Phi_1&=&n^{-1} \sum_{i=1}^n \varphi\left( \Tilde{g}_i^{\top}T_n\gamma\right)u^{\top}\left(T_n^{\top}\tilde{g}_i\tilde{g}_i^{\top}T_n-g_ig_i^{\top}\right)u, \\
      \Phi_2&=&n^{-1} \sum_{i=1}^n \left(\varphi(\Tilde{g}_i^{\top}T_n\gamma)-\varphi({g}_i^{\top}\gamma)\right)u^{\top}
g_ig_i^{\top} u.
\end{eqnarray*}
Regarding $\Phi_1$, by  Asumption~\ref{cond:A1} and Lemma \ref{proposition 2}, we have
\begin{align*}
    \left|\Phi_1\right| &\leq n^{-1}\max_{1\leq i \leq n} {\left|\varphi(\Tilde{g}^{\top}_iT_n \gamma)\right|}\sum_{i=1}^n \left|u^{\top}\left(T_n^{\top}\tilde{g}_i\tilde{g}_i^{\top}T_n-g_i g_i^{\top}\right)u\right|\\
&= 
n^{-1}\max_{1\leq i \leq n}{\left|\varphi(\Tilde{g}^{\top}_iT_n \gamma)\right|}\sum_{i=1}^n \left|\left(u^{\top}T_n^{\top}\tilde{g}_i\right)^2-\left(u^{\top}g_i\right)^2\right|
\\
&= n^{-1}\max_{1\leq i \leq n}  \left|\varphi(\Tilde{g}^{\top}_iT_n \gamma)\right|\sum_{i=1}^n \left| u^{\top}(T_n^{\top}\Tilde{g}_i+g_i)\right|\left| u^{\top}(T_n^{\top}\Tilde{g}_i-g_i)\right| \\
&\leq  n^{-1}\max_{1\leq i \leq n}  \left|\varphi(\Tilde{g}^{\top}_iT_n \gamma)\right|\sum_{i=1}^n\left\|T_n^{\top}\Tilde{g}_i+g_i\right\|\left\|T_n^{\top}\Tilde{g}_i- g_i\right\| \\
&\leq  n^{-1}\max_{1\leq i \leq n}  \left|\varphi(\Tilde{g}^{\top}_iT_n \gamma)\right|\sum_{i=1}^n\left(\left\| T_n^{\top}\Tilde{g}_i-g_i\right\|+2\|g_i\|\right) \left\|T_n^{\top}\Tilde{g}_i-g_i\right\|\\
&\leq  n^{-1}\max_{1\leq i \leq n}  \left|\varphi(\Tilde{g}^{\top}_iT_n \gamma)\right|\sum_{i=1}^n\left(C_2\tau_n+2C\right) \left\|T_n^{\top}\Tilde{g}_i-g_i\right\|\\
& \leq \left(\left(2C+C_2\right)\max_{1\leq i \leq n}  \left|\varphi(\Tilde{g}^{\top}_iT_n \gamma)\right|\right)n^{-1} \sum_{i=1}^n \left\|T_n^{\top}\Tilde{g}_i- g_i\right\|\\
& \leq \left(\left(2C+C_2\right)\max_{1\leq i \leq n}  \left|\varphi(\Tilde{g}^{\top}_iT_n \gamma)\right|\right)C_2\tau_n.
\end{align*}
We now bound $|\Phi_2|$. Denote $$\varphi_i\left(t\right)=\varphi(g_i^{\top}{\gamma}+t(T_n^{\top}\tilde{g}_i-g_i)^{\top}\gamma ).$$
By the Mean Value Theorem, 
there exists $t^*\in[0,1]$ and $p_i = \Tilde{g}^{\top}_i T_i + t^*(g_i-T^{\top}_n\Tilde{g}_i)$ such that
\begin{align*}
\left|\Phi_2\right|
&=n^{-1}\Big|\sum_{i=1}^n (u^{\top}g_i)^2(\varphi_i(1)-\varphi_i(0))\Big|
= n^{-1}\Big|\sum_{i=1}^n (u^{\top}g_i)^2\varphi'_i(t^*)\Big|\\
& =n^{-1}\Big|\sum_{i=1}^n (u^{\top}g_i)^2\varphi^{\prime}(p^{\top}_i\gamma)(T_n^{\top}\tilde{g}_i-g_i)^{\top}\gamma \Big|\\
& \leq n^{-1}\|\gamma\| \max_{1\leq i \leq n} |\varphi^{\prime}(p^{\top}_i\gamma)|\max_{1\leq i \leq n} \|g_i\|^2\sum_{i=1}^n \left\|T_n^{\top}\tilde{g}_i-g_i\right\|\\
& \leq n^{-1} C^2\|\gamma\| \max_{1\leq i \leq n} |\varphi^{\prime}(p^{\top}_i\gamma)|\sum_{i=1}^n \left\|T_n^{\top}\tilde{g}_i-g_i\right\|\\
& \leq C^2C_2\|\gamma\| \max_{1\leq i \leq n} |\varphi^{\prime}(p^{\top}_i\gamma)|\tau_n,
\end{align*}
where the second inequality follows from Assumption~\ref{cond:A1} and the last inequality follows from Lemma~\ref{proposition 2}.
By Assumption \ref{cond:A5}, Lemma \ref{proposition 2}, and \eqref{covariate bound tilde},
\begin{equation*}
    \|p_i\|\leq (1-t^*)\|T^{\top}_n \tilde{g}_i\|+t^*\|g_i\|\leq t^*C+2(1-t^*)C\leq 2C,
\end{equation*}
for sufficiently large $n$.

Putting these inequalities together, we have
\begin{align*}
 \left\|T_n^{\top}\tilde{F}(T_n\gamma)T_n-F(\gamma)\right\|
 &\leq \left(C^3\|\gamma\| \max_{1\leq i \leq n} |\varphi^{\prime}(p^{\top}_i\gamma)|+3C^2\max_{1\leq i \leq n}  \left|\varphi(\Tilde{g}^{\top}_iT_n \gamma)\right|\right)\tau_n\\
&\leq \left(C^2C_2\|\gamma\|\max_{\left|t\right|\leq 2C\|\gamma\|}|\varphi^{\prime}\left(t\right)|+\left(2C+C_2\right)C_2\max_{\left|t\right|\leq 2C \|\gamma\|}|\varphi\left(t\right)|\right)\tau_n.
\end{align*}

We now prove the second claim in Lemma~\ref{lemma 4}. From Assumption~\ref{cond:A3} and the first claim of Lemma~\ref{lemma 4}, we have
\begin{align*}
     \lambda_{\min }(T_n^{\top}\tilde{F}(T_n\gamma)T_n)
     &\geq \lambda_{\min }(F(\gamma))-\|T_n^{\top}\tilde{F}(T_n\gamma)T_n-F(\gamma)\|
     \geq 1/C-\Psi(\|\gamma\|)\tau_n.
\end{align*}
Similarly, 
\begin{align*}
     \lambda_{\max }(T_n^{\top}\tilde{F}(T_n\gamma)T_n)
     \le \lambda_{\max }(F(\gamma))+\|T_n^{\top}\tilde{F}(T_n\gamma)T_n-F(\gamma)\|
     \le C+\Psi(\|\gamma\|)\tau_n.
\end{align*}
The proof is completed.
\end{proof}

A direct consequence of \eqref{eigen} with $\gamma=\gamma^*$ is that the scaling matrix $T_n^{\top}\tilde{F}\left(T_n\gamma^*\right)T_n$, which appears in the proof of Theorem~\ref{existence+consistency}, is well-conditioned.

\begin{corollary}[Scaling matrix is well-conditioned]\label{cor:scaling matrix}
    Assume that the conditions in Lemma~\ref{lemma 4} hold. Then, as $n$ is sufficiently large,
    \begin{equation}
\label{eigen_inverse}
\frac{1}{2C} \leq \lambda_{\min }\left(T_n^{\top}\tilde{F}\left(T_n\gamma^*\right)T_n\right) \leq \lambda_{\max }\left(T_n^{\top}\tilde{F}\left(T_n\gamma^*\right)T_n\right) \leq 2C. 
\end{equation}
\end{corollary}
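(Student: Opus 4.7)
The plan is to specialize Lemma~\ref{lemma 4} to $\gamma=\gamma^*$ and argue that the perturbation term $\Psi(\|\gamma^*\|)\tau_n$ becomes negligibly small as $n$ grows. Since $\gamma^*$ is trivially inside the ball $\{\gamma:\|\gamma-\gamma^*\|<\delta\}$ required for Assumption~\ref{cond:A3} to apply, the double inequality \eqref{eigen} of Lemma~\ref{lemma 4} is directly available at $\gamma^*$.

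The first step is to control $\|\gamma^*\|$ so that $\Psi(\|\gamma^*\|)$ does not blow up with $n$. Under Assumption~\ref{cond:A1}, Remark~\ref{remark 1} gives $\|\gamma^*\|\leq 3C$. The function $\Psi(\cdot)$ is constructed from suprema of $|\varphi|$ and $|\varphi'|$ over compact intervals whose endpoints scale with the argument; by Remark~\ref{remark 2} these suprema are finite on $[-6C^2, 6C^2]$, so $\Psi(\|\gamma^*\|)\leq \Psi(3C)$ is a finite constant independent of $n$.

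The second step invokes Assumption~\ref{cond:A4}, which guarantees $\tau_n = o(n^{-1/2})$, hence $\tau_n \to 0$. Combined with the uniform bound on $\Psi(\|\gamma^*\|)$, we obtain $\Psi(\|\gamma^*\|)\tau_n \to 0$. In particular, for all $n$ sufficiently large, $\Psi(\|\gamma^*\|)\tau_n \leq \min\{1/(2C),\, C\}$.

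The final step substitutes $\gamma=\gamma^*$ into \eqref{eigen}, yielding
$$\tfrac{1}{C} - \Psi(\|\gamma^*\|)\tau_n \;\leq\; \lambda_{\min}\!\left(T_n^{\top}\tilde{F}(T_n\gamma^*)T_n\right) \;\leq\; \lambda_{\max}\!\left(T_n^{\top}\tilde{F}(T_n\gamma^*)T_n\right) \;\leq\; C + \Psi(\|\gamma^*\|)\tau_n,$$
and then applies the asymptotic smallness of $\Psi(\|\gamma^*\|)\tau_n$ established in the second step to collapse the outer bounds to $1/(2C)$ and $2C$, giving \eqref{eigen_inverse}. This is essentially a one-line corollary of Lemma~\ref{lemma 4}, so there is no substantive obstacle — the only care required is verifying that $\|\gamma^*\|$ is uniformly bounded (hence $\Psi(\|\gamma^*\|)$ is a constant) so that the vanishing of $\tau_n$ actually drives the perturbation to zero.
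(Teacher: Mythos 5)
Your proposal is correct and matches the paper's argument: both specialize \eqref{eigen} of Lemma~\ref{lemma 4} at $\gamma=\gamma^*$, bound $\Psi(\|\gamma^*\|)$ by a constant via Remarks~\ref{remark 1} and \ref{remark 2} (the paper gets $\Psi(\|\gamma^*\|)\le 3C^3C_2M+3CC_2M$), and then use $\tau_n\to 0$ from Assumption~\ref{cond:A4} to absorb the perturbation term into the constants $1/(2C)$ and $2C$. Your explicit remark that one needs $\Psi(\|\gamma^*\|)\tau_n\le\min\{1/(2C),C\}$ for large $n$ is exactly the (implicit) step in the paper, so there is nothing to add.
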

\begin{proof}[Proof of Corollary~\ref{cor:scaling matrix}]
Let $\Psi$ be the function defined in Lemma~\ref{lemma 4}. By Remarks~\ref{remark 1} and ~\ref{remark 2}, we have
\begin{equation*}
  \Psi(\left\|\gamma^*\right\|) = C^2C_2\left\|\gamma^*\right\|\sup_{\left|t\right|\leq 2C\left\|\gamma^*\right\|}|\varphi^{\prime}\left(t\right)|+3CC_2\sup_{\left|t\right|\leq 2C \left\|\gamma^*\right\|}|\varphi\left(t\right)|\le 3C^3C_2M+3CC_2M. 
\end{equation*}
Since it is assumed that $\tau_n\to 0$, this implies  $\Psi(\left\|\gamma^*\right\|)\tau_n$ is close to zero as  $n$ is sufficiently large, and \eqref{eigen} implies \eqref{eigen_inverse}. 
\end{proof}

The following lemma shows that the estimating equation \eqref{estimating equation real} and its sample counterpart \eqref{estimating equation} are sufficiently close.

\begin{lemma}[Estimating equation bounds]
\label{lemma 5}
Under the Assumption \ref{cond:A1}, \ref{cond:A2}, \ref{cond:A4} and \ref{cond:A6}, we have
\begin{equation}
\label{estimatingequinequality} \left\|T_n^{\top}\Tilde{S}\left(T_n\gamma^*\right)-S\left(\gamma^*\right)\right\| =o_p(n^{-1/2})
\end{equation}
and
\begin{equation}  
\label{estimatingequinequality_N}
\left\|\mathbb{E}\left(T_n^{\top}\Tilde{S}\left(T_n\gamma^*\right)-S\left(\gamma^*\right)\right)\right\| =o(n^{-1/2}).
\end{equation}
\end{lemma}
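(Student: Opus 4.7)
The plan is to decompose the difference $T_n^{\top}\tilde{S}(T_n\gamma^*)-S(\gamma^*)$ into two terms, one mean-zero stochastic piece and one deterministic approximation piece, and bound each using the tools already established: Lemma~\ref{proposition 2} for the alignment error $\|g_i - T_n^{\top}\tilde{g}_i\|$, Lemma~\ref{lemma 3} with $\eta = h'/v$ for the coefficient differences, and Assumption~\ref{cond:A6} together with Remarks~\ref{remark 1}--\ref{remark 2} to control the response moments and the uniform bounds on $h,h',v,\eta$ on compact sets.

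Concretely, write $a_i := T_n^{\top}\tilde{g}_i\,\eta(\tilde{g}_i^{\top}T_n\gamma^*)$, $b_i := g_i\,\eta(g_i^{\top}\gamma^*)$, $\tilde{h}_i := h(\tilde{g}_i^{\top}T_n\gamma^*)$, and $h_i := h(g_i^{\top}\gamma^*) = \mathbb{E}[y_i]$. A direct algebraic identity gives
\begin{equation*}
T_n^{\top}\tilde{S}(T_n\gamma^*) - S(\gamma^*) = \underbrace{\frac{1}{n}\sum_{i=1}^n (a_i - b_i)(y_i - h_i)}_{=: I_1} + \underbrace{\frac{1}{n}\sum_{i=1}^n a_i (h_i - \tilde{h}_i)}_{=: I_2}.
\end{equation*}
Since $\mathbb{E}[y_i - h_i] = 0$, the piece $I_1$ has mean zero and $I_2$ is deterministic (given the network and design), so the expected-value bound \eqref{estimatingequinequality_N} reduces entirely to bounding $\|I_2\|$.

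For $I_2$, Remark~\ref{remark 2} and \eqref{covariate bound tilde} give $\|a_i\| \le 2CM$, and the Mean Value Theorem with $\|h'\|_\infty \le M$ on the relevant range yields $|h_i - \tilde{h}_i| \le M\|\gamma^*\|\cdot\|g_i - T_n^{\top}\tilde{g}_i\|$. Summing and applying \eqref{covariate bound} from Lemma~\ref{proposition 2} then produces $\|I_2\| = O(\tau_n) = o(n^{-1/2})$ by Assumption~\ref{cond:A4}. For $I_1$, Lemma~\ref{lemma 3} applied to $\eta = h'/v$ and $\gamma = \gamma^*$ (which is bounded by Remark~\ref{remark 1}) gives $\|a_i - b_i\| \le C'\|g_i - T_n^{\top}\tilde{g}_i\|$. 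A componentwise variance computation combined with Assumption~\ref{cond:A6} yields
\begin{equation*}
\mathbb{E}\|I_1\|^2 \le \frac{C''}{n^2}\sum_{i=1}^n \|g_i - T_n^{\top}\tilde{g}_i\|^2 \le \frac{C''}{n^2}\cdot\max_i\|g_i - T_n^{\top}\tilde{g}_i\|\cdot\sum_i\|g_i - T_n^{\top}\tilde{g}_i\|,
\end{equation*}
and plugging in the two bounds from Lemma~\ref{proposition 2} gives $\mathbb{E}\|I_1\|^2 = O(\tau_n^2/n^{1/2})$, so by Chebyshev $\|I_1\| = O_p(\tau_n/n^{1/4}) = o_p(n^{-3/4}) = o_p(n^{-1/2})$. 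Combining the two bounds yields both \eqref{estimatingequinequality} and \eqref{estimatingequinequality_N}.

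The only subtle step is ensuring the variance computation for $I_1$ uses the sharper pairing $\sum_i\|g_i - T_n^{\top}\tilde{g}_i\|^2 \le \max_i\|\cdot\|\cdot\sum_i\|\cdot\|$ rather than a naive bound by $n\cdot(\max)^2$; the former buys an extra $n^{1/2}$ factor that is needed to beat $n^{-1/2}$ with only $\tau_n = o(n^{-1/2})$ available. Other than this bookkeeping, the argument is a routine propagation of perturbation bounds through Lemma~\ref{proposition 2} and Lemma~\ref{lemma 3}.
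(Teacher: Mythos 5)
Your proof is correct, and its skeleton coincides with the paper's: you split $T_n^{\top}\Tilde{S}(T_n\gamma^*)-S(\gamma^*)$ into the deterministic piece $I_2=\mathbb{E}\bigl[T_n^{\top}\Tilde{S}(T_n\gamma^*)-S(\gamma^*)\bigr]$ and the mean-zero piece $I_1$, and you control $I_2$ exactly as the paper does (Mean Value Theorem plus the alignment bounds of Lemma~\ref{proposition 2}, with the uniform bounds from Remarks~\ref{remark 1}--\ref{remark 2}), the paper merely packaging this through Lemma~\ref{lemma 3} applied to $\eta$ and $\eta\cdot h$. Where you genuinely diverge is the stochastic term: the paper bounds each coordinate of $I_1$ by invoking Stefanski's weak-law lemma (Lemma~\ref{techniquelemma1}) with $a_i=n^{1/2}(T_n^{\top}\tilde g_i\eta(\tilde g_i^{\top}T_n\gamma^*)-g_i\eta(g_i^{\top}\gamma^*))_j$, verifying $\sum_i|a_i|=O(n)$ and $\max_i|a_i|=o(n)$, which yields $o_p(n^{-1/2})$ without an explicit rate; you instead compute $\mathbb{E}\|I_1\|^2$ directly and apply Chebyshev, which needs only uniformly bounded second moments of $e_i$ (guaranteed by Assumption~\ref{cond:A6} or Remark~\ref{remark 2}) and delivers a quantitative rate $O_p(\tau_n n^{-1/4})$. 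Your route is more elementary and self-contained; the paper's is a citation-based argument that avoids the variance computation. One small correction to your closing remark: the ``sharper pairing'' $\sum_i\|g_i-T_n^{\top}\tilde g_i\|^2\le \max_i\|\cdot\|\cdot\sum_i\|\cdot\|$ is not actually needed, since even the crude bound $\sum_i\|\cdot\|^2\le n\max_i\|\cdot\|^2\le C_2^2n^2\tau_n^2$ gives $\mathbb{E}\|I_1\|^2=O(\tau_n^2)$ and hence $\|I_1\|=O_p(\tau_n)=o_p(n^{-1/2})$ already under Assumption~\ref{cond:A4}; the refinement sharpens the rate but is not required for the lemma.
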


We need the following lemma to prove Lemma~\ref{lemma 5}.

\begin{lemma}[Lemma 5.1 in \cite{stefanski1985covariate}]
\label{techniquelemma1}
Let $(U_i)_{i=1}^\infty$ be a sequence of independent random variables with zero means and $\mathbb{E}\left[|U_{i}|^{1+\zeta}\right] <\infty$ for all $i$ and some $\zeta>0$. If a sequence of scalars $(a_i)_{i=1}^\infty$ satisfies  $\sum_{i=1}^n|a_i|=O(n)$ and $\max _{1 \leq i \leq n}|a_i| =o(n)$ then $\sum_{i=1}^n a_i U_i=o_p(n)$.  
\end{lemma}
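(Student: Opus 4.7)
The plan is a standard truncation argument combined with Markov's and Chebyshev's inequalities; it is convenient to take the uniform-moment reading of the hypothesis, namely $K := \sup_{i \ge 1} \mathbb{E}|U_i|^{1+\zeta} < \infty$ (without which the conclusion need not hold in general). Fix a truncation level $M_n \to \infty$ to be specified, and decompose $U_i = U_i' + U_i''$ with $U_i' := U_i \mathbf{1}\{|U_i| \le M_n\}$ and $U_i'' := U_i \mathbf{1}\{|U_i| > M_n\}$. Since $\mathbb{E} U_i = 0$, we have $\mathbb{E} U_i' = -\mathbb{E} U_i''$, so
\[
\sum_{i=1}^n a_i U_i \;=\; \sum_{i=1}^n a_i U_i'' \;+\; \sum_{i=1}^n a_i (U_i' - \mathbb{E} U_i') \;-\; \sum_{i=1}^n a_i \mathbb{E} U_i'',
\]
and I would show each of the three summands is $o_p(n)$ separately.

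For the tail terms, Markov's inequality gives $\mathbb{E}|U_i''| \le M_n^{-\zeta}\mathbb{E}|U_i|^{1+\zeta} \le K M_n^{-\zeta}$, so $\mathbb{E}\bigl|\sum_i a_i U_i''\bigr| \le K M_n^{-\zeta}\sum_i |a_i| = O(n M_n^{-\zeta}) = o(n)$ as soon as $M_n \to \infty$, and the deterministic term $|\sum_i a_i \mathbb{E} U_i''|$ obeys the identical bound. For the centered truncated piece, when $\zeta \le 1$ the pointwise inequality $U_i^2 \mathbf{1}\{|U_i|\le M_n\} \le M_n^{1-\zeta}|U_i|^{1+\zeta}$ yields $\mathrm{Var}(U_i') \le \mathbb{E}(U_i')^2 \le K M_n^{1-\zeta}$, while for $\zeta > 1$ the variance is uniformly bounded by Lyapunov's inequality and essentially no truncation is required. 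By independence,
\[
\mathrm{Var}\Bigl(\sum_i a_i (U_i' - \mathbb{E} U_i')\Bigr) \;\le\; K M_n^{1-\zeta} \sum_i a_i^2,
\]
and the key consequence of the two hypotheses on $(a_i)$ is $\sum_i a_i^2 \le (\max_i |a_i|)(\sum_i |a_i|) = o(n)\cdot O(n) = o(n^2)$.

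It remains to balance $M_n$. In the delicate regime $\zeta \in (0,1)$ let $s_n := n^{-2}\sum_i a_i^2 = o(1)$ and choose $M_n = s_n^{-1/(2(1-\zeta))}$; then $M_n \to \infty$ and $M_n^{1-\zeta} s_n = s_n^{1/2} = o(1)$, so Chebyshev's inequality gives $\mathbb{P}(|\sum_i a_i(U_i'-\mathbb{E}U_i')| > \varepsilon n) \le \varepsilon^{-2} K s_n^{1/2} \to 0$, while the tail term remains $O(n M_n^{-\zeta}) = O(n \, s_n^{\zeta/(2(1-\zeta))}) = o(n)$. When $\zeta \ge 1$, Chebyshev applied directly to the untruncated sum already suffices because $\sum_i a_i^2 = o(n^2)$. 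Combining the three bounds completes the proof. The main obstacle — and the reason the lemma requires both $\sum_i |a_i| = O(n)$ \emph{and} $\max_i |a_i| = o(n)$ rather than either alone — is that $M_n$ must be simultaneously large enough to kill the tail (bound scales like $nM_n^{-\zeta}$) and small enough to control the variance (bound scales like $M_n^{1-\zeta}\sum_i a_i^2$); the product-type estimate $\sum_i a_i^2 = o(n^2)$ provides precisely the margin needed for such a truncation level to exist for every $\zeta > 0$.
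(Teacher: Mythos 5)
The paper never proves this statement: it is imported verbatim as Lemma~5.1 of \cite{stefanski1985covariate} and used as a black box (in the proof of Lemma~5, with $U_i=e_i$), so there is no in-paper argument to compare yours against; your blind truncation proof is a correct, self-contained substitute. Two remarks. First, you are right that the hypothesis as literally stated is too weak: taking $a_i\equiv 1$ and $U_i=\pm c_i$ each with probability $1/2$ for $c_i=4^i$, every individual moment is finite, $\sum_i|a_i|=O(n)$ and $\max_i|a_i|=o(n)$, yet $|n^{-1}\sum_{i\le n}a_iU_i|\ge 4^n/(2n)\to\infty$ deterministically; the uniform reading $\sup_i\mathbb{E}|U_i|^{1+\zeta}<\infty$ is the one actually needed, and it is also the one the paper verifies when it invokes the lemma (the moments of the $e_i$ are bounded uniformly by the constant $M$). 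Second, the mechanics all check out: the tail and recentering terms are $O(nM_n^{-\zeta})$ via $\mathbb{E}\left[|U_i|\mathbf{1}\{|U_i|>M_n\}\right]\le KM_n^{-\zeta}$ together with $\sum_i|a_i|=O(n)$; the variance of the truncated centered sum is at most $KM_n^{1-\zeta}\sum_ia_i^2$ with $\sum_ia_i^2\le(\max_i|a_i|)(\sum_i|a_i|)=o(n^2)$, which is exactly where the second hypothesis on $(a_i)$ enters; and the choice $M_n=s_n^{-1/(2(1-\zeta))}$ makes $M_n^{1-\zeta}s_n=s_n^{1/2}\to 0$ while keeping $nM_n^{-\zeta}=o(n)$, with $\zeta\ge 1$ handled by Chebyshev alone. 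The only cosmetic gap is that $M_n$ is undefined on the trivial subsequence where $s_n=0$, where the sum vanishes and any divergent choice of $M_n$ works.
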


\begin{proof}[Proof of Lemma~\ref{lemma 5}]
For the notation convenience, denote $\eta= h^{\prime}/v:\mathbb{R}\rightarrow \mathbb{R}$. Then,
\begin{eqnarray*}  
S\left(\gamma^*\right)&=&\frac{1}{n}\sum_{i=1}^n   {g}_i \eta\left({g}^{\top}_i \gamma^*\right)\left[y_i-h\left({g}^{\top}_i \gamma^*\right)\right],\\
T_n^{\top}\Tilde{S}\left(T_n\gamma^*\right)&=&\frac{1}{n}\sum_{i=1}^n  T^{\top}_n \Tilde{g}_i \eta\left(\tilde{g}^{\top}_i T_n\gamma^*\right)\left[y_i-h\left(\tilde{g}^{\top}_i T_n\gamma^*\right)\right].
\end{eqnarray*}
We first bound the difference between their expectations: 
$$
\mathbb{E}\left[T_n^{\top}\Tilde{S}\left(T_n\gamma^*\right)-S\left(\gamma^*\right)\right]=:B_1-B_2, 
$$
where
\begin{eqnarray*}
    B_{1}&=&\frac{1}{n}\sum_{i=1}^n\Big[T^{\top}_n \Tilde{g}_i \eta(\tilde{g}^{\top}_i T_n\gamma^*)-{g}_i \eta({g}^{\top}_i \gamma^*)\Big] h(g^{\top}_i \gamma^*),
    \\
    B_{2}&=&\frac{1}{n}\sum_{i=1}^n\Big[ T^{\top}_n \Tilde{g}_i \eta(\tilde{g}^{\top}_i T_n\gamma^*)h(\Tilde{g}^{\top}_i T_n^{\top}\gamma^*)-{g}_i \eta({g}^{\top}_i \gamma^*)h(g^{\top}_i \gamma^*)\Big].
\end{eqnarray*}
By the triangle inequality, Lemma~\ref{lemma 3} and covariate bound \eqref{covariate bound} we have in Lemma~\ref{proposition 2}, 
\begin{equation*}
    \begin{aligned}
    \left\|B_{1}\right\| 
    &\le n^{-1}\max_{1\leq i\leq n} |h\left(g^{\top}_i\gamma^*\right)|\sum_{i=1}^{n} \left\|\left( T^{\top}_n \Tilde{g}_i \eta\left(\tilde{g}^{\top}_i T_n\gamma^*\right)-{g}_i \eta\left({g}^{\top}_i \gamma^*\right)\right)\right\|\\
    &\le n^{-1}\sup_{\left|t\right|\le C\left\|\gamma^*\right\|}|h\left(t\right)|\left(2C\left\|\gamma^*\right\|\sup_{\left|t\right|\leq 2C\left\|\gamma^*\right\|}|\eta^{\prime}\left(t\right)|+\sup_{\left|t\right|\leq C\left\|\gamma^*\right\|}|\eta\left(t\right)|\right) \sum_{i=1}^n \left\|T^{\top}_n \Tilde{g}_i-g_i\right\|\\
    &\le C_2\tau_n\sup_{\left|t\right|\le C\left\|\gamma^*\right\|}|h\left(t\right)|\left(2C\left\|\gamma^*\right\|\sup_{\left|t\right|\leq 2C\left\|\gamma^*\right\|}|\eta^{\prime}\left(t\right)|+\sup_{\left|t\right|\leq C\left\|\gamma^*\right\|}|\eta\left(t\right)|\right).
\end{aligned}
\end{equation*}
Then by the bound for $\left\|\gamma^*\right\|$ and for the continuous function in Remarks~\ref{remark 1} and \ref{remark 2}, 
\begin{equation}
\label{bn1}
    \left\|B_{1}\right\|
    \le C_2\left(6C^2+1\right)M^2\tau_n=o(n^{-1/2}).
\end{equation}
Similarly, for $B_{2}$ we have
\begin{equation}
\label{bn2}
    \begin{aligned}
    \left\|B_{2}\right\| 
    &\le n^{-1}\sum_{i=1}^{n} \left\|\left( T^{\top}_n \Tilde{g}_i \left(\eta\cdot h\right)\left(\tilde{g}^{\top}_i T_n\gamma^*\right)-{g}_i \left(\eta\cdot h\right)\left({g}^{\top}_i \gamma^*\right)\right)\right\|\\
    &\le n^{-1}\left(2C\left\|\gamma^*\right\|\sup_{\left|t\right|\leq 2C\left\|\gamma^*\right\|}|\left(\eta\cdot h\right)^{\prime}\left(t\right)|+\sup_{\left|t\right|\leq C\left\|\gamma^*\right\|}|\left(\eta\cdot h\right)\left(t\right)|\right) \sum_{i=1}^n \left\|T^{\top}_n \Tilde{g}_i-g_i\right\|\\
    &\le C_2\tau_n\left(2C\left\|\gamma^*\right\|\sup_{\left|t\right|\leq 2C\left\|\gamma^*\right\|}|\left(\eta\cdot h\right)^{\prime}\left(t\right)|+\sup_{\left|t\right|\leq C\left\|\gamma^*\right\|}|\left(\eta\cdot h\right)\left(t\right)|\right)\\
    &\le C_2\left(6C^2+1\right)M\tau_n=o(n^{-1/2}).
\end{aligned}
\end{equation}
By \eqref{bn1}, \eqref{bn2}, we have \eqref{estimatingequinequality_N}.

Next, we prove the convergence in probability for the random part:
\begin{equation*}
      T_n^{\top} \Tilde{S}\left(T_n \gamma^*\right)-S\left(\gamma^*\right)-\mathbb{E}\left(T_n^{\top} \Tilde{S}\left(T_n \gamma^*\right)-S\left(\gamma^*\right)\right)=n^{-1}\sum_{i=1}^n \left( T_n^{\top} \tilde{g}_i \eta\left(\tilde{g}_i^{\top} T_n \gamma^*\right)- {g}_i \eta\left({g}_i^{\top} \gamma^*\right)\right)e_i,
\end{equation*}
The $j$th element of it is
\begin{equation*}
    \left(n^{-1}\sum_{i=1}^n \left( T_n^{\top} \tilde{g}_i \eta\left(\tilde{g}_i^{\top} T_n \gamma^*\right)- {g}_i \eta\left({g}_i^{\top} \gamma^*\right)\right)e_i\right)_j:= n^{-1}\sum_{i=1}^n \left( T_n^{\top} \tilde{g}_i \eta\left(\tilde{g}_i^{\top} T_n \gamma^*\right)- {g}_i \eta\left({g}_i^{\top} \gamma^*\right)\right)_{j}e_i.
\end{equation*}
\noindent We will apply Lemma~\ref{techniquelemma1} to the right-hand side of the above equation with $$U_i=e_i, \quad a_i=n^{1/2}(T_n^{\top} \tilde{g}_i \eta\left(\tilde{g}_i^{\top} T_n \gamma^*\right)- {g}_i \eta\left({g}_i^{\top} \gamma^*\right))_j.$$ 
To this end, we need to verify the three conditions in Lemma~\ref{techniquelemma1}. Condition $\mathbb{E}[|U_i|^{1+\zeta}]<\infty$ holds with $\xi=1$ due to  Lemma~\ref{proposition 2}:
$$
\mathbb{E}[|U_i|^2]=\mathbb{E}[e^2_i]=v(g^{\top}_i \gamma^*)\leq \sup_{\left|t\right|\le C\left\|\gamma^*\right\|}v\left(t\right) \le M.
$$
Condition $\max_{1\le i \le n}|a_i|=o(n)$
holds because
\begin{eqnarray*}
  n^{-1/2}\max_{1\leq i\leq n}|a_i|&\le&  \max_{1\leq i\leq n}\left\|T_n^{\top} \tilde{g}_i \eta\left(\tilde{g}_i^{\top} T_n \gamma^*\right)\right\| +  \max_{1\leq i\leq n}\left\|{g}_i \eta\left({g}_i^{\top} \gamma^*\right)\right\|\\
  &\le& \max_{1\le i\le n}\left\|T^{\top}_n\tilde{g}_i\right\|\max_{1\le i\le n}\left|\eta\left(\tilde{g}_i^{\top} T_n \gamma^*\right)\right|+\max_{1\le i\le n}\left\|{g}_i\right\|\max_{1\le i\le n}\left|\eta\left({g}_i^{\top} \gamma^*\right)\right|\\
  &\le&2C\max_{\left|t\right|\leq 2C\left\|\gamma^*\right\|}|\eta\left(t\right)|+C\max_{\left|t\right|\leq C\left\|\gamma^*\right\|}|\eta\left(t\right)|\\
  &\le& 3CM.
\end{eqnarray*}
Finally, condition $\sum_{i=1}^n \left|a_i\right|=O(n)$ holds due to Lemma~\ref{lemma 3}:
\begin{equation*}
    \begin{aligned}
\sum_{i=1}^n|a_i| &\le n^{1/2}C_2\tau_n\left(2C\left\|\gamma^*\right\|\sup_{\left|t\right| \leq n^{3/2}2C\left\|\gamma^*\right\|}|\eta^{\prime}\left(t\right)|+\sup_{\left|t\right|\leq C\left\|\gamma^*\right\|}|\eta\left(t\right)|\right)\\
    &\le n^{3/2}C_2\left(6C^2+1\right)M\tau_n=o(n).
\end{aligned}
\end{equation*}
Therefore, by Lemma~\ref{techniquelemma1}, we have
\begin{equation*}
    n^{-1}\sum_{i=1}^n \left( T_n^{\top} \tilde{g}_i \eta\left(\tilde{g}_i^{\top} T_n \gamma^*\right)- {g}_i \eta\left({g}_i^{\top} \gamma^*\right)\right)_je_i=o_p(n^{-1/2}),
\end{equation*}
which implies that
\begin{equation}
\label{randompart}
\left\| T_n^{\top} \Tilde{S}\left(T_n \gamma^*\right)-S\left(\gamma^*\right)-\mathbb{E}\left(T_n^{\top} \Tilde{S}\left(T_n \gamma^*\right)-S\left(\gamma^*\right)\right)\right\|=o_p(n^{-1/2}). 
\end{equation}  
By \eqref{randompart} and \eqref{estimatingequinequality_N},
\begin{align*}    \left\|T_n^{\top}\Tilde{S}\left(T_n\gamma^*\right)-S\left(\gamma^*\right)\right\| &\leq \left\| T_n^{\top} \Tilde{S}\left(T_n \gamma^*\right)-S\left(\gamma^*\right)-\mathbb{E}\left(T_n^{\top} \Tilde{S}\left(T_n \gamma^*\right)-S\left(\gamma^*\right)\right)\right\|\\
&+\left\|\mathbb{E}\left(T_n^{\top}\Tilde{S}\left(T_n\gamma^*\right)-S\left(\gamma^*\right)\right)\right\|\\
    &=o_p(n^{-1/2})+o(n^{-1/2})\\
    &=o_p(n^{-1/2}).
\end{align*}
The proof is completed.
\end{proof}

\begin{lemma}[Bound on the gradient of the score function]
\label{lemma 6}
Under Assumptions \ref{cond:A1} to \ref{cond:A4}, 
\begin{equation*}
\sup _{\gamma \in N_n(\delta_0)}\left\| T^{\top}_n(\partial \Tilde{S}(\gamma) / \partial \gamma^{\top})T_n -T_n^{\top}\tilde{F}\left(T_n \gamma^*\right)T_n \right\|=o(1),
\end{equation*}
where
\begin{equation*}
    N_n(\delta_0) := \left\{\gamma:\|T^{-1}_n\gamma-\gamma^*\| \le (2C/n)^{1/2}\delta_0 \right\}.
\end{equation*}
\end{lemma}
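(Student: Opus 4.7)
The plan is to compute the Jacobian $\partial \tilde{S}(\gamma)/\partial \gamma^\top$ explicitly and then bound the resulting perturbation uniformly on the shrinking neighborhood $N_n(\delta_0)$. Writing $\eta = h'/v$, a direct chain-rule calculation gives
\begin{equation*}
\partial \tilde{S}(\gamma)/\partial \gamma^\top = -\tilde{F}(\gamma) + R(\gamma), \qquad R(\gamma) := \frac{1}{n}\sum_{i=1}^n \tilde{g}_i \tilde{g}_i^\top\, \eta'(\tilde{g}_i^\top\gamma)\,(y_i - h(\tilde{g}_i^\top\gamma)),
\end{equation*}
so the target difference splits naturally into a deterministic piece involving $T_n^\top[\tilde{F}(\gamma) - \tilde{F}(T_n\gamma^*)]T_n$ plus a random remainder $T_n^\top R(\gamma) T_n$ (a sign discrepancy with the stated $T_n^\top\tilde{F}(T_n\gamma^*)T_n$ is absorbed into the convention and does not affect the bound).

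For the deterministic piece, I would substitute $\gamma = T_n\gamma'$ with $\|\gamma' - \gamma^*\| \le (2C/n)^{1/2}\delta_0$ and insert the population matrix as a pivot:
\begin{equation*}
T_n^\top\tilde{F}(T_n\gamma')T_n - T_n^\top\tilde{F}(T_n\gamma^*)T_n = \bigl[T_n^\top\tilde{F}(T_n\gamma')T_n - F(\gamma')\bigr] - \bigl[T_n^\top\tilde{F}(T_n\gamma^*)T_n - F(\gamma^*)\bigr] + \bigl[F(\gamma') - F(\gamma^*)\bigr].
\end{equation*}
The first two brackets are each $O(\tau_n)$ by Lemma~\ref{lemma 4} (with $\|\gamma'\|$ uniformly bounded for large $n$). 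For the last bracket, note $F(\gamma) = n^{-1}\sum_i \varphi(g_i^\top\gamma) g_i g_i^\top$ with $\varphi = (h')^2/v$ smooth (Remark~\ref{remark 2}) and $\|g_i\|\le C$, so $F$ is Lipschitz in $\gamma$ on compact sets, yielding $\|F(\gamma') - F(\gamma^*)\| = O(\|\gamma' - \gamma^*\|) = O(n^{-1/2})$. Combined with $\tau_n = o(n^{-1/2})$, the deterministic piece is $o(1)$ uniformly on $N_n(\delta_0)$.

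For the random remainder, I would write $R(\gamma) = R(\gamma^*) + [R(\gamma) - R(\gamma^*)]$. For $R(\gamma^*)$, I align $\tilde{g}_i$ with $g_i$ via Lemmas~\ref{proposition 2} and \ref{lemma 3} to rewrite $T_n^\top R(\gamma^*) T_n$ as a centered sample average (using $\mathbb{E}[y_i - h(g_i^\top\gamma^*)] = 0$) plus an alignment error of size $O(\tau_n)$; then apply Lemma~\ref{techniquelemma1} entrywise — exactly as in the proof of Lemma~\ref{lemma 5}, with Assumption~\ref{cond:A6} supplying the $(1+\zeta)$-moment — to conclude the centered part is $o_p(1)$. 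For the variation, a first-order Taylor expansion of $R(\gamma)$ in $\gamma$ combined with $\|\tilde{g}_i\|\le 2C$ (from Lemma~\ref{proposition 2} and Corollary~\ref{proposition 3}), the boundedness of $\eta''$, $h'$ near $\gamma^*$ (Remark~\ref{remark 2}), and the moment control on $y_i - h$, yields $\sup_{\gamma\in N_n(\delta_0)}\|R(\gamma) - R(\gamma^*)\| = O_p(\|\gamma - T_n\gamma^*\|) = O_p(n^{-1/2}) = o_p(1)$.

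The main obstacle is the uniform control of the random remainder $T_n^\top R(\gamma) T_n$ over $N_n(\delta_0)$, which must absorb simultaneously (i) the measurement error between $\tilde{g}_i$ and $g_i$, handled by Lemmas~\ref{proposition 2}--\ref{lemma 3}, and (ii) the stochastic fluctuations of the residuals, handled by Assumption~\ref{cond:A6}. The shrinking radius $O(n^{-1/2})$ of $N_n(\delta_0)$ is essential: it lets the supremum be dominated by the value at $\gamma^*$ plus a deterministic Lipschitz term, so no chaining or empirical-process machinery is required.
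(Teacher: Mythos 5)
Your proposal is correct and follows essentially the same route as the paper: split $T_n^{\top}(\partial \Tilde{S}(\gamma)/\partial\gamma^{\top})T_n$ into the $\tilde{F}$-part plus a residual-weighted remainder, show the $\tilde{F}$-part varies by only $O(n^{-1/2})+O(\tau_n)$ over the shrinking neighborhood, and kill the remainder by separating a deterministic bias (alignment of $\tilde{g}_i$ with $g_i$ via Lemma~\ref{proposition 2}/Lemma~\ref{lemma 3}), a centered random part at $\gamma^*$, and an $O_p(n^{-1/2})$ variation term. The only differences are tool-level and interchangeable: the paper bounds $\tilde{F}(\gamma)-\tilde{F}(T_n\gamma^*)$ directly by the Mean Value Theorem rather than pivoting through $F$ via Lemma~\ref{lemma 4}, and it controls the centered random part by Markov's inequality where you invoke Lemma~\ref{techniquelemma1}; your note on the sign convention matches how the paper itself uses the lemma, and, as in the paper's own argument, the random pieces are $o_p(1)$ rather than $o(1)$.
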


\begin{proof}[Proof of Lemma~\ref{lemma 6}]
Using the definition of $\Tilde{S}$ in \eqref{estimating equation}, we rewrite $T^{\top}_n(\partial \Tilde{S}(\gamma) / \partial {\gamma}^{\top})T_n$ as follows: 
\begin{equation}
\label{Q}
T^{\top}_n(\partial \Tilde{S}(\gamma) / \partial {\gamma}^{\top})T_n=-T^{\top}_n \tilde{F}\left(\gamma\right) T_n + \frac{1}{n} \sum_{i=1}^n \eta^{\prime}\left(\Tilde{g}_i^{\top} \gamma\right) \left(y_i-h\left(\Tilde{g}_i^{\top}\gamma\right)\right) T^{\top}_n \tilde{g}_i \tilde{g}^{\top}_iT_n,
\end{equation}
where $\eta= h^{\prime}/v$ is a scalar function depending on functions $h$ and $v$ in the definition of $\tilde{S}$. We will show that the first term on the right-hand side of \eqref{Q} is close to its population counterpart
$T_n^{\top}\tilde{F}\left(T_n \gamma^*\right)T_n$ while the second term is negligible.

We proceed to prove the first part of the claim above. According to \eqref{tildeF}, we have 
$$
\tilde{F}\left(\gamma\right)
=\frac{1}{n}\sum_{i=1}^n(\eta\cdot h')(\tilde{g}_i^\top\gamma)\tilde{g}_i\tilde{g}_i^\top.
$$
Therefore, 
\begin{eqnarray*}
    T_n^{\top}\tilde{F}(\gamma)T_n-T_n^{\top}\tilde{F}\left(T_n \gamma^*\right)T_n &=&\frac{1}{n}\sum_{i=1}^n\Big[\left(\eta\cdot h^{\prime}\right)\left(\Tilde{g}^{\top}_i \gamma\right)-\left(\eta\cdot h^{\prime}\right)\left(\Tilde{g}^{\top}_i T_n\gamma^*\right)\Big]T_n^{\top}\tilde{g}_i\tilde{g}_i^{\top}T_n \\
&=&\frac{1}{n}\sum_{i=1}^n
\big[ \varphi_i\left(1\right)-\varphi_i\left(0\right)\big]
T_n^{\top}\tilde{g}_i\tilde{g}_i^{\top}T_n, \end{eqnarray*}
where
\begin{equation*}
\varphi_i\left(t\right):= \left(\eta\cdot h^{\prime}\right)\Big(t\tilde{g}^{\top}_i\gamma+(1-t)\tilde{g}^{\top}_iT_n\gamma^*\Big).
\end{equation*}
By the Mean Value Theorem,
there exist $t_i\in[0,1]$ and $\bar{\gamma}_i = t_i\gamma + (1-t_i)T_n\gamma^*$ such that
\begin{equation*}
    \varphi_i(1)-\varphi_i(0)=\varphi^{\prime}_i(t_i)=\Tilde{g}^{\top}_i \left(\gamma-T_n\gamma^*\right)\left(\eta \cdot h^{\prime}\right)^{\prime}\left(\Tilde{g}^{\top}_i \bar{\gamma}_i\right).
\end{equation*}
Substituting this into the equation above, we get
\begin{eqnarray*}
    T_n^{\top}\tilde{F}(\gamma)T_n-T_n^{\top}\tilde{F}\left(T_n \gamma^*\right)T_n  &=&\frac{1}{n}\sum_{i=1}^n
    \Tilde{g}^{\top}_i \left(\gamma-T_n\gamma^*\right)\left(\eta \cdot h^{\prime}\right)^{\prime}\left(\Tilde{g}^{\top}_i \bar{\gamma}_i\right)
T_n^{\top}\tilde{g}_i\tilde{g}_i^{\top}T_n.
\end{eqnarray*}
We now bound the terms on the right-hand side. First, by \eqref{boundness} and the definition of $N_n(\delta_0)$, we have 
\begin{eqnarray*}
 \left|\Tilde{g}^{\top}_i \left(\gamma-T_n\gamma^*\right)\right|
\le \left\|\Tilde{g}_i^\top T_n\right\| \left\|T^{-1}_n(\gamma-T_n\gamma^*)\right\|
    \le  2C \left\|T^{-1}_n\gamma-\gamma^*\right\| = O(n^{-1/2}).
\end{eqnarray*}
Next, by \eqref{Nndelta} and  \eqref{covariate bound tilde},  \begin{eqnarray*}   \left|\Tilde{g}^{\top}_i \bar{\gamma}_i\right|
&\le&\left\|\Tilde{g}^{\top}_i T_n\right\|\cdot\|T_n^{-1}\bar{\gamma}_i\| = 
\left\|T^{\top}_n \tilde{g}_n\right\|\cdot\left\|t_iT_n^{-1}\gamma+\left(1-t_i\right)\gamma^*\right\|\\ &\le& 2C\left(t_i\left\|T_n^{-1}\gamma\right\|+(1-t_i)\left\|\gamma^*\right\|\right).
\end{eqnarray*}
Since $\gamma\in N_n(\delta_0)$, it follows that for sufficiently large $n$, 
\begin{equation}
\label{eq:T inverse gamma}
    \|T^{-1}_n\gamma\|\le \|T^{-1}_n\gamma-\gamma^{*}\|+\left\|\gamma^*\right\|\le (2C/n)^{1/2}\delta_0+\left\|\gamma^*\right\| \le 2\left\|\gamma^*\right\|.
\end{equation}
Therefore, the last two bounds imply 
$
\left|\Tilde{g}^{\top}_i \bar{\gamma}_i\right| \le  4C\|\gamma^*\|
$. Since $(\eta\cdot h')'$ is a smooth function, we obtain that $|(\eta\cdot h')'(\Tilde{g}^{\top}_i \bar{\gamma}_i)|$ is uniformly bounded over $1\le i \le n$ and $\gamma\in N_n(\delta_0)$. 
Finally, by \eqref{covariate bound tilde} and for sufficiently large $n$, we have
\begin{equation}
\label{bounded_term}
\left\|T_n^{\top}\tilde{g}_i\tilde{g}_i^{\top}T_n \right\|=
\left\|T_n^{\top}\tilde{g}_i\right\|^2 \le 4C^2.
\end{equation}
Putting these inequalities together, we obtain  
\begin{equation*}
    \max_{\gamma \in N_n(\delta_0)}\|    T_n^{\top}\tilde{F}(\gamma)T_n-T_n^{\top}\tilde{F}\left(T_n\gamma^*\right)T_n\|=O(n^{-1/2}).
\end{equation*}

We now show that the second term on the right-hand side of \eqref{Q} is negligible. To this end, we decompose it as $\Phi_1-\Phi_2-\Phi_3$, where
\begin{eqnarray*}
    \Phi_1
    &=& \frac{1}{n} \sum_{i=1}^n \eta^{\prime}\left(\Tilde{g}_i^{\top}\gamma\right)\left(h\left(\Tilde{g}_i^{\top}\gamma\right)-h\left(g_i^{\top}\gamma^*\right)\right)T^{\top}_n \Tilde{g}_i \Tilde{g}_i^{\top}T_n,\\
    \Phi_2&=&\frac{1}{n}\sum_{i=1}^n \eta^{\prime}\left(\Tilde{g}_i^{\top}T_n\gamma^*\right)e_iT^{\top}_n \Tilde{g}_i\Tilde{g}_i^{\top}T_n,\\
    \Phi_3&=&\frac{1}{n}\sum_{i=1}^n \left(\eta^{\prime}\left(\Tilde{g}_i^{\top}\gamma\right)-\eta^{\prime}\left(\Tilde{g}_i^{\top}T_n\gamma^*\right)\right)e_iT^{\top}_n \Tilde{g}_i\Tilde{g}_i^{\top}T_n.
\end{eqnarray*}
Note that $\Phi_1$ is the expectation of the second term on the right-hand side of \eqref{Q}, while $-\Phi_2-\Phi_3$ is its centered version, where $\Phi_2$  does not depend on $\gamma$ and $\Phi_3$ depends on $\gamma$. We will bound $\Phi_1,\Phi_2$, and $\Phi_3$ separately. 
Regarding $\Phi_1$, by \eqref{bounded_term} and the triangle inequality, 
\begin{equation*}
\begin{aligned}
    \left\|\Phi_1\right\|
        &\le \frac{1}{n}\sum_{i=1}^n  \left|\eta^{\prime}\left(\tilde{g}^{\top}_i\gamma\right)\right|\left|h\left(\Tilde{g}_i^{\top}\gamma\right)-h\left(g_i^{\top}\gamma^*\right)\right|\|T^{\top}_n \Tilde{g}_i \Tilde{g}_i^{\top}T_n\|\\
        &\le \frac{4C^2}{n}\sum_{i=1}^n  \left|\eta^{\prime}\left(\tilde{g}^{\top}_i\gamma\right)\right|\left|h\left(\Tilde{g}_i^{\top}\gamma\right)-h\left(g_i^{\top}\gamma^*\right)\right|\\
    & \le \frac{4C^2}{n} \sum_{i=1}^n  \left|\eta^{\prime}\left(\tilde{g}^{\top}_i\gamma\right)\right|\Big\{\left|h\left(\Tilde{g}_i^{\top}\gamma\right)-h\left(g_i^{\top}T_n^{-1}\gamma\right)\right|+\left|h\left(g_i^{\top}T_n^{-1}\gamma\right)-h\left(g_i^{\top}\gamma^*\right)\right|\Big\}.
\end{aligned}
\end{equation*}
We now bound the terms on the right-hand side of the inequality above.
By the Mean Value Theorem, there exists $\tilde{t}_i\in[0,1]$ and $\tilde{\gamma}_i = \Tilde{t}_i T^{-1}_n\gamma + (1-\tilde{t}_i)\gamma^*$ such that
\begin{equation*}   \left|h\left(g_i^{\top}T_n^{-1}\gamma\right)-h\left(g_i^{\top}\gamma^*\right)\right|=\left|h^{\prime}({g}_i^{\top}\tilde{\gamma}_i)g^{\top}_i(T^{-1}_n\gamma-\gamma^*)\right|\le  \left|h^{\prime}\left({g}_i^{\top}\tilde{\gamma}_i\right)\right|\|g_i\|\left\|T^{-1}_n\gamma-\gamma^*\right\|. 
\end{equation*}
By Assumption~\ref{cond:A2}, \eqref{eq:T inverse gamma}, and \eqref{boundness}, we have
\begin{eqnarray*}    
\left|{g}_i^{\top}\tilde{\gamma}_i\right|
&\le& \Tilde{t}_i\left|g^{\top}_iT^{-1}_n\gamma\right|+ (1-\tilde{t}_i)\left|g^{\top}_i\gamma^*\right|
\le \Tilde{t}_i \left\|g_i\right\|\left\|T^{-1}_n\gamma\right\| + (1-\tilde{t}_i)\left\|g_i\right\|\left\|\gamma^*\right\| \\
&\le&
2\Tilde{t}_i C\left\|\gamma^*\right\| + 
(1-\tilde{t}_i)C\left\|\gamma^*\right\|\\
&\le&2 C\left\|\gamma^*\right\|.
\end{eqnarray*}
Since $h'$ is a smooth function, this implies that $\left|h^{\prime}\left({g}_i^{\top}\tilde{\gamma}_i\right)\right|$ is uniformly bounded over $1\le i \le n$. Moreover, $\|g_i\|\le C$ by Assumption~\ref{cond:A2} and $\left\|T^{-1}_n\gamma-\gamma^*\right\|\le (2C/n)^{1/2}\delta_0$ because $\gamma\in N_n(\delta_0)$.  
Therefore,
$$
\left|h\left(g_i^{\top}T_n^{-1}\gamma\right)-h\left(g_i^{\top}\gamma^*\right)\right| = O(n^{-1/2}).
$$
Similarly, there exist
$\bar{t}_i\in[0,1]$ and $\bar{g}_i = \bar{t}_i T^\top_n\Tilde{g}_i + (1-\bar{t}_i)g_i$ such that
\begin{eqnarray*}    \left|h\left(\Tilde{g}_i^{\top}\gamma\right)-h\left(g_i^{\top}T_n^{-1}\gamma\right)\right|&=&\left|h^{\prime}(\bar{g}_i^{\top}T^{-1}_n\gamma)(\Tilde{g}^{\top}_i T_n-g_i^{\top})T^{-1}_n\gamma\right|\\
&\le&  \left|h^{\prime}(\bar{g}_i^{\top}T^{-1}_n\gamma)\right|\left\|T^{\top}_n\Tilde{g}_i-g_i\right\|\|T^{-1}_n\gamma\|.
\end{eqnarray*}
By \eqref{boundness}, Assumption~\ref{cond:A4}, and \eqref{eq:T inverse gamma},  
$$
\left\|T_n^{\top} \tilde{g}_i - {g}_i\right\|\le C_2 n^{1/2}  \tau_n = o(1), \quad 
    \|T^{-1}_n\gamma\|\le  2\left\|\gamma^*\right\|.
$$
Again, by \eqref{boundness}, for sufficiently large $n$,
\begin{equation*}    \left\|\bar{g}_i\right\|=\left\|{g}_i+\bar{t}_i (T^\top_n\Tilde{g}_i-g_i)\right\|\le \left\|g_i\right\|+\bar{t}_i\left\| T^\top_n\Tilde{g}_i-g_i\right\|\le C+\bar{t}_iC_2n^{1/2}\tau_n\le 2C.
\end{equation*}
Since $h'$ is a smooth function, we obtain
$$
\left|h\left(\Tilde{g}_i^{\top}\gamma\right)-h\left(g_i^{\top}T_n^{-1}\gamma\right)\right| = o(1).
$$
Finally, by  \eqref{eq:T inverse gamma} and \eqref{bounded_term}, 
$$
|\tilde{g}_i^\top\gamma| \le \|T_n^\top\tilde{g}_i\|\cdot\|T_n^{-1}\gamma\| \le 4C\|\gamma^*\|,
$$
which, together with the smoothness of $\eta'$, implies that $|\eta'(\tilde{g}_i^\top\gamma)|$ is uniformly bounded over $1\le i\le n$ and $\gamma\in N_n(\delta_0)$.  
Putting these inequalities together, we obtain  $\|\Phi_1\|=o(1)$. 

Next, we show that $\Phi_2$ is negligible by bounding its entries ($\Phi_2$ is a matrix with a bounded number of entries). For $1\le s,t\le p+K-r$, we have
\begin{eqnarray*}
\left(\Phi_2\right)_{st}=\frac{1}{n}\sum_{i=1}^n \eta^{\prime}\left(\Tilde{g}_i^{\top}T_n\gamma^*\right)\left(T^{\top}_n \Tilde{g}_i \Tilde{g}_i^{\top}T_n\right)_{st}e_i.
\end{eqnarray*}
By \eqref{bounded_term}, we have  
\begin{equation*}
     \left|\left(T^{\top}_n \Tilde{g}_i \Tilde{g}_i^{\top}T_n\right)_{st}\right| \le 4C^2, \qquad   
    \left|\Tilde{g}_i^\top T_n\gamma^*\right| \le \left\|T^{\top}_n \Tilde{g}_i \right\| \left\|\gamma^*\right\|\le 2C\|\gamma^*\|.
\end{equation*}
Since $\eta'$ is a continuous function, it follows that the coefficients $\eta^{\prime}\left(\Tilde{g}_i^{\top}T_n\gamma^*\right)\left(T^{\top}_n \Tilde{g}_i \Tilde{g}_i^{\top}T_n\right)_{st}$
in the formula for $(\Phi_2)_{st}$ are uniformly bounded over $1\le i\le n$. Note also that $e_i$, $1\le i\le n$, are independent mean-zero random variables with variances $v(g_i^\top\gamma^*)$. These variances are uniformly bounded over $1\le i\le n$ because $|g_i^\top \gamma^*|\le \|g_i\|\|\gamma^*\|\le C\|\gamma^*\|$ by Assumption~\ref{cond:A2} and $v$ is a smooth function.  
Therefore, by Markov inequality, for any $t>0$, 
\begin{eqnarray*}
    \mathbb{P}(\left(\Phi_2\right)_{st}\geq t) \le (tn)^{-2}\sum_{i=1}^n \left(\eta^{\prime}\left(\Tilde{g}_i^{\top}T_n\gamma^*\right)\right)^2 \left(\left(T^{\top}_n \Tilde{g}_i \Tilde{g}_i^{\top}T_n\right)_{st}\right)^2v(g_i^\top\gamma^*) =O(t^{-2}n^{-1}).
\end{eqnarray*}
Choosing $t=o(n^{-1/2})$, we obtain $\|\Phi_2\|=o_p(1)$.  

Finally, we bound $\Phi_3$. 
By the Mean Value Theorem, there exists $\tilde{t}_i\in[0,1]$ and $\breve{\gamma}_i = \Tilde{t}_i T^{-1}_n\gamma + (1-\tilde{t}_i)\gamma^*$ such that
\begin{equation*}  \eta^{\prime}\left(\Tilde{g}_i^{\top}\gamma\right)-\eta^{\prime}\left(\Tilde{g}_i^{\top}T_n\gamma^*\right)=\eta^{\prime\prime}\left(\Tilde{g}_i^{\top}T_n\Breve{\gamma}_i\right)\Tilde{g}_i^{\top}T_n \left(T^{-1}_n\gamma-\gamma^*\right).
\end{equation*}
By \eqref{eq:T inverse gamma} and \eqref{bounded_term}, 
\begin{eqnarray*}    
\left|\tilde{g}_i^{\top}T_n\breve{\gamma}_i\right|&\le& \Tilde{t}_i \left|\tilde{g}_i^{\top}\gamma\right| + (1-\tilde{t}_i)\left|\tilde{g}_i^{\top}T_n\gamma^*\right|\\
&\le& \Tilde{t}_i \left\|T^{\top}_n\tilde{g}_i\right\|\left\|T^{-1}_n\gamma\right\| + (1-\tilde{t}_i)\left\|T^{\top}_n\tilde{g}_i\right\|\left\|\gamma^*\right\|\\
&\le&
4\Tilde{t}_i C\left\|\gamma^*\right\|+2(1-\tilde{t}_i)C\left\|\gamma^*\right\|\\
&\le&4C\left\|\gamma^*\right\|.
\end{eqnarray*}
Since $\eta^{\prime\prime}$ is a smooth function, this implies that $\eta^{\prime\prime}(\tilde{g}_i^{\top}T_n\breve{\gamma}_i)$ is uniformly bounded over $1\le i\le n$ and $\gamma\in N_n(\delta_0)$ by a constant $M>0$.
Therefore by \eqref{bounded_term} and the definition of $N_n(\delta_0)$,
\begin{align*}
    \| \Phi_3 \|    &=\Big\|\frac{1}{n}\sum_{i=1}^n \eta^{\prime\prime}\left(\Tilde{g}_i^{\top}T_n\Breve{\gamma}_i\right)\Tilde{g}_i^{\top}T_n \left(T^{-1}_n\gamma-\gamma^*\right)e_iT^{\top}_n \Tilde{g}_i\Tilde{g}_i^{\top}T_n\Big\| \\
    & \le \frac{8C^3M}{n}\sum_{i=1}^n \left\|T^{-1}_n\gamma-\gamma^*\right\|\left|e_i\right| \\
    &\leq (8C^3M)(2C/n)^{1/2}\delta_0\left(\frac{1}{n}\sum_{i=1}^n \left|e_i\right|\right).
\end{align*}
Since the variances of $e_i$ are uniformly bounded over $1\le i\le n$ (see the argument for $\Phi_2$ above),  it follows that 
\begin{eqnarray*}
        \mathbb{E}\left[\sup _{\gamma \in N_n(\delta_0)} \left\| \Phi_3\right\|\right]&\leq& O(n^{-1/2})\cdot \mathbb{E}\left[\frac{1}{n}\sum_{i=1}^n \left|e_i\right|\right] 
    \le O(n^{-1/2})  \cdot\max_{1\le i \le n}\mathbb{E} \left|e_i\right|\\
    &\le &O(n^{-1/2})  \cdot\max_{1\le i \le n}\sqrt{\mathbb{E} \left|e_i\right|^2}  \rightarrow 0.
\end{eqnarray*}
In turns, this implies $\sup _{\gamma \in N_n(\delta_0)} \|\Phi_3\|=o_p(1)$ by the Markov inequality.
The proof is complete. 
\end{proof}

We also need the following lemma for the proof of Lemma~\ref{lemma 7}.

\begin{lemma}[Lemma 2 in \cite{yin2006asymptotic}]
    \label{lemma 9}
    Let $f:G\subset\mathbb{R}^q\to\mathbb{R}^q$ be a function with $f(x)=(f_1(x),...,f_q(x))^\top$ such that $f_1,...,f_q$ are continuously differentiable on the convex set $G$.  
    Then for any $\alpha,\beta \in G$, 
$$
f(\beta)-f(\alpha)=\left(\int_0^1 \frac{\partial f(\alpha+t(\beta-\alpha))}{\partial x} d t\right)(\beta-\alpha),
$$
where the integral is taken element-wise.
\end{lemma}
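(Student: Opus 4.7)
The plan is to reduce the multivariate statement to a one-dimensional fundamental theorem of calculus applied componentwise. First I would fix $\alpha, \beta \in G$ and, using the convexity of $G$, define the path $\sigma(t) = \alpha + t(\beta - \alpha)$, which lies entirely in $G$ for all $t \in [0,1]$. I would then introduce the auxiliary function $\phi : [0,1] \to \mathbb{R}^q$ defined by $\phi(t) = f(\sigma(t))$, and show the endpoint identity $\phi(1) - \phi(0) = f(\beta) - f(\alpha)$.

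Next I would compute $\phi'(t)$ using the chain rule. Since each $f_j$ is continuously differentiable on $G$ and $\sigma$ is smooth, the composition $\phi_j(t) = f_j(\sigma(t))$ is continuously differentiable on $[0,1]$, with
\begin{equation*}
\phi_j'(t) = \nabla f_j(\sigma(t))^\top (\beta - \alpha).
\end{equation*}
Stacking these componentwise yields $\phi'(t) = \frac{\partial f(\sigma(t))}{\partial x}(\beta - \alpha)$, where $\partial f / \partial x$ denotes the $q \times q$ Jacobian matrix. Continuous differentiability of the $f_j$'s on $G$ guarantees that each $\phi_j'$ is continuous on the compact interval $[0,1]$, hence Riemann integrable.

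Then I would apply the (scalar) fundamental theorem of calculus to each component: $\phi_j(1) - \phi_j(0) = \int_0^1 \phi_j'(t)\, dt$. Assembling the $q$ scalar identities into a single vector equation gives
\begin{equation*}
f(\beta) - f(\alpha) = \int_0^1 \frac{\partial f(\sigma(t))}{\partial x}(\beta - \alpha)\, dt.
\end{equation*}
Finally, since $(\beta - \alpha)$ is a constant vector independent of $t$, linearity of the integral (applied entrywise) allows pulling it out on the right, yielding the stated form
\begin{equation*}
f(\beta) - f(\alpha) = \left(\int_0^1 \frac{\partial f(\alpha + t(\beta - \alpha))}{\partial x}\, dt\right)(\beta - \alpha),
\end{equation*}
with the matrix integral interpreted entrywise, as in the statement.

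There is no real technical obstacle here — the result is a routine vector-valued analogue of the fundamental theorem of calculus. The only points requiring care are invoking convexity of $G$ to keep $\sigma(t)$ in the domain, and verifying continuity of the Jacobian-valued integrand so that the entrywise integral is well-defined, both of which follow directly from the hypotheses.
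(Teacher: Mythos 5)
Your proof is correct and is the standard componentwise fundamental-theorem-of-calculus argument for this vector-valued identity; the paper itself does not prove the lemma but simply imports it as Lemma 2 of the cited reference, so there is nothing to diverge from. All the points needing care (convexity keeping the segment in $G$, continuity of the Jacobian along the path so the entrywise integral exists) are handled properly.
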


\begin{proof}[Proof of Lemma~\ref{lemma 7}]
We first prove \eqref{boundary}. By Lemma~\ref{lemma 9}, 
\begin{equation}
\label{Qexp}
    T^{\top}_n \Tilde{S}(\gamma)-T_n^{\top} \Tilde{S}\left(T_n \gamma^*\right)=\mathbb{H}(\gamma)(T^{-1}_n\gamma-{\gamma^*}),
\end{equation}
where for notation simplicity, we denote 
$$\mathbb{H}(\gamma)= \int_0^1 H\left(\gamma^*+t\left(T^{-1}_n\gamma-\gamma^*\right)\right) d t, \qquad
H(\gamma)=T^{\top}_n(\partial \Tilde{S}(\gamma) / \partial {\gamma}^{\top})T_n.$$
We show that $\mathbb{H}(\gamma)$ is well-conditioned. For that purpose, we decompose $\mathbb{H}=\Phi_1+\Phi_2$,
where $\Phi_1= T_n^{\top}\tilde{F}\left(T_n \gamma^*\right)T_n$ and $\Phi_2=\mathbb{H}(\gamma)-\Phi_1$.
From Corollary~\ref{cor:scaling matrix}, 
$$
\lambda_{\min }(\Phi_1)=\lambda_{\min }\left(T_n^{\top}\tilde{F}\left(T_n\gamma^*\right)T_n\right) \ge \frac{1}{2C}.
$$
Regarding $\Phi_2$, by Lemma~\ref{lemma 6},
\begin{eqnarray}
\label{thm1:approximationtarget}
\left\| \Phi_2\right\|
&=&\left\| \int_0^1 \left[H\left(\gamma^*+t\left(T^{-1}_n\gamma-\gamma^*\right)\right)-T_n^{\top}\tilde{F}\left(T_n \gamma^*\right)T_n\right] d t\right\|\\ 
\nonumber
&\le& \int_0^1 \left\| H\left(\gamma^*+t\left(T^{-1}_n\gamma-\gamma^*\right)\right) -T_n^{\top}\tilde{F}\left(T_n \gamma^*\right)T_n\right\|d t\\
\nonumber
&=&o_p(1).    
\end{eqnarray}
For any $\gamma\in\partial N_n(\delta_0)$ we have $\|T_n^{-1}\gamma-\gamma^*\|=\delta_0n^{-1/2}$. Therefore, by \eqref{Qexp}, 
\begin{eqnarray*}
    \left\|T^{\top}_n\Tilde{S}(\gamma)-T^{\top}_n\Tilde{S}\left(T_n\gamma^*\right)\right\| 
\ge \lambda_{\min}(\mathbb{H}(\gamma))\delta_0n^{-1/2}\ge \left(\frac{1}{2C}+o_p(1)\right)\delta_0n^{-1/2},
\end{eqnarray*}
and \eqref{boundary} is proved.

We now prove \eqref{gamma0}. By the triangle inequality,
$$
\left\|T^{\top}_n\Tilde{S}\left(T_n\gamma^*\right)\right\| \le \left\|S\left(\gamma^*\right)\right\| + \left\|T^{\top}_n\Tilde{S}\left(T_n\gamma^*\right)-S\left(\gamma^*\right)\right\|.
$$
The second term on the right-hand side of the above inequality is negligible because by \eqref{estimatingequinequality}, 
\begin{equation*}
\left\|T_n^{\top}\Tilde{S}\left(T_n\gamma^*\right)-S\left(\gamma^*\right)\right\| =o_p(n^{-1/2}).
\end{equation*}
It remains to bound $\|S(\gamma^*)\|$. Note that it follows directly from the definition of $S(\gamma^*)$ in \eqref{estimating equation real} that $\mathbb{E}[S\left(\gamma^*\right)]=0$ and $\operatorname{Cov}(S\left(\gamma^*\right))=n^{-1}F\left(\gamma^*\right)$ is a square matrix of size $(p+K-r)$. Therefore, by Markov inequality and Assumption~\ref{cond:A3}, for any $t>0$,
\begin{eqnarray*}
\mathbb{P}\left( \left\|S\left(\gamma^*\right)\right\| > t\right)
&\leq& t^{-2}\mathbb{E} \left\|S\left(\gamma^*\right)\right\|^2 = t^{-2}\mathbb{E} \left[\text{Trace}\left( S^{\top}\left(\gamma^*\right) {S}\left(\gamma^*\right)\right)\right]\\
&=&t^{-2}\mathbb{E} \left[\text{Trace}\left( S\left(\gamma^*\right) {S}^{\top}\left(\gamma^*\right)\right)\right] = t^{-2}\text{Trace}\left(\mathbb{E}\left[S\left(\gamma^*\right) {S}^{\top}\left(\gamma^*\right)\right]\right)\\
&=&t^{-2}n^{-1}\text{Trace}\left(F(\gamma^*)\right) \le t^{-2}n^{-1}(p+K-r)C.
\end{eqnarray*}
Choosing $t=O((\varepsilon n)^{-1/2})$, we obtain $\|S(\gamma^*)\|=O(n^{-1/2})$ with probability at least $1-\varepsilon$. The proof is complete.
\end{proof}

\section{The Proof of Theorem~\ref{theorem 2}}
\label{app: asym dist}

We prove Theorem~\ref{theorem 2} in this section. We will use the notations and results in the proof of Theorem~\ref{existence+consistency}. The following lemma is crucial for proving Theorem~\ref{theorem 2}. 

\begin{lemma}[Asymptotic Approximation]
\label{lem:asymp approx}
    Assume that the conditions of Theorem~\ref{theorem 2} hold. Then 
\begin{eqnarray*}
T_n^{-1}\hat{\gamma}-{\gamma}^*
=F^{-1}(\gamma
^*)S\left(\gamma^*\right) + o_p\big(n^{-1/2}\big).
\end{eqnarray*}
\end{lemma}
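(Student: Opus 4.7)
The plan is to derive an asymptotic linear expansion of $T_n^{-1}\hat{\gamma}-\gamma^*$ by linearizing the sample score $\tilde{S}$ around $T_n\gamma^*$ and then translating sample quantities into their population counterparts. This is a standard $Z$-estimator expansion, and essentially all the groundwork has been laid in the proofs of Lemmas~\ref{lemma 4}--\ref{lemma 7}; the task is to stitch those pieces together while tracking the $o_p$ rates.

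First I would exploit the defining equation $\tilde{S}(\hat{\gamma})=0$, which holds with probability tending to one by Theorem~\ref{existence+consistency}. Applying Lemma~\ref{lemma 9} to the map $\gamma\mapsto T_n^{\top}\tilde{S}(\gamma)$ along the segment from $T_n\gamma^*$ to $\hat{\gamma}$, and using the reparameterization $\hat{\gamma}-T_n\gamma^*=T_n(T_n^{-1}\hat{\gamma}-\gamma^*)$, yields exactly the identity \eqref{Qexp} from the proof of Lemma~\ref{lemma 7},
\[
-T_n^{\top}\tilde{S}(T_n\gamma^*)=\mathbb{H}(\hat{\gamma})\,(T_n^{-1}\hat{\gamma}-\gamma^*),
\]
where $\mathbb{H}(\hat{\gamma})$ is the integrated rescaled Jacobian used there. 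By the consistency bound \eqref{Consistencyforpara} and Corollary~\ref{proposition 3}, the entire segment between $T_n\gamma^*$ and $\hat{\gamma}$ lies inside $N_n(\delta_0)$ with probability tending to one, so the uniform approximation of Lemma~\ref{lemma 6} can be invoked along the whole integration path.

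Next I would replace both factors on the right-hand side by their population analogues. Combining Lemma~\ref{lemma 6} with equation~\eqref{Q} and Lemma~\ref{lemma 4} (which asserts $T_n^{\top}\tilde{F}(T_n\gamma^*)T_n=F(\gamma^*)+o(1)$) gives $\mathbb{H}(\hat{\gamma})=-F(\gamma^*)+o_p(1)$; since $F(\gamma^*)$ is uniformly well-conditioned by Assumption~\ref{cond:A3}, matrix inversion is Lipschitz in a neighborhood and therefore $\mathbb{H}(\hat{\gamma})^{-1}=-F^{-1}(\gamma^*)+o_p(1)$. Lemma~\ref{lemma 5} provides $T_n^{\top}\tilde{S}(T_n\gamma^*)=S(\gamma^*)+o_p(n^{-1/2})$, and the bound $\|S(\gamma^*)\|=O_p(n^{-1/2})$ established inside the proof of Lemma~\ref{lemma 7} controls the cross terms, so
\[
T_n^{-1}\hat{\gamma}-\gamma^*=-\mathbb{H}(\hat{\gamma})^{-1}T_n^{\top}\tilde{S}(T_n\gamma^*)=F^{-1}(\gamma^*)S(\gamma^*)+o_p(n^{-1/2}),
\]
because the remainder decomposes into terms of orders $O(1)\cdot o_p(n^{-1/2})$ and $o_p(1)\cdot O_p(n^{-1/2})$, each of which is $o_p(n^{-1/2})$.

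The main obstacle is less any individual calculation than the rate bookkeeping when the pieces are combined. Specifically, I must ensure that the $o_p(1)$ slack in $\mathbb{H}(\hat{\gamma})^{-1}$ only multiplies the $O_p(n^{-1/2})$ piece $S(\gamma^*)$ (which is harmless) and not a potentially larger term, and that the $o_p(n^{-1/2})$ error in the linearization of $T_n^{\top}\tilde{S}(T_n\gamma^*)$ is not amplified by the fact that $\mathbb{H}(\hat{\gamma})^{-1}$ is only $O(1)$ rather than exactly $-F^{-1}(\gamma^*)$. Both checks reduce to the uniform bounds already established in Lemmas~\ref{lemma 4}--\ref{lemma 7}, so once the Taylor identity is in hand the rest is a routine, though slightly delicate, accounting exercise.
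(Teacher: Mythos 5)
Your proposal is correct and follows essentially the same route as the paper's proof: start from $\tilde{S}(\hat\gamma)=0$ and the Taylor identity \eqref{Qexp}, replace the integrated Jacobian $\mathbb{H}(\hat\gamma)$ by the oracle information matrix $F(\gamma^*)$ via Lemmas~\ref{lemma 4} and \ref{lemma 6} (equivalently, via $T_n^\top\tilde F(T_n\gamma^*)T_n$), replace the sample score by $S(\gamma^*)$ via Lemma~\ref{lemma 5}, and do the same rate bookkeeping using $\|S(\gamma^*)\|=O_p(n^{-1/2})$ and the well-conditioning of $F(\gamma^*)$. One small correction: the fact that the segment between $\gamma^*$ and $T_n^{-1}\hat\gamma$ lies in the $O(n^{-1/2})$-ball $N_n(\delta_0)$ comes from the construction of $\hat\gamma$ inside $N_n(\delta_0)$ in the proof of Theorem~\ref{existence+consistency}, not from the consistency bound \eqref{Consistencyforpara} (which only gives $o_p(1)$ closeness and is too weak to justify invoking Lemma~\ref{lemma 6} along the path).
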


\begin{proof}[Proof of Lemma~\ref{lem:asymp approx}]
From \eqref{Qexp} and the fact that $\hat{\gamma}$ is a solution of the estimating equation, we have
\begin{equation*}
    0 = T^{\top}_n \Tilde{S}(\hat{\gamma})=\mathbb{H}(\hat{\gamma})(T^{-1}_n\hat{\gamma}-{\gamma^*})+T_n^{\top} \Tilde{S}\left(T_n \gamma^*\right),
\end{equation*}
where we recall that 
$$\mathbb{H}(\gamma)= \int_0^1 H\left(\gamma^*+t\left(T^{-1}_n\gamma-\gamma^*\right)\right) d t, \qquad
H(\gamma)=T^{\top}_n(\partial \Tilde{S}(\gamma) / \partial {\gamma}^{\top})T_n.$$
  From the equality above,
\begin{align*}    T_n^{-1}\hat{\gamma}-{\gamma}^*
=\mathbb{H}^{-1}(\hat{\gamma})T_n^{\top}\Tilde{S}\left(T_n\gamma^*\right).
\end{align*}
In light of Lemma~\ref{lemma 6} and particularly the bound \eqref{thm1:approximationtarget} in its proof, we will approximate $\mathbb{H}(\hat{\gamma})$ by $T_n^{\top}\tilde{F}\left(T_n \gamma^*\right)T_n$, and in turn by $F(\gamma^*)T_n$. Accordingly, we decompose the expression above as
\begin{eqnarray*}
T_n^{-1}\hat{\gamma}-{\gamma}^*
&=&\left(T_n^{\top}\tilde{F}\left(T_n \gamma^*\right)T_n\right)^{-1}T_n^{\top}\Tilde{S}\left(T_n\gamma^*\right) + \Phi_1\\
&=&F^{-1}(\gamma
^*)S\left(\gamma^*\right) + \Phi_2 + \Phi_1,
\end{eqnarray*}
where $\Phi_1$ and $\Phi_2$ are the errors of those approximations, namely,
\begin{eqnarray*}
    \Phi_1 &=& \left[\mathbb{H}^{-1}(\hat{\gamma})-\left(T_n^{\top}\tilde{F}\left(T_n \gamma^*\right)T_n\right)^{-1}\right]T_n^{\top}\Tilde{S}\left(T_n\gamma^*\right),\\
    \Phi_2 &=& \left(T_n^{\top}\tilde{F}\left(T_n \gamma^*\right)T_n\right)^{-1}T_n^{\top}\Tilde{S}\left(T_n\gamma^*\right) - F^{-1}(\gamma
^*)S\left(\gamma^*\right). 
\end{eqnarray*}

We proceed to bound $\Phi_1$ and then $\Phi_2$. By \eqref{gamma0}, we have
$\|T_n^{\top}\Tilde{S}\left(T_n\gamma^*\right)\|=O_p(n^{-1/2})
$. In addition, by Corollary~\ref{cor:scaling matrix}, all eigenvalues of $T_n^{\top}\tilde{F}\left(T_n \gamma^*\right)T_n$ belong to interval $[1/(2C),2C]$, and therefore are bounded away from zero and infinity. Moreover, by  
\eqref{thm1:approximationtarget}, 
$$
\left\|\mathbb{H}(\hat{\gamma})- T_n^{\top}\tilde{F}\left(T_n \gamma^*\right)T_n\right\| = o_p(1).
$$
These imply, in particular, that $\|\mathbb{H}^{-1}(\hat{\gamma})\|$ and $\|(T_n^{\top}\tilde{F}\left(T_n \gamma^*\right)T_n)^{-1}\|$ are both bounded by some constant due to the continuity of the inverse map away from zero.   
Therefore, 
\begin{eqnarray*}
\|\Phi_1\| &=& \left\|\mathbb{H}^{-1}(\hat{\gamma})\left[\mathbb{H}(\hat{\gamma})-T_n^{\top}\tilde{F}\left(T_n \gamma^*\right)T_n\right]\left(T_n^{\top}\tilde{F}\left(T_n \gamma^*\right)T_n\right)^{-1}T_n^{\top}\Tilde{S}\left(T_n\gamma^*\right)\right\| \\
&\le& \left\|\mathbb{H}^{-1}(\hat{\gamma})\right\|
\left\|\mathbb{H}(\hat{\gamma})-T_n^{\top}\tilde{F}\left(T_n \gamma^*\right)T_n\right\|
\left\|\left(T_n^{\top}\tilde{F}\left(T_n \gamma^*\right)T_n\right)^{-1}\right\|
\left\|T_n^{\top}\Tilde{S}\left(T_n\gamma^*\right)\right\| \\
&=& o_p(n^{-1/2}).    
\end{eqnarray*}
Next, we bound $\Phi_2$. 
By adding and subtracting $(T_n^{\top}\tilde{F}\left(T_n \gamma^*\right)T_n)^{-1}S(\gamma^*)$ and using the triangle inequality, we obtain that $\|\Phi_2\|\le \|\Phi_{21}\| + \|\Phi_{22}\|$, where By \eqref{eigen_inverse} and Lemma~\ref{lemma 5}, Lemma~\ref{lemma 4} 
\begin{eqnarray*}
    \Phi_{21} &=& \left(T_n^{\top}\tilde{F}\left(T_n \gamma^*\right)T_n\right)^{-1}\left(T^{\top}_n\Tilde{S}\left(T_n\gamma^*\right)-S\left(\gamma^*\right)\right),\\
    \Phi_{22}&=& \left[\left(T_n^{\top}\tilde{F}\left(T_n \gamma^*\right)T_n\right)^{-1}-F^{-1}(\gamma^*)\right]S(\gamma^*).
\end{eqnarray*}
By \eqref{eigen_inverse} and \eqref{estimatingequinequality}, 
$$
\|\Phi_{21}\| \le \left\|\left(T_n^{\top}\tilde{F}\left(T_n \gamma^*\right)T_n\right)^{-1}\right\|\left\|T^{\top}_n\Tilde{S}\left(T_n\gamma^*\right)-S\left(\gamma^*\right)\right\|\le 2C\cdot o_p(n^{-1/2}) = o_p(n^{-1/2}).
$$
To bound $\|\Phi_{22}\|$, note first that by Lemma~\ref{lemma 4} and Assumption~\ref{cond:A4},
\begin{eqnarray}
   \label{eq: F and F tilde}\left\|\left(T_n^{\top}\tilde{F}\left(T_n \gamma^*\right)T_n\right)-F(\gamma^*)\right\|\le \Psi(\|\gamma^*\|)\tau_n = o(n^{-1/2}).  
\end{eqnarray}
Also, by Assumption~\ref{cond:A3}, eigenvalues of $F(\gamma^*)$ belong to interval $[1/(2C),2C]$, therefore bounded away from zero and infinity. This implies that both $\|(T_n^{\top}\tilde{F}\left(T_n \gamma^*\right)T_n)^{-1}\|$ and $\|F^{-1}(\gamma^*)\|$ are bounded from above by an absolute constant, due to the continuity of the inverse map away from zero. Therefore,
\begin{eqnarray*}
\|\Phi_{22}\|&=& \left\|\left(T_n^{\top}\tilde{F}\left(T_n \gamma^*\right)T_n\right)^{-1}\left[T_n^{\top}\tilde{F}\left(T_n \gamma^*\right)T_n-F(\gamma^*)\right]F^{-1}(\gamma^*)S(\gamma^*)
\right\|\\
&\le&\left\|\left(T_n^{\top}\tilde{F}\left(T_n \gamma^*\right)T_n\right)^{-1}\right\|
\left\|T_n^{\top}\tilde{F}\left(T_n \gamma^*\right)T_n-F(\gamma^*)\right\|
\left\|F^{-1}(\gamma^*)\right\|
\left\|S(\gamma^*)
\right\|\\
&=&o(n^{-1/2})\left\|S(\gamma^*)
\right\|.
\end{eqnarray*}
From \eqref{estimating equation real}, we have
\begin{equation*}
S\left(\gamma^*\right)= \frac{1}{n}\sum_{i=1}^n {g}_i\eta\left({g}^{\top}_i\gamma^*\right) e_i,
\end{equation*}
where $\eta= h^{\prime}/v$ denotes a scalar function depending on functions $h$ and $v$, and $e_i$'s are independent random variables with zero means and variances $v(g_i^\top\gamma^*)$. By Markov inequality, for any $t>0$,
\begin{eqnarray*}
    \mathbb{P}(\|S(\gamma^*)\|>t) &\le& t^{-2} \mathbb{E}\|S(\gamma^*)\|^2=t^{-2}\frac{1}{n^2}\sum_{i=1}^n\|g_i\|^2\eta^2(g_i^\top\gamma^*)v(g_i^\top\gamma^*).
\end{eqnarray*}
Note that $\|g_i\|\le C$ by Assumption~\ref{cond:A2}. In particular, $\|g_i^\top\gamma^*\|\le C\|\gamma^*\|$, and therefore $|\eta^2(g_i^\top\gamma^*)v(g_i^\top\gamma^*)|$ are uniformly bounded over $1\le i\le n$ because $\eta^2v$ is a smooth function. This implies
$$
\mathbb{P}(\|S(\gamma^*)\|>t) = O(t^{-2}n^{-1}).
$$
Choosing $t=O(1)$, we obtain that $\|S(\gamma^*)\|=O_p(1)$. This implies $\|\Phi_{22}\|=o_p(n^{-1/2})$ and the proof is complete.     
\end{proof}

The following  Linderberg-Feller Central Limit Theorem is needed for proving Theorem~\ref{theorem 2}.

\begin{lemma}[Lindeberg-Feller Central Limit Theorem from \cite{Lindeberg1922}]
    \label{lemma 8} 
For each positive integer $n$, let $X_{n j}$, $j=1,2,...,n$, be independent random variables with $\mathbb{E} [X_{n j}]=0$ and  $\mathbb{E}[X_{n j}^2]=\sigma_{n j}^2<\infty$. Denote $B_n^2=\sum_{j=1}^n \sigma_{n j}^2$ and assume that for each $\varepsilon>0$,
\begin{equation*}
\lim_{n\to\infty}
    \frac{1}{B_n^2} \sum_{i=1}^n \mathbb{E}\left[X_{n j}^2 I\left\{\left|X_{n j}\right|>\varepsilon B_n\right\}\right] =0.
\end{equation*}
Then
\begin{equation*}
 \frac{1}{B_n} \sum_{j=1}^n X_{n j} \to N(0,1),   
\end{equation*}
where the convergence is in distribution.
\end{lemma}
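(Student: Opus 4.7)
The plan is to reduce all three claims to the asymptotic linearization $T_n^{-1}\hat{\gamma}-\gamma^* = F^{-1}(\gamma^*)S(\gamma^*) + o_p(n^{-1/2})$ from Lemma \ref{lem:asymp approx}, combined with a multivariate central limit theorem for the score $\sqrt{n}\,S(\gamma^*) = n^{-1/2}\sum_i g_i\eta(g_i^\top\gamma^*) e_i$, where $\eta = h'/v$ and $e_i = y_i - h(g_i^\top\gamma^*)$ are independent mean-zero with variance $v(g_i^\top\gamma^*)$. By the Cramér-Wold device it suffices to show, for an arbitrary fixed unit $c$, that $c^\top \sqrt{n}\,S(\gamma^*) \to N(0, c^\top F(\gamma^*) c)$; I would apply Lemma \ref{lemma 8} to $X_{ni} := c^\top g_i\eta(g_i^\top\gamma^*) e_i$, whose variance sum equals $c^\top F(\gamma^*) c$ by \eqref{F}. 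The Lindeberg condition follows from the $\xi$-th moment bound in Assumption \ref{cond:A6}, together with Assumption \ref{cond:A2} and Remark \ref{remark 2} giving uniform boundedness of $\|g_i\|$ and of $\eta(g_i^\top\gamma^*)$, via the truncation estimate $\mathbb{E}[X_{ni}^2 I\{|X_{ni}|>\varepsilon B_n\}] \le \mathbb{E}|X_{ni}|^\xi/(\varepsilon B_n)^{\xi-2}$. Premultiplying by $F^{-1}(\gamma^*)$, whose spectrum is bounded by Assumption \ref{cond:A3}, then gives $\sqrt{n}\,F^{-1}(\gamma^*)S(\gamma^*)$ an asymptotic covariance of $F^{-1}(\gamma^*)$.

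For part (a), because $T_n$ is block-diagonal with third block $B_3 = \tilde{W}_{(r+1):K}^\top W_{(r+1):K}/n$, the third block of $\hat{\gamma}-T_n\gamma^*$ is $\delta := \hat{\gamma}_{(p+1):(p+K-r)} - B_3\gamma^*_{(p+1):(p+K-r)}$, and a direct calculation using $\alpha^* = W_{(r+1):K}\gamma^*_{(p+1):(p+K-r)}$ gives $\hat{\alpha} - \tfrac{1}{n}\tilde{W}_{(r+1):K}\tilde{W}_{(r+1):K}^\top \alpha^* = \tilde{W}_{(r+1):K}\delta$. Lemma \ref{lem:asymp approx} then yields $\sqrt{n}\,\delta = B_3 \sqrt{n}\,[F^{-1}(\gamma^*)S(\gamma^*)]_3 + o_p(1)$ with asymptotic covariance $B_3[F^{-1}(\gamma^*)]_{33}B_3^\top$. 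A Schur-complement computation on $\tilde{F}(\hat{\gamma}) = n^{-1}[\tilde{Z}\ \tilde{W}_{(r+1):K}]^\top \kappa(\hat{\gamma})[\tilde{Z}\ \tilde{W}_{(r+1):K}]$ identifies $\tilde{F}^{-1}_3(\hat{\gamma}) = [\tilde{W}_{(r+1):K}^\top \tilde{O}\tilde{W}_{(r+1):K}]^{-1}$, so the statistic in \eqref{gamma} is $n\delta^\top [\tilde{F}^{-1}_3(\hat{\gamma})]^{-1}\delta$. Using Lemma \ref{lemma 4}, Corollary \ref{proposition 3}, and consistency of $\hat{\gamma}$ I show $\tilde{F}^{-1}_3(\hat{\gamma}) \approx B_3[F^{-1}(\gamma^*)]_{33}B_3^\top$, and Slutsky delivers the $\chi^2_{K-r}$ limit.

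Parts (b) and (c) follow the same recipe on the second and first blocks. For (b), using \eqref{eq:par trans beta} and the identity $Z_{(r+1):p}\gamma^*_{(r+1):p} = X\beta^*$, I decompose $u^\top(\hat{\beta}-\beta^*) = u^\top(X^\top X)^{-1}X^\top \tilde{Z}_{(r+1):p}(\hat{\gamma}_{(r+1):p} - B_2\gamma^*_{(r+1):p}) + u^\top(X^\top X)^{-1}X^\top (\hat{\mathcal{P}}_C-\mathcal{P}_C)X\beta^*$; Proposition \ref{proposition 1}, Assumption \ref{cond:A1}, and the fact that $\|u^\top(X^\top X)^{-1}X^\top\|=O(n^{-1/2})$ show the second term is $o(n^{-1/2})$. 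Substituting Lemma \ref{lem:asymp approx} into the first piece produces a linear combination $n^{-3/2}\sum_i c_{ni} e_i$ whose squared-coefficient sum matches the square of the stated denominator $\sigma_n$ in \eqref{beta}, once the identification $B_2[F^{-1}(\gamma^*)]_{22}B_2^\top \approx \tilde{F}^{-1}_2(\hat{\gamma})$ is in hand; a final application of Lindeberg-Feller with these $n$-dependent weights gives \eqref{beta}. Condition \eqref{beta_asymptotic_condition}, combined with the projection bound of Proposition \ref{proposition 1} and Assumption \ref{cond:A3}, keeps $\sigma_n$ bounded below so Slutsky applies. Part (c) is handled identically using the first block of $T_n$ and \eqref{eq:par trans theta}, with condition \eqref{theta_asymptotic_condition} playing the same nondegeneracy role.

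The main technical obstacle will be the variance-matching step: transporting the asymptotic covariance from the oracle form $B_i[F^{-1}(\gamma^*)]_{ii}B_i^\top$ to the plug-in form $\tilde{F}^{-1}_i(\hat{\gamma})$ requires threading together Lemma \ref{lemma 4} (which controls $T_n^\top \tilde{F}(T_n\gamma^*) T_n - F(\gamma^*)$), Corollary \ref{proposition 3} (to invert $B_i$ cleanly on each block), the block-diagonality of $T_n$ (so that $(T_n^{-1}\tilde{F}^{-1}(T_n\gamma^*)T_n^{-\top})_{ii}$ is exactly $B_i^{-1}\tilde{F}^{-1}_i(T_n\gamma^*)B_i^{-\top}$), and the consistency of $\hat{\gamma}$ from Theorem \ref{existence+consistency} to replace $\tilde{F}(T_n\gamma^*)$ by $\tilde{F}(\hat{\gamma})$. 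A secondary subtlety in (b) and (c) is that the weights $c_{ni}$ in the Lindeberg sum are $n$-dependent through $\tilde{Z}$, with $\|\tilde{Z}\| = \sqrt{n}$; this is fine provided $\xi>2$ in Assumption \ref{cond:A6} and $\sigma_n$ is bounded below, which is precisely what \eqref{beta_asymptotic_condition}/\eqref{theta_asymptotic_condition} ensure.
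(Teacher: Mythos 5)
Your proposal does not prove the statement it was asked to prove. The statement is Lemma~\ref{lemma 8}, the classical Lindeberg--Feller Central Limit Theorem, which the paper does not prove at all but simply imports from the literature (Lindeberg, 1922). What you have written instead is an outline of the proof of Theorem~\ref{theorem 2} (the asymptotic distributions of $\hat{\alpha}$, $\hat{\beta}$, $\hat{\theta}$): you invoke Lemma~\ref{lem:asymp approx}, the block structure of $T_n$, conditions \eqref{beta_asymptotic_condition} and \eqref{theta_asymptotic_condition}, and the variance-matching between $F^{-1}(\gamma^*)$ and $\tilde{F}^{-1}_i(\hat{\gamma})$ --- none of which has anything to do with the content of Lemma~\ref{lemma 8}, which is a statement about an arbitrary triangular array of independent mean-zero random variables. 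Worse, your argument explicitly \emph{applies} Lemma~\ref{lemma 8} as a tool (``I would apply Lemma \ref{lemma 8} to $X_{ni} := c^\top g_i\eta(g_i^\top\gamma^*) e_i$''), so as a proof of Lemma~\ref{lemma 8} itself it is circular.

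A genuine proof of Lemma~\ref{lemma 8} would proceed by one of the standard routes: either via characteristic functions, showing that $\prod_{j=1}^n \mathbb{E}\bigl[\exp(itX_{nj}/B_n)\bigr] \to e^{-t^2/2}$ by a second-order Taylor expansion whose error terms are controlled by splitting on the event $\{|X_{nj}|>\varepsilon B_n\}$ and using the Lindeberg condition together with the implied negligibility $\max_j \sigma_{nj}^2/B_n^2 \to 0$; or via Lindeberg's replacement method, swapping each $X_{nj}$ for an independent Gaussian with matching variance and bounding the accumulated error in $\mathbb{E}[f(\sum_j X_{nj}/B_n)]$ for smooth test functions $f$. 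Your outline contains neither idea. (As a separate matter, the material you did write is a plausible high-level sketch of the paper's proof of Theorem~\ref{theorem 2} in Appendix~\ref{app: asym dist}, though the paper derives all three parts from a single scalar normalization \eqref{thm2resultmid} with tailored choices of the direction $z$, rather than running a fresh Lindeberg argument with $n$-dependent weights for each of parts (b) and (c) as you propose.)
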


\begin{proof}[Proof of Theorem \ref{theorem 2}] We fix a unit vector $z\in\mathbb{R}^{p+K-r}$ and derive the asymptotic distribution for the scalar random variable $z^{\top}(T_n^{-1}\hat{\gamma}-\gamma^*)$;  the claims in \eqref{gamma}, \eqref{beta}, and \eqref{theta} will follow from specific choices of $z$.  
By Lemma~\ref{lem:asymp approx}, we have
\begin{equation*}
   z^{\top}(T_n^{-1}\hat{\gamma}-\gamma^*)= z^{\top}F^{-1}\left(\gamma^*\right)S\left(\gamma^*\right)+o_p(n^{-1/2}).
\end{equation*}
Multiplying both sides of this equation with a normalizing factor $\sqrt{n}(z^{\top} F^{-1}\left(\gamma^*\right)  z)^{-1/2}$, which is of order $O(n^{1/2})$ by Assumption~\ref{cond:A3}, we obtain 
\begin{align*}
    \frac{\sqrt{n}z^{\top}(T_n^{-1}\hat{\gamma}-\gamma^*)}{ \left(z^{\top} F^{-1}\left(\gamma^*\right)z \right)^{1/2}}
    =\frac{n^{-1/2}\sum_{i=1}^n z^{\top}F^{-1}\left(\gamma^*\right){g}_i  \eta\left(g^{\top}_i\gamma^*\right)e_i}{ \left(z^{\top} F^{-1}\left(\gamma^*\right)z \right)^{1/2}} +o_p(1),
\end{align*}
where $\eta= h^{\prime}/v:\mathbb{R}\rightarrow \mathbb{R}$ is a scalar function. We will prove the asymptotic normality of the first term on the right-hand side of the above equation by verifying the Lindeberg conditions in Lemma~\ref{lemma 8}. 
Denote 
$$\bar{e}_i:= z^{\top}F^{-1}\left(\gamma^*\right)g_i \eta\left(g^{\top}_i\gamma^*\right)e_i.$$ 
It is straightforward that $\mathbb{E}[\bar{e}_i]=0$ and $\mathbb{E}[\bar{e}^2_i]<\infty$ because $g_i$'s are bounded by Assumption~\ref{cond:A2}.
By \eqref{F}, the sum of variances of $\bar{e}_i$ are
\begin{equation*}   B_n^2 = \sum_{i=1}^n\mathbb{E}\left[\bar{e}^2_i\right]=\sum_{i=1}^n z^{\top}F^{-1}\left(\gamma^*\right)g_ig^{\top}_i\frac{\left(h^{\prime}\left({g}^{\top}_i\gamma\right)\right)^{2}}{v\left({g}^{\top}_i\gamma\right)}F^{-1}\left(\gamma^*\right) z=n \left(z^{\top}F_{n}^{-1}\left(\gamma^*\right)z\right).
\end{equation*}
It remains to verify the tail control condition, that is, to show that for each $\varepsilon>0$, 
$$
\Phi := \frac{1}{B_n^2} \sum_{i=1}^n \mathbb{E}\left[\bar{e}_i^2 I\left\{\left|\bar{e}_i\right|>\varepsilon B_n\right\}\right] \to 0.
$$
Note that $B_n^2\ge C^{-1}n$ by Assumption~\ref{cond:A3}. Therefore, 
\begin{eqnarray*}
\Phi  &\leq& \frac{1}{B_n^2}\sum_{i=1}^n\mathbb{E}\left[\left|\bar{e}_i\right|^2 I\left\{|\bar{e}_i|>\varepsilon   C ^{-1/2}n^{1/2} \right\}\right]\\
&=& \frac{1}{B_n^2}\sum_{i=1}^n\mathbb{E}\left[\left|\bar{e}_i\right|^\xi \left(\left|\bar{e}_i\right|^{2-\xi}I\left\{|\bar{e}_i|>0\right\}\right) I\left\{|\bar{e}_i|>\varepsilon   C ^{-1/2}n^{1/2} \right\}\right]\\
&\leq& \frac{1}{B_n^2}\sum_{i=1}^n \mathbb{E}\left[|\bar{e}_i|^\xi(\varepsilon^{-1} C^{1/2}n^{-1/2})^{\xi-2}\right] \\
&\leq& \varepsilon^{2-\xi}C^{(\xi-2)/2} n^{-(\xi-2)/2} \left(  z^{\top} F^{-1}\left(\gamma^*\right)z  \right)^{-1} \max _{1 \leq i \leq n} \mathbb{E}|\bar{e}_i|^\xi.
\end{eqnarray*}
To bound $\max _{1 \leq i \leq n} \mathbb{E}|\bar{e}_i|^\xi$, note that the coefficient $z^{\top}F^{-1}(\gamma^*)g_i \eta(g^{\top}_i\gamma^*)$ in the definition of $\bar{e}_i$ is uniformly bounded over $1\le i\le n$ because $\|F^{-1}(\gamma^*)\|$ is bounded by Assumption~\ref{cond:A3}, $\|g_i\|\le C$ by Assumption~\ref{cond:A2}, and $\eta\left(g^{\top}_i\gamma^*\right)$ is bounded by the continuity of $\eta$ and the fact that $\|g^{\top}_i\gamma^*\|\le C\|\gamma^*\|$. Therefore, by Assumption~\ref{cond:A6}, 
\begin{eqnarray*}
   \max _{1 \leq i \leq n} \mathbb{E}|\bar{e}_i|^\xi = O\left(\max _{1 \leq i \leq n}\mathbb{E}|{e}_i|^\xi\right)=O(1).
\end{eqnarray*}
This implies $\Phi=O(n^{-(\xi-2)/2})$, and the tail control condition is proved. Therefore, 
by the Lemma~\ref{lemma 8},  
\begin{equation}
\label{thm2resultmid}
    \frac{\sqrt{n}z^{\top}(T_n^{-1}\hat{\gamma}-\gamma^*)}{  \left( z^{\top} F^{-1}\left(\gamma^*\right)z \right)^{1/2}} \rightarrow \mathcal{N}(0,1).
\end{equation}
Finally, we can replace the denominator of the left-hand side of \eqref{thm2resultmid} with the approximation $( v^{\top} (T_n^{\top}\tilde{F}(\hat{\gamma})T_n)^{-1} v )^{1/2}$, because of \eqref{eq: F and F tilde} and the fact that $F^{-1}(\gamma^*)$ is bounded from below by Assumption~\ref{cond:A3}. We conclude
\begin{equation}
\label{thm2result}
    \frac{\sqrt{n}z^{\top}(T_n^{-1}\hat{\gamma}-\gamma^*)}{  \left( z^{\top} \left(T_n^{\top}\tilde{F}(\hat{\gamma})T_n\right)^{-1} z\right)^{1/2}} \rightarrow \mathcal{N}(0,1).
\end{equation}

We now proceed to prove that  \eqref{gamma}, \eqref{beta}, and \eqref{theta} are consequences of  \eqref{thm2result} with proper choices of $z$.
Regarding \eqref{theta}, we choose $$z_{1:r}= \frac{n^{-1}{Z}_{1:r}^{\top} \hat{\mathcal{P}}_{R} X G u}{\|n^{-1}{Z}_{1:r}^{\top} \hat{\mathcal{P}}_{R}X G u\|}, \qquad z_{(r+1):(K+p-r)}=0.$$ 
For this choice to be valid, we need to check that the denominator in the formula of $z_{1:r}$ is not zero. Indeed, by condition \eqref{beta_asymptotic_condition}, 
\begin{eqnarray*}    
n^{-1}\left\|{Z}_{1:r}^{\top}\hat{\mathcal{P}}_{R} X G u\right\|&\geq& n^{-1}\left\|{Z}_{1:r}^{\top} X G u\right\|- n^{-1}\left\|{Z}_{1:r}^{\top}(\hat{\mathcal{P}}_{R}-\mathcal{P}_{R})X G u\right\|\\
&\geq& c-n^{-1}\left\|{Z}_{1:r}^{\top}\right\|\left\|\hat{\mathcal{P}}_{R}-\mathcal{P}_{R}\right\|\left\|XGu\right\|\\
&=&c-n^{-1/2}\left\|XGu\right\|\left\|\hat{\mathcal{P}}_{R}-\mathcal{P}_{R}\right\|\\
&=&c-\left(u^{\top}G\left(X^{\top}X/n\right)Gu\right)^{1/2}\left\|\hat{\mathcal{P}}_{R}-\mathcal{P}_{R}\right\|\\
&=&c-\left(u^{\top}Gu\right)^{1/2}\left\|\hat{\mathcal{P}}_{R}-\mathcal{P}_{R}\right\|\\
&\ge & c-C^{1/2}C_2\tau_n\\
&>& c/2.
\end{eqnarray*}
With this choice of $z$,  \eqref{thm2result} reduces to
\begin{equation*}
\label{hatgamma1}    \frac{\sqrt{n}v_{1:r}^{\top}\left(\left(\tilde{Z}^{\top}_{1:r} {Z}_{1:r}/n\right)^{-1}\hat{\gamma}_{1:r}-\gamma_{1:r}\right)}{\left(v_{1:r}^{\top}  \left(\tilde{Z}^{\top}_{1:r} {Z}_{1:r}/n\right)^{-1}\Tilde{F}^{-1}_1(\hat{\gamma})\left({Z}^{\top}_{1:r} \tilde{Z}_{1:r}/n\right)^{-1}v_{1:r} \right)^{1/2}} \rightarrow {\mathcal{N}}(0,1).
\end{equation*}
From \eqref{eq:par trans theta}, the above expression is simplified to
\begin{equation*}
    \frac{\sqrt{n}{u}^{\top}\left(\hat{{\theta}}-n^{-1}G^{\top} X^{\top}\hat{\mathcal{P}}_{{R}} Z_{1:r}\gamma_{1:r}\right)}{\left(u^{\top}  G X^{\top}  \tilde{Z}_{1:r} \Tilde{F}^{-1}_1(\hat{\gamma}) \tilde{Z}_{1:r}^{\top}  X G u \right)^{1/2}} \rightarrow \mathcal{N}(0,1),
\end{equation*}
and \eqref{theta} is proved. 

Next, we prove \eqref{beta}
by choosing $z$ such that 
$$z_{1:r}=0, \qquad z_{(r+1):p}= \frac{n^{-1}{Z}_{(r+1):p}^{\top} \hat{\mathcal{P}}_{R} X G u}{\|n^{-1}{Z}_{(r+1):p}^{\top} \hat{\mathcal{P}}_{R}X G u\|}, \qquad z_{(p+1):(K+p-r)}=0.$$ 
The denominator of the formula for $z_{(r+1):p}$ is non-zero because
by condition \eqref{beta_asymptotic_condition}, 
\begin{eqnarray*}    
n^{-1}\left\|{Z}_{(r+1):p}^{\top}\hat{\mathcal{P}}_{C} X G u\right\|\geq  c-C^{1/2}C_2\tau_n
> c/2.
\end{eqnarray*}
With this choice of $z$,  \eqref{thm2result} is equivalent to
\begin{equation*}
\label{hatgamma2}    \frac{\sqrt{n}v_{(r+1):p}^{\top}\left(\left(\tilde{Z}^{\top}_{(r+1):p} {Z}_{(r+1):p}/n\right)^{-1}\hat{\gamma}_{(r+1):p}-\gamma_{(r+1):p}\right)}{\left(v_{(r+1):p}^{\top}  \left(\tilde{Z}^{\top}_{(r+1):p} {Z}_{(r+1):p}/n\right)^{-1}\Tilde{F}^{-1}_1(\hat{\gamma})\left({Z}^{\top}_{(r+1):p} \tilde{Z}_{(r+1):p}/n\right)^{-1}v_{(r+1):p} \right)^{1/2}} \rightarrow {\mathcal{N}}(0,1).
\end{equation*}
From \eqref{eq:par trans beta}, the above expression is simplified to
\begin{equation*}
    \frac{\sqrt{n}{u}^{\top}\left(\hat{{\beta}}-n^{-1}G^{\top} X^{\top}\hat{\mathcal{P}}_{{C}} Z_{(r+1):p}\gamma_{(r+1):p}\right)}{\left(u^{\top}  G X^{\top}  \tilde{Z}_{(r+1):p} \Tilde{F}^{-1}_2(\hat{\gamma}) \tilde{Z}_{(r+1):p}^{\top}  X G u \right)^{1/2}} \rightarrow \mathcal{N}(0,1),
\end{equation*}
and \eqref{beta} is proved.

Finally, we show \eqref{gamma}. For any unit vector $u\in \mathbb{R}^{K-r}$, choose 
 $$z_{1:p}=0,\qquad z_{(p+1):(p+K-r)}= \frac{\left(\tilde{W}_{(r+1):K}^{\top} {W}_{(r+1):K} / n\right)\Tilde{F}^{1/2}_3(\hat{\gamma}) u}{\left\|\left(\tilde{W}_{(r+1):K}^{\top} {W}_{(r+1):K} / n\right)\Tilde{F}^{1/2}_3(\hat{\gamma}) u\right\|}.$$ 
Then by a direct calculation, 
\begin{eqnarray*}
        z^{\top} \left(T_n^{\top}\tilde{F}(\hat{\gamma})T_n\right)^{-1} z
        =\frac{1}{\|\left(\tilde{W}_{(r+1):K}^{\top} {W}_{(r+1):K} / n\right)\Tilde{F}^{1/2}_3(\hat{\gamma}) u\|^2},
\end{eqnarray*}
while $\sqrt{n}z^{\top}(T_n^{-1}\hat{\gamma}-\gamma^*)$ equals
\begin{eqnarray*}
&&\sqrt{n}z_{(p+1):(p+K-r)}^{\top}\left(\left(\tilde{W}_{(r+1):K}^{\top} {W}_{(r+1):K} / n\right)^{-1}\hat{\gamma}_{(p+1):(p+K-r)}-\gamma_{(p+1):(p+K-r)}^*\right)\\
    &=&\frac{\sqrt{n} u^{\top}\Tilde{F}^{-1/2}_3(\hat{\gamma})\left(\hat{\gamma}_{(p+1):(p+K-r)}-\left(\tilde{W}_{(r+1):K}^{\top} {W}_{(r+1):K} / n\right)\gamma_{(p+1):(p+K-r)}^*\right)}{\left\|\left(\tilde{W}_{(r+1):K}^{\top} {W}_{(r+1):K} / n\right)\Tilde{F}^{1/2}_3(\hat{\gamma}) u\right\|}.
\end{eqnarray*}
Therefore, \eqref{thm2result} reduces to 
\begin{equation*} \sqrt{n}u^\top\Tilde{F}^{-1/2}_3(\hat{\gamma})J(\hat{\gamma},\Tilde{W})\rightarrow\mathcal{N}(0,1),
\end{equation*}
where $$J(\hat{\gamma},\Tilde{W})=\hat{\gamma}-\left(\tilde{W}_{(r+1):K}^{\top} {W}_{(r+1):K} / n\right)\gamma_{(p+1):(p+K-r)}.$$
Since unit vector $u\in\mathbb{R}^{K-r}$ is arbitrary, by Cramer-Wold devices, 
\begin{equation*} \sqrt{n}\Tilde{F}^{-1/2}_3(\hat{\gamma})J(\hat{\gamma},\Tilde{W})\rightarrow\mathcal{N}(0,I_{K-r}).
\end{equation*}
By the continuous mapping theorem,
\begin{equation}
\label{thm2gamma}    nJ(\hat{\gamma},\Tilde{W})^{\top}\Tilde{F}^{-1}_3(\hat{\gamma})J(\hat{\gamma},\Tilde{W})\rightarrow \chi_{K-r}^2.
\end{equation}
By the Schur complement formula, 
 $\Tilde{F}^{-1}_3(\hat{\gamma})$ equals
\begin{equation*}   \frac{1}{n}\left(\Tilde{W}_{(r+1):K}^{\top}\kappa(\hat{\gamma})\Tilde{W}_{(r+1):K}\right)-\frac{1}{n}(\tilde{W}_{(r+1):K}^{\top}\kappa(\hat{\gamma})\tilde{Z}) \left(\tilde{Z}^{\top}\kappa(\hat{\gamma})\tilde{Z}\right)^{-1}(\tilde{Z}^{\top}\kappa(\hat{\gamma}) \tilde{W}_{(r+1):K}),
\end{equation*}
where $$\kappa(\hat{\gamma})=\operatorname{diag}\left((h^{\prime}(\tilde{g}^{\top}_i\hat{\gamma}))^2/v(\tilde{g}^{\top}_i\hat{\gamma})\right).$$
Therefore, \eqref{thm2gamma} can be further simplified to
\begin{equation*}
    n\left(\hat{\alpha}-n^{-1}\Tilde{W}_{(r+1):K}\Tilde{W}^{\top}_{(r+1):K} \alpha^*\right)^{\top} \tilde{O}\left(\hat{\alpha}-n^{-1}\Tilde{W}_{(r+1):K}\Tilde{W}^{\top}_{(r+1):K} \alpha^*\right) \ \to \  \chi_{K-r}^2,
\end{equation*}
where $\tilde{O}=n^{-1}(\kappa(\hat{\gamma})-\kappa(\hat{\gamma})\tilde{Z}(\tilde{Z}^{\top}\kappa(\hat{\gamma})\tilde{Z})^{-1}\tilde{Z}^{\top}\kappa(\hat{\gamma}))$.
The proof is complete.
\end{proof}

\section{The Proof of Corollary~\ref{cor:uniqueness}}
\label{app: uniqueness}
\begin{proof}
A solution to the estimating equation $\tilde{S}(\gamma)=0$ is a critical point of the likelihood function. It is unique if the likelihood function is concave or, equivalently, if $\partial\Tilde{S}(\gamma)/\partial \gamma$ is a negative-definite matrix for any $\gamma$. When the link function is natural, 
\begin{equation*}
    \frac{\partial \Tilde{S}(\gamma)}{\partial \gamma}= -\sum^n_{i=1} \left(\tilde{Z} \ \tilde{W}_{(r+1):K}\right)^{\top}\kappa(\gamma) \left(\tilde{Z} \ \tilde{W}_{(r+1):K}\right), \quad \text{where} \    \kappa({\gamma})=\operatorname{diag}\left(\frac{(h^{\prime}(\tilde{g}^{\top}_i{\gamma}))^2}{v(\tilde{g}^{\top}_i{\gamma})}\right)
\end{equation*}
We will show that each summand in the formula of $\partial\Tilde{S}(\gamma)/\partial \gamma$ is a positive definite matrix. First, regarding $\kappa(\gamma)$, by the definition of the smooth increasing function $h$ in \eqref{tmp}, each diagonal entry of $\kappa(\gamma)$ is positive. Therefore, it remains to show that  $\tilde{Z} \ \tilde{W}_{(r+1):K}$ is of full rank. 
Since $\operatorname{span}(\tilde{Z})=\operatorname{col}\left(X\right)$ and $\operatorname{span}(\tilde{W}_{(r+1):K})=\hat{\mathcal{N}}$, this is equivalent to $\operatorname{col}\left(X\right)\cap \hat{\mathcal{N}}=0$.
This identity holds if we prove that for any unit vector $u\in \operatorname{col}\left(X\right)$, the projection of $u$ onto $\hat{\mathcal{N}}$ has norm strictly less than one. To show that, we write 
$u=n^{-1/2}{Z}x$ for some
$x \in \mathbb{R}^{p}$ with $\|x\|=1$ and note that the projection onto $\hat{\mathcal{N}}$ is $\hat{\mathcal{P}}_N$. By Proposition~\ref{proposition 1}, the singular value decomposition in \eqref{defin:model}, and Assumption~\ref{cond:A4}, 
\begin{eqnarray*}
\left\|\hat{\mathcal{P}}_N u\right\|
&\leq& \left\|{\mathcal{P}}_N u\right\|+\left\|\left(\hat{\mathcal{P}}_N-{\mathcal{P}}_N\right)u\right\|\\
&\le &n^{-1/2}\left\|{\mathcal{P}}_N Z_{1:r}x_{1:r}\right\|+n^{-1/2}\left\|{\mathcal{P}}_NZ_{(r+1):K}x_{(r+1):K}\right\|+C_1
\tau_n\\
&= &n^{-1}\left\|W^\top_{(r+1):K} Z_{1:r}x_{1:r}\right\|+n^{-1}\left\|W^\top_{(r+1):K}Z_{(r+1):K}x_{(r+1):K}\right\|+C_1
\tau_n\\
&=& n^{-1}\left\|W^\top_{(r+1):K}Z_{(r+1):K}x_{(r+1):K}\right\|+C_1
\tau_n\\
&\le & n^{-1}\left\|W^\top_{(r+1):K}Z_{(r+1):K}\right\|\left\|x_{(r+1):K}\right\|+C_1
\tau_n\\
&\le& \sigma_{r+1}\|x\|+C_1\tau_n< 1,
\end{eqnarray*}
for sufficiently large $n$.
The proof is complete.
\end{proof}

\section{Extension to Laplacian Individual Effect}
\label{app: laplacian}
The appendix of \cite{le2022linear} demonstrates the extension of the subspace linear model incorporating the graph Laplacian. Similarly, our model can be extended in the same manner. 
We present the necessary assumptions and theoretical results for the reader's convenience.

We assume that the parameter vector $\alpha$ lies within the subspace spanned by the $K$ eigenvectors of $P=\mathbb{E} L=\mathbb{E} D-\mathbb{E} A$ associated with its smallest eigenvalues, 
while the estimation procedure is based on the perturbed version $\hat{P}=L=D-A$ of $P$, where $D$ is the diagonal matrix with node degrees $d_i$ on the diagonal. 

\begin{assumption}[Eigenvalue gap of the expected Laplacian]
\label{cond:A7}
Let $L=D-A$ be the Laplacian of a random network generated from the ``inhomogeneous Erd\"{o}s-R\'{e}nyi" and $P=\mathbb{E} L$. Denote by $\lambda_1 \leq \lambda_2 \leq \cdots \leq \lambda_n$ the eigenvalues of $P$. Assume that the $K$ smallest eigenvalues of $P$ are well separated from the remaining eigenvalues and their range is not too large:
\begin{equation*}
    \min _{i \leq K, i^{\prime}>K}\left|\lambda_i-\lambda_{i^{\prime}}\right| \geq \rho^{\prime} d, \quad \max _{i, i^{\prime} \leq K}\left|\lambda_i-\lambda_{i^{\prime}}\right| \leq d / \rho^{\prime},
\end{equation*}
where $\rho^{\prime}>0$ is a constant and $d=n \cdot \max _{i j} P_{i j}$.
\end{assumption}
Under Assumption~\ref{cond:A7}, the small projection perturbation assumption holds:
\begin{theorem}[Concentration of perturbed projection for the Laplacian]
Let $w_1, \ldots, w_n$ and $\lambda_1 \leq \lambda_2 \leq \cdots \leq \lambda_n$ be eigenvectors and corresponding eigenvalues of $\mathbb{E} L=\mathbb{E} D-\mathbb{E} A$ and similarly, let $\hat{w}_1, \ldots, \hat{w}_n$ and $\hat{\lambda}_1 \leq \hat{\lambda}_2 \leq \cdots \leq \hat{\lambda}_n$ be the eigenvectors and eigenvalues of $L=D-A$. Denote $W=\left(w_1, \ldots, w_K\right)$ and $\hat{W}=\left(\hat{w}_1, \ldots, \hat{w}_K\right)$. Assume that Assumption \ref{cond:A7} holds and $d \geq C \log n$ for a sufficiently large constant $C$. Then for any fixed unit vector $v$, with high probability we have
$$
\left\|\left(\hat{W} \hat{W}^{\top}-W W^{\top}\right) v\right\| \leq \frac{C\left[K\left(1+n\|W\|_{\infty}^2\right)\right]^{1 / 2} \log n}{d}.
$$
\end{theorem}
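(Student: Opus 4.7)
The plan is to adapt the contour-integral perturbation argument behind Proposition~\ref{proposition 1} (Corollary~5 in \cite{le2022linear}) to the Laplacian setting. Setting $E = L - P = (D - \mathbb{E}D) - (A - \mathbb{E}A)$, I first control the spectral norm of $E$. Standard random-graph estimates give $\|A - \mathbb{E}A\| \le C\sqrt{d}$ with high probability once $d \ge C\log n$, while scalar Bernstein applied entry-wise combined with a union bound yields $\max_l |d_l - \mathbb{E}d_l| \le C\sqrt{d\log n}$, hence $\|D - \mathbb{E}D\| \le C\sqrt{d\log n}$. Together these give $\|E\| \le C\sqrt{d\log n} \ll \rho'd$ under Assumption~\ref{cond:A7}, which in turn lets me choose a rectangular contour $\Gamma \subset \mathbb{C}$ of length $O(d)$ enclosing $\lambda_1,\ldots,\lambda_K$ at distance at least $\tfrac{1}{2}\rho'd$ from the remaining spectra of both $P$ and (by Weyl's inequality) $L$.

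Combining the resolvent identity with a Neumann-series expansion then gives
\begin{equation*}
\hat W\hat W^\top - WW^\top = -\frac{1}{2\pi i}\oint_\Gamma (z-P)^{-1}E(z-P)^{-1}\,dz + R,
\end{equation*}
with $\|R\| = O(\|E\|^2/(\rho'd)^2) = O(\log n/d)$, which can be absorbed into the claimed bound. Residue calculus in the eigenbasis $\{w_k\}$ of $P$ reduces the leading term applied to $v$ to
\begin{equation*}
\sum_{i\le K}\sum_{j>K}\frac{w_i^\top E w_j}{\lambda_i - \lambda_j}\bigl[(w_j^\top v)w_i + (w_i^\top v)w_j\bigr],
\end{equation*}
and by orthonormality of $\{w_k\}$ together with the gap bound $|\lambda_i - \lambda_j| \ge \rho'd$, its squared norm is at most $2(\rho'd)^{-2}\sum_{i\le K}\sum_{j>K}(w_i^\top E w_j)^2$. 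The problem thus reduces to a sharp delocalization-aware control of the bilinear forms $w_i^\top E w_j$.

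The main obstacle, and the technically delicate step, is to bound this double sum by $CK(1+n\|W\|_\infty^2)\log^2 n$, which requires both delocalization of the top eigenvectors and careful handling of the correlation between $D - \mathbb{E}D$ and $A - \mathbb{E}A$. Writing
\begin{equation*}
w_i^\top(A - \mathbb{E}A)w_j = \sum_{l<m}(A_{lm} - P_{lm})\bigl[(w_i)_l(w_j)_m + (w_i)_m(w_j)_l\bigr]
\end{equation*}
expresses it as a sum of independent bounded mean-zero variables with variance $\le 2d/n$ and sup-norm $\le 2\|w_i\|_\infty\|w_j\|_\infty$, and the parallel expansion $w_i^\top(D - \mathbb{E}D)w_j = \sum_{l<m}(A_{lm} - P_{lm})[(w_i)_l(w_j)_l + (w_i)_m(w_j)_m]$ uses the same independent edge variables. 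Rather than union-bounding Bernstein estimates over the $n-K$ indices $j > K$, I bound the inner sum $\sum_{j>K}(w_i^\top E w_j)^2 = w_i^\top E(I - WW^\top)E w_i$ directly via a Hanson--Wright-type quadratic-form inequality applied jointly to the two pieces of $E$ viewed as linear functionals of the independent edge variables $\{A_{lm}: l<m\}$. This yields $(1+n\|w_i\|_\infty^2)\log^2 n$ per $i$ with high probability; summing over $i\le K$, dividing by $(\rho'd)^2$, taking square roots, and absorbing $R$ produces the stated bound.
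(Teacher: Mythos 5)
There is nothing in the paper to compare against: this theorem is reproduced without proof from the appendix of \cite{le2022linear}, so your attempt has to stand on its own, and it does not. The setup is fine — the bound $\|E\|\le C\sqrt{d\log n}$, the contour of length $O(d)$, the remainder estimate $O(\log n/d)$, and the residue formula for the leading term are all correct. The fatal step is the last one. After Cauchy–Schwarz you reduce to $\sum_{i\le K}\sum_{j>K}(w_i^\top E w_j)^2=\sum_{i\le K} w_i^\top E(I-WW^\top)Ew_i$ and claim a Hanson--Wright bound of order $(1+n\|w_i\|_\infty^2)\log^2 n$ per $i$. That cannot hold: writing $E=\sum_{l<m}\varepsilon_{lm}(e_l-e_m)(e_l-e_m)^\top$ with independent centered $\varepsilon_{lm}=A_{lm}-\mathbb{E}A_{lm}$, one computes
\begin{equation*}
\mathbb{E}\,\|Ew_i\|^2=2\sum_{l<m}P^A_{lm}(1-P^A_{lm})\bigl((w_i)_l-(w_i)_m\bigr)^2\ \asymp\ \lambda_i ,
\end{equation*}
and the projection $I-WW^\top$ removes only a lower-order part of this when $W$ is delocalized. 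Assumption \ref{cond:A7} only caps $\lambda_i$ at $d/\rho'$ (since $\lambda_1=0$), so the quadratic form you want to be polylogarithmic has mean of order $d$ in generic examples (e.g.\ a balanced two-block model with comparable within/between probabilities, where $\lambda_2\asymp d$); Hanson--Wright concentrates it \emph{around} that mean, not below it. Consequently your route can deliver at best $\bigl(\sum_{i\le K}\lambda_i\bigr)^{1/2}/(\rho'd)\asymp\sqrt{K/d}$, i.e.\ the Davis--Kahan rate, which is far weaker than the claimed $\sqrt{K}\,\log n/d$ once $d\gg\log^2 n$. The loss occurs exactly where you apply Cauchy--Schwarz with $\sum_{j>K}(w_j^\top v)^2\le 1$: that discards the fixed-$v$ structure, which is the only possible source of improvement over Davis--Kahan.

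Any argument aiming at a $\log n/d$-type rate must keep the weights $(w_j^\top v)$ and $(w_i^\top v)$ inside the sums. For instance, the component of the leading term lying in $\mathrm{col}(W)$ is $\sum_{i\le K}(w_i^\top E u_i)\,w_i$ with $u_i=\sum_{j>K}(w_j^\top v)(\lambda_i-\lambda_j)^{-1}w_j$ a \emph{deterministic} vector of norm at most $(\rho'd)^{-1}$, and Bernstein applied to the scalar $w_i^\top E u_i$ is what produces the $\|W\|_\infty$-dependent gain. But note that the complementary component involves $\|(I-WW^\top)E\,WW^\top v\|$, whose typical size is $\max_{i\le K}\sqrt{\lambda_i}\,|w_i^\top v|$ by the same variance identity; under Assumption \ref{cond:A7} alone this contributes order $d^{-1/2}$ whenever some $\lambda_i\asymp d$ and $v$ has a nontrivial component along $w_i$, which exceeds the stated bound for $d\gg\log^2 n$. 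So the statement as reproduced here is not reachable by your plan, and in fact appears to require structure beyond Assumption \ref{cond:A7} (such as the $K$ smallest Laplacian eigenvalues being polylogarithmic, or a restriction on $v$); to resolve that you would have to consult the exact conditions in the appendix of \cite{le2022linear}, which this paper does not restate with a proof.
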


\newpage

\section{Additional simulation results for model misspecification}\label{sec:mis}
We present additional simulations under the same design as Section~\ref{sec:Simulation Studies}, now examining the impact of misspecifying either \(K\) (network subspace dimension) or \(r\) (intersection dimension). Results are provided in Tables~\ref{tableLogitkSBM}--\ref{tablePoirDiag}. Unless noted, medians are reported across the same two-level Monte Carlo scheme used previously.

\begin{table}[H]
\centering
        \caption{Median MSE ($\times 10^{2}$), coverage probability, and MSPE ($\times 10^{2}$) for subspace logistic regression under SBM with random network perturbations and misspecified $K$.}
{\begin{tabular}{ccccccccccc}
\hline
\hline
\multirow{2}{*}{ n } & \multirow{2}{*}{ avg.\ degree } & \multicolumn{3}{c}{ $K=2$}& \multicolumn{3}{c}{$K=3$ (True Model)}& \multicolumn{3}{c}{$K=4$}  \\
 & &MSE  & Coverage& MSPE & MSE   & Coverage & MSPE & MSE  & Coverage & MSPE  \\
\hline 
\multirow{3}{*}{ 500 } & $2 \log n$ & 1.14 & 94.4\% & 1.71 & 1.16 & 94.6\% & 1.11 & 1.18 & 94.4\% & 1.12 \\
& $\sqrt{n}$ & 1.18 & 94.6\% & 1.45 & 1.15 & 94.8\% & 0.64 & 1.17 & 94.6\% & 0.69 \\ 
& $n^{2 / 3}$ & 1.08 & 95.1\%& 1.35 & 1.13 & 95.0\% & 0.31 & 1.14 & 95.1\% & 0.35
\\
\hline 
\multirow{3}{*}{ 1000} & $2 \log n$ & 0.51 & 95.3\% & 1.69 & 0.56 & 94.7\% & 0.96 & 0.57 & 94.8\% & 0.98\\
& $\sqrt{n}$  & 0.51 & 95.2\% & 1.46 & 0.57 & 94.9\% & 0.43 & 0.57 & 95.0\% & 0.45 \\
& $n^{2 / 3}$ & 0.55 & 94.3\% & 1.39 & 0.58 & 95.0\% & 0.18 & 0.57 & 95.0\% & 0.20 \\
\hline 
\multirow{3}{*}{ 2000 } & $2 \log n$ & 0.29  & 94.3\% & 1.66 & 0.35 & 93.1\% & 0.90 & 0.31 & 94.3\% & 0.90 \\
& $\sqrt{n}$  & 0.36 & 91.3\% & 1.29 & 0.31 & 94.7\% & 0.30 & 0.31 & 94.9\% & 0.31 \\
& $n^{2 / 3}$ & 0.44 & 87.5\% & 1.31 & 0.30 & 95.1\% & 0.10 & 0.31 & 95.0\% & 0.11 \\
\hline 
\multirow{3}{*}{ 4000 }& $2 \log n$ & 0.14 & 93.9\% & 1.56 & 0.16 & 92.7\%& 0.75 & 0.17 & 92.6\% & 0.75 \\ 
& $\sqrt{n}$ & 0.14 & 93.3\% & 1.41 & 0.14 & 94.9\% & 0.19 & 0.14 & 94.7\% & 0.20\\
& $n^{2 / 3}$  & 0.17 & 91.6\% & 1.30 & 0.14 & 95.0\% & 0.06 & 0.14 & 94.9\% & 0.06 \\ 
\hline
\hline
\end{tabular}}
    \label{tableLogitkSBM}
\end{table}

\begin{table}[H]
\centering
        \caption{Median MSE ($\times 10^{2}$), coverage probability, and MSPE ($\times 10^{2}$) for subspace logistic regression under SBM with random network perturbations and misspecified $r$.}
{\begin{tabular}{ccccccccccc}
\hline
\hline
\multirow{2}{*}{ n } & \multirow{2}{*}{ avg.\ degree } & \multicolumn{3}{c}{ $r=0$}& \multicolumn{3}{c}{$r=1$ (True Model)}& \multicolumn{3}{c}{$r=2$}  \\
 & &MSE  & Coverage& MSPE & MSE   & Coverage & MSPE & MSE  & Coverage & MSPE  \\
\hline 
\multirow{3}{*}{ 500 } & $2 \log n$ & 1.16 & 94.6\% & 1.13 & 1.16 & 94.6\% & 1.11 & -- & -- & 1.62 \\
& $\sqrt{n}$ & 1.18 & 94.8\% & 0.68 & 1.15& 94.8\% & 0.64 & -- & -- & 1.32 \\ 
& $n^{2 / 3}$ & 1.15 & 95.0\%& 0.35 & 1.13 & 95.0\% & 0.31 & -- & -- & 1.20
\\
\hline 
\multirow{3}{*}{ 1000} & $2 \log n$ & 0.57 & 94.7\% & 0.98 & 0.56 & 94.7\% & 0.96 & -- & -- & 1.57\\
& $\sqrt{n}$  & 0.57 & 95.0\% & 0.45 & 0.57 & 94.9\% & 0.43 & -- & -- & 1.24 \\
& $n^{2 / 3}$ & 0.57 & 95.1\% & 0.20 & 0.59 & 95.0\% &0.18 & -- & -- & 1.12 \\
\hline 
\multirow{3}{*}{ 2000 } & $2 \log n$ & 0.29  & 94.3\% & 0.85 & 0.35 & 93.1\% & 0.90 & -- & -- & 1.57\\
& $\sqrt{n}$  & 0.31 & 94.8\% & 0.29 & 0.31 & 94.7\% & 0.30 & -- & -- & 1.17 \\
& $n^{2 / 3}$ & 0.31 & 95.1\% & 0.11 & 0.30 & 95.1\% & 0.10 & -- & -- & 1.05 \\
\hline 
\multirow{3}{*}{ 4000 }& $2 \log n$ & 0.17 & 92.5\% & 0.75 & 0.16 & 92.7\%& 0.75 & -- & -- & 1.39 \\ 
& $\sqrt{n}$ & 0.14 & 94.9\% & 0.19 & 0.14 & 94.9\% & 0.19 & -- & -- & 1.09\\
& $n^{2 / 3}$  & 0.14 & 95.0\% & 0.06 & 0.14 & 95.0\% & 0.06 & -- & -- & 1.00 \\ 
\hline
\hline
\end{tabular}}
    \label{tableLogitrSBM}
\end{table}

    \begin{table}[H]
    \centering
        \caption{Median MSE ($\times 10^{2}$), coverage probability, and MSPE for subspace Poisson regression under SBM with random network perturbations and misspecified $K$.}
{\begin{tabular}{ccccccccccc}
\hline
\hline
\multirow{2}{*}{ n } & \multirow{2}{*}{ avg.\ degree } & \multicolumn{3}{c}{ $K=2$}& \multicolumn{3}{c}{$K=3$ (True Model)}& \multicolumn{3}{c}{$K=4$}  \\
 & &MSE  & Coverage& MSPE & MSE   & Coverage & MSPE & MSE  & Coverage & MSPE  \\
\hline 
\multirow{3}{*}{ 500 } & $2 \log n$ & 0.23 & 72.8\% & 2.85 & 0.35 & 75.8\% & 2.30 & 0.21 & 76.3\% & 2.25 \\
& $\sqrt{n}$ & 0.12 & 89.9\% & 2.09 & 0.16 & 86.2\% & 1.41 & 0.15 & 86.2\% & 1.40 \\ 
& $n^{2 / 3}$ & 0.10 & 93.4\%& 1.19 & 0.10 & 93.5\% & 0.53 & 0.10 & 93.5\% & 1.30
\\
\hline 
\multirow{3}{*}{ 1000} & $2 \log n$ & 0.06 & 92.4\% & 1.90 & 0.08 & 87.2\% & 1.66 & 0.07 & 90.2\% & 1.60\\
& $\sqrt{n}$  & 0.09 & 85.0\% & 1.15 & 0.06 & 92.9\% & 0.82 & 0.06 & 92.0\% & 0.83 \\
& $n^{2 / 3}$ & 0.18 & 62.2\% & 0.71 & 0.06 & 94.2\% &0.27 & 0.06 & 94.3\% & 0.27 \\
\hline 
\multirow{3}{*}{ 2000 } & $2 \log n$ & 0.34  & 6.2\% & 1.22 & 0.13 & 43.4\% & 1.01 & 0.13 & 43.7\% & 1.00 \\
& $\sqrt{n}$  & 0.23 & 20.7\% & 0.71 & 0.04 & 86.3\% & 0.43 & 0.04 & 86.3\% & 0.43 \\
& $n^{2 / 3}$ & 0.21 & 26.4\% & 0.51 & 0.03 & 94.2\% & 0.13 & 0.03 & 94.3\% & 0.13 \\
\hline 
\multirow{3}{*}{ 4000 }& $2 \log n$ & 0.02 & 84.7\% & 1.47 & 0.04 & 71.4\%& 1.67 & 0.03 & 73.6\% & 1.66 \\ 
& $\sqrt{n}$ & 0.11 & 15.2\% & 0.43 & 0.01 & 92.8\% & 0.51 & 0.02 & 92.7\% & 0.51\\
& $n^{2 / 3}$  & 0.17 & 3.9\% & 0.64 & 0.01 & 94.8\% & 0.13 & 0.01 & 94.4\% & 0.13 \\ 
\hline
\hline
\end{tabular}}
    \label{tablePoikSBM}
\end{table}

\begin{table}[H]
\centering
        \caption{Median MSE ($\times 10^{2}$), coverage probability, and MSPE for subspace Poisson regression under SBM with random network perturbations and misspecified $r$.}
{\begin{tabular}{ccccccccccc}
\hline
\hline
\multirow{2}{*}{ n } & \multirow{2}{*}{ avg.\ degree } & \multicolumn{3}{c}{ $r=0$}& \multicolumn{3}{c}{$r=1$ (True Model)}& \multicolumn{3}{c}{$r=2$}  \\
 & &MSE  & Coverage& MSPE & MSE   & Coverage & MSPE & MSE  & Coverage & MSPE  \\
\hline 
\multirow{3}{*}{ 500 } & $2 \log n$ & 0.20 & 78.7\% & 2.13 & 0.35 & 75.8\% & 2.30 & -- & -- & 3.34 \\
& $\sqrt{n}$ & 0.14 & 88.1\% & 1.24 & 0.16 & 86.2\% & 1.41 & -- & -- & 2.67 \\ 
& $n^{2 / 3}$ & 0.10 & 94.2\%& 0.46 & 0.10 & 93.5\% & 0.53 & -- & -- & 2.31
\\
\hline 
\multirow{3}{*}{ 1000} & $2 \log n$ & 0.07 & 90.8\% & 1.46 & 0.08 & 87.2\% & 1.66 & -- & -- & 2.67 \\
& $\sqrt{n}$  & 0.06 & 93.1\% & 0.72 & 0.06 & 92.9\% & 0.82 & -- & -- & 2.34 \\
& $n^{2 / 3}$ & 0.06 & 94.3\% & 0.24 & 0.06 & 94.2\% &0.27 & -- & -- & 2.24 \\
\hline 
\multirow{3}{*}{ 2000 } & $2 \log n$ & 0.12  & 49.4\% & 0.93 & 0.13 & 43.4\% & 1.01 & -- & -- & 1.92 \\
& $\sqrt{n}$  & 0.04 & 88.4\% & 0.38 & 0.04 & 86.3\% & 0.43 & -- & -- & 1.89 \\
& $n^{2 / 3}$ & 0.03 & 94.6\% & 0.12 & 0.03 & 94.2\% & 0.13 & -- & -- & 1.88 \\
\hline 
\multirow{3}{*}{ 4000 }& $2 \log n$ & 0.04 & 70.9\% & 1.47 & 0.04 & 71.4\%& 1.67 & -- & -- & 2.84 \\ 
& $\sqrt{n}$ & 0.01 & 93.0\% & 0.43 & 0.01 & 92.8\% & 0.51 & -- & -- & 2.53\\
& $n^{2 / 3}$  & 0.01 & 94.5\% & 0.11 & 0.01 & 94.8\% & 0.13 & -- & -- & 2.40\\ 
\hline
\hline
\end{tabular}}
    \label{tablePoirSBM}
\end{table}

\begin{table}[H]
\caption{Median MSE ($\times 10^{2}$), coverage probability, and MSPE ($\times 10^{2}$) for subspace logistic regression under DCBM with random network perturbations and misspecified $K$.}
\centering
\begin{tabular}{ccccccccccc}
\hline
\hline
\multirow{2}{*}{ n } & \multirow{2}{*}{ avg.\ degree } & \multicolumn{3}{c}{ $K=2$} & \multicolumn{3}{c}{$K=3$ (True Model)} & \multicolumn{3}{c}{$K=4$}  \\
& & MSE & Coverage & MSPE & MSE & Coverage & MSPE & MSE & Coverage & MSPE \\
\hline 
\multirow{3}{*}{ 500 } & $2 \log n$ & 1.08 & 95.4\% & 1.53 & 1.18 & 95.4\% & 1.05 & 1.22 & 95.1\% & 1.07 \\
& $\sqrt{n}$ & 1.10 & 95.0\% & 1.40 & 1.19 & 94.8\% & 0.60 & 1.23 & 95.0\% & 0.64 \\ 
& $n^{2 / 3}$ & 1.13 & 94.4\% & 1.45 & 1.22 & 95.3\% & 0.28 & 1.25 & 95.0\% & 0.32 \\
\hline 
\multirow{3}{*}{ 1000} & $2 \log n$ & 0.53 & 95.2\% & 1.63 & 0.56 & 95.1\% & 0.91 & 0.59 & 94.9\% & 0.90 \\
& $\sqrt{n}$  & 0.58 & 94.1\% & 1.41 & 0.57 & 95.0\% & 0.40 & 0.59 & 95.0\% & 0.42 \\
& $n^{2 / 3}$ & 0.62 & 93.0\% & 1.31 & 0.57 & 95.1\% & 0.16 & 0.59 & 95.0\% & 0.18 \\
\hline 
\multirow{3}{*}{ 2000 } & $2 \log n$ & 0.28 & 94.7\% & 1.59 & 0.29 & 95.1\% & 0.82 & 0.29 & 95.2\% & 0.83 \\
& $\sqrt{n}$  & 0.38 & 89.9\% & 1.26 & 0.29 & 95.0\% & 0.27 & 0.30 & 95.0\% & 0.28 \\
& $n^{2 / 3}$ & 0.45 & 86.4\% & 1.15 & 0.28 & 95.1\% & 0.09 & 0.30 & 95.1\% & 0.10 \\
\hline 
\multirow{3}{*}{ 4000 } & $2 \log n$ & 0.19 & 94.3\% & 1.62 & 0.15 & 94.2\% & 0.70 & 0.15 & 94.1\% & 0.71 \\ 
& $\sqrt{n}$ & 0.20 & 90.5\% & 1.28 & 0.14 & 95.0\% & 0.18 & 0.15 & 94.8\% & 0.19 \\
& $n^{2 / 3}$  & 0.24 & 84.9\% & 1.26 & 0.14 & 95.1\% & 0.05 & 0.14 & 94.9\% & 0.06 \\ 
\hline
\hline
\end{tabular}
\label{tableLogitkDCBM}
\end{table}

\begin{table}[H]
\caption{Median MSE ($\times 10^{2}$), coverage probability, and MSPE ($\times 10^{2}$) for subspace logistic regression under DCBM with random network perturbations and misspecified $r$.}
\centering
\begin{tabular}{ccccccccccc}
\hline
\hline
\multirow{2}{*}{ n } & \multirow{2}{*}{ avg.\ degree } & \multicolumn{3}{c}{ $r=0$} & \multicolumn{3}{c}{$r=1$ (True Model)} & \multicolumn{3}{c}{$r=2$}  \\
& & MSE & Coverage & MSPE & MSE & Coverage & MSPE & MSE & Coverage & MSPE \\
\hline 
\multirow{3}{*}{ 500 } & $2 \log n$ & 1.15 & 95.3\% & 1.07 & 1.18 & 95.4\% & 1.05 & -- & -- & 1.70 \\
& $\sqrt{n}$ & 1.23 & 95.2\% & 0.64 & 1.19 & 94.8\% & 0.60 & -- & -- & 1.39 \\ 
& $n^{2 / 3}$ & 1.24 & 95.0\% & 0.32 & 1.22 & 95.3\% & 0.28 & -- & -- & 1.20 \\
\hline 
\multirow{3}{*}{ 1000} & $2 \log n$ & 0.57 & 95.0\% & 0.93 & 0.56 & 95.1\% & 0.91 & -- & -- & 1.68\\
& $\sqrt{n}$  & 0.59 & 95.0\% & 0.42 & 0.57 & 95.0\% & 0.40 & -- & -- & 1.31 \\
& $n^{2/3}$ & 0.60 & 94.9\% & 0.19 & 0.57 & 95.1\% & 0.16 & -- & -- & 1.16 \\
\hline 
\multirow{3}{*}{ 2000 } & $2 \log n$ & 0.28 & 95.0\% & 0.83 & 0.29 & 95.1\% & 0.82 & -- & -- & 1.64\\
& $\sqrt{n}$  & 0.30 & 95.0\% & 0.28 & 0.29 & 95.0\% & 0.27 & -- & -- & 1.28 \\
& $n^{2/3}$ & 0.30 & 95.2\% & 0.10 & 0.28 & 95.1\% & 0.09 & -- & -- & 1.15 \\
\hline 
\multirow{3}{*}{ 4000 }& $2 \log n$ & 0.15 & 94.0\% & 0.71 & 0.15 & 94.2\% & 0.70 & -- & -- & 1.40 \\ 
& $\sqrt{n}$ & 0.15 & 95.2\% & 0.19 & 0.14 & 95.0\% & 0.18 & -- & -- & 1.11\\
& $n^{2/3}$  & 0.15 & 95.0\% & 0.06 & 0.14 & 95.1\% & 0.05 & -- & -- & 1.02 \\ 
\hline
\hline
\end{tabular}
\label{tableLogitrDCBM}
\end{table}

\begin{table}[H]
\caption{Median MSE ($\times 10^{2}$), coverage probability, and MSPE for subspace Poisson regression under DCBM with random network perturbations and misspecified $K$.}
\centering
\begin{tabular}{ccccccccccc}
\hline
\hline
\multirow{2}{*}{ n } & \multirow{2}{*}{ avg.\ degree } & \multicolumn{3}{c}{ $K=2$} & \multicolumn{3}{c}{$K=3$ (True Model)} & \multicolumn{3}{c}{$K=4$}  \\
& & MSE & Coverage & MSPE & MSE & Coverage & MSPE & MSE & Coverage & MSPE \\
\hline 
\multirow{3}{*}{ 500 } & $2 \log n$ & 0.51 & 39.9\% & 1.29 & 0.21 & 75.2\% & 1.09 & 0.22 & 76.3\% & 1.07 \\
& $\sqrt{n}$ & 0.44 & 49.4\% & 1.12 & 0.11 & 90.4\% & 0.72 & 0.11 & 89.8\% & 0.72 \\ 
& $n^{2 / 3}$ & 0.47 & 46.9\% & 1.06 & 0.09 & 93.8\% & 0.32 & 0.09 & 93.8\% & 0.33 \\
\hline 
\multirow{3}{*}{ 1000} & $2 \log n$ & 0.63 & 2.55\% & 1.31 & 0.27 & 29.9\% & 0.97 & 0.23 & 39.1\% & 0.96 \\
& $\sqrt{n}$ & 0.53 & 6.7\% & 1.21 & 0.07 & 82.4\% & 0.57 & 0.08 & 81.1\% & 0.57 \\
& $n^{2 / 3}$ & 0.51 & 8.4\% & 1.15 & 0.04 & 93.5\% & 0.21 & 0.04 & 93.8\% & 0.21 \\
\hline 
\multirow{3}{*}{ 2000} & $2 \log n$ & 0.47 & 0.10\% & 1.51 & 0.14 & 29.1\% & 1.03 & 0.10 & 27.8\% & 1.03 \\
& $\sqrt{n}$ & 0.39 & 1.20\% & 1.41 & 0.03 & 86.4\% & 0.49 & 0.04 & 82.8\% & 0.49 \\
& $n^{2 / 3}$ & 0.33 & 1.95\% & 1.39 & 0.02 & 94.0\% & 0.16 & 0.02 & 93.9\% & 0.16 \\
\hline 
\multirow{3}{*}{ 4000} & $2 \log n$ & 0.19 & 16.3\% & 1.46 & 0.04 & 89.8\% & 1.06 & 0.02 & 91.8\% & 1.01 \\ 
& $\sqrt{n}$ & 0.20 & 31.0\% & 1.09 & 0.01 & 94.3\% & 0.35 & 0.01 & 93.9\% & 0.35 \\
& $n^{2 / 3}$ & 0.24 & 0.30\% & 0.93 & 0.01 & 94.6\% & 0.09 & 0.01 & 94.6\% & 0.09 \\ 
\hline
\hline
\end{tabular}
\label{tablePoikDCBM}
\end{table}

\begin{table}[H]
\caption{Median MSE ($\times 10^{2}$), coverage probability, and MSPE for subspace Poisson regression under DCBM with random network perturbations and misspecified $r$.}
\centering
\begin{tabular}{ccccccccccc}
\hline
\hline
\multirow{2}{*}{ n } & \multirow{2}{*}{ avg.\ degree } & \multicolumn{3}{c}{ $r=0$} & \multicolumn{3}{c}{$r=1$ (True Model)} & \multicolumn{3}{c}{$r=2$}  \\
& & MSE & Coverage & MSPE & MSE & Coverage & MSPE & MSE & Coverage & MSPE \\
\hline 
\multirow{3}{*}{ 500 } & $2 \log n$ & 0.20 & 76.9\% & 1.03 & 0.21 & 75.2\% & 1.09 & -- & -- & 1.56 \\
& $\sqrt{n}$ & 0.10 & 91.4\% & 0.67 & 0.11 & 90.4\% & 0.72 & -- & -- & 1.54 \\ 
& $n^{2/3}$ & 0.09 & 93.7\% & 0.32 & 0.09 & 93.8\% & 0.32 & -- & -- & 1.51 \\
\hline 
\multirow{3}{*}{ 1000} & $2 \log n$ & 0.20 & 42.7\% & 0.92 & 0.27 & 29.9\% & 0.97 & -- & -- & 1.36 \\
& $\sqrt{n}$ & 0.08 & 84.2\% & 0.53 & 0.07 & 82.4\% & 0.57 & -- & -- & 1.31 \\
& $n^{2/3}$ & 0.04 & 93.9\% & 0.21 & 0.04 & 93.5\% & 0.21 & -- & -- & 1.28 \\
\hline 
\multirow{3}{*}{ 2000} & $2 \log n$ & 0.13 & 30.0\% & 0.97 & 0.14 & 29.1\% & 1.03 & -- & -- & 1.39 \\
& $\sqrt{n}$ & 0.05 & 83.4\% & 0.45 & 0.03 & 86.4\% & 0.49 & -- & -- & 1.26 \\
& $n^{2/3}$ & 0.02 & 94.3\% & 0.15 & 0.02 & 94.0\% & 0.16 & -- & -- & 1.21 \\
\hline 
\multirow{3}{*}{ 4000} & $2 \log n$ & 0.01 & 92.4\% & 0.97 & 0.04 & 89.8\% & 1.06 & -- & -- & 2.10 \\ 
& $\sqrt{n}$ & 0.01 & 94.0\% & 0.32 & 0.01 & 94.3\% & 0.35 & -- & -- & 1.98\\
& $n^{2/3}$ & 0.01 & 94.8\% & 0.09 & 0.01 & 94.6\% & 0.09 & -- & -- & 1.94 \\ 
\hline
\hline
\end{tabular}
\label{tablePoirDCBM}
\end{table}

\begin{table}[H]
\caption{Median MSE ($\times 10^{2}$), coverage probability, and MSPE ($\times 10^{2}$) for subspace logistic regression under Diagonal Graphon with random network perturbations and misspecified $K$.}
\centering
\begin{tabular}{ccccccccccc}
\hline
\hline
\multirow{2}{*}{ n } & \multirow{2}{*}{ avg.\ degree } & \multicolumn{3}{c}{ $K=2$} & \multicolumn{3}{c}{$K=3$ (True Model)} & \multicolumn{3}{c}{$K=4$}  \\
& & MSE & Coverage & MSPE & MSE & Coverage & MSPE & MSE & Coverage & MSPE \\
\hline 
\multirow{3}{*}{ 500 } & $2 \log n$ & 1.23 & 93.6\% & 1.12 & 1.22 & 92.8\% & 0.38 & -- & -- & 0.38 \\
& $\sqrt{n}$ & 1.19 & 94.0\% & 1.07 & 1.19 & 93.8\% & 0.26 & -- & -- & 0.27 \\ 
& $n^{2 / 3}$ & 1.16 & 94.4\% & 1.04 & 1.13 & 94.4\% & 0.18 & -- & -- & 0.18 \\
\hline 
\multirow{3}{*}{ 1000} & $2 \log n$ & 0.63 & 91.2\% & 1.36 & 0.67 & 93.3\% & 0.32 & -- & -- & 0.33 \\
& $\sqrt{n}$  & 0.69 & 91.5\% & 1.32 & 0.63 & 94.1\% & 0.17 & -- & -- & 0.18 \\
& $n^{2 / 3}$ & 0.69 & 91.0\% & 1.35 & 0.60 & 94.9\% & 0.10 & -- & -- & 0.13 \\
\hline 
\multirow{3}{*}{ 2000 } & $2 \log n$ & 0.49 & 82.6\% & 1.28 & 0.32 & 92.6\% & 0.26 & -- & -- & 0.26 \\
& $\sqrt{n}$  & 0.50 & 81.9\% & 1.27 & 0.29 & 94.1\% & 0.20 & -- & -- & 0.11 \\
& $n^{2 / 3}$ & 0.52 & 81.4\% & 1.24 & 0.28 & 94.8\% & 0.05 & -- & -- & 0.05 \\
\hline 
\multirow{3}{*}{ 4000 } & $2 \log n$ & 0.38 & 67.9\% & 1.01 & 0.15 & 94.0\% & 0.17 & -- & -- & 0.19 \\ 
& $\sqrt{n}$ & 0.53 & 54.8\% & 0.99 & 0.14 & 94.3\% & 0.06 & -- & -- & 0.06 \\
& $n^{2 / 3}$  & 0.50 & 53.5\% & 0.99 & 0.14 & 94.8\% & 0.03 & -- & -- & 0.03 \\ 
\hline
\hline
\end{tabular}
\label{tableLogitkDiag}
\end{table}

\begin{table}[H]
\caption{Median MSE ($\times 10^{2}$), coverage probability, and MSPE ($\times 10^{2}$) for subspace logistic regression under Diagonal Graphon with random network perturbations and misspecified $r$.}
\centering
\begin{tabular}{ccccccccccc}
\hline
\hline
\multirow{2}{*}{ n } & \multirow{2}{*}{ avg.\ degree } & \multicolumn{3}{c}{ $r=0$} & \multicolumn{3}{c}{$r=1$ (True Model)} & \multicolumn{3}{c}{$r=2$}  \\
& & MSE & Coverage & MSPE & MSE & Coverage & MSPE & MSE & Coverage & MSPE \\
\hline 
\multirow{3}{*}{ 500 } & $2 \log n$ & 1.30 & 92.7\% & 0.41 & 1.22 & 92.8\% & 0.38 & -- & -- & 1.57 \\
& $\sqrt{n}$ & 1.27 & 93.2\% & 0.31 & 1.19 & 93.8\% & 0.26 & -- & -- & 1.51 \\ 
& $n^{2 / 3}$ & 1.12 & 94.6\% & 0.22 & 1.13 & 94.4\% & 0.18 & -- & -- & 1.38 \\
\hline 
\multirow{3}{*}{ 1000} & $2 \log n$ & 0.65 & 93.6\% & 0.33 & 0.67 & 93.3\% & 0.32 & -- & -- & 1.31 \\
& $\sqrt{n}$  & 0.62 & 94.4\% & 0.19 & 0.63 & 94.1\% & 0.17 & -- & -- & 1.15 \\
& $n^{2 / 3}$ & 0.61 & 94.7\% & 0.13 & 0.60 & 94.9\% & 0.10 & -- & -- & 1.30 \\
\hline 
\multirow{3}{*}{ 2000 } & $2 \log n$ & 0.30 & 93.1\% & 0.26 & 0.32 & 92.6\% & 0.26 & -- & -- & 1.66 \\
& $\sqrt{n}$  & 0.29 & 94.0\% & 0.11 & 0.29 & 94.1\% & 0.20 & -- & -- & 1.42 \\
& $n^{2 / 3}$ & 0.28 & 94.8\% & 0.06 & 0.28 & 94.8\% & 0.05 & -- & -- & 1.31 \\
\hline 
\multirow{3}{*}{ 4000 }& $2 \log n$ & 0.15 & 93.7\% & 0.17 & 0.15 & 94.0\% & 0.17 & -- & -- & 1.10 \\ 
& $\sqrt{n}$ & 0.14 & 94.6\% & 0.06 & 0.14 & 94.3\% & 0.06 & -- & -- & 1.07 \\
& $n^{2 / 3}$  & 0.13 & 95.0\% & 0.03 & 0.14 & 94.8\% & 0.03 & -- & -- & 1.02 \\ 
\hline
\hline
\end{tabular}
\label{tableLogitrDiag}
\end{table}

\begin{table}[H]
\caption{Median MSE ($\times 10^{2}$), coverage probability, and MSPE for subspace Poisson regression under Diagonal Graphon with random network perturbations and misspecified $K$.}
\centering
\begin{tabular}{ccccccccccc}
\hline
\hline
\multirow{2}{*}{ n } & \multirow{2}{*}{ avg.\ degree } & \multicolumn{3}{c}{ $K=2$} & \multicolumn{3}{c}{$K=3$ (True Model)} & \multicolumn{3}{c}{$K=4$}  \\
& & MSE & Coverage & MSPE & MSE & Coverage & MSPE & MSE & Coverage & MSPE \\
\hline 
\multirow{3}{*}{ 500 } & $2 \log n$ & 1.96 & 0.2\% & 0.80 & 0.53 & 72.4\% & 0.44 & -- & -- & 0.44 \\
& $\sqrt{n}$ & 1.97 & 0.0\% & 0.78 & 0.38 & 84.9\% & 0.25 & -- & -- & 0.25 \\ 
& $n^{2 / 3}$ & 2.12 & 0.0\% & 0.76 & 0.23 & 93.4\% & 0.07 & -- & -- & 0.07 \\
\hline 
\multirow{3}{*}{ 1000} & $2 \log n$ & 0.34 & 29.1\% & 1.03 & 0.27 & 57.7\% & 0.22 & -- & -- & 0.22 \\
& $\sqrt{n}$ & 0.25 & 44.6\% & 0.93 & 0.11 & 88.8\% & 0.09 & -- & -- & 0.09 \\
& $n^{2 / 3}$ & 0.22 & 50.2\% & 0.91 & 0.08 & 93.8\% & 0.03 & -- & -- & 0.03 \\
\hline 
\multirow{3}{*}{ 2000} & $2 \log n$ & 0.96 & 0.0\% & 0.88 & 0.09 & 77.0\% & 0.24 & -- & -- & 0.24 \\
& $\sqrt{n}$ & 0.89 & 0.0\% & 0.83 & 0.05 & 91.7\% & 0.08 & -- & -- & 0.08 \\
& $n^{2 / 3}$ & 0.86 & 0.0\% & 0.81 & 0.04 & 93.7\% & 0.02 & -- & -- & 0.02 \\
\hline 
\multirow{3}{*}{ 4000} & $2 \log n$ & 6.74 & 0.0\% & 1.11 & 0.61 & 0\% & 1.10 & -- & -- & 1.10 \\ 
& $\sqrt{n}$ & 5.17 & 0.0\% & 0.54 & 0.06 & 58.3\% & 0.29 & -- & -- & 0.29 \\
& $n^{2 / 3}$ & 5.04 & 0.0\% & 0.39 & 0.02 & 93.0\% & 0.06 & -- & -- & 0.06 \\ 
\hline
\hline
\end{tabular}
\label{tablePoikDiag}
\end{table}

\begin{table}[H]
\caption{Median MSE ($\times 10^{2}$), coverage probability, and MSPE for subspace Poisson regression under Diagonal Graphon with random network perturbations and misspecified $r$.}
\centering
\begin{tabular}{ccccccccccc}
\hline
\hline
\multirow{2}{*}{ n } & \multirow{2}{*}{ avg.\ degree } & \multicolumn{3}{c}{ $r=0$} & \multicolumn{3}{c}{$r=1$ (True Model)} & \multicolumn{3}{c}{$r=2$}  \\
& & MSE & Coverage & MSPE & MSE & Coverage & MSPE & MSE & Coverage & MSPE \\
\hline 
\multirow{3}{*}{ 500 } & $2 \log n$ & 0.41 & 69.4\% & 0.23 & 0.53 & 72.4\% & 0.44 & -- & -- & 0.90 \\
& $\sqrt{n}$ & 0.25 & 84.9\% & 0.13 & 0.38 & 84.9\% & 0.25 & -- & -- & 0.73 \\ 
& $n^{2/3}$ & 0.16 & 93.4\% & 0.05 & 0.23 & 93.4\% & 0.07 & -- & -- & 0.64 \\
\hline 
\multirow{3}{*}{ 1000} & $2 \log n$ & 0.14 & 80.2\% & 0.19 & 0.27 & 57.7\% & 0.22 & -- & -- & 0.60 \\
& $\sqrt{n}$ & 0.09 & 88.8\% & 0.08 & 0.11 & 88.8\% & 0.09 & -- & -- & 0.39 \\
& $n^{2/3}$ & 0.06 & 93.8\% & 0.03 & 0.08 & 93.8\% & 0.03 & -- & -- & 0.33 \\
\hline 
\multirow{3}{*}{ 2000} & $2 \log n$ & 0.07 & 86.3\% & 0.13 & 0.09 & 77.0\% & 0.24 & -- & -- & 0.65 \\
& $\sqrt{n}$ & 0.05 & 91.0\% & 0.04 & 0.05 & 91.7\% & 0.08 & -- & -- & 0.51 \\
& $n^{2/3}$ & 0.04 & 94.1\% & 0.02 & 0.04 & 93.7\% & 0.02 & -- & -- & 0.45 \\
\hline 
\multirow{3}{*}{ 4000} & $2 \log n$ & 0.47 & 0.0\% & 0.82 & 0.61 & 0\% & 1.10 & -- & -- & 2.05 \\ 
& $\sqrt{n}$ & 0.36 & 76.0\% & 0.20 & 0.06 & 58.3\% & 0.29 & -- & -- & 1.43 \\
& $n^{2/3}$ & 0.02 & 93.3\% & 0.04 & 0.02 & 93.0\% & 0.06 & -- & -- & 1.14 \\ 
\hline
\hline
\end{tabular}
\label{tablePoirDiag}
\end{table}

Next, we introduce a new simulation setup where $X_1, X_2$ are generated from a uniform distribution $U\left(-2,2\right)$, while other parameters are updated to $\beta^*=(0.5,0.5)^{\top}$ and $\gamma_{3:5}^*=(0.5,0.5,0.5)^{\top}$.
Since inference can be conducted for both $\beta_1$ and $\beta_2$, we report the median of the same statistics averaged over both parameters under the stochastic block model (SBM) in Tables \ref{tableLogitkbeta_2} to \ref{tablePoirbeta_2}.

\begin{table}[H]
    \caption{Median MSE ($\times 10^{2}$), coverage probability, and MSPE ($\times 10^{2}$) for subspace logistic regression under SBM with random network perturbations and misspecified $K$.}
    \centering
    \begin{tabular}{cccccccccccc}
    \hline
    \hline
    \multirow{2}{*}{ n } & \multirow{2}{*}{ avg.\ degree } & \multicolumn{3}{c}{ $K=2$ } & \multicolumn{3}{c}{$K=3$ (True Model)} & \multicolumn{3}{c}{$K=4$}  \\
    & & MSE  & Coverage & MSPE & MSE  & Coverage & MSPE & MSE  & Coverage & MSPE  \\
    \hline
    \multirow{3}{*}{ 500 } 
    & $2 \log n$ & 0.79 & 94.4\% & 1.70 & 0.78 & 94.8\% & 1.20 & 0.79 & 94.9\% & 1.22 \\
    & $\sqrt{n}$ & 0.80 & 94.1\% & 1.47 & 0.82 & 94.9\% & 0.71 & 0.81 & 95.0\% & 0.74 \\ 
    & $n^{2 / 3}$ & 0.80 & 94.4\% & 1.32 & 0.83 & 94.9\% & 0.35 & 0.84 & 95.1\% & 0.39 \\
    
    \multirow{3}{*}{ 1000} 
    & $2 \log n$ & 0.49 & 90.9\% & 1.72 & 0.42 & 93.8\% & 1.00 & 0.42 & 93.9\% & 1.01 \\
    & $\sqrt{n}$  & 0.45 & 92.7\% & 1.42 & 0.41 & 94.9\% & 0.45 & 0.41 & 94.8\% & 0.47 \\
    & $n^{2 / 3}$ & 0.42 & 93.7\% & 1.25 & 0.41 & 95.1\% & 0.20 & 0.42 & 94.9\% & 0.22 \\
    
    \multirow{3}{*}{ 2000 } 
    & $2 \log n$ & 0.32 & 86.0\% & 1.44 & 0.24 & 91.9\% & 0.87 & 0.24 & 91.9\% & 0.88 \\
    & $\sqrt{n}$  & 0.29 & 88.6\% & 1.35 & 0.20 & 94.9\% & 0.30 & 0.20 & 94.8\% & 0.31 \\
    & $n^{2 / 3}$ & 0.26 & 90.6\% & 1.11 & 0.20 & 94.9\% & 0.11 & 0.20 & 95.0\% & 0.12 \\
    
    \multirow{3}{*}{ 4000 } 
    & $2 \log n$ & 0.24 & 74.5\% & 1.57 & 0.13 & 91.2\% & 0.77 & 0.13 & 91.2\% & 0.77 \\ 
    & $\sqrt{n}$ & 0.19 & 81.6\% & 1.27 & 0.10 & 94.9\% & 0.20 & 0.10 & 94.8\% & 0.21 \\
    & $n^{2 / 3}$  & 0.19 & 82.7\% & 1.17 & 0.10 & 95.0\% & 0.07 & 0.10 & 95.3\% & 0.07 \\
    \hline
    \hline
    \end{tabular}
    \label{tableLogitkbeta_2}
\end{table}

\begin{table}[H]
    \caption{Median MSE ($\times 10^{2}$), coverage probability, and MSPE ($\times 10^{2}$) for subspace logistic regression under SBM with random network perturbations and misspecified $r$.}
    \centering
    \begin{tabular}{cccccccc}
    \hline
    \hline
    \multirow{2}{*}{ n } & \multirow{2}{*}{ avg.\ degree } & \multicolumn{3}{c}{ $r=0$ (True Model) } & \multicolumn{3}{c}{$r=1$}  \\
    & & MSE  & Coverage & MSPE & MSE  & Coverage & MSPE  \\
    \hline
    \multirow{3}{*}{ 500 } 
    & $2 \log n$ & 0.78 & 94.8\% & 1.20 & 11.34 & 43.3\% & 1.28 \\
    & $\sqrt{n}$ & 0.82 & 94.9\% & 0.71 & 13.09 & 41.2\% & 0.75 \\ 
    & $n^{2 / 3}$ & 0.83 & 94.9\% & 0.35 & 12.75 & 44.3\% & 0.43 \\
    
    \multirow{3}{*}{ 1000} 
    & $2 \log n$ & 0.42 & 93.8\% & 1.00 & 23.09 & 0.0\% & 1.10 \\
    & $\sqrt{n}$  & 0.41 & 94.9\% & 0.45 & 24.08 & 0.0\% & 0.49 \\
    & $n^{2 / 3}$ & 0.41 & 95.1\% & 0.20 & 24.39 & 0.0\% & 0.19 \\
    
    \multirow{3}{*}{ 2000 } 
    & $2 \log n$ & 0.24 & 91.9\% & 0.87 & 21.29 & 0.0\% & 1.12 \\
    & $\sqrt{n}$  & 0.20 & 94.9\% & 0.30 & 24.16 & 0.0\% & 0.61 \\
    & $n^{2 / 3}$ & 0.20 & 94.9\% & 0.11 & 24.88 & 0.0\% & 0.46 \\
    
    \multirow{3}{*}{ 4000 } 
    & $2 \log n$ & 0.13 & 91.2\% & 0.77 & 19.57 & 0.0\% & 1.20 \\ 
    & $\sqrt{n}$ & 0.10 & 94.9\% & 0.20 & 24.59 & 0.0\% & 0.81 \\
    & $n^{2 / 3}$  & 0.10 & 95.0\% & 0.07 & 24.73 & 0.0\% & 0.75 \\
    \hline
    \hline
    \end{tabular}
    \label{tableLogitrbeta_2}
\end{table}

\begin{table}[H]
    \caption{Median MSE ($\times 10^{2}$), coverage probability, and MSPE for subspace Poisson regression under SBM with random network perturbations and misspecified $K$.}
    \centering
    \begin{tabular}{ccccccccccc}
    \hline
    \hline
    \multirow{2}{*}{ n } & \multirow{2}{*}{ avg.\ degree } & \multicolumn{3}{c}{ $K=2$ } & \multicolumn{3}{c}{ $K=3$ (True Model)} & \multicolumn{3}{c}{ $K=4$ }  \\
    & & MSE  & Coverage & MSPE & MSE  & Coverage & MSPE & MSE  & Coverage & MSPE  \\
    \hline 
    \multirow{3}{*}{ 500 } 
    & $2 \log n$ & 0.39 & 63.7\% & 1.82 & 0.30 & 73.8\% & 1.55 & 0.32 & 72.1\% & 1.52 \\
    & $\sqrt{n}$ & 0.30 & 72.6\% & 1.33 & 0.18 & 86.0\% & 0.93 & 0.19 & 85.9\% & 0.93 \\ 
    & $n^{2 / 3}$ & 0.21 & 84.4\% & 0.68 & 0.12 & 93.5\% & 0.37 & 0.12 & 93.4\% & 0.37 \\
    
    \multirow{3}{*}{ 1000} 
    & $2 \log n$ & 0.10 & 84.6\% & 1.19 & 0.10 & 85.2\% & 1.05 & 0.10 & 85.2\% & 1.05 \\
    & $\sqrt{n}$  & 0.08 & 89.3\% & 0.75 & 0.07 & 91.8\% & 0.52 & 0.07 & 91.9\% & 0.53 \\
    & $n^{2 / 3}$ & 0.07 & 91.6\% & 0.48 & 0.06 & 94.2\% & 0.18 & 0.06 & 94.4\% & 0.18 \\
    
    \multirow{3}{*}{ 2000 } 
    & $2 \log n$ & 0.10 & 65.7\% & 0.96 & 0.07 & 75.5\% & 0.75 & 0.07 & 76.0\% & 0.75 \\
    & $\sqrt{n}$  & 0.05 & 83.6\% & 0.56 & 0.03 & 92.1\% & 0.30 & 0.03 & 91.9\% & 0.30 \\
    & $n^{2 / 3}$ & 0.04 & 88.8\% & 0.40 & 0.03 & 94.6\% & 0.09 & 0.03 & 94.5\% & 0.09 \\
    
    \multirow{3}{*}{ 4000 } 
    & $2 \log n$ & 0.07 & 43.7\% & 1.34 & 0.06 & 53.4\% & 1.14 & 0.06 & 54.0\% & 1.14 \\ 
    & $\sqrt{n}$ & 0.03 & 76.1\% & 0.70 & 0.02 & 91.7\% & 0.34 & 0.03 & 91.3\% & 0.34 \\
    & $n^{2 / 3}$  & 0.02 & 84.3\% & 0.45 & 0.01 & 94.6\% & 0.09 & 0.02 & 94.7\% & 0.09 \\
    
    \hline
    \hline
    \end{tabular}
    \label{tablePoikbeta_2}
\end{table}

\begin{table}[H]
    \caption{Median MSE ($\times 10^{2}$), coverage probability, and MSPE for subspace Poisson regression under SBM with random network perturbations and misspecified $r$.}
    \centering
    \begin{tabular}{cccccccc}
    \hline
    \hline
    \multirow{2}{*}{ n } & \multirow{2}{*}{ avg.\ degree } & \multicolumn{3}{c}{ $r=0$ (True Model) } & \multicolumn{3}{c}{$r=1$}  \\
    & & MSE  & Coverage & MSPE & MSE  & Coverage & MSPE  \\
    \hline
    \multirow{3}{*}{ 500 } 
    & $2 \log n$ & 0.30 & 73.8\% & 1.55 & 9.50 & 17.7\% & 1.78 \\
    & $\sqrt{n}$ & 0.18 & 86.0\% & 0.93 & 11.75 & 25.6\% & 1.39 \\ 
    & $n^{2 / 3}$ & 0.12 & 93.5\% & 0.37 & 11.65 & 27.8\% & 1.09 \\
    
    \multirow{3}{*}{ 1000} 
    & $2 \log n$ & 0.10 & 85.2\% & 1.05 & 5.78 & 0.0\% & 1.26 \\
    & $\sqrt{n}$  & 0.07 & 91.8\% & 0.52 & 4.78 & 0.0\% & 1.05 \\
    & $n^{2 / 3}$ & 0.06 & 94.2\% & 0.18 & 3.88 & 4.75\% & 1.25 \\
    
    \multirow{3}{*}{ 2000 } 
    & $2 \log n$ & 0.07 & 75.5\% & 0.75 & 18.32 & 0.0\% & 0.79 \\
    & $\sqrt{n}$  & 0.03 & 92.1\% & 0.30 & 22.14 & 0.0\% & 0.33 \\
    & $n^{2 / 3}$ & 0.03 & 94.6\% & 0.09 & 23.84 & 0.0\% & 0.14 \\
    
    \multirow{3}{*}{ 4000 } 
    & $2 \log n$ & 0.06 & 53.4\% & 1.14 & 20.07 & 0.0\% & 1.30 \\ 
    & $\sqrt{n}$ & 0.02 & 91.7\% & 0.34 & 24.48 & 0.0\% & 0.72 \\
    & $n^{2 / 3}$  & 0.01 & 94.6\% & 0.09 & 24.90 & 0.0\% & 0.61 \\
    \hline
    \hline
    \end{tabular}
    \label{tablePoirbeta_2}
\end{table}

Across all simulation settings, we observe that underestimating the embedding dimension $\hat{K}$ results in moderate degradation of inference accuracy. Specifically, coverage probabilities decline by approximately 10\%--20\%, and the mean squared error (MSE) increases noticeably under both logistic and Poisson regression models. In contrast, overestimating $\hat{K}$ has minimal impact on performance, as the model remains correctly specified with respect to the signal subspace.

However, overestimating the subspace dimension $\hat{r}$ leads to substantially more severe consequences: coverage probabilities frequently drop to 0\%, and estimation errors increase sharply. This stark contrast highlights the sensitivity of the method to the specification of $\hat{r}$.

A potential explanation for these patterns is that both underestimating $\hat{K}$ and overestimating $\hat{r}$ cause the estimated subspace $\hat{\mathcal{N}}$ to omit important signal-bearing directions. When $\hat{K}$ is underestimated, relevant eigenvectors associated with the network effect are excluded from the embedding. Conversely, overestimating $\hat{r}$ disrupts the alignment between the covariates and the network subspace, leading to partial loss of the signal during the projection step.

\newpage
\section{Additional simulation results under network embedding perturbations}\label{sec:embed n=500}
Results under perturbations from different embedding algorithms for sample size $n=500$ are provided in Tables \ref{embed500logistic} and \ref{embed500Poisson}, showing patterns consistent with those discussed in the main text.
\begin{table}[H]
    \centering
        \caption{Median MSE ($\times 10^{2}$), coverage probability and MSPE ($\times 10^{2}$) for subspace logistic regression with different types of network of size $500$ under network embedding perturbations.}
\begin{tabular}{ccccccccccc}
\hline \hline
\multirow{2}{*}{Method} &\multirow{2}{*}{Network}&\multirow{2}{*}{avg.\ degree}& 
\multirow{2}{*}{MSE}&\multirow{2}{*}{Coverage} &\multicolumn{2}{c}{MSPE}\\
&&&&&Our Model &Logistic Reg\\
\hline \hline
\multirow{9}{*}{DeepWalk}
&& $2 \log n$ & 1.25 &94.5\% & 0.21 & 2.32 \\
&SBM & $\sqrt{n}$ & 1.19 & 94.8\%  & 0.20 & 2.23 \\
&& $n^{2 / 3}$ & 1.21 & 95.2\% & 0.19 & 1.34 \\
\cline{2-7}
&& $2 \log n$ & 1.91 & 94.3\%  & 0.22 & 1.10 \\
&DCBM & $\sqrt{n}$ & 1.26 & 94.8\%  & 0.21 & 0.56\\
&& $n^{2 / 3}$ & 1.26 & 94.9\%  & 0.20 & 2.30 \\
\cline{2-7}
&& $2 \log n$ & 1.24 & 95.2\%  & 0.17 & 1.73 \\
&Diag & $\sqrt{n}$ & 1.15 & 95.0\%  & 0.17 & 1.44 \\
&& $n^{2 / 3}$ & 1.19 & 95.0\%  & 0.17 & 1.33 \\
\cline{1-7}
\multirow{9}{*}{Node2Vec}&& $2 \log n$ & 1.25 & 94.5\%  & 0.21 & 0.64 \\
&SBM & $\sqrt{n}$ & 1.20 & 94.8\%  & 0.20 & 2.11 \\
&& $n^{2 / 3}$& 1.20 & 95.1\% & 0.21 & 2.29 \\
\cline{2-7}
&& $2 \log n$ & 1.47 & 94.7\%  & 0.21 & 1.45 \\
&DCBM & $\sqrt{n}$ & 1.38 & 94.7\% & 0.21 & 1.30\\
&& $n^{2 / 3}$ & 1.38 & 94.6\% & 0.21 & 2.42 \\
\cline{2-7}
&& $2 \log n$ & 1.21 & 95.0\%  & 0.17 & 0.57\\
&Diag & $\sqrt{n}$ & 1.25 & 95.0\%  & 0.17 & 1.51 \\
&& $n^{2 / 3}$ & 1.18 & 94.9\%  & 0.17 & 1.52 \\
\cline{1-7}
\multirow{9}{*}{Diff2Vec}&& $2 \log n$ & 1.26 & 94.3\% & 0.21 & 1.24 \\
&SBM & $\sqrt{n}$ & 1.30 & 94.6\% & 0.20 & 1.25 \\
&& $n^{2 / 3}$ & 1.21 & 95.1\% & 0.19 & 1.24 \\
\cline{2-7}
&& $2 \log n$ & 1.97 & 86.1\% & 0.26 & 1.05 \\
&DCBM & $\sqrt{n}$ & 1.39 & 94.5\%  & 0.21 & 1.33\\
&& $n^{2 / 3}$ & 1.31 & 94.8\% & 0.19 & 1.36 \\
\cline{2-7}
&& $2 \log n$ & 1.41 & 94.3\%  & 0.24 & 1.47 \\
&Diag & $\sqrt{n}$ & 1.29 & 95.0\% & 0.21 & 1.04 \\
&& $n^{2 / 3}$ & 1.24 & 95.2\%  & 0.20 & 1.13 \\
\hline
\hline
\end{tabular}
\label{embed500logistic}
\end{table}

\begin{table}[H]
    \centering
        \caption{Median MSE ($\times 10^{2}$), coverage probability, and MSPE for subspace Poisson regression with different types of network of size $500$ under network embedding perturbations.}
\begin{tabular}{ccccccccccc}
\hline \hline
\multirow{2}{*}{Method} &\multirow{2}{*}{Network}&\multirow{2}{*}{avg.\ degree}&\multirow{2}{*}{MSE}& 
\multirow{2}{*}{Coverage} &\multicolumn{2}{c}{MSPE}\\
&&&&&Our Method &Poisson Reg\\
\hline \hline
\multirow{9}{*}{DeepWalk}
&& $2 \log n$ & 0.33 & 93.5\%  & 3.07 & 48.2 \\
&SBM & $\sqrt{n}$   & 0.25 & 93.9\% & 2.51 & 66.9 \\
&& $n^{2 / 3}$  & 0.24 & 94.5\% & 2.34 & 21.7 \\
\cline{2-7}
&& $2 \log n$  & 0.58& 92.6\% & 2.98 & 44.3 \\
&DCBM & $\sqrt{n}$ & 0.21 & 93.6\%  & 4.81 & 35.1 \\
&& $n^{2 / 3}$   & 0.29& 94.2\% & 2.22 & 46.1 \\
\cline{2-7}
 && $2 \log n$   & 0.29 & 94.9\% & 1.30 & 33.2 \\
&Diag & $\sqrt{n}$   & 0.20& 95.0\% & 1.57 & 22.9 \\
&& $n^{2 / 3}$  & 0.18 & 95.0\% & 1.79 & 17.5 \\
\cline{1-7}
\multirow{9}{*}{Node2Vec}&& $2 \log n$   & 0.32 & 92.9\% & 2.87 & 13.8 \\
&SBM & $\sqrt{n}$   & 0.19 & 94.3\% & 3.10 & 28.6 \\
&& $n^{2 / 3}$   & 0.25 & 94.4\% & 2.30 & 42.9 \\
\cline{2-7}
&& $2 \log n$  & 0.34 & 93.0\% & 2.15 & 5.44 \\
&DCBM & $\sqrt{n}$ & 0.20 & 92.5\%  &3.97 &  62.7 \\
&& $n^{2 / 3}$   & 0.27 & 94.0\% & 3.42 & 59.0 \\
\cline{2-7}
&& $2 \log n$   & 0.27 & 94.9\% & 1.15 & 9.42 \\
&Diag & $\sqrt{n}$  & 0.27 & 94.8\% & 1.62 & 38.6 \\
&& $n^{2 / 3}$  & 0.27 & 95.0\% & 1.47 & 32.6 \\
\cline{1-7}
\multirow{9}{*}{Diff2Vec}
&& $2 \log n$ & 0.35 & 92.5\% & 2.37 & 22.5\\
&SBM & $\sqrt{n}$  & 0.31 & 94.1\% & 1.89 & 21.9 \\
&& $n^{2 / 3}$  & 0.21 & 94.0\% & 2.06 & 21.4 \\
\cline{2-7}
&& $2 \log n$ & 1.11& 54.2\% & 3.42 & 19.0\\
&DCBM & $\sqrt{n}$  & 0.35 & 93.4\% & 2.80 & 24.2 \\
&& $n^{2 / 3}$  & 0.22 & 94.2\% & 2.64 & 32.4 \\
\cline{2-7}
&& $2 \log n$ & 0.23 & 91.8\% & 3.24 & 25.7 \\
&Diag & $\sqrt{n}$ & 0.31 & 93.6\% & 1.89 & 8.38 \\
&& $n^{2 / 3}$  & 0.26 & 94.5\% & 1.48 & 13.7 \\
\hline
\hline
\end{tabular}
\label{embed500Poisson}
\end{table}

\newpage
\section{Additional simulation results to explore the influence of embedding dimensions}\label{sec:embed}
To explore the impact of increasing embedding dimension, we design a new simulation setting: 
Compared to the simulation in Section \ref{sec:Simulation Studies} where we take embedding dimension $K_{embed}$ to be $3$, the only difference is that $K_{embed}$ takes values from $3$ to $10$. 
We focus on the influence of embedding dimensions on the stochastic block model (SBM).
The value of the top 10 eigenvalues of $P=\mathbb{E}[\mathcal{F}\mathcal{F}^\top]$ with overlaid boxplots illustrating the distribution of the top $K_{embed}$ eigenvalues computed from $B$ similarity matrices $\hat{P} = \mathcal{F}\mathcal{F}^\top$ are summarized in Figure \ref{fig:SBM_deepwalk_plot}, \ref{fig:SBM_node2vec_plot},
\ref{fig:SBM_diff2vec_plot}.
Tables \ref{deepwalk_SBM_embedlogisticmulti} to \ref{diff2vec_SBM_embedPoimulti} report performance metrics under perturbations for various embedding algorithms, evaluated across embedding dimensions from $3$ to $10$ and different average degrees, with sample size $n=2000$.

\begin{figure}[H]
    \centering
        \begin{subfigure}[b]{0.325\linewidth}
        \centering
        {\includegraphics[width=\linewidth]{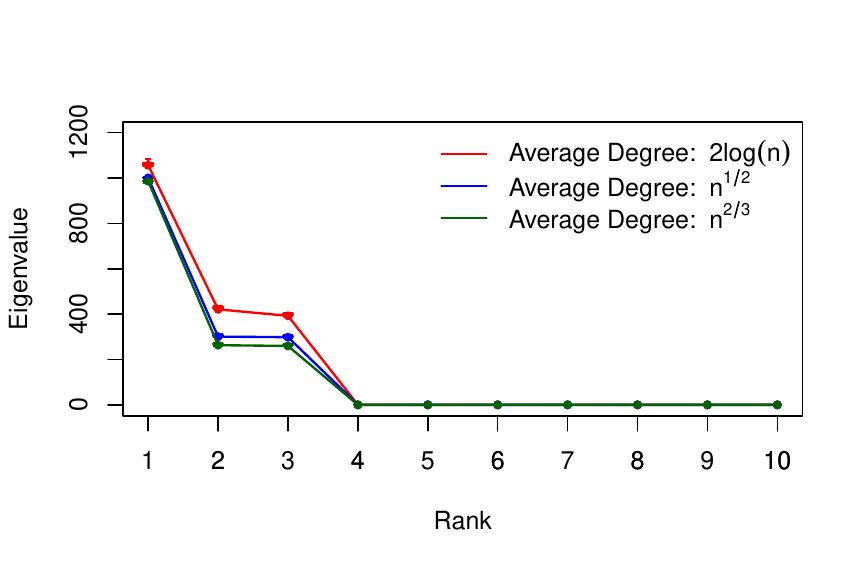}}
        \subcaption{$K_{embed}=3$}
    \end{subfigure}
    \hfill
    \begin{subfigure}[b]{0.325\linewidth}
        \centering
        {\includegraphics[width=\linewidth]{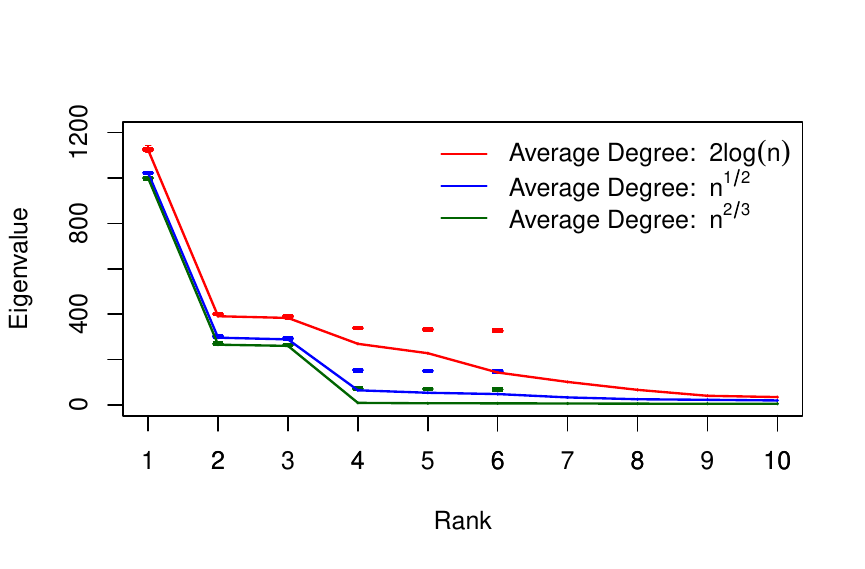}}
        \subcaption{$K_{embed}=6$}
    \end{subfigure}
    \hfill
    \begin{subfigure}[b]{0.325\linewidth}
        \centering
        {\includegraphics[width=\linewidth]{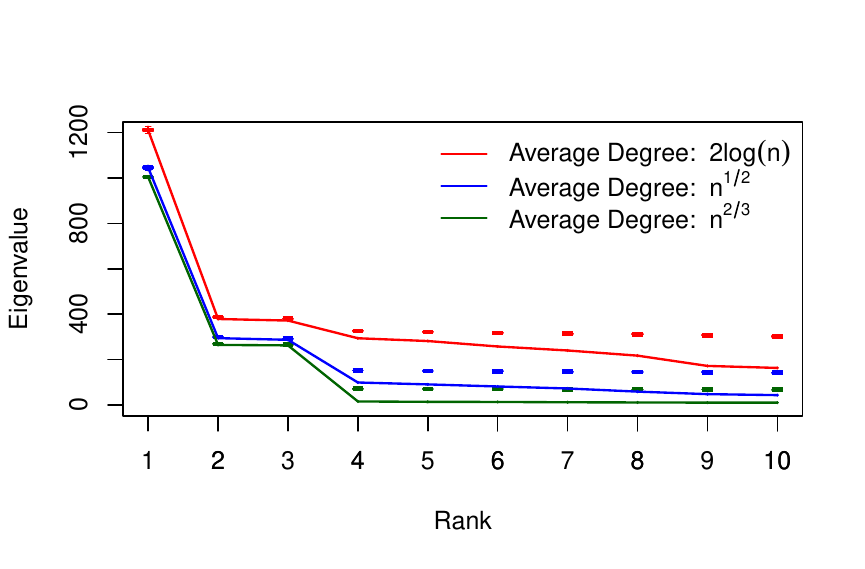}}
        \subcaption{$K_{embed}=10$}
    \end{subfigure}
    \hfill
    \caption{Top 10 eigenvalues of the stochastic block model (SBM) relational matrix under DeepWalk, with overlaid boxplots representing the distribution of the top $K_{\text{embed}}$ eigenvalues computed from $B$ similarity matrices.}
    \label{fig:SBM_deepwalk_plot}
\end{figure}

\begin{figure}[H]
    \centering
        \begin{subfigure}[b]{0.325\linewidth}
        \centering
        {\includegraphics[width=\linewidth]{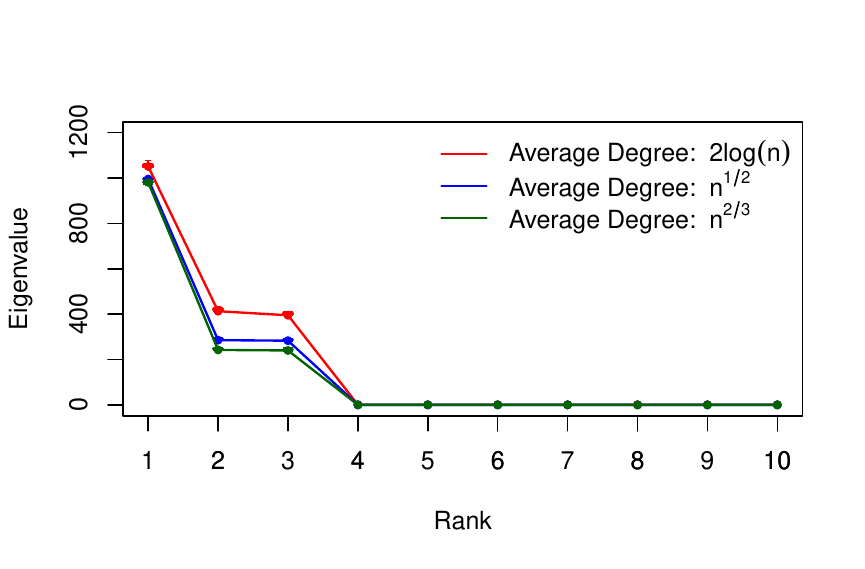}}
        \subcaption{$K_{embed}=3$}
    \end{subfigure}
    \hfill
    \begin{subfigure}[b]{0.325\linewidth}
        \centering
        {\includegraphics[width=\linewidth]{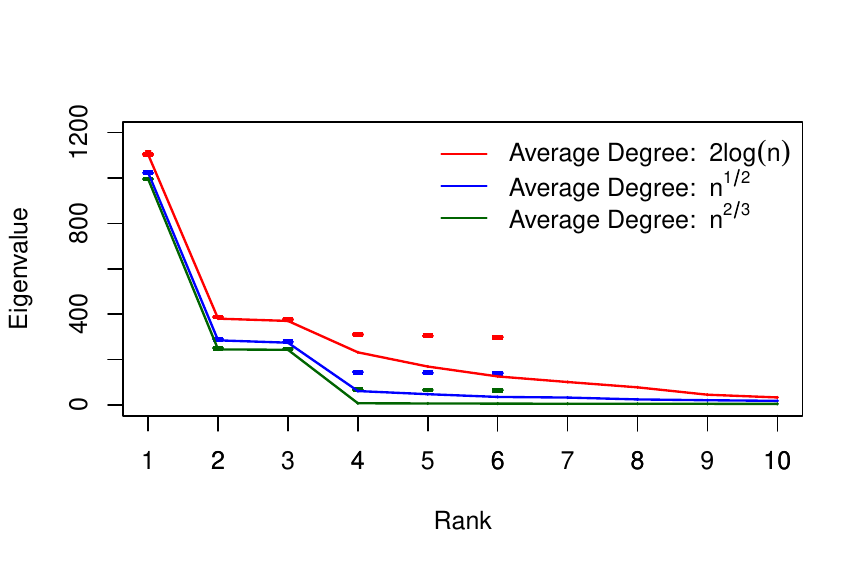}}
        \subcaption{$K_{embed}=6$}
    \end{subfigure}
    \hfill
    \begin{subfigure}[b]{0.325\linewidth}
        \centering
        {\includegraphics[width=\linewidth]{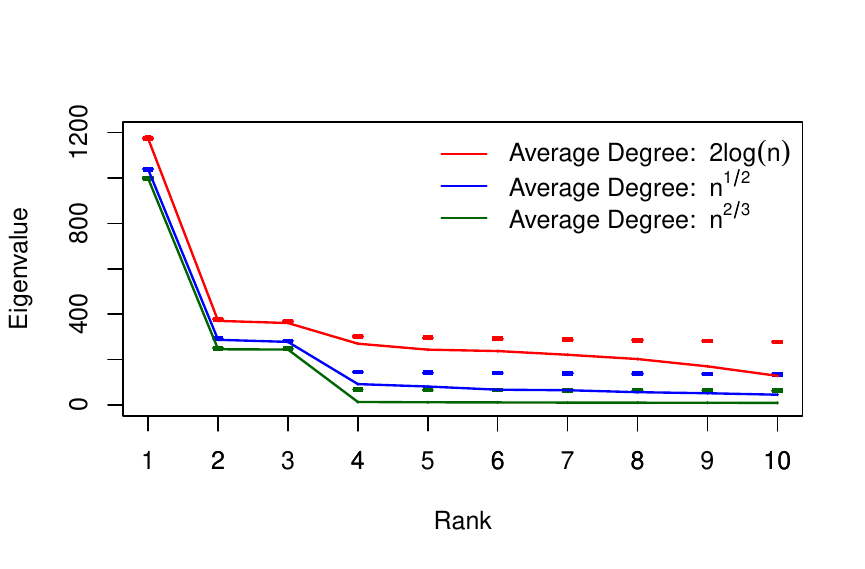}}
        \subcaption{$K_{embed}=10$}
    \end{subfigure}
    \hfill
    \caption{Top 10 eigenvalues of the stochastic block model (SBM) relational matrix under Node2Vec, with overlaid boxplots representing the distribution of the top $K_{\text{embed}}$ eigenvalues computed from $B$ similarity matrices.}
    \label{fig:SBM_node2vec_plot}
\end{figure}

\begin{figure}[H]
    \centering
        \begin{subfigure}[b]{0.325\linewidth}
        \centering
        {\includegraphics[width=\linewidth]{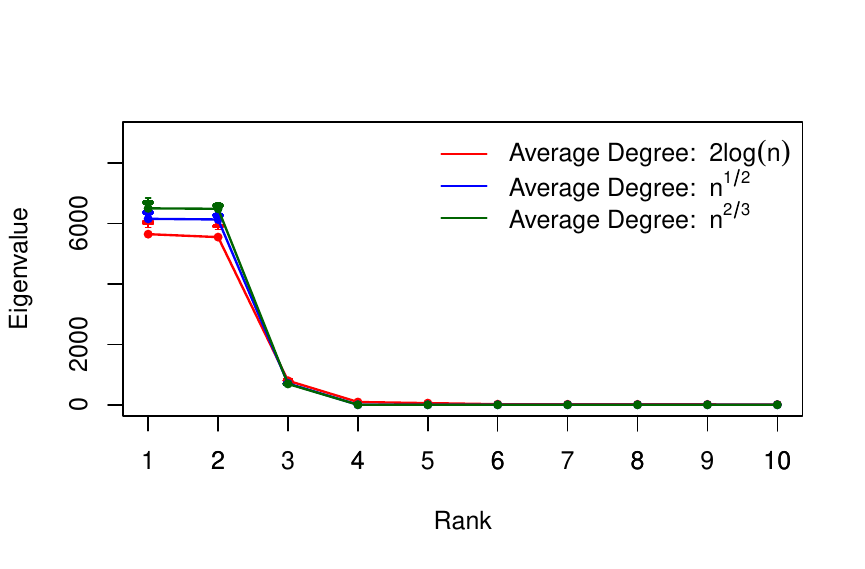}}
        \subcaption{$K_{embed}=3$}
    \end{subfigure}
    \hfill
    \begin{subfigure}[b]{0.325\linewidth}
        \centering
        {\includegraphics[width=\linewidth]{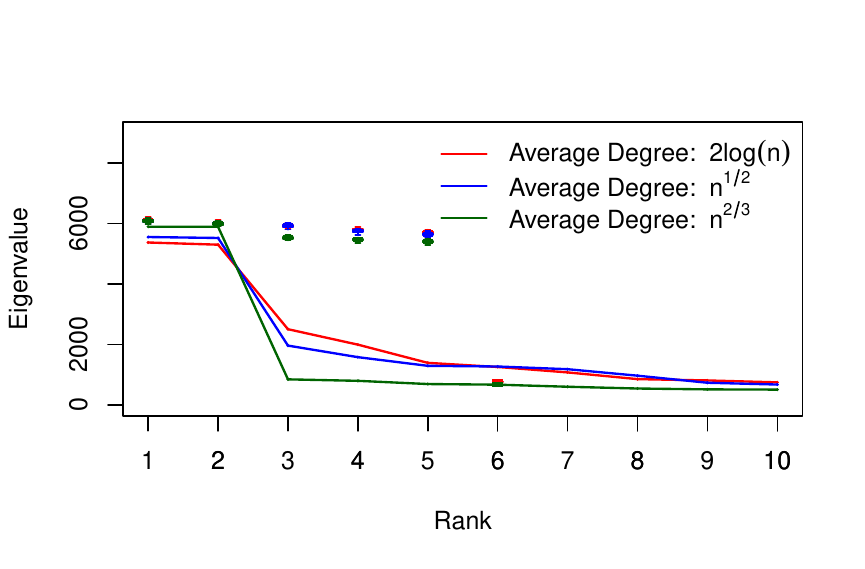}}
        \subcaption{$K_{embed}=6$}
    \end{subfigure}
    \hfill
    \begin{subfigure}[b]{0.325\linewidth}
        \centering
        {\includegraphics[width=\linewidth]{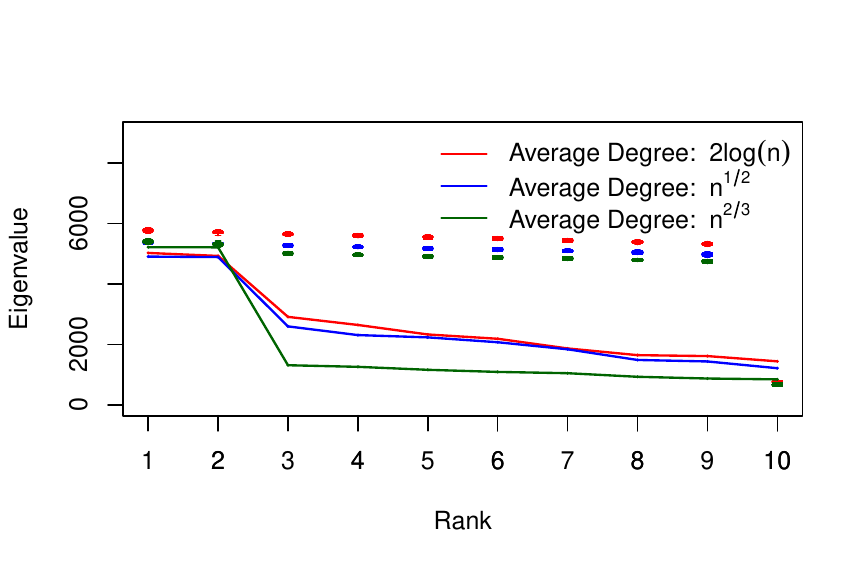}}
        \subcaption{$K_{embed}=10$}
    \end{subfigure}
    \hfill
    \caption{Top 10 eigenvalues of the stochastic block model (SBM) relational matrix under Diff2Vec, with overlaid boxplots representing the distribution of the top $K_{\text{embed}}$ eigenvalues computed from $B$ similarity matrices.}
    \label{fig:SBM_diff2vec_plot}
\end{figure}

\begin{table}[H]
    \centering
        \caption{Median MSE ($\times 10^{2}$), coverage probability, MSPE ($\times 10^{2}$) for subspace logistic regression with different embedding dimensions of a SBM network of size $2000$ under DeepWalk network embedding perturbations.}
\begin{tabular}{ccccc}
\hline \hline
Embedding Dimension  & avg.\ degree & MSE & Coverage & MSPE\\
\hline \hline
\multirow{3}{*}{$K_{\text{embed}}=3$}
& $2 \log n$ & 0.30 & 94.9\% & 0.08  \\
& $\sqrt{n}$ & 0.29 & 94.8\% & 0.07  \\
& $n^{2 / 3}$ & 0.29 & 94.7\% & 0.08 \\
\hline
\multirow{3}{*}{$K_{\text{embed}}=4$}
& $2 \log n$ & 0.31 & 93.8\% & 0.09  \\
& $\sqrt{n}$ & 0.29 & 95.0\% & 0.08  \\
& $n^{2 / 3}$ & 0.28 & 94.9\% & 0.08 \\
\hline
\multirow{3}{*}{$K_{\text{embed}}=6$}
 & $2 \log n$  & 0.32 & 93.6\% & 0.09  \\
 & $\sqrt{n}$  & 0.28 & 95.0\% & 0.08  \\
& $n^{2 / 3}$ & 0.28 & 95.1\% & 0.08  \\
\hline
\multirow{3}{*}{$K_{\text{embed}}=8$}
& $2 \log n$  & 0.30 & 93.7\% & 0.10   \\
& $\sqrt{n}$  & 0.28 & 94.9\% & 0.08  \\
& $n^{2 / 3}$  & 0.28 & 95.1\% & 0.08   \\
\hline
\multirow{3}{*}{$K_{\text{embed}}=10$}
& $2 \log n$ & 0.30 & 94.3\% & 0.10  \\
& $\sqrt{n}$ & 0.28 & 94.9\% & 0.08  \\
& $n^{2 / 3}$  & 0.28 & 95.0\% & 0.08  \\
\hline
\hline
\end{tabular}
\label{deepwalk_SBM_embedlogisticmulti}
\end{table}

\begin{table}[H]
    \centering
        \caption{Median MSE ($\times 10^{2}$), coverage probability, and MSPE ($\times 10^2$) for subspace Poisson regression with different embedding dimensions of a SBM network of size $2000$ under DeepWalk network embedding perturbations.}
\begin{tabular}{ccccc}
\hline \hline
Embedding Dimension & avg.\ degree & MSE & Coverage & MSPE\\
\hline \hline
\multirow{3}{*}{$K_{\text{embed}}=3$}
& $2 \log n$ & 0.06 & 92.9\% & 1.99  \\
& $\sqrt{n}$ & 0.06 & 94.0\% & 1.58  \\
& $n^{2 / 3}$ & 0.06 & 93.8\% & 1.27  \\
\hline
\multirow{3}{*}{$K_{\text{embed}}=4$}
& $2 \log n$ & 0.08 & 90.2\% & 2.02  \\
& $\sqrt{n}$ & 0.06 & 94.5\% & 1.35  \\
& $n^{2 / 3}$ & 0.06 & 94.8\% & 1.22  \\
\hline
\multirow{3}{*}{$K_{\text{embed}}=6$}
& $2 \log n$  & 0.11 & 86.0\% & 2.19 \\
& $\sqrt{n}$  & 0.06 & 94.3\% & 1.38  \\
& $n^{2 / 3}$ & 0.06 & 94.7\% & 1.23  \\
\hline
\multirow{3}{*}{$K_{\text{embed}}=8$}
& $2 \log n$  & 0.08 & 90.4\% & 2.33  \\
& $\sqrt{n}$  & 0.06 & 94.4\% & 1.36  \\
& $n^{2 / 3}$ & 0.06 & 94.6\% & 1.19  \\
\hline
\multirow{3}{*}{$K_{\text{embed}}=10$}
& $2 \log n$  & 0.08 & 92.3\% & 2.43  \\
& $\sqrt{n}$  & 0.06 & 94.5\% & 1.37  \\
& $n^{2 / 3}$  & 0.05 & 94.8\% & 1.28  \\
\hline
\hline
\end{tabular}
\label{deepwalk_SBM_embedPoimulti}
\end{table}

\begin{table}[H]
    \centering
        \caption{Median MSE ($\times 10^{2}$), coverage probability, MSPE ($\times 10^{2}$) for subspace logistic regression with different embedding dimensions of a SBM network of size $2000$ under Node2Vec network embedding perturbations.}
\begin{tabular}{ccccc}
\hline \hline
Embedding Dimension  & avg.\ degree & MSE & Coverage & MSPE\\
\hline \hline
\multirow{3}{*}{$K_{\text{embed}}=3$}
& $2 \log n$ & 0.30 & 94.3\% & 0.09  \\
& $\sqrt{n}$ & 0.28 & 94.8\% & 0.08  \\
& $n^{2 / 3}$ & 0.29 & 94.9\% & 0.08 \\
\hline
\multirow{3}{*}{$K_{\text{embed}}=4$}
& $2 \log n$ & 0.31 & 94.4\% & 0.08  \\
& $\sqrt{n}$ & 0.28 & 94.9\% & 0.08  \\
& $n^{2 / 3}$ & 0.28 & 95.0\% & 0.08 \\
\hline
\multirow{3}{*}{$K_{\text{embed}}=6$}
 & $2 \log n$  & 0.31 & 93.8\% & 0.09  \\
 & $\sqrt{n}$  & 0.28 & 95.0\% & 0.08  \\
& $n^{2 / 3}$ & 0.29 & 94.9\% & 0.08  \\
\hline
\multirow{3}{*}{$K_{\text{embed}}=8$}
& $2 \log n$  & 0.29 & 94.5\% & 0.09   \\
& $\sqrt{n}$  & 0.29 & 95.1\% & 0.08  \\
& $n^{2 / 3}$  & 0.28 & 94.9\% & 0.08   \\
\hline
\multirow{3}{*}{$K_{\text{embed}}=10$}
& $2 \log n$ & 0.30 & 94.3\% & 0.09  \\
& $\sqrt{n}$ & 0.28 & 95.0\% & 0.08  \\
& $n^{2 / 3}$  & 0.28 & 95.2\% & 0.08  \\
\hline
\hline
\end{tabular}
\label{node2vec_SBM_embedlogisticmulti}
\end{table}

\begin{table}[H]
    \centering
        \caption{Median MSE ($\times 10^{2}$), coverage probability, and MSPE ($\times 10^2$) for subspace Poisson regression with different embedding dimensions of a SBM network of size $2000$ under Node2Vec network embedding perturbations.}
\begin{tabular}{ccccc}
\hline \hline
Embedding Dimension & avg.\ degree & MSE & Coverage & MSPE\\
\hline \hline
\multirow{3}{*}{$K_{\text{embed}}=3$}
& $2 \log n$ & 0.08 & 93.2\% & 1.93  \\
& $\sqrt{n}$ & 0.06 & 93.7\% & 1.54  \\
& $n^{2 / 3}$ & 0.07 & 94.0\% & 1.23  \\
\hline
\multirow{3}{*}{$K_{\text{embed}}=4$}
& $2 \log n$ & 0.07 & 93.1\% & 1.72  \\
& $\sqrt{n}$ & 0.06 & 94.5\% & 1.37  \\
& $n^{2 / 3}$ & 0.06 & 94.6\% & 1.54  \\
\hline
\multirow{3}{*}{$K_{\text{embed}}=6$}
& $2 \log n$  & 0.09 & 91.0\% & 1.68 \\
& $\sqrt{n}$  & 0.06 & 94.6\% & 1.45  \\
& $n^{2 / 3}$ & 0.06 & 94.7\% & 1.39  \\
\hline
\multirow{3}{*}{$K_{\text{embed}}=8$}
& $2 \log n$  & 0.07 & 93.1\% & 1.71  \\
& $\sqrt{n}$  & 0.06 & 94.7\% & 1.35  \\
& $n^{2 / 3}$ & 0.06 & 94.8\% & 1.32  \\
\hline
\multirow{3}{*}{$K_{\text{embed}}=10$}
& $2 \log n$  & 0.08 & 91.9\% & 1.96  \\
& $\sqrt{n}$  & 0.06 & 94.6\% & 1.34  \\
& $n^{2 / 3}$  & 0.06 & 94.6\% & 1.42  \\
\hline
\hline
\end{tabular}
\label{node2vec_SBM_embedPoimulti}
\end{table}

\begin{table}[H]
    \centering
        \caption{Median MSE ($\times 10^{2}$), coverage probability, MSPE ($\times 10^{2}$) for subspace logistic regression with different embedding dimensions of a SBM network of size $2000$ under Diff2Vec network embedding perturbations.}
\begin{tabular}{ccccc}
\hline \hline
Embedding Dimension  & avg.\ degree & MSE & Coverage & MSPE\\
\hline \hline
\multirow{3}{*}{$K_{\text{embed}}=3$}
& $2 \log n$ & 0.32 & 93.5\% & 0.10  \\
& $\sqrt{n}$ & 0.30 & 94.8\% & 0.07  \\
& $n^{2 / 3}$ & 0.30 & 94.6\% & 0.07 \\
\hline
\multirow{3}{*}{$K_{\text{embed}}=4$}
& $2 \log n$ & 0.66 & 80.5\% & 1.02  \\
& $\sqrt{n}$ & 0.41 & 90.4\% & 1.02  \\
& $n^{2 / 3}$ & 0.33 & 92.8\% & 1.14 \\
\hline
\multirow{3}{*}{$K_{\text{embed}}=6$}
 & $2 \log n$  & 13.4 & 0\% & 1.92  \\
 & $\sqrt{n}$  & 26.1 & 0\% & 1.90  \\
& $n^{2 / 3}$ & 0.33 & 92.8\% & 1.10  \\
\hline
\multirow{3}{*}{$K_{\text{embed}}=8$}
& $2 \log n$  & 4.82 & 0\% & 2.03   \\
& $\sqrt{n}$  & 13.7 & 0\% & 1.93  \\
& $n^{2 / 3}$  & 0.33 & 93.2\% & 1.11   \\
\hline
\multirow{3}{*}{$K_{\text{embed}}=10$}
& $2 \log n$ & 1.35 & 39.6\% & 1.98  \\
& $\sqrt{n}$ & 20.6 & 0\% & 2.00  \\
& $n^{2 / 3}$  & 0.34 & 92.5\% & 1.10  \\
\hline
\hline
\end{tabular}
\label{diff2vec_SBM_embedlogisticmulti}
\end{table}

\begin{table}[H]
    \centering
        \caption{Median MSE ($\times 10^{2}$), coverage probability, and MSPE ($\times 10^2$) for subspace Poisson regression with different embedding dimensions of a SBM network of size $2000$ under Diff2Vec network embedding perturbations.}
\begin{tabular}{ccccc}
\hline \hline
Embedding Dimension & avg.\ degree & MSE & Coverage & MSPE\\
\hline \hline
\multirow{3}{*}{$K_{\text{embed}}=3$}
& $2 \log n$ & 0.09 & 87.7\% & 2.58  \\
& $\sqrt{n}$ & 0.07 & 94.0\% & 1.19  \\
& $n^{2 / 3}$ & 0.07 & 93.9\% &1.42  \\
\hline
\multirow{3}{*}{$K_{\text{embed}}=4$}
& $2 \log n$ & 0.44 & 7.8\% & 110  \\
& $\sqrt{n}$ & 0.22 & 46.1\% & 79.4  \\
& $n^{2 / 3}$ & 1.88 & 0\% & 112  \\
\hline
\multirow{3}{*}{$K_{\text{embed}}=6$}
& $2 \log n$  & 13.4 & 0\% & 175 \\
& $\sqrt{n}$  & 25.2 & 0\% & 186 \\
& $n^{2 / 3}$ & 0.16 & 49.4\% & 118  \\
\hline
\multirow{3}{*}{$K_{\text{embed}}=8$}
& $2 \log n$  & 5.19 & 0\% & 189  \\
& $\sqrt{n}$  & 13.4 & 0\% & 165  \\
& $n^{2 / 3}$ & 0.14 & 52.0\% & 141  \\
\hline
\multirow{3}{*}{$K_{\text{embed}}=10$}
& $2 \log n$ & 3.05 & 0\% & 178  \\
& $\sqrt{n}$ & 20.2 & 0\% & 172  \\
& $n^{2 / 3}$ & 0.13 & 59.1\% & 113  \\
\hline
\hline
\end{tabular}
\label{diff2vec_SBM_embedPoimulti}
\end{table}

The results for DeepWalk and Node2Vec (Tables~\ref{deepwalk_SBM_embedlogisticmulti}--\ref{node2vec_SBM_embedPoimulti}) show that inference remains reliable even when \(K_{\text{embed}}\) is moderately larger than the intrinsic rank: coverage stays close to nominal once the average degree is \(\gtrsim \sqrt{n}\), and is only slightly conservative in sparser regimes. The eigenvalue diagnostics (Figures~\ref{fig:SBM_deepwalk_plot}–\ref{fig:SBM_node2vec_plot}) reveal a clear eigen-gap at \(K=3\) and tight concentration of the leading eigenvalues across replicates when the network is not too sparse, which aligns with the small projection perturbation condition (Assumption~\ref{cond:A4}).

By contrast, Diff2Vec (Tables~\ref{diff2vec_SBM_embedlogisticmulti}--\ref{diff2vec_SBM_embedPoimulti}) displays high variability in the leading eigenvalues (Figure~\ref{fig:SBM_diff2vec_plot}) and a lack of concentration across replicates, especially as \(K_{\text{embed}}\) increases. In these regimes, the small projection perturbation condition is violated, leading to substantial coverage distortions and unstable prediction error. This behavior reflects instability of the embedding itself rather than a limitation of our inference procedure. Finally, for very sparse networks (\(\text{avg.\ degree}=2\log n\)), even DeepWalk/Node2Vec can show mild undercoverage at large \(K_{\text{embed}}\), consistent with weaker concentration of \(\hat P\).

\end{document}